\def\id{\mathds{1}}
\def\laweq{\buildrel \mathrm{d} \over =}
\def\N{\mathbb{N}}
\def\P{\mathbb{P}}
\def\p{\mathbb{P}}
\def\E{\mathbb{E}}
\def\G{\mathcal{G}}
\def\R{\mathbb{R}}
\def\M{\mathcal{M}}
\def\X{\mathcal{X}}
\def\X{\mathcal{X}}
\def\d{\mathrm{d}}
\def\Q{\mathcal{Q}}
\DeclareMathOperator*{\esssup}{ess\text{-}sup}
\newcommand{\ES}{\mathrm{ES}}
\DeclareMathOperator*{\conv}{conv}
\DeclareMathOperator*{\dom}{dom}
 \def\newblock{\ }%
\begin{document}


 \RUNAUTHOR{Shen, Van Oosten, Wang}

\RUNTITLE{Partial Law Invariance and  Risk Measures}

\TITLE{Partial Law Invariance and Risk Measures}

\ARTICLEAUTHORS{%
\AUTHOR{Yi Shen}
\AFF{Department of Statistics and Actuarial Science, University of Waterloo, \EMAIL{yi.shen@uwaterloo.ca}}

\AUTHOR{Zachary Van Oosten}
\AFF{Department of Statistics and Actuarial Science, University of Waterloo, \EMAIL{zjvanoos@uwaterloo.ca}}

\AUTHOR{Ruodu Wang}
\AFF{Department of Statistics and Actuarial Science, University of Waterloo, \EMAIL{wang@uwaterloo.ca}}
} 

\ABSTRACT{%
We introduce partial law invariance,  generalizing law invariance and probabilistic sophistication widely used in decision theory, as well as statistical and financial applications. 
Partial law invariance may be interpreted as law invariance restricted to events for which there is no model uncertainty, reflecting practical situations in decision theory and  financial risk management. 
We fully characterize partially law-invariant coherent risk measures via a novel representation formula.
Strong partial law invariance is defined to bridge the gap between the above characterization and the classic representation formula of Kusuoka.  
We propose a few classes of new risk measures, including partially law-invariant versions of the Expected Shortfall and the entropic risk measures, and illustrate their applications in risk assessment under different types of uncertainty. 
We provide a tractable optimization formula for computing a class of partially law-invariant coherent risk measures and give a numerical example. 
}%




\KEYWORDS{Law invariance, probabilistic sophistication, ambiguity, coherent risk measures, source theory, Expected Shortfall} 

\maketitle


\section{Introduction}\label{sec:intro}
 
Risk measures are popular tools in financial regulation and decision making. 
In finance, the usual interpretation of a risk measure is that it assigns a monetary amount to each financial position, representing the capital that should be held alongside the position to make it acceptable. In decision-theoretic terms, a risk measure
is often interpreted as the unique certainty equivalent representing a preference relation over random variables, typically with some financially relevant properties. Risk measures such as the Value-at-Risk (VaR) and the Expected Shortfall (ES, also known as CVaR) are implemented in the financial regulatory framework of \cite{BASEL19}, and they are also useful tools in optimization; see \cite{FS16} and \cite{MFE15} for general treatments of risk measures in risk management and finance, and \cite{SDR21} in the context of optimization.

Law invariance is the fundamental property that makes risk measures, as well as many decision models, applicable in practice. Once a reference probability model is specified, either through statistical estimation or expert opinion, law-invariant risk measures evaluate financial positions that share the same distribution under this probability model equally. This connects the theory of risk measures to statistics, thus allowing for a description of the probabilistic structure of financial positions, arguably a basic step in any quantitative financial analysis.
In decision theory, law invariance is often known as probabilistic sophistication (\cite{MS92}).

Although law invariance is a fundamental concept, it may not be universally desirable in all contexts. For instance, model misspecification, or model uncertainty, occurs when the assumed probabilistic model for various financial positions does not accurately represent the true, unknown, distribution of these positions. This can be problematic as the value of a law-invariant risk measure may differ significantly between the assumed model and the objective truth. For instance, during the 2008 financial crisis, the probability model used for default rates of mortgages was questionable, and the loss distribution for mortgage-backed securities was underestimated, leading to insufficient capital requirements for many companies to remain solvent (e.g., \cite{MFE15}). Although we take a risk measure perspective in this paper, the problem of model uncertainty extends beyond this and can affect any (usually statistical) process that relies on estimating a probabilistic model, followed by computing a quantity based on that model. We will generally refer to uncertainty in the probabilistic model as ``model uncertainty", which is also known as Knightian uncertainty or ambiguity in the literature. There are many avenues of research that account for model uncertainty in risk assessment in different ways.
One area is distributionally robust optimization, in which optimization with risk measures is performed against a worst-case scenario; see e.g., \cite{GOO03}, \cite{NPS08}, and \cite{ZF09}. Another area is to study risk measures formulated from multiple scenarios; see e.g., \cite{DL18}, \cite{DKW19} and \cite{WZ21}.
Model uncertainty is deeply rooted in decision theory, often called ambiguity in that literature. The goal there is to model preferences of decision makers who do not know the precise likelihood of events that govern their outcomes. Ambiguity is the key ingredient in the influential work of Choquet expected utility and maxmin expected utility pioneered by \cite{S89} and \cite{GS89}, followed by, e.g., the smooth ambiguity model of \cite{KMM05} and the variational preferences of \cite{MMR06}.

 Most of the literature, including the works cited above, addresses model uncertainty through three interrelated approaches. The first approach models financial positions as real-valued random variables (measurable functions), starting with a fixed reference probability model and a law-invariant risk measure, such as VaR, ES, or a function that represents expected utility preferences, that captures the risk preferences under full confidence in the reference model (that is, pure risk). Model uncertainty is then incorporated by aggregating the risk measure across a family of alternative probability models, often close to the reference probability in some way. This aggregation is typically the worst-case aggregation, as is common in distributionally robust optimization. The second approach, as in \cite{ADEH99} and \cite{FS16}, also models financial positions as measurable functions, but it defines a risk measure directly on the space of these functions without fixing a reference probability model. This method handles model uncertainty by never supposing a reference probability model from the start, and it does not need a concept of pure risk (or preferences over pure risks). The third approach, common in decision theory, adopts the two-stage framework of \cite{AA63}, by building preferences over state-dependent monetary lotteries. It begins with a law-invariant risk measure, typically one representing some expected utility preferences, but it is applied to the monetary lotteries, instead of the mappings from states to lotteries. 
Values of the risk measure over different states are aggregated in various ways such as in
 \cite{S89}, \cite{GS89}, and \cite{MMR06}.
 In this setting, pure risk refers to constant mappings that map each state to the same lottery. 

 In this paper, we take a novel approach. We model financial positions as measurable functions and begin with a reference probability model but not a law-invariant risk measure. To account for model uncertainty, risk measures are assumed to satisfy law invariance only partially, that is, only with respect to random outcomes depending on events with no model uncertainty.  
 This approach to modeling risk measures can address situations where some events are subject to more uncertainty than others. Traditional approaches, such as distributionally robust optimization, cannot handle this setting, as these models of uncertainty cannot directly incorporate sub-$\sigma$-algebras.   

The following example can motivate this idea.
After estimating a probability model for various loss factors, it is natural that different factors will have different levels of model uncertainty. For example, this difference might result from a statistical inference procedure, where the confidence regions for parameters tied to certain loss factors could be larger or smaller than those for other loss factors, for instance, due to data scarcity. From this point, we can typically divide the loss factors into two categories: the loss factors for which we have no, or negligible, model uncertainty and those that have non-negligible model uncertainty. We will call all the events related to the loss factors in the first category the ``model-certain source." A financial position is said to depend on this source if its outcome is determined by events within this source. The next step is to select a risk measure for decision making. Requiring law invariance would not be able to distinguish factors in or outside the model-certain source, which is not desirable. However, for financial positions depending on the model-certain source, it is reasonable to measure their risk level solely based on distributions given by the probability model, as these positions involve no model uncertainty; thus, a partial form of law invariance is desirable. With this guiding principle in mind, we introduce partial law invariance, based on a suitably chosen source, which is mathematically modeled by a sub-$\sigma$-algebra $\mathcal{G}$ of the whole model space. A risk measure is partially law invariant  (with respect to $\mathcal G$) if it equates the risks of financial positions that (a) both belong to this source (i.e., $\mathcal{G}$-measurable) and (b) share identical distributions. This property reflects the discussion above with $\mathcal G$ being the model-certain source, but it is applicable beyond this interpretation. 

 The property of partial law invariance was inspired by, and closely related to, the recent paper of \cite{BBLW23}, where the authors formalize the idea of sources. In that paper, sources are defined as algebras of events, and the main idea is that uncertainty attitudes may differ between sources. We take a fixed source, corresponding to the source of risk in \cite{BBLW23}, allowing us to obtain mathematically tractable representation results. Our setting is later generalized to incorporate multiple sources and 
 Section \ref{sec:mults} discusses the connection between their setting and ours.
 A similar concept in decision theory involves having a $\sigma$-algebra with no ambiguity regarding the probability, as seen in the expected uncertainty utility model of \cite{GP14}.

We summarize the main contributions of the paper below. As the first contribution, we introduce the concept of partial law invariance as a new and relevant property for risk measures and preference relations in Section \ref{sec:2-def}, with two substantive motivating examples in Sections \ref{sec:21} and  \ref{sec:22}, and a connection to stochastic dominance in Section \ref{sec:31}. As the second contribution, a main theoretical result is presented, Theorem \ref{th:friday} in Section \ref{sec:32}, which is a representation for coherent risk measures (\cite{ADEH99}) satisfying partial law invariance.  Third, generalizing the idea of a single source in Sections \ref{sec:def}--\ref{sec:3}, the case of multiple sources is studied with a representation result Theorem \ref{th:multi} in Section \ref{sec:mults}, where we also explain the differences between the source theory of \cite{BBLW23} and our approach. Fourth, we offer several extensions of the above results. A technical tool, called coherent adjustments, in Section \ref{sec:coh-cha} leads to an alternative representation of coherent risk measures in Theorem \ref{theorem:1}.
This enables us to derive Theorem \ref{theo:gen1}, which extends the well-known Kusuoka's representation (\cite{K01} and \cite{FR05}) under a stronger notion of partial law invariance in Section \ref{sec:strong}.
With no surprise to specialists in risk measures, most results can be smoothly generalized to the case of convex risk measures, and we present some of them in Section \ref{sec:gen}.  {Finally, we discuss some explicit classes of partially law-invariant   risk measures. Formulas to compute the partially law-invariant ES are studied in Section \ref{sec:PartES}. To provide some concrete example, two constructions of partially law-invariant coherent risk measures are discussed in Section \ref{sec:twocon}. An application  with numerical examples is presented in Section \ref{sec:computation}.} Section \ref{sec:con} concludes the paper. We will focus on an atomless probability space in the main part of the paper. The analysis in the case of a finite probability space is presented in the appendices,
along with all proofs of technical results.

\section{Partial law invariance: Definition and motivating examples}\label{sec:def}

\subsection{Risk measures, preference relations, and partial law invariance}

\label{sec:2-def}

We, throughout, fix a probability space $(\Omega,\mathcal{F},\mathbb{P})$, where $\mathbb{P}$ is our reference probability model. We denote by $L^0$ the set of real-valued measurable functions, where functions that agree $\mathbb{P}$-almost surely (a.s.)~are identified, and by $L^\infty$ and $L^1$ the subsets of random variables that are essentially bounded and integrable, respectively.
A generic set $\mathcal{X}\subseteq L^0$ is interpreted as the collection of financial positions of interest, and $X\in \mathcal{X}$ represents the loss of the financial position for various states of the world $\omega\in\Omega$.\footnote{A positive value of $X$ represents a loss, and negative value of $X$ represents a profit; this is the standard loss/profit convention in the risk measure literature.} It will always be assumed that $\mathcal{X}$ contains the constant random variables, identified with elements of $\R$. A risk measure is a mapping $\rho:\X\to \R$, with the main interpretation of the amount of capital needed to make a financial loss $X$ acceptable.

For a decision-theoretic formulation, let $\succsim$ be a total preorder over the set $\mathcal{X}$, representing the preferences of a decision maker over the financial positions. We will also make two regularity assumptions on $\succsim$.
\begin{itemize}\setlength{\itemsep}{1pt}
    \item[ (C1)] For all $X\in\mathcal{X}$, $\{c\in \mathbb{R}:X\succsim c\}$ and $\{c\in \mathbb{R}:c\succsim X\}$ are closed.
    \item[(C2)]  If $a,b\in \mathbb{R}$ and $a>b$, then $b\succ a.$ (Recall that elements of $\X$ represent losses.)
\end{itemize}
These assumptions allow us to guarantee the existence of a unique certainty equivalent, as described in the following standard result.
\begin{proposition}
        \label{prop:cerEq}
        Under $\mathrm{(C1)}$ and $\mathrm{(C2)}$, there exists a unique certainty equivalent for every $X\in \mathcal{X}$, that is, for every $X\in \mathcal{X}$, there exists a unique $c_X\in\mathbb{R}$ such that $X\sim c_X$. 
    \end{proposition}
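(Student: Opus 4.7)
The plan is to analyze the two ``contour'' sets of $X$ with respect to constants, namely $L_X = \{c \in \mathbb{R} : X \succsim c\}$ and $U_X = \{c \in \mathbb{R} : c \succsim X\}$, and to show that their intersection contains exactly one point. First I would extract their structural properties: totality of $\succsim$ gives $L_X \cup U_X = \mathbb{R}$, and (C1) yields that both $L_X$ and $U_X$ are closed in $\mathbb{R}$. Condition (C2), combined with transitivity, forces $U_X$ to be downward-closed (if $c \in U_X$ and $c' < c$, then (C2) gives $c' \succ c$, and transitivity with $c \succsim X$ yields $c' \succsim X$, so $c' \in U_X$) and analogously forces $L_X$ to be upward-closed. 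A closed and downward-closed subset of $\mathbb{R}$ is either empty, all of $\mathbb{R}$, or a closed half-line $(-\infty,u]$; the analogous trichotomy applies to $L_X$ in the form $[\ell,\infty)$. In the generic case I would thus write $U_X=(-\infty,u]$ and $L_X=[\ell,\infty)$ for some $u,\ell\in\mathbb{R}$.

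For existence, since $U_X\cup L_X=\mathbb{R}$ the two half-lines must overlap, i.e., $u\geq \ell$, so any $c\in[\ell,u]$ lies in $U_X\cap L_X$ and hence satisfies $c\sim X$. For uniqueness, if two distinct reals $c_1\ne c_2$ both satisfied $c_i\sim X$, transitivity would give $c_1\sim c_2$, contradicting (C2), which asserts a strict preference between any two distinct reals. Combining these observations gives $u=\ell$, and this common value is the desired certainty equivalent $c_X$.

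The step I expect to be most delicate is handling the borderline alternatives where one of $U_X,L_X$ is empty while the other is all of $\mathbb{R}$, corresponding to $X$ being strictly dispreferred to, or strictly preferred to, every constant. Such degenerate cases are not excluded by (C1), (C2), and totality alone, and I would rule them out via the ambient structure of $\mathcal{X}$---for instance, the usual monotonicity of $\succsim$ with respect to almost-sure order together with $\mathcal{X}$ consisting of essentially bounded losses sandwiches $X$ between two constants in the preorder, after which the interval argument above completes the proof.
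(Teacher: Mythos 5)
Your argument is essentially the paper's own proof: both show that $\{c\in\mathbb{R}: X\succsim c\}$ and $\{c\in\mathbb{R}: c\succsim X\}$ are closed half-lines $[\ell,\infty)$ and $(-\infty,u]$ using (C1), (C2) and transitivity, invoke totality of $\succsim$ to conclude $\ell\le u$, and use (C2) once more (via $X\succsim \ell\succ u\succsim X$ if $\ell<u$) to force $\ell=u$, which is then the unique certainty equivalent. The degenerate alternatives you flag (one contour set empty, the other all of $\mathbb{R}$) are indeed not excluded by (C1)--(C2) alone, and the paper's proof silently passes over them, so your remark that some ambient structure such as monotonicity together with boundedness of $X$ is needed to sandwich $X$ between two constants is a fair observation rather than a defect of your proposal.
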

Proposition \ref{prop:cerEq} implies that we can find a risk measure $\rho:\mathcal{X}\to\mathbb{R}$ to represent $\succsim$ via $X \succsim Y \iff \rho(X) \leq \rho(Y)$ (e.g., the certainty equivalent, but not necessarily). The reason for the reversal is to stay consistent with our convention that risk measures represent riskiness, thus the smaller the better. In what follows, we formulate properties for $\rho:\X \to \R$, unifying risk measures and preference relations.

Now, we define our main concept of partial law invariance. Given a sub-$\sigma$-algebra $\mathcal{G}\subseteq \mathcal{F}$, we define $\mathcal{X}(\mathcal{G})=\{X\in \mathcal{X}:X\text{ is }\mathcal{G}\text{-measurable}\}$. To establish technical results, we will assume that $(\Omega,\mathcal G,\p)$ is atomless, that is, $\{\p(A):A\subseteq B;~A\in \mathcal G\}=[0,\p(B)]$ for all $B\in \mathcal G$, although this assumption is not needed for defining our main concept of partial law invariance. For $X,Y\in\mathcal{X}$, we write $X \overset{\mathrm{d}}{=}_\mathbb{P} Y$ if $X$ and $Y$ have the same distribution under $\mathbb{P}$, that is $\mathbb{P}(X\leq x)=\mathbb{P}(Y\leq x)$ for all $x\in\mathbb{R}$. For $\rho: \X \to \R$ (resp.~for $\succsim$ on $\X$), we consider the following property.
\begin{enumerate}
\item[] $\mathcal G$-law invariance:  $\rho(X)=\rho(Y)$ (resp.~$X\sim Y$) for all $X,Y\in \X(\mathcal{G})$ that satisfy $X\laweq_\p Y$.
\end{enumerate}

When $\mathcal G$ is the largest $\sigma$-algebra $\mathcal F$, the obtained $\mathcal F$-law invariance is the widely studied \emph{law invariance} of  the risk measure literature  (or, \emph{probabilistic sophistication} in decision theory). As far as we know, $\mathcal G$-law invariance with $\mathcal{G}\neq\mathcal{F}$ has not been studied in the literature. It is easy to see that law invariance is stronger than $\mathcal G$-law invariance. When making a contrast, we will refer to $\mathcal G$-law invariance as \emph{partial law invariance}
and to $\mathcal F$-law invariance as \emph{full law invariance}, where the $\sigma$-algebras should be clear from the context. We also say that a set $\mathcal{X}\subseteq L^0$ is $\mathcal{G}$-law invariant if  $X\in \mathcal{X}\cap L^0(\mathcal{G})$ and $X\overset{\d}{=}_\mathbb{P}Y\in L^0(\mathcal{G})$ imply $Y\in \mathcal{X}$.  

To get a better intuition for partial law invariance, let us consider monotonicity of 
preferences and risk measures. For $X,Y\in \mathcal{X}$, we write $X\geq_{1} Y$ if $\mathbb{P}(X\leq x)\leq \mathbb{P}(Y\leq x)$ for all $x\in\mathbb{R}$, which is the usual first-order stochastic dominance (FSD). We can formulate two notions of monotonicity.
\begin{enumerate}
\item[] Monotonicity:  $X\succsim Y$ for all $X,Y\in \mathcal{X}$ with $Y\geq X$ (in the sense of $\p$-a.s.~throughout).
\item[] $\mathcal{G}$-FSD monotonicity: $X\succsim Y$ for all $X,Y\in \mathcal{X}(\mathcal{G})$ with $Y\geq_1 X$.
\end{enumerate}
The following simple result clarifies the connection between $\mathcal G$-FSD monotonicity and $\mathcal G$-law invariance.
\begin{proposition}\label{FirstMot}
    Suppose that the preference relation $\succsim$ is monotonic and $\mathcal{X}$ is $\mathcal{G}$-law invariant. Then $\succsim$ is $\mathcal{G}$-FSD monotonic if and only if $\succsim$ is $\mathcal{G}$-law invariant.
\end{proposition}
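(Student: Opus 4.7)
The plan is to prove both directions via the characterization that equidistribution is equivalent to mutual first-order stochastic dominance, together with a standard coupling argument that exploits the atomlessness of $(\Omega, \mathcal G, \p)$.

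For the easy direction ($\mathcal G$-FSD monotonicity $\Rightarrow$ $\mathcal G$-law invariance), I would take $X, Y \in \mathcal X(\mathcal G)$ with $X \laweq_{\p} Y$. Because equidistribution implies $X \geq_1 Y$ and $Y \geq_1 X$ simultaneously, applying $\mathcal G$-FSD monotonicity in each direction yields $X \succsim Y$ and $Y \succsim X$, hence $X \sim Y$. This step is essentially tautological.

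The non-trivial direction is $\mathcal G$-law invariance $\Rightarrow$ $\mathcal G$-FSD monotonicity. Given $X, Y \in \mathcal X(\mathcal G)$ with $Y \geq_1 X$, the plan is to construct a ``copy'' $\widetilde X$ of $X$ that lies pointwise below $Y$. Since $(\Omega, \mathcal G, \p)$ is atomless, I can find a random variable $U \in L^0(\mathcal G)$ uniform on $[0,1]$ and independent of $Y$. Defining the randomized quantile $V = F_Y(Y-) + U\bigl(F_Y(Y) - F_Y(Y-)\bigr)$ gives a uniform random variable on $[0,1]$ such that $F_Y^{-1}(V) = Y$ almost surely. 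Setting $\widetilde X = F_X^{-1}(V)$, the hypothesis $Y \geq_1 X$, equivalent to $F_Y \leq F_X$, yields $F_X^{-1} \leq F_Y^{-1}$ pointwise, so $\widetilde X \leq Y$ a.s.; moreover $\widetilde X \laweq_{\p} X$ and $\widetilde X \in L^0(\mathcal G)$.

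To close the argument, $\mathcal G$-law invariance of $\mathcal X$ gives $\widetilde X \in \mathcal X(\mathcal G)$, $\mathcal G$-law invariance of $\succsim$ gives $\widetilde X \sim X$, and plain monotonicity (applied to $\widetilde X \leq Y$) gives $\widetilde X \succsim Y$. Concatenating yields $X \sim \widetilde X \succsim Y$, as required. The main obstacle is the coupling step: I need $\widetilde X$ to be $\mathcal G$-measurable, which forces the auxiliary uniform $U$ to live in $L^0(\mathcal G)$ independently of $Y$; this is where the atomlessness of $(\Omega, \mathcal G, \p)$ assumed in the paper is essential, and without it one would need a separate randomization device outside $\mathcal G$, which would break $\mathcal G$-measurability of $\widetilde X$.
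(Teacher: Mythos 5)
Your reverse direction and the concluding chain $X \sim \widetilde X \succsim Y$ are organized correctly, but the coupling step rests on a false claim: atomlessness of $(\Omega,\mathcal G,\p)$ does \emph{not} guarantee a uniform $U\in L^0(\mathcal G)$ that is independent of $Y$. Take $\mathcal G=\sigma(Y)$ with $Y$ continuously distributed: $(\Omega,\mathcal G,\p)$ is atomless, yet every $\mathcal G$-measurable random variable has the form $g(Y)$, and $g(Y)$ independent of $Y$ forces $\p(g(Y)\in A)=\p(g(Y)\in A)^2$ for every Borel $A$, so $g(Y)$ is a.s.\ constant and cannot be uniform. Hence the distributional transform $V=F_Y(Y-)+U\bigl(F_Y(Y)-F_Y(Y-)\bigr)$ cannot be run as written. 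The strategy is repairable — you only need $U$ to be conditionally uniform on each event $\{Y=y_0\}$ where the law of $Y$ has an atom (elsewhere $F_Y(Y)-F_Y(Y-)=0$ and $U$ is irrelevant), and such a $U$ exists because each such event lies in $\mathcal G$ and carries an atomless trace — but that weaker statement is precisely what needs to be argued, and the proposal instead asserts full independence, which fails.

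The paper avoids the issue entirely with a symmetric comonotone coupling: it takes a single $\mathcal G$-measurable uniform $U$ (this much does follow from atomlessness, with no independence requirement) and replaces \emph{both} random variables by their quantile transforms $\tilde X=F_{\p,X}^{-1}(U)$ and $\tilde Y=F_{\p,Y}^{-1}(U)$. The FSD hypothesis orders the quantile functions pointwise, so $\tilde X$ and $\tilde Y$ are ordered a.s., and $\mathcal G$-law invariance is applied twice ($X\sim\tilde X$ and $Y\sim\tilde Y$) before invoking monotonicity. Moving only $X$ while keeping $Y$ fixed, as you do, is a strictly harder coupling problem; if you want to keep that route, you must supply the conditional-uniformity argument above (or assume the law of $Y$ is continuous, in which case $V=F_Y(Y)$ works without any auxiliary randomization).
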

Therefore, partial law invariance can be equivalently interpreted as consistency with respect to FSD over $\mathcal G$-measurable random variables. 

The main motivations for partial law invariance, in the absence of full law invariance, are model uncertainty in risk evaluation, illustrated in Section \ref{sec:21}, as well as ambiguity in decision making, illustrated in Section \ref{sec:22}. In the following examples, and for the rest of the paper, we write $\sigma(\cdot)$ for the $\sigma$-algebra generated by either a random variable, a collection of random variables, or a collection of sets in $\mathcal F$.

\subsection{Partial law invariance and model uncertainty}
\label{sec:21}

In this section we elaborate on the role of partial law invariance in the context of model uncertainty. 
 
Suppose a firm's portfolio manager is tasked with making investment decisions involving the financial positions in $\mathcal X$. Among the random variables in $\mathcal X$,    the manager knows that the distribution of a financial position $Z$ is accurate under the probability model $\p$; that is, there is no (or negligible) model uncertainty. For instance, $Z$ may be the losses from a regular property insurance portfolio, for which there are enough data to confidently estimate its distribution. The distribution of the other random variables $X$ under the probability model may be subject to a considerable amount of model uncertainty, such as the losses of a stock without a sufficiently long history of data. The manager needs to decide on a risk measure. In this context, law invariance naturally applies when dealing with positions $g(Z)$ that are functions of $Z$ because the manager is confident about their distributions. However, for random variables $X$ that are not $\sigma(Z)$-measurable, 
if $X\laweq_{\mathbb{P}} g(Z)$, it is unreasonable to insist that  $X$ and $g(Z)$ are equally risky due to the model uncertainty associated with $X$; the same can be said for two general random variables $X$ and $Y$.  In other words, it is natural to require $\sigma(Z)$-law invariance but not (full) law invariance. 

Denote by $\mathcal{M}_1$ the set of probability measures on $(\Omega,\mathcal{F})$ that are absolutely continuous with respect to $\mathbb{P}.$ 
The manager may choose a collection of probability measures $\mathcal Q\subseteq \mathcal{M}_1$
to reflect model uncertainty.
Because $Z$ is the loss whose distribution the manager is certain about, the set $\mathcal Q$ should satisfy that for all $\mu \in \mathcal{Q}$, $F_{\mu ,Z}=F_{\p, Z}$, where $F_{\mu,Z}$ is the cumulative distribution function (cdf) of $Z$ under $\mu$. For example, the manager could take 
\begin{equation}
       \label{eq:ES-sup}
        \mathcal{Q}=\left\{ \mu  \in \mathcal{M}_1\mid F_{\mu , Z}=F_{ \p,Z}\text{ and }d(\mu ,\mathbb{P})<\epsilon\right\},
\end{equation} 
where $\epsilon>0$ and $d:\mathcal{M}_1\times\mathcal{M}_1\to[0,\infty]$ is some measure of divergence, such as a Wasserstein distance or the Kullback–Leibler divergence. For $X\in \X$, the manager needs to evaluate the risk of $X$ under each $\mu\in \mathcal{Q}$, and then aggregate these evaluations in some way. For instance, if the manager uses the mean to quantify risk and the worst-case approach of \cite{W45,W49} (axiomatized by \cite{GS89}) to aggregate the risk evaluations, then the resulting risk measure is $\rho:X\mapsto \sup_{\mu \in \mathcal Q}\E^\mu[X]$. We can see that $\rho$ is  $\sigma(Z)$-law invariant  because $\rho(X)=\E^\p[X]$ for $X$ that is $\sigma(Z)$-measurable.

Using the mean as a risk measure is overly simplistic. A risk measure common in financial practice is the Expected Shortfall (ES also known as CVaR). ES serves as the standard risk measure for market risk in the banking regulatory framework of \cite{BASEL19}; see \cite{RU02} for its optimization and  \cite{WZ21b} for an axiomatization. 
For a given probability measure $\mu\in \mathcal M_1$, the ES at level $\alpha\in [0,1)$ is given by
$$
\ES^\mu_\alpha (X)=\min_{x\in \R} \left(x+ \frac{1}{1-\alpha}\E^\mu[ (X-x)^+]\right),~~~~X\in L^\infty.
$$ 
The function $\ES^\mu_\alpha (X)$ can be defined as the average of the left-quantile function of $X$ under $\mu$ from $\alpha$ to $1$, and also on larger spaces in $L^0$; see e.g., \cite{RU02} and \citet[Chapter 8]{MFE15}. For simplicity, we write $\ES_\alpha=\ES_\alpha^\p$. Following the guidelines of \cite{BASEL19}, for each $\mu\in \mathcal{Q}$, suppose that the decision maker uses $\mathrm{ES}_\alpha^\mu$ at level $\alpha=0.975$ to evaluate the risk of $X\in L^\infty$ under $\mu$. The resulting risk measure is the worst-case ES (see \cite{ZF09}), given by
\begin{equation}\label{eq:worst-ES}
    \overline{\ES}^\mathcal Q_\alpha(X)=\sup_{\mu\in \mathcal{Q}}\mathrm{ES}_\alpha^\mu(X),~~~~X\in L^\infty.
\end{equation}
The risk measure $ \overline{\ES}^\mathcal Q_\alpha$ with $\mathcal Q$ in \eqref{eq:ES-sup} is $\sigma(Z)$-law invariant. Therefore, considering robust risk optimization will naturally lead to partial law invariance.

Another common way to assess model uncertainty is through the use of multiple prior probability measures. Law-invariant risk measures under multiple probability measures are studied by \cite{WZ21} under the name of scenario-based risk measures,
which are formulated with the following property, for a set $\mathcal{Q}\subseteq \mathcal{M}_1$.
\begin{enumerate}
\item[] $\Q$-basedness: $\rho(X)=\rho(Y)$    for all $X,Y\in  L^{\infty}$ satisfying $F_{\mu , X}=F_{ \mu,Y}$ for all $\mu\in \Q$. 
\end{enumerate}
This property means that if two risks are identically distributed under all probability measures of interest described by $\Q$, then they are assessed as equally risky.

It is evident that law invariance is a particular case of both  $\mathcal G$-law invariance, by choosing $\mathcal G=\mathcal F$, and $\Q$-basedness, by choosing $\mathcal Q=\{\p\}$. Moreover, law invariance is stronger in general than  both  $\mathcal G$-law invariance 
and $\Q$-basedness (if $\p\in \Q$). 
What is not so obvious is how $\mathcal{G}$-law invariance and $\Q$-basedness are connected, which we address next. 
For $\mu\in \mathcal{M}_1$, let $D_\mu=\mathrm{d}\mu/\mathrm{d}\mathbb{P}\in L^1$, that is, its Radon–Nikodym derivative with respect to $\p$.

  \begin{proposition}
  \label{prop:connection}
  If
  $\rho:L^\infty\to \R$ is $\Q$-based, then it is $\mathcal G$-law invariant for any $\mathcal G$ independent of $\sigma\left(D_\mu:\mu\in \Q\right)$.
  \end{proposition}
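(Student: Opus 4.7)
The plan is to reduce $\mathcal{G}$-law invariance of $\rho$ to the hypothesis of $\Q$-basedness by showing that, under the given independence assumption, any two $\mathcal{G}$-measurable random variables with the same $\p$-distribution automatically share the same distribution under every $\mu\in\Q$. Concretely, I would fix $X,Y\in L^\infty(\mathcal{G})$ with $X\laweq_\p Y$ and aim to prove $F_{\mu,X}=F_{\mu,Y}$ for all $\mu\in\Q$, after which a single invocation of $\Q$-basedness gives $\rho(X)=\rho(Y)$.

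The core step relies on the Radon--Nikodym identity $\E^\mu[f(X)]=\E^\p[f(X) D_\mu]$ for any bounded Borel $f$. Since $X$ is $\mathcal{G}$-measurable, $D_\mu$ is $\sigma(D_\nu:\nu\in\Q)$-measurable, and the two $\sigma$-algebras are independent under $\p$, the random variables $f(X)$ and $D_\mu$ are $\p$-independent. Therefore
\[
\E^\mu[f(X)]=\E^\p[f(X)\,D_\mu]=\E^\p[f(X)]\,\E^\p[D_\mu]=\E^\p[f(X)],
\]
using $\E^\p[D_\mu]=1$. The same identity with $Y$ in place of $X$ gives $\E^\mu[f(Y)]=\E^\p[f(Y)]$, and because $X\laweq_\p Y$ implies $\E^\p[f(X)]=\E^\p[f(Y)]$ for every bounded Borel $f$, we conclude $\E^\mu[f(X)]=\E^\mu[f(Y)]$ for every such $f$. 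Taking $f=\id_{(-\infty,x]}$ yields $F_{\mu,X}(x)=F_{\mu,Y}(x)$ for all $x\in\R$ and all $\mu\in\Q$, which is exactly the hypothesis of $\Q$-basedness.

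I do not anticipate a genuine obstacle: the argument is a routine application of the factorization of expectations for independent random variables, once one recognizes that the density $D_\mu$ is, by construction, measurable with respect to $\sigma(D_\nu:\nu\in\Q)$. The only point that requires mild care is the appeal from equality of expectations of all bounded Borel functions to equality of cdfs, which is standard.
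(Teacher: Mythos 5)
Your proposal is correct and follows essentially the same route as the paper: both arguments show that independence of $\mathcal G$ from $\sigma(D_\mu:\mu\in\Q)$ lets the expectation $\E[D_\mu \id_{\{X\le x\}}]$ factor, so that $F_{\mu,X}=F_{\p,X}$ for every $\mu\in\Q$, and then invoke $\Q$-basedness. The only cosmetic difference is that you pass through general bounded Borel test functions before specializing to indicators, whereas the paper works with the indicator $\id_{\{X\le x\}}$ directly.
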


This connection between $\mathcal{Q}$-basedness and $\mathcal{G}$-law invariance reinforces the idea that partial law invariance of a risk measure can serve as an indicator of sources without model uncertainty. In this context, observe that $\mathcal{Q}$ can be viewed as a set of competing probability models. The independence between $D_\mu$ and $\mathcal G$ for all $\mu \in \mathcal{Q}$ implies   $\mu|_\mathcal{G} = \mathbb{P}|_\mathcal{G}$ for each $\mu \in \mathcal{Q}$. This means that all the competing probabilities in $\mathcal{Q}$ agree on $\mathcal{G}$, which can be directly interpreted as an absence of model uncertainty on $\mathcal{G}$, as in our motivating example.

\subsection{Partial law invariance in decision models}
\label{sec:22}

We will explore a connection between matching probabilities (defined below) and partial law invariance.  The idea of matching probability is used to analyze decisions under ambiguity; see e.g.,  \cite{DKW16} and \cite{BHSW18}. 
Suppose an agent has a preference relation $\succsim$ over a set of random variables $\mathcal{X}$ on $(\Omega,\mathcal F,\p)$ that includes the indicator functions. The agent does not have access to $\p$, so events in $\mathcal F$ can be associated with ambiguity to this agent. Some events, such as a coin toss or the simulation of a random number, are not ambiguous to this agent, and we denote by $\mathcal G$ the $\sigma$-algebra generated by these unambiguous events. For a decision scientist to understand the ambiguity attitude of the agent, the common approach is to associate each event $A\in \mathcal F $ with a number $p_A\in [0,1]$ that corresponds to the specified probability of an equally preferred event from $\mathcal G$, that is, $p_A=\mathbb{P}(B)$ where $B\in \mathcal{G}$ and $\id_A\sim \id_B$.
A questionnaire for the agent would look like ``would you prefer getting $\$ 1$ if event $A$ happens, or getting $\$1$ with probability $p$". Assuming some regularity of the preference, 
a unique $p_A\in [0,1]$ exists. This $p_A$ is called the matching probability of $A$. 
For this procedure to be sensible, an implicit assumption is that all events $A,B\in \mathcal{G}$ with $\mathbb{P}(A)=\mathbb{P}(B)$ satisfy $\id_A\sim \id_B$. Note that $p_A$ is a numerical representation of the preference of the agent on $\{\id_A:A\in \mathcal F\}$,
and this numerical representation satisfies 
$\mathcal G$-law invariance, but it should not, in general, satisfy full law invariance, as discussed below.

We will now turn to some classical models in decision theory under ambiguity, such as those of \cite{GS89}, \cite{KMM05} and \cite{MMR06}. These models also naturally give rise to partially law-invariant functionals and preferences. We first describe a general framework that allows these models to be translated into our setting. Let $(\Omega_0,\mathcal{F}_0,\mathbb{P}_0)$ denote the probability space representing the potential future states of the world. Suppose a decision-maker does not possess knowledge of $\mathbb{P}_0$ but has access to a lottery-generating device, modeled by $((0,1],\mathcal{B}((0,1]),\lambda)$, where $\lambda$ denotes the Lebesgue measure. Putting these two sources of uncertainty together, one arrives at the probability space $$(\Omega,\mathcal{F},\mathbb{P})=(\Omega_0 \times (0,1],\mathcal{F}_0 \otimes \mathcal{B}((0,1]),\mathbb{P}_0\times \lambda),$$
where $\mathcal{F}_0  \otimes \mathcal{B}((0,1])$ is the product $\sigma$-algebra. Elements of $L^\infty$ are called the Savage acts, and the decision maker has a preference relation $\succsim$ represented by $\rho: L^\infty\to \mathbb{R}$. Let $\mathcal{G} = \sigma(\pi)$, where $\pi:\Omega\to (0,1]$ is given by the projection onto the second coordinate. We refer to the elements of $L^\infty(\mathcal{G})$ as lotteries. This way of modeling lotteries with the product space structure is used by \cite{SW92,SW97} and \cite{KMM05}. Denote by $$\Delta=\{\mu\in \mathcal{M}_1:\mu(A\times B)=\mu(A\times (0,1])\, \lambda(B) \text{ for all }A\in \mathcal{F}_0\text{ and }B\in \mathcal{B}((0,1])\},$$ which represent the potential probability measures the decision maker needs to consider; note that the second marginal is always $\lambda$ since there is no ambiguity about the lotteries. 
Many examples in decision theory can be formulated via the above setting, and we list three below. In each model, $\ell:\R\to \R$ is a loss function, with the corresponding function $u(x)=-\ell(-x)$ being a von Neumann-Morgenstern utility function.
(a)
  The maxmin expected utility  of \cite{GS89}, given by $$\rho(X)=\max_{\mu\in \mathcal{S}}\int_{\Omega}\ell(X)~\d\mu=-\min_{\mu\in \mathcal{S}}\int_{\Omega}u(-X)~\d\mu,$$ where $\mathcal{S}\subseteq \Delta$;
  (b) The smooth ambiguity model of  \cite{KMM05}, given by $$\rho(X)=\int_{\Delta}\phi\left(\int_{\Omega}\ell(X)~\d\mu\right)Q(\d \mu)=-\int_{\Delta}\widetilde \phi\left(\int_{\Omega}u(-X)~\d\mu\right)Q(\d \mu),$$ where $\phi$ is an increasing transformation, $\widetilde \phi(x)=-\phi(-x)$, and $Q$ is a   probability measure over $\Delta$;
  (c) The variational preferences model of \cite{MMR06}, given by $$\rho(X)=\max_{\mu\in \Delta}\left(\int_{\Omega}\ell(X)~\d\mu-c(\mu)\right)=-\min_{\mu\in \Delta}\left(\int_{\Omega}u(-X)~\d\mu+c(\mu)\right),$$
    where $c:\Delta\to [0,\infty]$ satisfies $c(\mu)=0$ for some  $\mu\in \Delta$.  
Because measures in $\Delta$ have the same second marginal $\lambda$,  all of the above models satisfy $\rho(Z)=   \E^\p [\ell(Z)]$ for each lottery $Z\in L^\infty(\mathcal G)$, 
and hence they are $\mathcal{G}$-law invariant, but clearly not law invariant in general.  

To connect back to matching probabilities, assume that the agent's preference relation $\succsim$ is given by the maxmin expected utility model and $(\Omega_0,\mathcal{F}_0,\mathbb{P}_0)$ is atomless. Standard arguments can justify that
$\succsim$ is law invariant on $\{\id_A:A\in \mathcal F\}$ (i.e., the matching probability $A\mapsto P_A$ is law invariant) if and only if $\mathcal{S}=\{\mathbb{P}_0\times \lambda\}$, i.e., the agent has no ambiguity.

\section{Coherent risk measures and  associated preferences}

\label{sec:3}

We will study partial law invariance in the framework of coherent risk measures and their relationship to preferences. We focus on coherent risk measures for three reasons: First, they are prominent in financial regulation 
and portfolio optimization  (e.g., \cite{MFE15} and \cite{FS16}). Second, they give rise to tractable representation results that provide insights into the structure and implications of partial law invariance. Third, they can be translated into the max-min preferences of \cite{GS89} with suitable interpretation. For the rest of the paper, we will assume that $\mathcal{X}=L^\infty$, as is common in the risk measure literature.

\subsection{Preferences and risk measures}

\label{sec:31}
 A \emph{coherent risk measure} (\cite{ADEH99}) is a mapping $\rho:L^\infty\to\mathbb{R}$ satisfying the following properties.
\begin{itemize}\setlength{\itemsep}{1pt}
    \item[] Cash invariance: $\rho(X+c)=\rho(X)+c$ for all $X\in L^\infty$ and $c\in\mathbb{R}$;
    \item[] Monotonicity: $\rho(X)\leq \rho(Y)$ for all $X,Y\in L^\infty$ with $X\leq Y$;
    \item[] Convexity: $\rho(\lambda X+(1-\lambda) Y)\leq \lambda \rho(X)+(1-\lambda)  \rho(Y)$ for all $X,Y\in L^\infty$ and $\lambda \in [0,1]$;
    \item[] Positive homogeneity: $\rho(\lambda X)=\lambda \rho(X)$ for all $X\in L^\infty$ and $\lambda\geq 0$.
\end{itemize}
Moreover, a \emph{convex risk measure} is a mapping that satisfies the first three properties in the above list, and a \emph{monetary risk measure} is a mapping that satisfies the first two properties in the above list; see \cite{FR02} and \cite{FS02a}. For background on risk measures, see Appendix \ref{app:reviewRisk}.

A common continuity assumption in the risk measure literature is Fatou continuity. For $\rho:L^\infty\to\mathbb{R}$, we say that $\rho$ is \emph{Fatou (continuous)} if for all uniformly bounded sequences $(X_n)_{n\in \N}$ in $L^\infty$ that converge $\p$-a.s.~to some $X\in L^\infty$, it holds that $\rho(X)\leq \liminf_{n\to\infty}\rho(X_n).$
Fatou continuity is essential because a risk measure $\rho:L^\infty\to\mathbb{R}$ is Fatou and coherent if and only if there exists $\mathcal{S}\subseteq \mathcal{M}_1$ such that
\begin{equation}
    \label{eq:intCoh}
    \rho(X)=\sup_{\mu\in \mathcal{S}}\mathbb{E}^\mu[X] ,~~~~X\in L^\infty;
\end{equation} see Proposition \ref{prop:coherent} in Appendix \ref{app:reviewRisk} for a rigorous statement of this result. This representation is significant for two reasons. First, it links coherent risk measures to the maxmin expected utility model of \cite{GS89}. With a given loss function $\ell$, 
the random variables $\ell(X)$ can be interpreted as a subjective loss; symmetrically, for a utility function $u$ and random wealth $X$, $u(X)$ is known as the util of $X$.  Applying coherent risk measures to these subjective losses (or ``dis-utils'') agrees with the maxmin expected utility model of \cite{GS89}. Second, the worst-case expectation is evaluated with respect to a set of probability measures instead of finitely additive probabilities (which is the case without Fatou continuity). This distinction highlights the practical relevance of Fatou coherent risk measures, as statistical procedures yield probability measures rather than additive probabilities.

Preference relations that are represented by coherent risk measures will be called \emph{CRM preference relations}, and they are described by a few properties; see \cite{DK13} and Proposition \ref{proposition:CRM} in Appendix \ref{app:2}. For CRM preference relations, $\mathcal G$-law invariance is connected to second-order stochastic dominance (SSD), which we show below. For $X,Y\in L^\infty$, we write $X\geq_{2} Y$ if $\mathbb{E}[\ell(X)]\geq \mathbb{E}[\ell(Y)]$ for all increasing convex functions $\ell:\mathbb{R}\to\mathbb{R}$. In the literature, the common definition is that for two random variables $Z$ and $W$ representing gains, $Z$ dominates $W$ in SSD if $\E[u(Z)]\ge \E[u(W)]$ for all increasing concave functions $u$. This is equivalent to $-W\ge_{2} -Z$, via the connection $\ell(x)=-u(-x)$, where $-W$ and $-Z$ represent losses. Next, we define SSD consistency. 
\begin{enumerate}
\item[] $\mathcal{G}$-SSD monotonicity: $X\succsim Y$ for all $X,Y\in L^\infty(\mathcal{G})$ with $Y\geq_2 X$.
\end{enumerate}
In general, $\mathcal{G}$-SSD monotonicity implies $\mathcal{G}$-FSD monotonicity because $\ge_{1}$ is stronger than $\ge_2$. These two notions of monotonicity are equivalent for CRM preferences,  proved via standard arguments in the literature.
\begin{proposition}
    \label{proposition:CRM2}
    For a CRM preference relation $\succsim$, the following are equivalent: (i) $\mathcal{G}$-law invariance; (ii) $\mathcal{G}$-FSD monotonicity; (iii) $\mathcal{G}$-SSD monotonicity. 
\end{proposition}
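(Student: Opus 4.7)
The plan is to establish the cycle (iii) $\Rightarrow$ (ii) $\Rightarrow$ (i) $\Rightarrow$ (iii), where only the last step will require real work. The implication (iii) $\Rightarrow$ (ii) is immediate from $Y \geq_1 X \Rightarrow Y \geq_2 X$: if $Y$ stochastically dominates $X$, then $\mathbb{E}[\ell(Y)] \geq \mathbb{E}[\ell(X)]$ for every increasing $\ell$, in particular for the increasing convex ones. The implication (ii) $\Rightarrow$ (i) is a direct application of Proposition~\ref{FirstMot}, since $\mathcal{X} = L^\infty$ is $\mathcal G$-law invariant and a CRM preference is monotonic (because the representing coherent risk measure $\rho$ is monotone).

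For the non-trivial direction (i) $\Rightarrow$ (iii), let $\rho:L^\infty\to\mathbb{R}$ be a Fatou coherent risk measure representing $\succsim$ (cf.~\eqref{eq:intCoh}), and take $X,Y \in L^\infty(\mathcal G)$ with $Y \geq_2 X$. My strategy is a two-step ``Strassen + dilatation'' argument carried out entirely inside $L^\infty(\mathcal G)$. First, by the classical Strassen characterization of the increasing convex order, applied within the atomless space $(\Omega,\mathcal G,\mathbb P)$, one can construct $\tilde X, \tilde Y \in L^\infty(\mathcal G)$ with $\tilde X \overset{\mathrm d}{=}_\mathbb{P} X$, $\tilde Y \overset{\mathrm d}{=}_\mathbb{P} Y$, and $\tilde X \leq \mathbb E[\tilde Y \mid \sigma(\tilde X)]$ a.s. Monotonicity of $\rho$ and $\mathcal G$-law invariance give $\rho(X) = \rho(\tilde X) \leq \rho(\mathbb E[\tilde Y \mid \sigma(\tilde X)])$, so it remains to prove the dilatation inequality $\rho(\mathbb E[\tilde Y \mid \sigma(\tilde X)]) \leq \rho(\tilde Y)$. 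For this, I would use the standard rearrangement device: since $(\Omega,\mathcal G,\mathbb P)$ is atomless and $\sigma(\tilde X) \subseteq \mathcal G$, there is a sequence $(Z_n)_{n\in\mathbb N}$ in $L^\infty(\mathcal G)$, each with $Z_n \overset{\mathrm d}{=}_\mathbb{P} \tilde Y$ and uniformly bounded by $\|\tilde Y\|_\infty$, such that $\frac{1}{n}\sum_{i=1}^n Z_i \to \mathbb E[\tilde Y\mid \sigma(\tilde X)]$ almost surely. Subadditivity, positive homogeneity, and $\mathcal G$-law invariance then give $\rho\bigl(\frac{1}{n}\sum_{i=1}^n Z_i\bigr) \leq \frac{1}{n}\sum_{i=1}^n \rho(Z_i) = \rho(\tilde Y)$, and Fatou continuity delivers the conclusion upon letting $n\to\infty$, since $\rho(\tilde Y) = \rho(Y)$ by $\mathcal G$-law invariance.

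The main obstacle is that both ingredients — the Strassen coupling and the averaging sequence of rearrangements — must be realized inside $\mathcal G$ rather than in the ambient $\mathcal F$, because we only have $\mathcal G$-law invariance at our disposal. This forces reliance on atomlessness of $(\Omega,\mathcal G,\mathbb P)$, which is the structural hypothesis maintained in this section. Both constructions are standard in the fully law-invariant case on an atomless space; the only adjustment needed here is to verify that, since an atomless $\sigma$-algebra is measure-preservingly isomorphic to $([0,1],\mathcal B([0,1]),\lambda)$, the requisite couplings and equidistributed copies relative to the sub-$\sigma$-algebra $\sigma(\tilde X) \subseteq \mathcal G$ can be produced within $L^0(\mathcal G)$.
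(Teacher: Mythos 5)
Your proof is correct, but it takes a genuinely different route from the paper for the substantive direction. The paper disposes of (i) $\Leftrightarrow$ (ii) via Proposition \ref{FirstMot} exactly as you do, but for (i) $\Leftrightarrow$ (iii) it simply passes to the certainty-equivalent functional $\rho$, invokes the known equivalence ``a convex risk measure on an atomless space is law invariant if and only if it is SSD-monotonic'' (citing Mao and Wang, 2020), and restricts that statement to $L^\infty(\mathcal G)$. You instead reprove the hard implication (i) $\Rightarrow$ (iii) from first principles with the classical Strassen coupling plus dilatation-by-rearrangement argument, carried out inside the atomless space $(\Omega,\mathcal G,\mathbb P)$; this is essentially a self-contained proof of the cited Mao--Wang direction restricted to $L^\infty(\mathcal G)$, at the cost of two auxiliary lemmas (the Strassen representation of $\leq_{\mathrm{icx}}$ and the approximation of $\mathbb E[\tilde Y\mid\sigma(\tilde X)]$ by convex combinations of $\mathcal G$-measurable rearrangements of $\tilde Y$), both of which are standard on atomless spaces and which you correctly insist must be realized within $\mathcal G$. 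Two minor points. First, the specific Cesàro form $\frac1n\sum_{i=1}^n Z_i\to\mathbb E[\tilde Y\mid\sigma(\tilde X)]$ is more than you need and slightly harder to justify than necessary; the standard statement is that $\mathbb E[\tilde Y\mid\sigma(\tilde X)]$ lies in the $L^1$-closed convex hull of the $\mathcal G$-measurable rearrangements of $\tilde Y$, and arbitrary convex combinations $\sum_i\lambda_i Z_i$ (with an a.s.-convergent subsequence) suffice for your subadditivity--positive homogeneity--Fatou chain. Second, the proposition does not assume Fatou continuity of the representing coherent risk measure, whereas your argument invokes it; this is harmless because in direction (i) $\Rightarrow$ (iii) the restriction $\rho|_{L^\infty(\mathcal G)}$ is a law-invariant coherent risk measure on the atomless space $(\Omega,\mathcal G,\mathbb P)$ and is therefore automatically Fatou (Proposition \ref{prop:lawinv}), but you should say so rather than adding Fatou as a hypothesis.
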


In the rest of the paper, we will focus on risk measures instead of preference relations. Given a monetary risk measure $\rho:L^\infty\to\mathbb{R}$, we call $\mathcal{A}_\rho=\{X\in L^\infty\mid \rho(X)\leq 0\}$ the \emph{acceptance set} of $\rho$. Acceptance sets are a standard way to describe monetary risk measures; for some classical results regarding acceptance sets, see Appendix \ref{app:reviewRisk}. The following proposition relates partial law invariance and the acceptance set in the case of a monetary risk measure.

\begin{proposition}\label{proposition:acceptance}
    A monetary risk measure $\rho$ is $\mathcal G$-law invariant if and only if $\mathcal{A}_\rho$ is $\mathcal G$-law invariant.
\end{proposition}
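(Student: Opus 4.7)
The plan is to prove both directions directly from the definitions, leveraging cash invariance for the nontrivial implication.

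For the forward direction, suppose $\rho$ is $\mathcal{G}$-law invariant. Take $X\in \mathcal{A}_\rho\cap L^\infty(\mathcal G)$ and $Y\in L^\infty(\mathcal G)$ with $X\laweq_\p Y$. By $\mathcal{G}$-law invariance of $\rho$, $\rho(Y)=\rho(X)\le 0$, so $Y\in\mathcal{A}_\rho$. This shows $\mathcal{A}_\rho$ is $\mathcal G$-law invariant.

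For the converse, suppose $\mathcal A_\rho$ is $\mathcal G$-law invariant, and take $X,Y\in L^\infty(\mathcal G)$ with $X\laweq_\p Y$. The idea is to use cash invariance to shift $X$ so that it sits exactly on the boundary of $\mathcal{A}_\rho$. Specifically, $\rho(X-\rho(X))=0$ by cash invariance, so $X-\rho(X)\in \mathcal{A}_\rho$, and $X-\rho(X)$ is $\mathcal G$-measurable because $X$ is. Now $Y-\rho(X)$ is $\mathcal G$-measurable and $Y-\rho(X)\laweq_\p X-\rho(X)$, so the $\mathcal G$-law invariance of $\mathcal{A}_\rho$ gives $Y-\rho(X)\in \mathcal{A}_\rho$, i.e.\ $\rho(Y)-\rho(X)=\rho(Y-\rho(X))\le 0$. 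Swapping the roles of $X$ and $Y$ yields the reverse inequality, hence $\rho(X)=\rho(Y)$.

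There is essentially no obstacle here; the only subtlety is to note that cash invariance is what makes the acceptance-set characterization work for monetary risk measures, and it is used precisely to convert the equality $\rho(X)=\rho(Y)$ into a pair of membership statements $X-\rho(X),\,Y-\rho(X)\in\mathcal{A}_\rho$. Measurability is preserved under addition of a constant, so all the shifted random variables remain in $L^\infty(\mathcal G)$, and the argument goes through without any approximation or continuity consideration.
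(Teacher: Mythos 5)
Your proof is correct and follows essentially the same route as the paper, which also disposes of the forward direction as trivial and handles the converse by using cash invariance to identify the level sets $\{X:\rho(X)\le c\}$ with $\mathcal{A}_\rho+c$; your argument simply makes that shifting step explicit. No gaps.
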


\subsection{Representation of partially law-invariant coherent risk measures}
\label{sec:32}

Our next task is to understand the structure of all $\mathcal G$-law invariant coherent risk measures. First, we need some notation. We denote by $\mathcal{M}(\mathcal{G})$ (resp.\ $\mathcal{M}_1(\mathcal{G})$) the set of finite signed measures (resp.~probability measures) on $(\Omega,\mathcal{G})$ that are absolutely continuous with respect to $\mathbb{P}|_\mathcal{G}$.  We write  $\mathcal{M}=\mathcal{M}(\mathcal{F})$, and $\mathcal{M}_1=\mathcal{M}_1(\mathcal{F})$ for simplicity and to stay consistent with the notation from earlier. If $\mu\in\mathcal{M}_1(\mathcal{G})$, we will continue to use the notation $D_\mu$ to denote the Radon–Nikodym derivative with respect to $\p|_\mathcal{G}$, as $\mathrm{d}\mu/\mathrm{d}\mathbb{P}|_{\mathcal{G}}\in L^1(\mathcal{G})$. For $\mathcal S\subseteq \mathcal M(\mathcal{G})$ or $\mathcal S\subseteq\mathcal{M}$, we write $\widehat {\mathcal S}=\{D_\mu:\mu\in \mathcal S\}$, 
which is contained in $L^1(\mathcal{G})$ or $L^1$. Furthermore, in both cases, $\mathcal{S}$ is said to be  $\mathcal{G}$-law invariant if the set $\widehat{\mathcal{S}}$ is $\mathcal{G}$-law invariant (defined in Section \ref{sec:2-def}). For notational simplicity, when $\mu\in \mathcal{M}_1(\mathcal{G})$ we define $\E^\mu $ on $L^\infty$ by $\E^\mu[X]=\E[D_\mu X]$.
For $\mathcal S\subseteq \mathcal M$, we write $\mathcal{S}^{\mathcal{G}}=\{\mu|_\mathcal{G}:\mu\in \mathcal{S}\}\subseteq \mathcal{M}(\mathcal{G})$. Similarly, for $\mathcal{B}\subseteq L^1$, we write $\mathcal{B}^\mathcal{G}=\{\mathbb{E}[X|\mathcal{G}] : X\in\mathcal{B}\}\subseteq L^1(\mathcal{G})$. There is no ambiguity when we use the notation $\widehat{\mathcal{R}}^\mathcal{G}$ because
$(\widehat{\mathcal{R}})^\mathcal{G}=\widehat{(\mathcal{R}^\mathcal{G})}$. Given $\mu\in \mathcal{M}_1(\mathcal{G})$, let $\mathcal{E}(\mu)=\{\nu\in \mathcal{M}_1:\nu|_\mathcal{G}=\mu\}$.
 Clearly, $\mathcal E(\mu) $ depends on $\mathcal G$, which will be implicitly fixed in all places. 
 If $\mu\in \mathcal{M}_1$, we will denote $\mathcal{E}(\mu|_\mathcal{G})$ by $\mathcal{E}(\mu)$ for simplicity.

Let $\mathfrak{X}$ and $\mathfrak{Y}$ be two vector spaces. 
Given a dual pairing $\left(\mathfrak{X},\mathfrak{Y},\langle\; ,\;\rangle\right)$
let $w(\mathfrak{X},\mathfrak{Y})$ denote the weak topology on $\mathfrak{X}$. This makes $\left(\mathfrak{X},w(\mathfrak{X},\mathfrak{Y})\right)$   a locally convex (Hausdorff) topological vector space. Given some $\mathcal{R}\subseteq \mathfrak{X}$, let $\conv(\mathcal{R})$ denote the convex hull of $\mathcal{R}$. It is well known that $\left(L^\infty,\mathcal{M},\langle\;,\;\rangle\right)$ is a dual pairing under $\langle X,\mu\rangle =\mathbb{E}^\mu[X]=\mathbb{E}\left[D_\mu X\right].$ It is also well known that $\left(L^\infty ,L^1 ,\langle\;,\;\rangle\right)$ is a dual pairing under $\langle X,Y\rangle =\mathbb{E}[XY].$ Indeed, $w\left(L^\infty,\mathcal{M}\right)$ and $w\left(L^\infty,L^1\right)$ coincide since $\left(\mathcal{M},w\left(\mathcal{M},L^\infty\right)\right)\cong\left(L^1,w\left(L^1,L^\infty\right)\right)$ through the mapping $\mu\mapsto D_\mu$.

For a Fatou coherent risk measure $\rho:L^\infty\to\mathbb{R}$, its \emph{supporting set} is defined as:
$$\mathcal{S}_\rho=\left\{\mu\in \mathcal{M}_1:\mathbb{E}^{\mu}[X]\leq \rho(X)~\text{for all}~X\in L^\infty\right\}.$$
Using the subjective loss interpretation for the random variables, we see that the supporting set is the set of priors in the the maxmin expected utility model of \cite{GS89}.

We are now ready to present a general representation for Fatou $\mathcal G$-law-invariant coherent risk measures reminiscent of representation \eqref{eq:intCoh}.
\begin{theorem}
\label{th:friday}
For a mapping $\rho:L^\infty\to \R$, the following are equivalent.
\begin{enumerate}[(i) ]
    \item The risk measure $\rho$ is Fatou continuous, $\mathcal G$-law invariant, and coherent;
    \item there exists a convex $\mathcal S\subseteq \mathcal M_1 $ such that the $w(\mathcal{M}(\mathcal{G}),L^\infty(\mathcal{G}))$-closure of $\mathcal{S}^\mathcal{G}$ is $\mathcal G$-law invariant and
    \begin{equation}\label{eq:friday}
\rho(X) =\sup_{\mu\in \mathcal S} \E^\mu[X] ,~~~~X\in L^\infty;
\end{equation}
    \item the $w(\mathcal{M}(\mathcal{G}),L^\infty(\mathcal{G}))$-closure  of 
 $(\mathcal{S}_\rho)^\mathcal{G}$ is $\mathcal G$-law invariant.
\end{enumerate}
\end{theorem}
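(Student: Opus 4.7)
The plan is to prove $\text{(iii)} \Rightarrow \text{(ii)} \Rightarrow \text{(i)} \Rightarrow \text{(iii)}$, using the classical dual representation of Fatou coherent risk measures as a black box and reducing partial law invariance to full law invariance on a smaller $L^\infty$ space.

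The implication (iii) $\Rightarrow$ (ii) is immediate: by the standard representation (Proposition \ref{prop:coherent}), any Fatou coherent $\rho$ admits the form \eqref{eq:friday} with $\mathcal{S} = \mathcal{S}_\rho$, which is convex, so (iii) directly provides the structure required in (ii).

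For (ii) $\Rightarrow$ (i), Fatou continuity and coherence of $\rho$ are immediate consequences of writing $\rho$ as a supremum of expectations under a family of probability measures. To verify $\mathcal{G}$-law invariance, I would observe that for $X\in L^\infty(\mathcal{G})$ one has $\mathbb{E}^\mu[X]=\mathbb{E}[\mathbb{E}[D_\mu\mid \mathcal{G}]X]$, so
\[
\rho(X)=\sup_{\mu\in\mathcal{S}}\mathbb{E}^\mu[X]=\sup_{\nu\in \mathcal{S}^\mathcal{G}}\mathbb{E}^\nu[X]=\sup_{\nu\in \overline{\mathcal{S}^\mathcal{G}}}\mathbb{E}^\nu[X],
\]
where the last equality uses that $X \mapsto \mathbb{E}^\nu[X]$ is $w(\mathcal{M}(\mathcal{G}),L^\infty(\mathcal{G}))$-continuous in $\nu$. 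Then given $X\overset{\mathrm{d}}{=}_\mathbb{P}Y$ with $X,Y\in L^\infty(\mathcal{G})$, the atomless assumption on $(\Omega,\mathcal{G},\mathbb{P})$ lets me, for each $\nu$ in the closure with density $D_\nu\in L^1(\mathcal{G})$, produce $D'\in L^1(\mathcal{G})$ with $(D',Y)\overset{\mathrm{d}}{=}(D_\nu,X)$; the assumed $\mathcal{G}$-law invariance of $\overline{\mathcal{S}^\mathcal{G}}$ places the corresponding $\nu'$ back in the closure, giving $\mathbb{E}^{\nu}[X]=\mathbb{E}^{\nu'}[Y]\le\rho(Y)$, and symmetry yields $\rho(X)=\rho(Y)$.

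For (i) $\Rightarrow$ (iii), which I anticipate to be the main difficulty, I would restrict $\rho$ to $L^\infty(\mathcal{G})$. The restriction $\rho|_{L^\infty(\mathcal{G})}$ inherits Fatou continuity and coherence, and by $\mathcal{G}$-law invariance it is a (fully) law-invariant Fatou coherent risk measure on the atomless probability space $(\Omega,\mathcal{G},\mathbb{P}|_\mathcal{G})$. The key step is to identify its supporting set in $\mathcal{M}_1(\mathcal{G})$ with the closure of $(\mathcal{S}_\rho)^\mathcal{G}$: the inclusion $\overline{(\mathcal{S}_\rho)^\mathcal{G}}\subseteq\mathrm{supp}(\rho|_{L^\infty(\mathcal{G})})$ follows since $\mathbb{E}^{\mu|_\mathcal{G}}[X]=\mathbb{E}^\mu[X]\le\rho(X)$ for $X\in L^\infty(\mathcal{G})$ and $\mu\in\mathcal{S}_\rho$; the reverse inclusion I would establish by a standard Hahn–Banach separation in the dual pairing $(L^\infty(\mathcal{G}),\mathcal{M}(\mathcal{G}))$, since any $\nu\notin\overline{(\mathcal{S}_\rho)^\mathcal{G}}$ can be strictly separated by some $X\in L^\infty(\mathcal{G})$, yielding $\mathbb{E}^\nu[X]>\rho(X)$.

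Finally, I would invoke the classical fact that the supporting set of a law-invariant Fatou coherent risk measure on an atomless space is itself law invariant: given $\nu$ in the supporting set with density $D_\nu\in L^1(\mathcal{G})$ and any $D'\overset{\mathrm{d}}{=}D_\nu$ in $L^1(\mathcal{G})$, for arbitrary $X\in L^\infty(\mathcal{G})$ the atomless structure of $\mathcal{G}$ supplies $X'\in L^\infty(\mathcal{G})$ with $(D_\nu,X')\overset{\mathrm{d}}{=}(D',X)$, whence $\mathbb{E}[D'X]=\mathbb{E}[D_\nu X']\le\rho(X')=\rho(X)$. Applying this to $\rho|_{L^\infty(\mathcal{G})}$ shows that $\overline{(\mathcal{S}_\rho)^\mathcal{G}}$ is $\mathcal{G}$-law invariant, completing (iii). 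The main delicate point throughout is keeping the two topologies $w(\mathcal{M}(\mathcal{G}),L^\infty(\mathcal{G}))$ and $w(L^1(\mathcal{G}),L^\infty(\mathcal{G}))$ aligned under the identification $\mu\mapsto D_\mu$, and ensuring the atomless hypothesis on $\mathcal{G}$ is invoked precisely where joint distributions with prescribed marginals are constructed.
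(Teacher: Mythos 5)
Your overall architecture matches the paper's: (iii)$\Rightarrow$(ii) via $\mathcal{S}=\mathcal{S}_\rho$, (ii)$\Rightarrow$(i) by showing the restriction $\tilde\rho=\rho|_{L^\infty(\mathcal{G})}$ is law invariant, and (i)$\Rightarrow$(iii) by identifying the closure of $(\mathcal{S}_\rho)^\mathcal{G}$ with $\mathcal{S}_{\tilde\rho}$ (this is exactly the paper's Proposition~\ref{prop:good}) and then invoking the classical equivalence ``law-invariant Fatou coherent risk measure $\iff$ law-invariant supporting set'' on the atomless space $(\Omega,\mathcal{G},\mathbb{P}|_\mathcal{G})$. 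However, the coupling device you use to re-derive that classical equivalence is false, and you rely on it twice. You claim that for $X,Y\in L^\infty(\mathcal{G})$ with $X\overset{\mathrm{d}}{=}_\mathbb{P}Y$ and any $D_\nu\in L^1(\mathcal{G})$, atomlessness of $(\Omega,\mathcal{G},\mathbb{P})$ yields $D'\in L^1(\mathcal{G})$ with $(D',Y)\overset{\mathrm{d}}{=}(D_\nu,X)$ (and the symmetric claim producing $X'$ with $(D_\nu,X')\overset{\mathrm{d}}{=}(D',X)$). This fails whenever $Y$ generates essentially all of $\mathcal{G}$ while $D_\nu$ is not $\sigma(X)$-measurable: take $\mathcal{G}=\mathcal{B}([0,1])$ with Lebesgue measure, $Y(\omega)=\omega$, let $X$ be the uniform variable built from the odd binary digits of $\omega$ and $D_\nu$ a nondegenerate mean-one density built from the even digits, so that $D_\nu$ and $X$ are independent. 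Any $D'\in L^1(\mathcal{G})$ is a Borel function of $\omega=Y$, so the law of $(D',Y)$ is carried by a graph and can never equal the product law of $(D_\nu,X)$. Matching a full prescribed joint distribution requires extra randomness independent of $Y$ inside $\mathcal{G}$, which plain atomlessness does not supply.

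The gap is repairable because you never need the full joint law --- only an inequality between the bilinear forms. The correct tool is the Hardy--Littlewood rearrangement inequality together with comonotonic coupling, which is exactly how the cited classical result (\citet[Theorem 4.59]{FS16}) is proved: for any $\nu$ in the closure, $\mathbb{E}[D_\nu X]\le\int_0^1 q_{D_\nu}(t)\,q_X(t)\,\mathrm{d}t=\int_0^1 q_{D_\nu}(t)\,q_Y(t)\,\mathrm{d}t=\mathbb{E}[D'Y]$, where $D'=q_{D_\nu}(U)$ for a $\mathcal{G}$-measurable uniform $U$ with $Y=q_Y(U)$ a.s.\ (this single-marginal comonotone coupling is what Lemma~\ref{lem:fel} actually delivers on an atomless $(\Omega,\mathcal{G},\mathbb{P})$), and $D'\overset{\mathrm{d}}{=}_\mathbb{P}D_\nu$ puts the corresponding measure back in the law-invariant closure; the argument for the supporting set in (i)$\Rightarrow$(iii) is symmetric. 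Alternatively, simply cite the classical equivalence as a black box, as the paper does, rather than re-deriving it; with that substitution your proof is complete and coincides with the paper's.
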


We say that $\mathcal S\subseteq \mathcal M_1(\mathcal G)$ is  almost $\mathcal G$-law invariant
if the $w(\mathcal{M}(\mathcal{G}),L^\infty(\mathcal{G}))$-closure of $\mathcal{S} $ is $\mathcal G$-law invariant as in parts (ii)--(iii) of Theorem \ref{th:friday}.
It is well-known that a Fatou coherent risk measure $\rho$ is law invariant if and only if its supporting set $\mathcal{S}_{\rho}\subseteq \mathcal{M}_1$ is law invariant; see \citet[Theorem 4.59]{FS16}. Theorem \ref{th:friday} is reminiscent of this result, but with some subtle differences; in particular, one needs to take the restriction of elements of $\mathcal S$, as well as a closure of the resulting set. The first operation is clearly needed, and the second one seems essential. For a discussion of this condition and a proof of Theorem \ref{th:friday}, see Section \ref{app:friday}.

\begin{example}
Let $\rho$ be the mapping $X\mapsto \E^\nu[X]$ where $\nu\in \mathcal{E}(\mathbb{P})$.
Clearly, $\rho$ is $\mathcal G$-law invariant, and the only possible set $\mathcal S$ in \eqref{eq:friday} is given by $\{\nu\}$.
We have $\mathcal{S}^\mathcal{G}=\{\p|_{\mathcal G}\}$, which is almost $\mathcal G$-law invariant. 
\end{example}

\begin{example}
    Using the subjective loss interpretation for the random variables, we see the preference relations related to the maxmin expected utility model of \cite{GS89} are $\mathcal{G}$-law invariant if and only if $\mathcal{S}^\mathcal{G}$ is almost $\mathcal{G}$-law invariant, where $\mathcal{S}$ is the set of priors. In the specific example given in Section \ref{sec:22}, it is straightforward to show that $\mathcal{S}^{\mathcal G}=\{\tilde{\lambda}\}$, where $\tilde{\lambda}(\Omega_0\times B)=\lambda(B)$ for all $B\in \mathcal{B}((0,1]).$ Thus $\mathcal{S}^{\mathcal{G}}$ is almost $\mathcal{G}$-law invariant.
\end{example}

Theorem \ref{th:friday} also gives a feasible way to construct $\mathcal G$-law-invariant risk measures.
We can take any law-invariant coherent risk measure $\tilde \rho$ on $L^\infty(\mathcal G)$ (e.g., $\ES_\alpha$),
and for each $\mu\in \mathcal S_{\tilde \rho}\subseteq \M_1(\mathcal G)$  we take any set $\mathcal S(\mu)\subseteq \mathcal{E}(\mu)$.
Setting
\begin{equation}
\label{eq:explain-th-1}
\rho(X)=\sup_{\nu \in \mathcal{S}} \E^\nu [X], ~~X\in L^\infty,\mbox{~~where } \mathcal{S}=\bigcup_{\mu\in \mathcal S_{\tilde\rho} } \mathcal S(\mu) ,
\end{equation}
yields a $\mathcal G$-law-invariant risk measure. Moreover, $\rho$ coincides with $\tilde \rho$ on $L^\infty(\mathcal G)$. For instance, the risk measure $ \overline{\ES}^\mathcal Q_\alpha$ in Section \ref{sec:21} coincides with $\ES_{\alpha}$ on $
L^\infty(\mathcal G)$.

\section{Multiple sources}
\label{sec:mults}

We first clarify the connection between partial law invariance and the source theory of \cite{BBLW23}. In the source theory, the decision maker is assumed to have a specific form of preference relation, and on the source $\mathcal G$ of risk,\footnote{\cite{BBLW23} used algebras instead of $\sigma$-algebras, which is a technical point and does not affect intuition.} the decision maker is law invariant with respect to an objective probability. Our concept of partial law invariance assumes a probability $\p$ on the entire $\sigma$-algebra $\mathcal F$, which is interpreted as objective on $\mathcal G$. The requirement on $\mathcal G$ for the preference relation is the same in both frameworks. 

The two frameworks differ on how random variables outside $\mathcal G$ are treated. In our framework, losses outside $L^\infty(\mathcal G)$ are connected to those in $L^\infty(\mathcal G)$ via financial properties like monotonicity and convexity, but we do not directly specify how their distributions are used in evaluating the risk. In the source theory, multiple sources are specified, and the decision maker is law invariant with respect to a (possibly different) subjective probability on each of them.\footnote{In source theory, the natural property of monotonicity (i.e., $X\succsim Y$ if $X\le Y$ pointwise, representing losses) across different sources (i.e., $X$ and $Y$ may not be measurable to the same source) seems to be difficult to embed.}

Thus far, we have only considered one source (sub-$\sigma$-algebra) on which the risk measure is law invariant. In practical applications, there may be multiple sources of randomness on which the risk measure is law invariant.  Extending our framework to multiple sources, 
we consider a similar idea as in \cite{BBLW23}, to let the decision maker be law invariant on each of the sources, but with respect to the common probability $\p$. This leads to the following formal property. Let $\Sigma(\mathcal{F})$ denote the set of all atomless sub-$\sigma$-algebras of $\mathcal{F}$ and $\mathfrak{G}\subseteq \Sigma(\mathcal{F})$. For a mapping $\rho:L^\infty\to\mathbb{R}$, we say that $\rho$ is \emph{$\mathfrak{G}$-law invariant} if it is $\mathcal{G}$-law invariant for each $\mathcal{G}\in\mathfrak{G}$.  
When $\mathfrak{G}=\{\mathcal{G}\}$,   $\mathfrak{G}$-law invariance is the same as $\mathcal{G}$-law invariance.

A natural corollary of Theorem \ref{th:friday} characterizes all Fatou $\mathfrak{G}$-law invariant coherent risk measures.

\begin{theorem}
\label{th:multi}
Let $\mathfrak G\subseteq \Sigma(\mathcal{F})$.
A mapping 
$\rho:L^\infty\to \R$ is a Fatou $\mathfrak{G}$-law-invariant coherent risk measure if and only if 
\begin{equation*}
\rho(X) =\sup_{\mu\in \mathcal S} \E^\mu[X] ,~~~~X\in L^\infty
\end{equation*}
holds for some convex $\mathcal S\subseteq \mathcal M_1 $ such that
the $\mathcal{S}^\mathcal{G}$ is almost $\mathcal G$-law invariant for each $\mathcal G\in \mathfrak G$.
\end{theorem}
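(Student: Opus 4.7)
The plan is to deduce Theorem \ref{th:multi} as an almost immediate corollary of Theorem \ref{th:friday}. The key observation is that $\mathfrak{G}$-law invariance is by definition the conjunction of $\mathcal{G}$-law invariance over all $\mathcal{G}\in\mathfrak{G}$, while the representing set $\mathcal{S}$ in Theorem \ref{th:friday} can be taken canonically as the supporting set $\mathcal{S}_\rho$, which does not depend on any particular $\mathcal{G}$. So the single set $\mathcal{S}_\rho$ should simultaneously witness the $\mathcal{G}$-law invariance condition for every $\mathcal{G}\in\mathfrak{G}$.

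For the ``only if'' direction, I would start by assuming that $\rho$ is Fatou, coherent, and $\mathfrak{G}$-law invariant. Since Fatou continuity and coherence already yield the representation $\rho(X)=\sup_{\mu\in\mathcal{S}_\rho}\E^\mu[X]$ with $\mathcal{S}_\rho$ convex (the supporting set is always convex as an intersection of half-spaces), I would then fix an arbitrary $\mathcal{G}\in\mathfrak{G}$ and invoke the implication (i)$\Rightarrow$(iii) of Theorem \ref{th:friday}, using the fact that $\rho$ is in particular $\mathcal{G}$-law invariant, to conclude that the $w(\mathcal{M}(\mathcal{G}),L^\infty(\mathcal{G}))$-closure of $(\mathcal{S}_\rho)^{\mathcal{G}}$ is $\mathcal{G}$-law invariant, i.e., $(\mathcal{S}_\rho)^{\mathcal{G}}$ is almost $\mathcal{G}$-law invariant. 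Taking $\mathcal{S}=\mathcal{S}_\rho$ then fulfils the required conditions for every $\mathcal{G}\in\mathfrak{G}$ uniformly.

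For the ``if'' direction, I would begin with a convex $\mathcal{S}\subseteq\mathcal{M}_1$ such that $\mathcal{S}^{\mathcal{G}}$ is almost $\mathcal{G}$-law invariant for each $\mathcal{G}\in\mathfrak{G}$ and $\rho(X)=\sup_{\mu\in\mathcal{S}}\E^\mu[X]$. Fixing any single $\mathcal{G}\in\mathfrak{G}$, the implication (ii)$\Rightarrow$(i) of Theorem \ref{th:friday} gives that $\rho$ is Fatou, coherent, and $\mathcal{G}$-law invariant. Since $\mathcal{G}\in\mathfrak{G}$ was arbitrary, $\rho$ is $\mathcal{G}$-law invariant for every $\mathcal{G}\in\mathfrak{G}$, which is precisely $\mathfrak{G}$-law invariance, and the Fatou/coherence properties from any single application suffice.

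There is no real obstacle here, since all the analytic work is already packaged inside Theorem \ref{th:friday}; the only conceptual point to stress is that the same representing set $\mathcal{S}$ works for every $\mathcal{G}\in\mathfrak{G}$ at once. This is what makes the extension to multiple sources costless: one does not have to glue together different witnesses for different $\mathcal{G}$'s, because Theorem \ref{th:friday}(iii) already identifies the canonical witness $\mathcal{S}_\rho$ intrinsically from $\rho$.
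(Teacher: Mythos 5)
Your proposal is correct and follows exactly the route the paper intends: the paper presents Theorem \ref{th:multi} as a direct corollary of Theorem \ref{th:friday}, applied separately to each $\mathcal G\in\mathfrak G$, with the supporting set $\mathcal S_\rho$ serving as the single canonical witness via the implication (i)$\Rightarrow$(iii). Your emphasis on why no gluing of different witnesses is needed is precisely the right observation.
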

Appendix \ref{app:mult} contains an example of a $\mathfrak{G}$-law-invariant coherent risk measure given by $\rho(X)=\sup_{\mathcal{G}\in \mathfrak{G}}\rho_\mathcal{G}(\mathbb{E}[X|\mathcal{G}])$, where each $\rho_{\mathcal G}$ is law invariant and coherent, along with its representation in the form of Theorem \ref{th:multi}.

There are at least two stronger yet natural ways of specifying partial law invariance across multiple sources, which we discuss below. 
\begin{enumerate}[(a)]
\item 
We may further require that the same risk evaluation applies across different $\mathcal G\in \mathfrak G$, leading to the condition that $\rho(X)=\rho(Y)$ for all $X,Y\in \bigcup_{\mathcal G\in \mathfrak G}L^\infty(\mathcal{G})$ that satisfy $X\laweq Y$. Note that $\bigcup_{\mathcal G\in \mathfrak G}L^\infty(\mathcal{G})$ is not necessarily a space or even a convex set, making the property potentially cumbersome to work with.
 
\item To obtain a proper linear space, we can   consider the sub-$\sigma$-algebra $\sigma_\mathfrak{G}=\sigma\left(\bigcup_{\mathcal G\in \mathfrak{G}} \mathcal G\right)$, and work with $\sigma_\mathfrak{G}$-law invariance.
\end{enumerate}

It is clear that both (a) and (b) are stronger than $\mathfrak G$-law invariance, and moreover, (b) is stronger than (a).
In Example \ref{ex:RRA} below, we illustrate a situation in which (b) is not desirable, whereas (a) may be reasonable. 

\begin{example}\label{ex:RRA} In the context of robust risk aggregation (see e.g., \cite{EPR13, EWW15}, \cite{BJW14} and the references therein),  the marginal distributions of $d$ random losses $X_1,\dots,X_d$ are known,
but their joint distribution is subject to substantial model uncertainty. In this case,
it is natural to assume partial law invariance for $\mathfrak{G}=\{\sigma(X_i): i=1,\dots,d\}$, but not on the larger $\sigma$-algebra $\sigma(X_1,\dots,X_d)$, because the distributions of $X_1$ and $X_2$
 are certain,
 but the distribution of $X_1+X_2$ is not. A concrete model is described below.
    Let $(\Omega,\mathcal{F})=\left(\mathbb{R}^d,\mathcal{B}\left(\mathbb{R}^d\right)\right)$ and $\mathbb{P}$ be the multivariate Gaussian measure with mean $m\in \mathbb{R}^d$ and covariance matrix $\Sigma\in \mathbb{R}^{d\times d}$. Let $\tilde{\Sigma}\in \mathbb{R}^{d\times d}$ be given by $\tilde{\Sigma}_{ij}=\id_{\{i = j\}} \sigma_i\sigma_j$. For $\lambda\in [0,1]$, let $\mu_\lambda\in \mathcal{M}_1$ be the multivariate Gaussian measure with mean $m$ and covariance matrix $\lambda\Sigma+(1-\lambda)\tilde{\Sigma}$. Let $X_1,\dots,X_d$ be the random variables given by the projections and $\mathfrak{G}=\left\{\sigma(X_i):i=1,\dots,d\right\}.$ For $\alpha\in [0,1)$, define the risk measure $$\rho(X)=\sup_{\lambda\in [0,1]}\mathrm{ES}^{\mu_\lambda}_\alpha(X),~~~ X\in L^\infty.$$ For $i\in \{1,...,d\}$, let $\nu_i$ be the Gaussian measure on $\mathbb{R}$ with mean $m_i$ and variance $\sigma_i^2$. For all $\lambda\in [0,1]$ and $B\in \mathcal{B}(\mathbb{R})$, $\mu_\lambda\left(X_i^{-1}(B)\right)=\nu_i(B)$. Hence, $\rho$ is a $\mathfrak{G}$-law invariant coherent risk measure, and further it satisfies   property (a) but not property (b). 
 \end{example}

Our current formulation of $\mathfrak G$-law invariance has an advantage over property (a), that is, the flexibility for modeling the situation where different sources are associated with different levels of model uncertainty (ambiguity), illustrated by Example \ref{ex:HS} below.

\begin{example}\label{ex:HS}
Consider  $\mathfrak G=\{\mathcal G_0, \mathcal G_1,\dots,\mathcal G_n\}$ with their pairwise intersection being generated by constants (an example would be that they are independent).
Our interpretation is that $\mathcal G_0$ has no model uncertainty and each $\mathcal G_i$ is associated with different levels of model uncertainty.
Suppose that the decision maker uses 
 the multiplier preferences of \cite{HS01} on each $L^\infty(\mathcal G_i)$, $i\in \{0,1,\dots,n\}$, represented by
 \begin{align}\label{eq:R1-HS} \rho  (X) =\sup_{\mu\in \mathcal M_1} \left(\E^\mu[\ell(X)]-\frac1{\theta_{i}} R(\mu,\p)\right) =\frac1{\theta_{i}}  \log \E\left[e^{\theta_i \ell(X)}\right],~~ X\in L^\infty(\mathcal G_i),\end{align}
 where $\theta_i \ge  0$, $\ell$ is an increasing loss function and $R(\mu,\p)$ is the relative entropy of $\mu$ with respect to $\p$.
The second equality in \eqref{eq:R1-HS} is  known as the
Donsker--Varadhan variational formula;  
 see e.g.,~\cite{S11} for this formulation of the multiplier preferences via  $\ell(x)=-u(-x)$ as in Section \ref{sec:22}. 
When $\theta_i=0$, $\rho(X)=\E[\ell(X)]$, which corresponds to the limit as $\theta_i\downarrow 0$.  
 The formulation \eqref{eq:R1-HS} is well defined because the intersections of $L^\infty(\mathcal G_i)$ and $L^\infty(\mathcal G_j)$ for $i\ne j$
 only contain constants $c\in \R$, for which $\rho(c)=\ell(c)$.  
 In this model, $\theta_i$ represents the magnitude of model uncertainty in $\mathcal G_i$, with a larger value representing more severe model uncertainty; note that $\rho$ in \eqref{eq:R1-HS} is increasing in $\theta_i$. We may assume $\theta_0=0$ because $\mathcal G_0$ has no model uncertainty (but it is not necessary for the discussions here). 
For $X$ outside $\bigcup_{i=0}^n L^\infty(\mathcal G_i)$, we can take $\rho$ arbitrarily. 
By \eqref{eq:R1-HS}, we can see that  $\rho$ satisfies $\mathfrak G$-law invariance, and it further satisfies property (a) if $\theta_0=\theta_1=\dots=\theta_n$.
An explicit convex risk measure on $L^\infty$ satisfying \eqref{eq:R1-HS} 
is presented in the next example, where the sources are independent. 
\end{example}

In Example \ref{ex:HS}, even if model uncertainty exists in each source $\mathcal G_1,\dots,\mathcal G_n$, the risk measure is law invariant on each subspace as a result of robustification,   discussed by \cite{HS01}.

\begin{example}
\label{ex:R1-HS2}
 The risk measure   $\rho$ in 
 \eqref{eq:R1-HS}   on $L^\infty(\mathcal G_i)$ 
 can be written as  an 
entropic risk measure  (\cite{FS02a}) applied to $\ell(X)$. The entropic risk measure $\mathrm{ER}_\beta: L^\infty\to \R$ with parameter $\beta \ge0 $ is
defined by  $$
\mathrm{ER}_\beta (X) = \frac 1 \beta \log \E\left[ e^{\beta X}\right], \text{~for~$\beta >0$},\text{~~and~~$\mathrm{ER}_0=\E$}.
$$ 
The convex risk measures ER have unique properties among monetary risk measures, characterized by \cite{EMWW21}, and they play a fundamental role in the recent characterization of monotone additive statistics by \cite{MPST24}. Note that \eqref{eq:R1-HS} reads as 
$\rho(X) =\mathrm{ER}_{\theta_i}(\ell(X))$ on $L^\infty (\mathcal G_i)$ for each $i\in\{0,1,\dots,n\}$. 
Suppose that $\mathfrak{G}=\{\mathcal{G}_0,\mathcal G_1,\dots,\mathcal{G}_n\}\subseteq \Sigma(\mathcal{F})$ is a collection of pair-wise independent $\sigma$-algebras. Define  $\widehat\rho:L^\infty\to\mathbb{R}$ by $$\widehat\rho(X)=\max_{i\in \{0,1,\dots,n\}}\mathrm{ER}_{\theta_i}(\mathbb{E}[\ell(X)|\mathcal{G}_i]).$$ 
By Jensen's inequality, $\mathrm{ER}_{\theta_i}(\ell(X))\geq \mathrm{ER}_{\theta_i}(\mathbb{E}[\ell(X)])=\mathbb{E}[\ell(X)]$ for all $X\in L^\infty$ and $i\in \{0, 1,\dots,n\}$. Hence, $\widehat \rho(X)=\mathrm{ER}_{\theta_i}(\ell(X))$ for all $X\in L^\infty(\mathcal{G}_i)$ and thus it satisfies \eqref{eq:R1-HS}.
If $\ell(x)=x$, then $\widehat \rho$ is a convex risk measure because it is the maximum of several convex risk measures of the form $\mathrm{ER}_{\theta_i}(\E[X|\mathcal G_i])$.
\end{example}

\section{Extensions}
\label{sec:4}

In this section, we present three extensions of partially law-invariant coherent risk measures. First, in Section \ref{sec:coh-cha}, we introduce a technical tool called coherent adjustments, which admit interesting mathematical properties. Coherent adjustments allow us to reformulate Theorem \ref{th:friday} and serve as the foundation for the subsequent two extensions. Second, in Section \ref{sec:strong}, we strengthen partial law invariance to a property, which still generalizes law invariance, that allows us to extend the representation of \cite{K01}. Finally, in Section \ref{sec:gen}, we extend some of the results regarding partially law-invariant coherent risk measures to partially law-invariant convex risk measures.

\subsection{Coherent adjustments and an alternative representation}
\label{sec:coh-cha}
Next, we take a different route to introduce the concept of coherent adjustments, a useful technical tool that helps us further understand $\mathcal G$-law-invariant coherent risk measures. In particular, coherent adjustments make precise how the risk measures with supporting sets $\mathcal S(\mu)$ are built in \eqref{eq:explain-th-1}.

\begin{definition}
    \label{def:ca}
    Let $\mu\in \mathcal{M}_{1}(\mathcal{G})$, a \emph{$\mu$-coherent adjustment} is a mapping $\tau:L^\infty\to\mathbb{R}$ satisfying
\begin{itemize}
    \item[] $\mathcal{G}$-invariance: $\tau(X+Y)=\tau(X)$ for all $X\in L^\infty$ and $Y\in L^\infty(\mathcal{G})$;
    \item[] $\mu$-monotonicity: $\mathbb{E}^\mu[X]+\tau(X)\leq \mathbb{E}^\mu[Y]+\tau(Y)$ for all $X,Y\in L^\infty$ with $X\leq Y$; 
    \item[] Convexity: $\tau(\lambda X+(1-\lambda)Y)\leq \lambda\tau(X)+(1-\lambda)\tau(Y)$ for all $X,Y\in L^\infty$ and $\lambda\in [0,1]$;
    \item[] Positive homogeneity: $\tau(\lambda X)=\lambda\tau(X) $ for all $X\in L^\infty$ and $\lambda\geq 0$.
    \end{itemize}
    Moreover, a $\mu$-\emph{convex adjustment} is a mapping that satisfies the first three properties in the above list, or $\tau=-\infty$ on $L^\infty$. Given $\mu\in\mathcal{M}_1(\mathcal{G})$ we denote the set of all Fatou $\mu$-coherent adjustments by $\mathrm{CA}(\mu)$.
    Let $\mathcal{R}\subseteq \mathcal{M}_1(\mathcal{G})$, we say that  $(\tau_\mu)_{\mu\in \mathcal{R}}$ is a \emph{CA-assignment} if $\tau_\mu\in \mathrm{CA}(\mu)$ for all $\mu\in \mathcal{R}$. 
\end{definition}

Let $\mu\in \mathcal{M}_1(\mathcal{G})$. From the above properties, it is straightforward that the supremum of $\mu$-coherent adjustments is again a $\mu$-coherent adjustment.
A simple example of a Fatou $\mu$-coherent adjustment is  $\tau= \E^\nu -\E^\mu$ for any $\nu\in \mathcal{E}(\mu)$, which is easy to check.
In Section \ref{app:42}, Proposition \ref{prop:FatouCont}, shows that, indeed, the supremum of the form  $\tau= \E^\nu -\E^\mu$  contains all possible Fatou $\mu$-coherent adjustments. 

We also note that $\mathcal{G}$-invariance of $\tau$ is equivalent to $\tau(X)=0$ for all $X\in L^\infty(\mathcal G)$ in the presence of subadditivity. 

\begin{remark}
The properties of $\mu$-coherent adjustment $\tau$ appear similar to a generalized deviation measure $D$ of \cite{RUZ06}, which is defined by four properties: location invariance (i.e., $D(X+c)=D(X)$ for $c\in \R$), subadditivity, positive homogeneity and $D(X)\ge 0$ for all $X$ with strict $D(X)>0$ when $X$ is not constant.
Deviation measures naturally appear in a decomposition of insurance premiums; see \cite{NRS21}. Certainly, location invariance is not satisfied by any monetary risk measure because it conflicts cash invariance.
By choosing $\mathcal G=\{\varnothing, \Omega\}$ (in which case $\mathcal{M}_1(\mathcal{G})=\{\mu\}$ and  $D_\mu=1=\E[D_\nu|\mathcal G]$ for all $\nu\in \mathcal M_1$), $\mathcal G$-invariance is the same as location invariance; in this case, $\tau$ and $D$ share three properties. However, there is an important difference between $\tau$ and $D$, as 
$\tau$ may take negative values. For instance, $\tau= \E^\nu -\E^\mu$ defines a $\mu$-coherent adjustment, but not a generalized deviation measure. 
\end{remark}

The name ``$\mu$-coherent adjustment" is explained by Proposition \ref{prop:motivation}.

\begin{proposition}\label{prop:motivation}
    Let $\rho:L^\infty\to\mathbb{R}$ and $\mu\in \mathcal{M}_1(\mathcal{G})$. The following are equivalent.
    \begin{enumerate}[(i) ]
        \item The mapping $\rho$ is a Fatou coherent risk measure with $\rho|_{L^\infty(\mathcal{G})}=\mathbb{E}^\mu|_{L^\infty(\mathcal{G})}$;
        \item there exists $\tau\in \mathrm{CA}(\mu)$ such that $   \rho =\mathbb{E}^\mu +\tau$.
    \end{enumerate}
\end{proposition}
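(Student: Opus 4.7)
The strategy is to prove both implications of the equivalence separately, with the bulk of the work being the standard (i)$\Rightarrow$(ii) direction, where one must verify the four axioms of a $\mu$-coherent adjustment for the natural candidate $\tau:=\rho-\E^\mu$. The reverse implication (ii)$\Rightarrow$(i) is routine and reduces to observing that each of the defining properties of a coherent risk measure can be read off directly from the corresponding property of $\tau$, once one notes that constants lie in $L^\infty(\mathcal G)$.

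For the direction (ii)$\Rightarrow$(i), I would start from $\rho=\E^\mu+\tau$. Cash invariance follows from $\mathcal G$-invariance of $\tau$ applied to $Y=c\in\R\subseteq L^\infty(\mathcal G)$, together with linearity of $\E^\mu$. Monotonicity is precisely the $\mu$-monotonicity axiom. Convexity and positive homogeneity of $\rho$ follow immediately since $\E^\mu$ is linear and $\tau$ has those same two properties. Fatou continuity of $\rho$ follows because $\E^\mu[X_n]\to\E^\mu[X]$ by dominated convergence for uniformly bounded a.s.-convergent sequences, so $\liminf_n\rho(X_n)\ge\liminf_n\tau(X_n)+\lim_n\E^\mu[X_n]\ge\tau(X)+\E^\mu[X]=\rho(X)$. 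Finally, for $X\in L^\infty(\mathcal G)$, $\mathcal G$-invariance yields $\tau(X)=\tau(0)$, and positive homogeneity gives $\tau(0)=0$, so $\rho|_{L^\infty(\mathcal G)}=\E^\mu|_{L^\infty(\mathcal G)}$.

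For the direction (i)$\Rightarrow$(ii), I would define $\tau:=\rho-\E^\mu$ and check the four axioms. Convexity, positive homogeneity, $\mu$-monotonicity, and Fatou continuity of $\tau$ are inherited directly from $\rho$ since $\E^\mu$ is linear, positively homogeneous, monotone and continuous along bounded a.s.-convergent sequences. The one step that needs real argument is $\mathcal G$-invariance: for $X\in L^\infty$ and $Y\in L^\infty(\mathcal G)$, I need $\rho(X+Y)=\rho(X)+\E^\mu[Y]$. Subadditivity (from convexity and positive homogeneity) together with $\rho(Y)=\E^\mu[Y]$ gives the $\le$ inequality; for the reverse, I apply subadditivity to $X=(X+Y)+(-Y)$ and use $\rho(-Y)=\E^\mu[-Y]=-\E^\mu[Y]$, again exploiting that $-Y\in L^\infty(\mathcal G)$. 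This ``translation'' trick is the main (and really only) substantive step of the proof.

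I anticipate no serious obstacle; the slight subtlety is recognizing that one must use $\rho|_{L^\infty(\mathcal G)}=\E^\mu|_{L^\infty(\mathcal G)}$ for \emph{both} $Y$ and $-Y$ (i.e., that the restriction is linear, not merely sublinear), which is what turns subadditivity on both sides into the exact additive identity needed for $\mathcal G$-invariance of $\tau$.
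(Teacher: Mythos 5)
Your proposal is correct and follows essentially the same route as the paper: define $\tau=\rho-\mathbb{E}^\mu$, verify the adjustment axioms, and handle Fatou continuity via dominated convergence and a $\liminf$ estimate. The paper dismisses the axiom checks as ``easily verified,'' whereas you spell out the one genuinely needed step (the two-sided subadditivity argument giving $\mathcal G$-invariance); that is a welcome elaboration, not a different method.
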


If we assume additivity on $L^\infty(\mathcal G)$ 
and $\mathcal G$-law invariance
for a Fatou coherent risk measure $\rho$,
 then $\rho$ has a form $\E^\p+\tau$  for $\tau \in \mathrm{CA}(\p|_{\mathcal G})$ that is similar to  Proposition \ref{prop:motivation} (ii);  see Section \ref{sec:R1-additivity}.

For some $\mathcal R\subseteq\mathcal M_1(\mathcal G)$
and CA-assignment $(\tau_\mu)_{\mu\in \mathcal{R}}$, 
by Proposition \ref{prop:motivation}, the following risk measure  \begin{equation}
    \rho (X)= \sup_{\mu\in\mathcal{R}}\left(\mathbb{E}^{\mu} [X] +\tau_\mu (X) \right),~~~X\in L^\infty,\label{eq:31}
\end{equation} 
is coherent, as it is the supremum of coherent risk measures. Theorem \ref{theorem:1} below shows that this mapping is Fatou continuous and that all Fatou coherent risk measures have this representation.

To explain why $\mu$-coherent adjustments help us extend a coherent risk measure on $L^\infty(\mathcal G)$ to $L^\infty$, let $\tilde{\rho}:L^\infty(\mathcal{G})\to\mathbb{R}$ be a law-invariant coherent risk measure. For a convex $\mathcal{R}\subseteq \mathcal{M}_1(\mathcal{G})$ that is $w(\mathcal{M}(\mathcal{G}),L^\infty(\mathcal{G}) )$-dense in  $\mathcal{S}_{\tilde{\rho}}$ and a CA-assignment $(\tau_\mu)_{\mu\in \mathcal{R}}$, let $\rho$ be given by \eqref{eq:31}. It is easy to see that $\rho|_{L^\infty(\mathcal{G})}=\tilde{\rho}$, meaning that it is $\mathcal{G}$-law invariant.

The above ideas lead to the following  alternative representation of partially law-invariant and Fatou coherent risk measures.  

\begin{theorem}
    \label{theorem:1}
    The mapping $\rho:L^\infty\to\mathbb{R}$ is a Fatou coherent risk measure if and only if there exists a convex set $\mathcal{R}\subseteq \mathcal{M}_1(\mathcal{G})$ and a CA-assignment $(\tau_\mu)_{\mu\in \mathcal{R}}$ such that
    \begin{equation}
        \rho(X)=\sup_{\mu\in\mathcal{R}}\left(\mathbb{E}^\mu[X]+\tau_\mu(X)\right),~~~X\in L^\infty.\label{eq:32}
    \end{equation}
    Moreover, in \eqref{eq:32}, $\rho$ is $\mathcal{G}$-law invariant if and only if $\mathcal{R}$ is almost $\mathcal{G}$-law invariant.
\end{theorem}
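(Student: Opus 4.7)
The plan is to establish the two directions of the equivalence separately, then handle the ``moreover'' clause by combining Theorem \ref{th:friday} with the standard law-invariance representation applied on the atomless space $(\Omega,\mathcal G,\p)$.

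For the ``if'' direction, suppose $\rho$ admits the representation \eqref{eq:32}. By Proposition \ref{prop:motivation}, each $\rho_\mu := \E^\mu + \tau_\mu$ is a Fatou coherent risk measure. Since monotonicity, cash invariance, convexity, and positive homogeneity are all stable under pointwise suprema, $\rho=\sup_{\mu\in\mathcal R}\rho_\mu$ is coherent; Fatou continuity is preserved because applying $\sup_{\mu}$ to the pointwise Fatou inequality $\rho_\mu(X)\le \liminf_n\rho_\mu(X_n)$ gives $\rho(X)\le \liminf_n\rho(X_n)$.

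For the ``only if'' direction, suppose $\rho$ is Fatou coherent. By Proposition \ref{prop:coherent} there is a convex set $\mathcal S_\rho\subseteq\mathcal M_1$ with $\rho(X)=\sup_{\nu\in\mathcal S_\rho}\E^\nu[X]$. Set $\mathcal R:=(\mathcal S_\rho)^{\mathcal G}$, which is convex because restriction is linear, and for $\mu\in\mathcal R$ define
\[
\tau_\mu(X):=\sup\bigl\{\E^\nu[X]-\E^\mu[X] : \nu\in\mathcal S_\rho,\ \nu|_{\mathcal G}=\mu\bigr\},\quad X\in L^\infty.
\]
The indexing set is nonempty precisely because $\mu\in(\mathcal S_\rho)^{\mathcal G}$, and the bound $|\tau_\mu(X)|\le 2\|X\|_\infty$ shows $\tau_\mu$ is real-valued. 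Regrouping the outer supremum of $\rho$ by the $\mathcal G$-marginal of $\nu$ yields $\rho(X)=\sup_{\mu\in\mathcal R}(\E^\mu[X]+\tau_\mu(X))$. The four defining properties of a $\mu$-coherent adjustment follow routinely: $\mathcal G$-invariance from $\E^\nu[Y]=\E^\mu[Y]$ for $Y\in L^\infty(\mathcal G)$ when $\nu|_{\mathcal G}=\mu$; $\mu$-monotonicity from the monotonicity of each $\E^\nu$; convexity and positive homogeneity from the linearity of the functionals being sup'd; and Fatou continuity because $\tau_\mu$ is a supremum of weakly continuous (hence Fatou) linear functionals.

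For the ``moreover'' part, any representation \eqref{eq:32} has $\tau_\mu(Y)=\tau_\mu(0)=0$ for every $Y\in L^\infty(\mathcal G)$, by $\mathcal G$-invariance and positive homogeneity at $\lambda=0$. Hence $\tilde\rho:=\rho|_{L^\infty(\mathcal G)}$ equals $\sup_{\mu\in\mathcal R}\E^\mu$, a Fatou coherent risk measure on the atomless space $(\Omega,\mathcal G,\p)$. By the standard duality (Proposition \ref{prop:coherent} applied on $L^\infty(\mathcal G)$), the $w(\M(\mathcal G),L^\infty(\mathcal G))$-closure of the convex set $\mathcal R$ equals the supporting set $\mathcal S_{\tilde\rho}\subseteq\mathcal M_1(\mathcal G)$. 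Since $\rho$ is $\mathcal G$-law invariant iff $\tilde\rho$ is law invariant, and by the classical characterization (e.g., \citet[Theorem~4.59]{FS16}) the latter holds iff $\mathcal S_{\tilde\rho}$ is $\mathcal G$-law invariant, we conclude that $\rho$ is $\mathcal G$-law invariant iff $\mathcal R$ is almost $\mathcal G$-law invariant. Consistency with Theorem \ref{th:friday} is automatic.

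I expect the main obstacle to be the ``moreover'' direction, specifically identifying the weak closure of $\mathcal R$ with the supporting set of $\tilde\rho$ and verifying all arguments transfer cleanly to the atomless subspace $(\Omega,\mathcal G,\p)$. A secondary technical point is ensuring $\tau_\mu$ in the ``only if'' construction is real-valued and Fatou, which motivates restricting $\mathcal R$ to $(\mathcal S_\rho)^{\mathcal G}$ and indexing the inner supremum by $\nu\in\mathcal S_\rho$ with $\nu|_{\mathcal G}=\mu$.
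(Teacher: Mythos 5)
Your proof is correct and follows essentially the same route as the paper: decompose $\mathcal{S}_\rho$ by $\mathcal{G}$-marginals to obtain $\mathcal{R}=(\mathcal{S}_\rho)^{\mathcal{G}}$ and the adjustments $\tau_\mu(X)=\sup_{\nu\in\mathcal{E}(\mu)\cap\mathcal{S}_\rho}(\mathbb{E}^\nu[X]-\mathbb{E}^\mu[X])$, and settle the ``moreover'' clause by identifying the weak closure of $\mathcal{R}$ with $\mathcal{S}_{\tilde\rho}$ for $\tilde\rho=\rho|_{L^\infty(\mathcal{G})}$ and invoking the classical law-invariance characterization on the atomless space $(\Omega,\mathcal{G},\mathbb{P})$, exactly as in the paper's reduction to the argument of Theorem \ref{th:friday}. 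The only cosmetic difference is that you establish Fatou continuity of the supremum by the direct $\liminf$ argument, whereas the paper routes through $w(L^\infty,\mathcal{M})$-lower semicontinuity via Corollary \ref{corr2}; both are valid.
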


\begin{remark}\label{re:th:1-2}
    There is a straightforward way to switch between representations (\ref{eq:friday}) and (\ref{eq:32}) for a Fatou and $\mathcal{G}$-law-invariant coherent risk measure. Given a convex set $\mathcal{S}\subseteq \mathcal{M}_1(\mathcal{G})$ that represents $\rho$ in Theorem \ref{th:friday}, define $\mathcal{R}=\mathcal{S}^\mathcal{G}$, which is almost $\mathcal{G}$-law invariant. For each $\mu\in \mathcal{R}$, define $\mathcal{S}(\mu)=\mathcal{E}(\mu)\cap\mathcal{S}$ and $\tau_\mu(X)=\sup_{\nu\in \mathcal{S}(\mu)}(\mathbb{E}^\nu[X]-\mathbb{E}^\mu[X])$. These can be used in representation (\ref{eq:32}). Conversely, let an almost $\mathcal{G}$-law invariant convex set $\mathcal{R}\subseteq \mathcal{M}_1(\mathcal{G})$ and a CA-assignment $(\tau_\mu)_{\mu\in \mathcal{R}}$ representing $\rho$ in Theorem \ref{theorem:1} be given. For $\mu\in \mathcal{R}$, let $\mathcal{S}_{\tau_\mu}$ be the $w(\mathcal{M},L^\infty)$-closed and convex set in Proposition \ref{prop:FatouCont} (see Section \ref{app:42}) that represents $\tau_\mu$. It is easy to show that $\mathcal{S}=\bigcup_{\mu\in \mathcal{R}}\mathcal{S}_{\tau_\mu}$ is convex and $S^\mathcal{G}=\mathcal{R}$. We can use $\mathcal{S}$ in representation (\ref{eq:friday}).
    The above connection clarifies the construction in \eqref{eq:explain-th-1}.
\end{remark}

Uniqueness of representation (\ref{eq:32}) under a stronger notion of continuity  is addressed in Section \ref{app:uniq}. Representation (\ref{eq:32}) has little resemblance to the classical representation result of \cite{K01} for law-invariant coherent risk measures. The following section addresses this gap by further imposing a necessary and sufficient condition for a similar representation along that direction. 

\subsection{Strong partial law invariance  and Kusuoka-type representation
}\label{sec:strong}

Next, we introduce a new property, which is stronger than partial law invariance and is necessary and sufficient to reduce \eqref{eq:32} to a Kusuoka-type representation; see Proposition \ref{prop:lawinv} in Appendix \ref{app:reviewRisk}. 

Define $\ker(\mathcal{G})=\{X\in L^\infty:\mathbb{E}[X|\mathcal{G}]=0\}.$ We say that $\rho:L^\infty\to \mathbb{R}$ is \emph{strongly $\mathcal{G}$-law invariant} if for all $Z\in \ker(\mathcal{G})$ and $X,Y\in L^\infty(\mathcal{G})$ with $X\overset{\mathrm{d}}{=}_\mathbb{P}Y$ we have $$\rho(Z+X)=\rho(Z+Y).$$

Taking $Z=0\in \ker(\mathcal{G})$, we see that strong $\mathcal{G}$-law invariance implies $\mathcal{G}$-law invariance, hence the name. Since $\ker(\mathcal{F})=\{0\}$, the risk measure $\rho$ is law invariant if and only if $\rho$ is strongly $\mathcal{F}$-law invariant. Therefore, strong $\mathcal{G}$-law invariance and $\mathcal G$-law invariance are both a generalization of law invariance. The following result gives an equivalent condition to strong $\mathcal{G}$-law invariance.

\begin{lemma}\label{lemma:wednesday}
    The mapping $\rho$ is strongly $\mathcal{G}$-law invariant if and only if for all $X\in L^\infty$ and $Y\in L^\infty(\mathcal{G})$ with $Y\overset{\mathrm{d}}{=}_\mathbb{P}\mathbb{E}[X|\mathcal{G}]$, we have $$\rho(X)=\rho(X-\mathbb{E}[X|\mathcal{G}]+Y).$$ 
\end{lemma}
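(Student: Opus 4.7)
The plan is to prove the two directions of the equivalence by a direct manipulation based on the canonical orthogonal decomposition $X=\mathbb{E}[X\mid\mathcal{G}]+(X-\mathbb{E}[X\mid\mathcal{G}])$, where the second summand lies in $\ker(\mathcal{G})$ and the first in $L^\infty(\mathcal{G})$. This is essentially a bookkeeping argument once the right substitution is identified, so I do not expect any genuine obstacle; the subtlety is simply to arrange $X$ into the exact shape $Z+W$ with $Z\in\ker(\mathcal{G})$ and $W\in L^\infty(\mathcal{G})$ that the definition of strong $\mathcal{G}$-law invariance demands.

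For the forward implication, assume $\rho$ is strongly $\mathcal{G}$-law invariant, and fix any $X\in L^\infty$ and $Y\in L^\infty(\mathcal{G})$ with $Y\overset{\mathrm{d}}{=}_\mathbb{P}\mathbb{E}[X\mid\mathcal{G}]$. Set $Z:=X-\mathbb{E}[X\mid\mathcal{G}]$, so that $\mathbb{E}[Z\mid\mathcal{G}]=0$, i.e.\ $Z\in\ker(\mathcal{G})$. Setting $W:=\mathbb{E}[X\mid\mathcal{G}]\in L^\infty(\mathcal{G})$, the hypothesis $W\overset{\mathrm{d}}{=}_\mathbb{P} Y$ together with strong $\mathcal{G}$-law invariance applied to the pair $(W,Y)\in L^\infty(\mathcal{G})\times L^\infty(\mathcal{G})$ and the kernel element $Z$ yields
\[
\rho(X)=\rho(Z+W)=\rho(Z+Y)=\rho\bigl(X-\mathbb{E}[X\mid\mathcal{G}]+Y\bigr),
\]
which is the claimed identity.

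For the reverse implication, assume the displayed identity of the lemma holds for every admissible pair, and fix $Z\in\ker(\mathcal{G})$ together with $X,Y\in L^\infty(\mathcal{G})$ satisfying $X\overset{\mathrm{d}}{=}_\mathbb{P} Y$; the goal is $\rho(Z+X)=\rho(Z+Y)$. Apply the hypothesis to $X':=Z+X\in L^\infty$: since $Z\in\ker(\mathcal{G})$ and $X$ is $\mathcal{G}$-measurable, one has $\mathbb{E}[X'\mid\mathcal{G}]=X$, and hence $X'-\mathbb{E}[X'\mid\mathcal{G}]=Z$. Because $Y\in L^\infty(\mathcal{G})$ and $Y\overset{\mathrm{d}}{=}_\mathbb{P} X=\mathbb{E}[X'\mid\mathcal{G}]$, the hypothesis delivers
\[
\rho(Z+X)=\rho(X')=\rho\bigl(X'-\mathbb{E}[X'\mid\mathcal{G}]+Y\bigr)=\rho(Z+Y),
\]
completing the proof. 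Both directions rely only on the elementary facts that conditional expectation is the identity on $L^\infty(\mathcal{G})$ and that $\mathbb{E}[\,\cdot\mid\mathcal{G}]$ is linear, so no continuity or coherence assumption on $\rho$ is needed.
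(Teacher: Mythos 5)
Your proof is correct and follows essentially the same route as the paper: the forward direction is the decomposition $X=(X-\mathbb{E}[X\mid\mathcal{G}])+\mathbb{E}[X\mid\mathcal{G}]$ fed into the definition, and the reverse direction sets $X'=Z+X$ so that $\mathbb{E}[X'\mid\mathcal{G}]=X$, exactly as in the paper's argument.
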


An example of strongly $\mathcal G$-law invariant risk measure $\rho$ is 
$\rho(X)=\tilde \rho(\E[X|\mathcal G])$, where $\tilde \rho: L^\infty(\mathcal G) \to \R$ is law invariant.
Examples of functionals satisfying $\mathcal G$-law invariance but not strong $\mathcal G$-law invariance will be provided in Section \ref{sec:ku-ex}. 

Before we give the main result of the section, we need some technical details. Given $X\in L^1$, $\mu\in \mathcal{M}_1$ and $\alpha\in [0,1)$, define $F_{\mu,X}^{-1}(\alpha)=\inf\{x\in \mathbb{R}\mid F_{\mu,X}(x)\geq \alpha\}$ and $q_X(\alpha)=F_{\mathbb{P},X}^{-1}(\alpha)$. Given $\mu\in\mathcal{M}_1(\mathcal{G})$, we can define $Q\in \mathcal{M}_B$ (see \citet[Theorem 4.62]{FS16}) by
\begin{equation}
    Q([0,t))=\int_{[0,t)} (1-s)~\d q_{D_\mu}(s),\text{~~for~~}t\in [0,1).\label{eq:Q-def}
\end{equation}
For $Q\in\mathcal{M}_B$, define 
$\mathcal M_Q=\{ \mu\in \M_1(\mathcal G): \eqref{eq:Q-def} \mbox{~holds}\},$ which is non-empty because $(\Omega,\mathcal{G},\mathbb{P})$ is atomless, and   
\begin{equation}\label{eq:sup}
\sup_{\mu\in \mathcal M_Q}\mathbb{E}^{\mu}[X]=\int_{[0,1)}\mathrm{ES}_\alpha(\mathbb{E}[X|\mathcal{G}])~ Q (\d\alpha);
\end{equation}
see \citet[Theorem 4.62]{FS16}. Let $\mathrm{CA}(Q)=\bigcap_{\mu\in \mathcal M_Q}\mathrm{CA}(\mu).$ This set is non-empty since it contains the zero map. We present the following result, generalizing the representation of \cite{K01}.

\begin{theorem}\label{theo:gen1}
    Let $\rho:L^\infty\to\mathbb{R}$, the following are equivalent.
    \begin{enumerate}[(i)]
        \item The mapping $\rho$ is a Fatou and strongly $\mathcal{G}$-law-invariant coherent risk measure;
        \item there exist $\mathcal{D}\subseteq \mathcal{M}_B$ and $\tau_Q\in \mathrm{CA}(Q)$ for each $Q\in \mathcal{D}$ such that
    \begin{equation}\label{eq:gen1}
        \rho(X)=\sup_{Q\in \mathcal{D}}\left(\int_{[0,1)}\mathrm{ES}_\alpha\left(\mathbb{E}[X|\mathcal{G}]\right)~Q(\d\alpha)+\tau_Q(X)\right).
    \end{equation}
    \end{enumerate}
\end{theorem}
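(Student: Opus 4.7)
The plan is to prove both directions, with the forward direction (ii)~$\Rightarrow$~(i) being a straightforward verification and the reverse direction (i)~$\Rightarrow$~(ii) being the main work.

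For (ii)~$\Rightarrow$~(i), I would use identity \eqref{eq:sup} to rewrite each $Q$-summand as
$$
\int_{[0,1)}\ES_\alpha(\E[X|\mathcal G])\,Q(\d\alpha)+\tau_Q(X)=\sup_{\mu\in\mathcal M_Q}\bigl(\E^\mu[X]+\tau_Q(X)\bigr).
$$
Because $\tau_Q\in \mathrm{CA}(\mu)$ for every $\mu\in\mathcal M_Q$, Proposition \ref{prop:motivation} gives that each $\E^\mu+\tau_Q$ is a Fatou coherent risk measure; the supremum over $\mathcal M_Q$, and then over $\mathcal D$, preserves Fatou coherence. Strong $\mathcal G$-law invariance is immediate: for $Z\in\ker(\mathcal G)$ and $X,Y\in L^\infty(\mathcal G)$ with $X\laweq_{\p}Y$, the integral depends only on $\E[Z+X|\mathcal G]=X$ via the law-invariance of $\ES_\alpha$, and $\tau_Q(Z+X)=\tau_Q(Z)=\tau_Q(Z+Y)$ by $\mathcal G$-invariance of $\tau_Q$.

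For (i)~$\Rightarrow$~(ii), I would first note that strong $\mathcal G$-law invariance implies $\mathcal G$-law invariance (take $Z=0$), so $\rho|_{L^\infty(\mathcal G)}$ is a Fatou, coherent, law-invariant risk measure on $L^\infty(\mathcal G)$. Applying classical Kusuoka (Proposition \ref{prop:lawinv}) produces the desired $\mathcal D\subseteq \mathcal M_B$ with $\rho(Y)=\sup_{Q\in\mathcal D}\int \ES_\alpha(Y)\,Q(\d\alpha)$ on $L^\infty(\mathcal G)$. I would then combine this with Theorem \ref{theorem:1} to obtain $\rho(X)=\sup_{\mu\in\mathcal R}(\E^\mu[X]+\tau_\mu(X))$ for a convex, almost $\mathcal G$-law invariant set $\mathcal R\subseteq\mathcal M_1(\mathcal G)$ and a CA-assignment $(\tau_\mu)$. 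Grouping by the Kusuoka class (each $\mu\in\mathcal R$ sits in a unique $\mathcal M_{Q_\mu}$ determined by the distribution of $D_\mu$ via \eqref{eq:Q-def}), I would show that $\{Q_\mu:\mu\in\mathcal R\}$ matches the $\mathcal D$ obtained above and define, for each $Q\in\mathcal D$,
$$
\tau_Q(X):=\sup_{\mu\in\mathcal R\cap \mathcal M_Q}\bigl(\E^\mu[X]+\tau_\mu(X)\bigr)-\sup_{\nu\in \mathcal M_Q}\E^\nu[X].
$$

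The main obstacle will be verifying that this $\tau_Q$ lies in $\mathrm{CA}(Q)=\bigcap_{\nu\in\mathcal M_Q}\mathrm{CA}(\nu)$, i.e., that $\E^\nu+\tau_Q$ is a coherent risk measure for \emph{every} $\nu\in\mathcal M_Q$, not merely those originally in $\mathcal R$. The tool for this is strong $\mathcal G$-law invariance: because $(\Omega,\mathcal G,\p)$ is atomless, any two $\nu,\nu'\in\mathcal M_Q$ are related by a $\p|_\mathcal G$-measure-preserving transformation on $\mathcal G$, and the strong law-invariance hypothesis forces the family $\{(\mathbb E^\mu+\tau_\mu):\mu\in\mathcal R\cap \mathcal M_Q\}$ to be invariant (after closure) under these rearrangements. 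This symmetry, combined with Proposition \ref{prop:FatouCont} characterizing Fatou coherent adjustments as suprema of the form $\E^\nu-\E^\mu$, allows one to upgrade the family of $\tau_\mu$'s to the single envelope $\tau_Q$, inheriting $\mathcal G$-invariance, convexity, and positive homogeneity trivially from the sup operation, and $\nu$-monotonicity uniformly over $\nu\in\mathcal M_Q$ from the rearrangement argument. Assembling the pieces, the equality $\rho(X)=\sup_{Q\in\mathcal D}\bigl(\int \ES_\alpha(\E[X|\mathcal G])\,Q(\d\alpha)+\tau_Q(X)\bigr)$ is then just a reordering of the double supremum in the Theorem \ref{theorem:1} representation.
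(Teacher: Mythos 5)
Your direction (ii)~$\Rightarrow$~(i) is correct and essentially identical to the paper's: rewrite each summand via \eqref{eq:sup}, use Proposition \ref{prop:motivation} and lower semicontinuity to get Fatou coherence of the supremum, and check strong $\mathcal{G}$-law invariance directly from $\mathcal{G}$-invariance of $\tau_Q$ and law invariance of $\mathrm{ES}_\alpha$. The overall architecture of (i)~$\Rightarrow$~(ii) — start from the Theorem \ref{theorem:1} representation, group the $\mu\in\mathcal{R}$ into the classes $\mathcal{M}_Q$, and collapse each class using \eqref{eq:sup} — is also the paper's.

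However, there is a genuine gap exactly at the step you flag as the ``main obstacle.'' The whole content of the theorem is that strong $\mathcal{G}$-law invariance lets you choose the set $\mathcal{R}$ to be $\mathcal{G}$-law invariant (hence a union of \emph{full} classes $\mathcal{M}_Q$, not just $\mathcal{R}\cap\mathcal{M}_Q$) and the CA-assignment to be constant on each $\mathcal{M}_Q$; only then does your difference-of-suprema formula for $\tau_Q$ collapse to a single $\mu$-coherent adjustment. As written, your $\tau_Q$ is a difference of two sublinear functionals and there is no reason for it to be convex, let alone to lie in $\mathrm{CA}(\nu)$ for every $\nu\in\mathcal{M}_Q$. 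Your proposed fix — that any two $\nu,\nu'\in\mathcal{M}_Q$ are related by a $\p|_{\mathcal G}$-measure-preserving transformation under which the family $\{\E^\mu+\tau_\mu\}$ is invariant — does not go through directly: even granting the existence of such a transformation on an atomless space, it acts only on the $\mathcal{G}$-marginal of the dual elements, and one must control how the fluctuation part $D_\mu-\E[D_\mu|\mathcal{G}]$ pairs with test functions after the swap. The paper handles this with a dedicated structural result (Proposition \ref{proposition:strong}): via Hahn--Banach separation, non-membership of $D_\mu-\E[D_\mu|\mathcal{G}]+Y$ in $\widehat{\mathcal{S}}_\rho$ is converted into a violation of strong $\mathcal{G}$-law invariance, and the key technical input is Lemma \ref{lemNeed}, which for given $X\in L^\infty$ and $Y\laweq_\p Z$ constructs a $\mathcal{G}$-measurable copy $W\laweq_\p X$ with $\E[XY]=\E[WZ]$ by an intermediate value theorem argument — precisely the inner-product matching your rearrangement heuristic omits. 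Until that invariance of $\widehat{\mathcal{S}}_\rho$ (equivalently, the $\mathcal{G}$-law invariance of the CA-assignment) is actually proved, your construction of $\tau_Q$ is not justified.
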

In the setting $\mathcal G= \mathcal F$, the representation \eqref{eq:gen1} reduces to the Kusuoka representation, 
$$\rho(X)=\sup_{Q\in\mathcal{D}}\int_{[0,1)}\mathrm{ES}_\alpha(X)~Q(\d\alpha),\hspace{3pt}X\in L^\infty, $$
because in this case 
$\E[X|\mathcal G]=X$,  $\tau_Q(X)=0$, and strong $\mathcal G$-law invariance is law invariance.  
A simple example is to choose $\tau_Q=0$ for all $Q\in \mathcal D$, which yields the risk measure $\rho:X\mapsto \tilde \rho(\E[X|\mathcal G])$ studied in Proposition \ref{prop:example1} below. 

\subsection{Convex risk measures}
\label{sec:gen}

This section discusses convex risk measures and their relation to the variational preferences model of \cite{MMR06}. We provide representations for partially law-invariant convex risk measures akin to Theorem \ref{theorem:1}. A risk measure $\rho:L^\infty\to\mathbb{R}$  is Fatou and convex if and only if there exists $\alpha:\mathcal{M}_1\to (-\infty,\infty]$ such that
$$\rho(X)=\sup_{\mu\in \mathcal{M}_1}(\mathbb{E}^\mu[X]-\alpha(\mu)),~~~X\in L^\infty;$$
see \cite{FS16}. Using the subjective loss interpretation for the random variables, we see that applying convex risk measures to subjective losses agrees with the variational preferences model of \cite{MMR06} in the case that $\inf_{\mu\in \mathcal{M}_1}\alpha(\mu)=0$.

Similar to CA-assignments, 
we say that $(\tau_\mu)_{\mu\in \mathcal{M}_1(\mathcal{G})}$ is a \emph{CoA-assignment} if for all $\mu\in \mathcal{M}_1(\mathcal{G})$, $\tau_\mu$ is a Fatou $\mu$-convex adjustment and the mapping $\mu\mapsto \tau_\mu(0)$ is concave.
The following result is the convex version of Theorem \ref{theorem:1}.
\begin{theorem}\label{prop:convex}
The mapping $\rho:L^\infty\to\mathbb{R}$ is a Fatou convex risk measure if and only if there exists a CoA-assignment $(\tau_\mu)_{\mu\in \mathcal{M}_1(\mathcal{G})}$ such that
    \begin{align*}
        \rho(X)=\sup_{\mu\in \mathcal{M}_1(\mathcal{G})}\left(\mathbb{E}^\mu[ X]+\tau_\mu(X)\right)~~~X\in L^\infty.
    \end{align*}
Moreover, $\rho$ is $\mathcal{G}$-law invariant if and only if for all $c>-\rho(0)$ the set $\left\{\mu\in \mathcal{M}_1(\mathcal{G})\mid -\tau_\mu(0)\leq c\right\}$ is almost $\mathcal{G}$-law invariant.
\end{theorem}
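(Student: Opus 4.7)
The plan is to follow the strategy used in the coherent case (Theorem \ref{theorem:1}), now with the minimal penalty function $\alpha:\mathcal{M}_1\to[0,\infty]$ in place of the supporting set. Any Fatou convex risk measure $\rho$ admits the dual representation $\rho(X)=\sup_{\nu\in\mathcal{M}_1}(\mathbb{E}^\nu[X]-\alpha(\nu))$. The natural candidate for the CoA-assignment is
\[
\tau_\mu(X)=\sup_{\nu\in\mathcal{E}(\mu)}\bigl(\mathbb{E}^\nu[X]-\mathbb{E}^\mu[X]-\alpha(\nu)\bigr),\qquad \mu\in\mathcal{M}_1(\mathcal{G}),
\]
with the convention $\tau_\mu\equiv-\infty$ when $\mathcal{E}(\mu)\cap\dom\alpha=\varnothing$. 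I would then verify each defining property of a Fatou $\mu$-convex adjustment: $\mathcal{G}$-invariance is immediate since $\mathbb{E}^\nu[Y]=\mathbb{E}^\mu[Y]$ for $Y\in L^\infty(\mathcal{G})$ and $\nu\in\mathcal{E}(\mu)$; $\mu$-monotonicity of $\mathbb{E}^\mu+\tau_\mu$ reduces to monotonicity of $\nu\mapsto\mathbb{E}^\nu[\cdot]$; convexity and Fatou continuity come from taking a supremum of affine Fatou maps. Concavity of $\mu\mapsto\tau_\mu(0)=-\inf_{\nu\in\mathcal{E}(\mu)}\alpha(\nu)$ uses convexity of $\alpha$ together with the observation that $\lambda\nu_1+(1-\lambda)\nu_2\in\mathcal{E}(\lambda\mu_1+(1-\lambda)\mu_2)$. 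Finally, the identity $\rho(X)=\sup_{\mu}(\mathbb{E}^\mu[X]+\tau_\mu(X))$ follows from the disintegration $\mathcal{M}_1=\bigsqcup_\mu\mathcal{E}(\mu)$ and iterating the supremum.

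Conversely, given any CoA-assignment, each map $\mathbb{E}^\mu+\tau_\mu$ is a Fatou monetary convex functional: cash invariance comes from $\mathcal{G}$-invariance of $\tau_\mu$ combined with $\mathbb{E}^\mu[X+c]=\mathbb{E}^\mu[X]+c$, monotonicity is precisely $\mu$-monotonicity, and convexity plus Fatou continuity are inherited from the sum. Taking the supremum over $\mu$ preserves all four properties; concavity of $\mu\mapsto\tau_\mu(0)$ prevents pathologies at $0$, so that $\rho$ is a genuine $\mathbb{R}$-valued Fatou convex risk measure.

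For the $\mathcal{G}$-law-invariance equivalence, the plan is to stratify by level and invoke Theorem \ref{theorem:1}. For each $c>-\rho(0)=\inf_\nu\alpha(\nu)$, the auxiliary functional
\[
\rho_c(X)=\sup_{\nu:\,\alpha(\nu)\leq c}\mathbb{E}^\nu[X]
\]
is a Fatou coherent risk measure, and one checks that the $\mathcal{G}$-restriction of its supporting set coincides, up to $w(\mathcal{M}(\mathcal{G}),L^\infty(\mathcal{G}))$-closure, with $\{\mu\in\mathcal{M}_1(\mathcal{G}):-\tau_\mu(0)\leq c\}$, after a small $\varepsilon$-correction to account for non-attainment of the infimum defining $\tau_\mu(0)$. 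Because $\rho(X)=\sup_{c>-\rho(0)}(\rho_c(X)-c)$, $\mathcal{G}$-law invariance of $\rho$ is equivalent to $\mathcal{G}$-law invariance of each $\rho_c$, and applying Theorem \ref{theorem:1} to each $\rho_c$ translates this into the almost $\mathcal{G}$-law invariance of every sublevel set $\{\mu:-\tau_\mu(0)\leq c\}$.

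The main obstacle is the forward direction of the law-invariance statement: transferring $\mathcal{G}$-law invariance of $\rho$ to simultaneous almost $\mathcal{G}$-law invariance of \emph{all} sublevel sets. The level-by-level reduction above reroutes the difficulty through the coherent result, but the $\varepsilon$-correction and the accompanying intersection $\bigcap_{\varepsilon>0}\{-\tau_\mu(0)\leq c+\varepsilon\}$ require care to ensure the resulting set remains $w(\mathcal{M}(\mathcal{G}),L^\infty(\mathcal{G}))$-closed and $\mathcal{G}$-law invariant. This parallels the classical Jouini--Schachermayer--Touzi characterization of law-invariant convex risk measures, now in the partial setting, and is where the bulk of the technical work lies.
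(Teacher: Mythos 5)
The representation half of your proposal is essentially the paper's own argument: disintegrate the dual representation $\rho(X)=\sup_{\nu\in\mathcal{M}_1}(\mathbb{E}^\nu[X]-\rho^*(\nu))$ over the fibers $\mathcal{E}(\mu)$, set $\tau_\mu(X)=\sup_{\nu\in\mathcal{E}(\mu)}(\mathbb{E}^\nu[X]-\mathbb{E}^\mu[X]-\rho^*(\nu))$, verify via the structure result for $\mu$-convex adjustments (Proposition \ref{prop:FatouContCon}) that this is a CoA-assignment with $\mu\mapsto\tau_\mu(0)$ concave by convexity of $\rho^*$, and obtain the converse as a supremum of Fatou convex risk measures. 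No objection there.

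The law-invariance part, however, has a genuine gap. Your sentence ``Because $\rho(X)=\sup_{c>-\rho(0)}(\rho_c(X)-c)$, $\mathcal{G}$-law invariance of $\rho$ is equivalent to $\mathcal{G}$-law invariance of each $\rho_c$'' is only half-true: the supremum identity yields the ``if'' direction, but the ``only if'' direction --- that partial law invariance of the envelope $\rho$ forces partial law invariance of \emph{every} level functional $\rho_c$ --- is precisely the content of the theorem and does not follow from that identity (a supremum of functionals can be law invariant without any individual term being so). You flag this as the main obstacle and then reroute it through Theorem \ref{theorem:1}, but that rerouting presupposes the very fact you need. A second unresolved point is the identification of $\{\mu:-\tau_\mu(0)\leq c\}$ with the closure of $(\mathcal{S}_{\rho_c})^{\mathcal{G}}$: since $-\tau_\mu(0)=\inf_{\nu\in\mathcal{E}(\mu)}\rho^*(\nu)$ need not be attained, the sublevel set of $N(\mu)=-\tau_\mu(0)$ can be strictly larger than the restriction of $\{\nu:\rho^*(\nu)\leq c\}$, and your ``$\varepsilon$-correction'' is never carried out. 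The paper closes both issues by a different route: restrict to $L^\infty(\mathcal{G})$, where $\tilde{\rho}=\rho|_{L^\infty(\mathcal{G})}$ is a genuinely law-invariant Fatou convex risk measure on the atomless space $(\Omega,\mathcal{G},\mathbb{P})$ with minimal penalty $\tilde{\rho}^*=N^{**}$; Lemma \ref{lemma:appendix2} (the classical equivalence between law invariance of a convex risk measure and law invariance of the sublevel sets of its minimal penalty) then applies to $\tilde{\rho}$, and Lemma \ref{lemma:appendix1} (a biconjugation and epigraph-closure argument) shows $\overline{L^{\leq c}_{N}}=L^{\leq c}_{N^{**}}=L^{\leq c}_{\tilde{\rho}^*}$ for all $c>-\rho(0)$, which is exactly where the word ``almost'' comes from. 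Any repair of your scheme would need both of these ingredients, so the detour through the coherent representation buys nothing; you should prove the two lemmas (or cite their analogues) and argue on $L^\infty(\mathcal{G})$ directly.
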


The risk measures ER in Example \ref{ex:R1-HS2} are the most common examples of convex risk measures.
We next discuss a few $\mathcal G$-law-invariant versions of ER, which shed some light on different ways to construct partially law-invariant convex risk measures. In the examples below, partial law invariance is not necessarily associated with the interpretation of  model uncertainty.

We write the conditional version of ER (this is not a risk measure in our definition) as 
$$\mathrm{ER}_\beta (X|\mathcal G) =\frac 1 \beta \log  \E\left[ e^{\beta X}|\mathcal G\right],~~~X\in L^\infty.$$
There are many ways to generalize the family of ER to satisfy $\mathcal G$-partial law-invariance, and we list three simple choices. 
The first is to use the mean of the conditional ER, that is, 
$$
{\mathrm{ER}}^{\mathcal G}_\beta (X) = \E[\mathrm{ER}_\beta (X|\mathcal G)]= \E\left[\frac 1 \beta \log  \E\left[ e^{\beta X}|\mathcal G\right]  \right] 
 ,~~~X\in L^\infty. $$
The second is to compute the ER of the conditional mean, that is, 
$$\widetilde{\mathrm{ER}}^{\mathcal G}_\beta (X) =\mathrm{ER}_\beta (\E[X|\mathcal G]) = \frac 1 \beta \log  \E\left[ e^{\beta \E[X |\mathcal G]}\right]  ,~~~X\in L^\infty.
$$
Finally, we can also compute ER of the conditional ER, but this gives back the usual ER, that is
$$ \mathrm{ER}_\beta (\mathrm{ER}_\beta (X|\mathcal G)) = \frac 1 \beta \log  \E\left [ \exp \left( \beta \frac 1 \beta \log  \E\left[ e^{\beta X}|\mathcal G\right]  \right)   \right]   =\mathrm{ER}_\beta (X),~~~X\in L^\infty.
$$ 
 (This is time-consistency of ER as a dynamic risk measure;  see \cite{KS09}.)
The last version is fully law invariant and not interesting to us. The first two versions are $\mathcal G$-law invariant and have intriguing and distinct properties. For instance, if $X$ is $\mathcal G$-measurable, then 
$$
{\mathrm{ER}}^{\mathcal G}_\beta (X) = \E[X]  \mbox{~~~and~~~} \widetilde{\mathrm{ER}}^{\mathcal G}_\beta  (X) = \mathrm{ER}_\beta(X);
$$
if $Y$ is independent of $\mathcal G$, then 
$$
{\mathrm{ER}}^{\mathcal G}_\beta (Y) = \mathrm{ER}_\beta(Y)  \mbox{~~~and~~~} \widetilde{\mathrm{ER}}^{\mathcal G}_\beta  (Y) = \E[Y].
$$
The above formulas show that ${\mathrm{ER}}^{\mathcal G}_\beta$ and $\widetilde{\mathrm{ER}}^{\mathcal G}_\beta $ are not fully law invariant, and none of them dominates the other.
Moreover,  if     $X$ and $Y$ above are identically distributed and non-degenerate, then 
${\mathrm{ER}}^{\mathcal G}_\beta (X)  < {\mathrm{ER}}^{\mathcal G}_\beta (Y)$
and  
$\widetilde {\mathrm{ER}}^{\mathcal G}_\beta (X)  > \widetilde {\mathrm{ER}}^{\mathcal G}_\beta (Y)$.
In other words, 
${\mathrm{ER}}^{\mathcal G}_\beta $ penalizes risks independent of $\mathcal G$, 
whereas $ \widetilde 
 {\mathrm{ER}}^{\mathcal G}_\beta $ penalizes risks measurable to $\mathcal G$.
This 
shows different emphases of the two risk measures when assessing risks from different sources; note that here we do not interpret events outside $\mathcal G$ as events with ambiguity; indeed, all computations are carried with $\p$ that is known and fixed in this setting. 

\section{Explicit classes and applications}
\label{sec:instant}

In this section, we propose a few explicit constructions of partially law-invariant risk measures and illustrate with numerical examples. 
Recall our notation
 $\mathcal{E}(\p)=\{\nu\in \mathcal{M}_1:\nu|_\mathcal{G}=\p\}$, which will be used frequently below.

\subsection{Partially law-invariant ES}
\label{sec:PartES}

This section gives tractable formulas for an important class of partially law-invariant risk measures, the $\mathcal G$-law invariant ES introduced in Section $\ref{sec:21}$, given by
$$
 \overline{\ES}^\mathcal Q_\alpha(X) =\sup_{\mu\in \mathcal{Q}}\mathrm{ES}^{\mu}_\alpha(X),~~~~X\in L^\infty,
 $$
 where $\mathcal Q\subseteq \mathcal E(\p)$ and satisfies some regularity (to be made rigorous soon).  Afterward, we numerically implement the   results of this section and see how decisions in a simple investment vary as we change the parameters in the risk measure.
 
 We focus on ES here for two reasons. 
 First, ES is the standard regulatory risk measure in the banking sector (\cite{BASEL19}), and hence the most relevant for financial applications. Second, ES has special advantages 
in optimization (\cite{RU02}), and 
 the partially law-invariant version of ES admits tractable formulas (Proposition \ref{propsition:computational} below).  
 
To prepare for the formula of $\sup_{\mu \in \mathcal{Q}}\mathrm{ES}^{\mu}$,  we first define two technical terms. 
\begin{definition}
    Let $\mathcal{X}\subseteq L^\infty$, we call $Y\in L^\infty$ the \emph{essential supremum} of $\mathcal{X}$ if
    \begin{enumerate}[(i) ]
        \item For all $X\in \mathcal{X}$, $X\leq Y$;
        \item if $Z\in L^\infty$ and $X\leq Z$ for all $X\in \mathcal{X}$, then $Y\leq Z$.
    \end{enumerate}
    We will denote $Y$ by $\esssup\mathcal{X}$, as the essential supremum (if it exists) is $\mathbb{P}$-a.s.~unique.
\end{definition}

\begin{definition}
    We say that $\mathcal{X}\subseteq L^1$ is \emph{$\mathcal{G}$-convex} if for all $X,Y\in \mathcal{X}$ and $\lambda\in L^\infty(\mathcal{G})$ with $0\leq \lambda\leq 1$, we have
    $\lambda X+(1-\lambda)Y\in \mathcal{X}.$ We say that $\mathcal{R}\subseteq \mathcal{M}$ is \emph{$\mathcal{G}$-convex} if $\widehat{\mathcal R}$ is $\mathcal{G}$-convex.
\end{definition}

Given certain conditions on $\mathcal{Q} \subseteq \mathcal{E}(\mathbb{P})$, the $\mathcal{G}$-law invariant risk measure $\sup_{\mu \in \mathcal{Q}}\mathrm{ES}^{\mu}$ mentioned in Section \ref{sec:21} can be determined via a minimization formula, similar to the formulation of ES by \cite{RU02}.  
Below, a coherent conditional risk measure is a mapping from $L^\infty$  to $L^\infty(\mathcal{G})$ satisfying the four axioms of coherence but with constants replaced by elements of $L^\infty(\mathcal G)$; see \cite{DS05}.

\begin{proposition}\label{propsition:computational}
    Let $\mathcal{Q}\subseteq \mathcal{E}(\mathbb{P})$ be $w(\mathcal{M},L^\infty)$-compact and $\mathcal{G}$-convex. For $\alpha\in [0,1)$, we have 
    $$\sup_{\mu\in \mathcal{Q}}\mathrm{ES}^{\mu}_\alpha(X)=\min_{x\in\mathbb{R}}\left(x+\frac{1}{1-\alpha}\mathbb{E}\left[\rho_\mathcal{G}^{\mathcal{Q}}\left((X-x)^+\right)\right]\right), ~~~X\in L^\infty,$$
    where $\rho_\mathcal{G}^{\mathcal{Q}}:L^\infty\to L^\infty(\mathcal{G})$ is the coherent conditional risk measure  $$\rho_\mathcal{G}^{\mathcal{Q}}(X)=\esssup_{\mu\in \mathcal{Q}}\mathbb{E}^\mu[X|\mathcal{G}],~~~~X\in L^\infty.$$
    Moreover,
    the set $\mathcal{Q}_{\beta}=\left\{\mu \in \mathcal{E}(\mathbb{P}):~D_\mu \le 1/(1-\beta)\right\}$ for $\beta\in [0,1)$  is $w(\mathcal{M},L^\infty)$-compact and $\mathcal{G}$-convex.
\end{proposition}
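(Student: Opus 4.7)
The plan is to combine the Rockafellar--Uryasev identity for ES with a minimax exchange (pulling the outer minimization in $x$ out past $\sup_{\mu\in\mathcal Q}$) and then a ``supremum-to-essential-supremum'' exchange that converts the remaining inner $\sup_{\mu\in\mathcal Q}\E^\mu[\,\cdot\,]$ into the coherent conditional risk measure $\rho^{\mathcal Q}_{\mathcal G}$. For each fixed $\mu\in\mathcal Q$, the Rockafellar--Uryasev formula gives $\mathrm{ES}_\alpha^\mu(X)=\min_{x\in\R}f(x,\mu)$ with $f(x,\mu):=x+\frac{1}{1-\alpha}\E^\mu[(X-x)^+]$, and the minimum is attained on $K:=[\essinf X,\esssup X]$ (it is achieved at any $\mu$-quantile of $X$). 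Now $f(\cdot,\mu)$ is convex and continuous on $K$, while $f(x,\cdot)$ is affine and $w(\mathcal M,L^\infty)$-continuous on $\mathcal Q$ because $(X-x)^+\in L^\infty$. Since $\mathcal Q$ is convex and $w(\mathcal M,L^\infty)$-compact by hypothesis, Sion's minimax theorem applied on $K\times\mathcal Q$ gives
\[
\sup_{\mu\in\mathcal Q}\mathrm{ES}_\alpha^\mu(X)\;=\;\min_{x\in\R}\Big(x+\tfrac{1}{1-\alpha}\sup_{\mu\in\mathcal Q}\E^\mu[(X-x)^+]\Big).
\]

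For the second swap, I would rewrite $\sup_{\mu\in\mathcal Q}\E^\mu[Y]$, with $Y:=(X-x)^+\in L^\infty$, as $\E[\rho^{\mathcal Q}_{\mathcal G}(Y)]$. Since $\mathcal Q\subseteq\mathcal E(\p)$, we have $\E[D_\mu\mid\mathcal G]=1$ and hence $\E^\mu[Y]=\E[\E[D_\mu Y\mid\mathcal G]]=\E[\E^\mu[Y\mid\mathcal G]]$. The crucial point is that the family $\{\E^\mu[Y\mid\mathcal G]:\mu\in\mathcal Q\}$ is upward directed: given $\mu_1,\mu_2\in\mathcal Q$, set $A=\{\E^{\mu_1}[Y\mid\mathcal G]\ge\E^{\mu_2}[Y\mid\mathcal G]\}\in\mathcal G$; then $\mathcal G$-convexity of $\mathcal Q$ applied with $\lambda=\id_A\in L^\infty(\mathcal G)$ produces $\mu^*\in\mathcal Q$ with density $D_{\mu^*}=\id_A D_{\mu_1}+\id_{A^c}D_{\mu_2}$, and therefore $\E^{\mu^*}[Y\mid\mathcal G]=\max\{\E^{\mu_1}[Y\mid\mathcal G],\E^{\mu_2}[Y\mid\mathcal G]\}$. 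The standard lattice lemma then furnishes a sequence $\mu_n\in\mathcal Q$ with $\E^{\mu_n}[Y\mid\mathcal G]\uparrow\rho^{\mathcal Q}_{\mathcal G}(Y)$, and monotone convergence yields $\sup_{\mu\in\mathcal Q}\E^\mu[Y]=\E[\rho^{\mathcal Q}_{\mathcal G}(Y)]$. Substituting back gives the claimed formula.

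For the final assertion on $\mathcal Q_\beta$, $\mathcal G$-convexity is direct: if $D_{\mu_1},D_{\mu_2}\le 1/(1-\beta)$ and $\lambda\in L^\infty(\mathcal G)$ with $0\le\lambda\le 1$, then $D:=\lambda D_{\mu_1}+(1-\lambda)D_{\mu_2}$ is nonnegative, bounded by $1/(1-\beta)$, and satisfies $\E[D\mid\mathcal G]=\lambda+(1-\lambda)=1$, so the associated measure lies in $\mathcal E(\p)\cap\mathcal Q_\beta$. For compactness, $\widehat{\mathcal Q_\beta}$ sits inside the $L^\infty$-ball of radius $1/(1-\beta)$, which is $w(L^\infty,L^1)$-compact by Banach--Alaoglu; the defining constraints $D\ge 0$, $D\le 1/(1-\beta)$, $\E[D]=1$, and $\E[DY]=\E[Y]$ for all $Y\in L^\infty(\mathcal G)\subseteq L^1$ are each $w(L^\infty,L^1)$-closed, so $\widehat{\mathcal Q_\beta}$ itself is $w(L^\infty,L^1)$-compact. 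Because $w(L^\infty,L^1)$ is finer than the restriction of $w(L^1,L^\infty)$ to $L^\infty$, this compactness transfers, and via the canonical identification $\mu\leftrightarrow D_\mu$ becomes $w(\mathcal M,L^\infty)$-compactness of $\mathcal Q_\beta$.

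I expect the main obstacle to be the minimax exchange --- specifically, verifying all of Sion's hypotheses (convexity and compactness on the density side, convexity and appropriate semicontinuity on the $x$ side) and justifying the reduction of the outer $\min_{x\in\R}$ to the compact interval $K$ independently of $\mu$. A secondary subtlety is that ordinary convexity of $\mathcal Q$ does \emph{not} suffice for the upward-directedness argument: the stronger $\mathcal G$-convexity, which permits the mixing coefficient to be a $\mathcal G$-measurable indicator, is precisely what is needed to realize the pointwise maximum of two conditional expectations as another conditional expectation coming from $\mathcal Q$.
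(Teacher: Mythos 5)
Your proposal is correct and follows essentially the same route as the paper: the Rockafellar--Uryasev identity, a minimax exchange over the convex compact set $\mathcal Q$ (the paper cites the Mertens--Sorin--Zamir minimax theorem rather than Sion's, but the hypotheses you verify are the same), and the upward-directedness argument via $\mathcal G$-convexity with an indicator mixing weight, which is exactly the paper's Lemma EC.6 combined with the standard essential-supremum lemma. Your Banach--Alaoglu argument for the $w(\mathcal M,L^\infty)$-compactness of $\mathcal Q_\beta$ is sound and in fact supplies a detail the paper's written proof leaves implicit (it only verifies $\mathcal G$-convexity of $\mathcal Q_\beta$ explicitly).
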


Denote by $\mathrm{ES}_{\beta}(X|\mathcal{G})= \rho_\mathcal{G}^{\mathcal{Q}_\beta}(X)$, $X\in L^\infty$, which is the conditional version of ES in \cite{DS05}. The benefit of using $\mathcal{Q}_\beta$ is that under tractable assumptions of the conditional distribution of $X$ given $\mathcal{G}$, $\mathbb{E}[\mathrm{ES}_\beta((X-x)^+|\mathcal{G})]$ can be computed using conditional distributions. {Section \ref{sec:computation} offers a numerical illustration of this idea.}

\subsection{Two constructions of partially law-invariant coherent risk measures}
\label{sec:twocon}

 We next present two approaches to constructing partially law-invariant coherent risk measures. The first class arises from mappings analogous to conditional expectations.

Let $\psi:L^\infty\to L^\infty(\mathcal{G})$ be a mapping such that $\psi(X)=X$ for all $X\in L^\infty(\mathcal{G})$; an example would be $\psi(X)=\E[X|\mathcal G]$ or, more generally, the coherent conditional risk measures.   Consider risk measures $\rho$ of the form 
\begin{equation}
    \label{eq:moti2}
     {\rho}(X)=\tilde \rho(\psi(X)),\hspace{3pt} X\in L^\infty,
\end{equation}
where $\tilde \rho:L^\infty (\mathcal G)\to \R$  is a law-invariant mapping. 
Clearly, $\rho$ defined in \eqref{eq:moti2} is $\mathcal G$-law invariant because $\rho(X)=\tilde \rho(\psi(X))=\tilde \rho(X)=\tilde \rho(Y)=\rho(Y)$ for any $X,Y\in L^\infty(\mathcal{G})$ with $X\overset{\mathrm{d}}{=}_\mathbb{P}Y$. 

\begin{remark}\label{rem:dynamic}
 There is a large literature on conditional and dynamic risk measures, which also involve a fixed 
 sub-$\sigma$-algebra $\mathcal{G}$; see \cite{R04},  \cite{DS05}, and  \cite{FKV12}. 
 Conditional risk measures take values in random variables, different from our setting, and the motivation for partial law invariance is quite different from the above literature. Nevertheless, it is possible, as shown above, to use the theory of conditional coherent risk measures to construct $\mathcal{G}$-law-invariant coherent risk measures.
\end{remark}

Functionals of the form \eqref{eq:moti2} can be found in the two-stage market evaluation of \cite{PS14} and the market-consistent valuation of \cite{DSBLC17}. Setting $\psi(X)=\mathbb{E}[X|\mathcal{G}]$, equation \eqref{eq:moti2} offers a straightforward method for constructing partially law-invariant risk measures. Specifically,
\begin{equation}
    \label{eq:conditional1}
\rho(X)=\tilde{\rho}(\mathbb{E}[X|\mathcal{G}]), \quad X \in L^\infty.
\end{equation}
Coherent risk measures of the form \eqref{eq:conditional1} are characterized by the following result.

\begin{proposition}
    \label{prop:example1}
    For a Fatou coherent risk measure  $\rho:L^\infty\to\mathbb{R}$,  equivalent are:
    \begin{enumerate}[(i) ]
        \item There exists a Fatou coherent risk measure $\tilde{\rho}:L^\infty(\mathcal{G})\to\mathbb{R}$ such that \eqref{eq:conditional1} holds;
        \item there exists $\tilde{\rho}:L^\infty(\mathcal{G})\to\mathbb{R}$ such that \eqref{eq:conditional1} holds;
        \item $\rho(X)=\rho(\mathbb{E}[X|\mathcal{G}])$ for all $X\in L^\infty$;
        \item $ \widehat{ \mathcal{S}}_\rho \subseteq L^1(\mathcal{G})$.
    \end{enumerate}
    Moreover, if any of the above conditions are satisfied, $\rho$ is $\mathcal{G}$-law invariant if and only if $\tilde{\rho}$ is law invariant.
\end{proposition}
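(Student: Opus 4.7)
\emph{Proof outline.} The plan is to close the loop (i)$\Rightarrow$(ii)$\Rightarrow$(iii)$\Rightarrow$(i) by elementary manipulations, establish (iii)$\Leftrightarrow$(iv) through the Fatou dual representation $\rho(X)=\sup_{\mu\in\mathcal{S}_\rho}\mathbb{E}^\mu[X]$, and finally read off the ``moreover'' claim from the identification $\tilde\rho=\rho|_{L^\infty(\mathcal{G})}$ that appears in the cycle.

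For the cycle, (i)$\Rightarrow$(ii) is automatic. For (ii)$\Rightarrow$(iii), I would evaluate the hypothesis at $\mathbb{E}[X|\mathcal{G}]$ and use idempotence of conditional expectation, giving $\rho(\mathbb{E}[X|\mathcal{G}])=\tilde\rho(\mathbb{E}[\mathbb{E}[X|\mathcal{G}]|\mathcal{G}])=\tilde\rho(\mathbb{E}[X|\mathcal{G}])=\rho(X)$. For (iii)$\Rightarrow$(i), set $\tilde\rho:=\rho|_{L^\infty(\mathcal{G})}$; the four axioms of coherence and Fatou continuity are inherited by this restriction, since any uniformly bounded $\p$-a.s.-convergent sequence in $L^\infty(\mathcal{G})$ is still such a sequence in $L^\infty$, and (iii) then yields $\rho(X)=\rho(\mathbb{E}[X|\mathcal{G}])=\tilde\rho(\mathbb{E}[X|\mathcal{G}])$.

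For (iii)$\Leftrightarrow$(iv) the dual representation is the key input. The easy direction (iv)$\Rightarrow$(iii): if every $D_\mu$ is $\mathcal{G}$-measurable, then $\mathbb{E}^\mu[X]=\mathbb{E}[D_\mu\mathbb{E}[X|\mathcal{G}]]=\mathbb{E}^\mu[\mathbb{E}[X|\mathcal{G}]]$ by the tower property, and taking the supremum gives (iii). For (iii)$\Rightarrow$(iv), fix $\mu\in\mathcal{S}_\rho$; for any $Y\in\ker(\mathcal{G})$, (iii) together with positive homogeneity of $\rho$ yields $\rho(\pm Y)=\rho(0)=0$, which forces $\mathbb{E}^\mu[Y]=0$. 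Writing an arbitrary $X\in L^\infty$ as $\mathbb{E}[X|\mathcal{G}]+(X-\mathbb{E}[X|\mathcal{G}])$ with the second summand in $\ker(\mathcal{G})$, one then gets $\mathbb{E}[D_\mu X]=\mathbb{E}[D_\mu\mathbb{E}[X|\mathcal{G}]]=\mathbb{E}[\mathbb{E}[D_\mu|\mathcal{G}]\,X]$ by a further application of the tower property, so $D_\mu=\mathbb{E}[D_\mu|\mathcal{G}]$ in $L^1$, i.e., $D_\mu\in L^1(\mathcal{G})$.

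Finally, the ``moreover'' claim follows immediately from the identity $\tilde\rho=\rho|_{L^\infty(\mathcal{G})}$ already used in the cycle, since law invariance of $\tilde\rho$ on $L^\infty(\mathcal{G})$ is by definition $\mathcal{G}$-law invariance of $\rho$. The only step requiring real work is (iii)$\Rightarrow$(iv); its essence is that (iii) encodes the orthogonality $\mathbb{E}^\mu[\ker(\mathcal{G})]=\{0\}$ for every $\mu\in\mathcal{S}_\rho$, from which tower-property duality extracts $\mathcal{G}$-measurability of the density $D_\mu$.
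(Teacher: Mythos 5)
Your proof is correct, and all of the easy implications ((i)$\Rightarrow$(ii)$\Rightarrow$(iii), (iii)$\Rightarrow$(i) via restriction, (iv)$\Rightarrow$(iii) via the tower property, and the ``moreover'' claim via $\tilde\rho=\rho|_{L^\infty(\mathcal{G})}$) match the paper's treatment. The one substantive step, (iii)$\Rightarrow$(iv), is where you genuinely diverge. The paper argues at the level of sets: from $\rho(X)=\sup_{\mu\in\mathcal{S}_\rho}\mathbb{E}[\mathbb{E}[D_\mu|\mathcal{G}]X]$ it invokes the uniqueness/denseness part of the dual representation (its Proposition EC.2) to conclude that $\{\mathbb{E}[D_\mu|\mathcal{G}]:\mu\in\mathcal{S}_\rho\}$ is $w(L^1,L^\infty)$-dense in $\widehat{\mathcal{S}}_\rho$, and then uses the weak closedness of $L^1(\mathcal{G})$ (a separate corollary) to place $\widehat{\mathcal{S}}_\rho$ inside $L^1(\mathcal{G})$. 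You instead work pointwise on the supporting set: condition (iii) forces $\rho(\pm Y)=0$ for every $Y\in\ker(\mathcal{G})$, hence every $\mu\in\mathcal{S}_\rho$ annihilates $\ker(\mathcal{G})$, and the identity $\mathbb{E}[(D_\mu-\mathbb{E}[D_\mu|\mathcal{G}])X]=0$ for all $X\in L^\infty$ then yields $D_\mu=\mathbb{E}[D_\mu|\mathcal{G}]$ a.s. Your route is more elementary --- it needs no topology on $\mathcal{M}$ beyond the fact that $L^\infty$ separates points of $L^1$ --- and it gives slightly more, namely that \emph{each} element of $\mathcal{S}_\rho$ is $\mathcal{G}$-measurable directly rather than after a closure argument; the paper's route has the advantage of reusing machinery (denseness of representing sets, weak closedness of $L^1(\mathcal{G})$) that it has already set up for other results. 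Both are complete proofs.
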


\begin{example}
\label{example:multcon}
The idea of using conditional expectations can be generalized further to incorporate different sub-$\sigma$-algebras (see Section \ref{sec:mults}).
Let $\mathcal{H}$ be a sub-$\sigma$-algebra  independent of $\mathcal{G}$. Let $\rho_1:L^\infty(\mathcal{G})\to\mathbb{R}$ and $\rho_2:L^\infty(\mathcal{H})\to\mathbb{R}$ both be coherent risk measures, with the former being law invariant. Define the following risk measure
\begin{equation*}    \rho(X)=\rho_1(\mathbb{E}[X|\mathcal{G}])\vee\rho_2(\mathbb{E}[X|\mathcal{H}]),\hspace{3pt}X\in L^\infty,
\end{equation*} 
where  $a\vee b$ denotes the maximum between two real numbers $a$ and $b$.
To show that $\rho$ above is $\mathcal G$-law invariant, 
we first notice that for $X$ that is $\mathcal G$-measurable, we have 
$
\rho(X) = \rho_1(X) \vee \E[X].
$
Coherence and law invariance of $\rho_1$ implies  $\rho_1(X)\ge \E[X]$ for $X\in L^\infty(\mathcal{G})$; for this result, see Theorem 4.3 of \cite{BM06}. Therefore $\rho(X)=\rho_1(X)$ for all $X\in L^\infty(\mathcal{G})$, which implies that $\rho$ is $\mathcal{G}$-law invariant.    
\end{example}

The second class of partially law-invariant coherent risk measures arises from statistical functions and tail risk measures, similar to the example of robust ES.
 
Let $\mathcal{P}_c(\mathbb{R})$ denote the set of Borel probability measures on $\mathbb{R}$ with compact support. A \emph{statistical function} is a mapping $\gamma:\mathcal{P}_c(\mathbb{R})\to\mathbb{R}$. We say that a statistical function is \emph{coherent} if the functional on $L^\infty$ given by $X\mapsto \gamma(X_{\#}\mathbb{P})$ is coherent, where $X_{\#}\mathbb{P}\in \mathcal{P}_c(\mathbb{R})$ is the distribution of $X$ under $\mathbb{P}$. We say that a statistical function is \emph{Fatou} if the functional on $L^\infty$ given by $X\mapsto \gamma(X_{\#}\mathbb{P})$ is Fatou. 

\begin{proposition}
    \label{prop:statFunc}
    Given a statistical function $\gamma$ and $\mathcal{S}\subseteq \mathcal{E}(\mathbb{P})$, the risk measure $\rho:L^\infty\to\mathbb{R}$ given by
    \begin{equation}
        \label{eq:statFunc}
        \rho(X)=\sup_{\mu\in \mathcal{S}}\gamma(X_{\#}\mu),~~~X\in L^\infty
    \end{equation}
    is Fatou, $\mathcal{G}$-law invariant, and coherent.
\end{proposition}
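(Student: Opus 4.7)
The plan is to show that for each $\mu \in \mathcal{S}$ the functional $\rho_\mu: X \mapsto \gamma(X_{\#}\mu)$ is itself Fatou and coherent, and then to invoke the fact that both properties are preserved under arbitrary suprema. The $\mathcal{G}$-law invariance is essentially immediate from the definition of $\mathcal{E}(\mathbb{P})$: since $\mathcal{S} \subseteq \mathcal{E}(\mathbb{P})$, every $\mu \in \mathcal{S}$ satisfies $\mu|_\mathcal{G} = \mathbb{P}|_\mathcal{G}$, so for $X \in L^\infty(\mathcal{G})$ the pushforward $X_{\#}\mu$ depends only on $\mathbb{P}|_\mathcal{G}$ and equals $X_{\#}\mathbb{P}$. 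Hence $X\laweq_\mathbb{P} Y$ with $X, Y \in L^\infty(\mathcal{G})$ forces $X_{\#}\mu = Y_{\#}\mu$ for every $\mu \in \mathcal{S}$, and therefore $\rho(X) = \rho(Y)$.

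The crux is transferring the coherence and Fatou continuity of the reference functional $X \mapsto \gamma(X_{\#}\mathbb{P})$, assumed by the definitions of coherent and Fatou statistical function, to $\rho_\mu$ for an arbitrary $\mu \in \mathcal{S}$. The mechanism is that $\gamma$ depends only on distributions, combined with the standing atomlessness of $(\Omega, \mathcal{F}, \mathbb{P})$. Monotonicity, cash invariance, and positive homogeneity can be recast as properties of $\gamma$ acting on distributions and then verified by coupling, for instance by realizing $X_{\#}\mu$ and $Y_{\#}\mu$ (when $X \leq Y$) as laws of comonotonically coupled variables on $(\Omega, \mathcal{F}, \mathbb{P})$ via quantile inverses. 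Convexity is the delicate axiom, because $(\lambda X + (1-\lambda)Y)_{\#}\mu$ depends on the joint law of $(X, Y)$ under $\mu$; here I would exploit atomlessness of $(\Omega, \mathcal{F}, \mathbb{P})$ to construct $\widetilde{X}, \widetilde{Y}$ whose $\mathbb{P}$-joint law equals the $\mu$-joint law of $(X, Y)$, so that $\widetilde{X}_{\#}\mathbb{P} = X_{\#}\mu$, $\widetilde{Y}_{\#}\mathbb{P} = Y_{\#}\mu$, and $(\lambda \widetilde{X} + (1-\lambda)\widetilde{Y})_{\#}\mathbb{P} = (\lambda X + (1-\lambda)Y)_{\#}\mu$, at which point convexity of $X \mapsto \gamma(X_{\#}\mathbb{P})$ yields the convexity inequality for $\rho_\mu$.

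For Fatou continuity of $\rho_\mu$, a uniformly bounded sequence $X_n \to X$ $\mathbb{P}$-a.s.\ also converges $\mu$-a.s.\ (since $\mu \ll \mathbb{P}$), so $(X_n)_{\#}\mu \to X_{\#}\mu$ weakly with common compact support. A Skorokhod/quantile construction on the atomless space $(\Omega, \mathcal{F}, \mathbb{P})$ then produces uniformly bounded $Y_n, Y \in L^\infty$ with $(Y_n)_{\#}\mathbb{P} = (X_n)_{\#}\mu$, $Y_{\#}\mathbb{P} = X_{\#}\mu$, and $Y_n \to Y$ $\mathbb{P}$-a.s., and the Fatou property of $\gamma$ applied to $(Y_n)$ yields $\gamma(X_{\#}\mu) \leq \liminf_n \gamma((X_n)_{\#}\mu)$. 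Passing to the supremum is routine: coherence is preserved under sup, and $\rho(X) = \sup_{\mu} \rho_\mu(X) \leq \sup_{\mu} \liminf_n \rho_\mu(X_n) \leq \liminf_n \rho(X_n)$ gives Fatou for $\rho$. The main obstacle is the joint-law realization used for convexity and the Skorokhod-type construction used for Fatou; both rely genuinely on the atomlessness of $(\Omega, \mathcal{F}, \mathbb{P})$ that is standing throughout.
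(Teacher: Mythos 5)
Your proposal is correct and follows essentially the same route as the paper's proof: reduce to the per-measure functionals $\rho^\mu(X)=\gamma(X_{\#}\mu)$, transfer coherence from the reference functional $X\mapsto\gamma(X_{\#}\mathbb{P})$ by realizing the $\mu$-joint law of $(X,Y)$ on the atomless space $(\Omega,\mathcal{F},\mathbb{P})$ (the paper invokes Dudley's representation theorem for exactly this), obtain Fatou continuity via the quantile/Skorokhod construction after noting $\mu\ll\mathbb{P}$, and conclude by taking the supremum, with $\mathcal{G}$-law invariance immediate from $\mu|_{\mathcal{G}}=\mathbb{P}|_{\mathcal{G}}$. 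The only cosmetic difference is that the paper verifies subadditivity plus positive homogeneity rather than convexity directly, which is equivalent.
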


Using Proposition \ref{prop:statFunc}, we can establish a connection between a ``bottom-up" approach to addressing model uncertainty and the current framework. Suppose that an agent begins with a probability space $(\Omega,\mathcal G,\p_0)$, where they are certain about their probabilistic model $\p_0$ on $\mathcal{G}$. The agent wishes to extend this model to a finer-$\sigma$-algebra $\mathcal F\supseteq \mathcal G$, on which they have model uncertainty. To start, she defines a reference probability $\p$ on $\mathcal F$, such that $\p|_{\mathcal{G}}=\p_0$, to identify null sets. From here, the agent chooses a set of competing probabilistic models $\mathcal{S}\subseteq \mathcal{E}(\mathbb{P}_0)$. For financial positions in $L^\infty(\mathcal{G})$, the agent intends to use the law-invariant risk measure $\rho:L^\infty(\mathcal{G})\to\mathbb{R}$ given by $\tilde{\rho}(X)=\gamma(X_{\#}\P_0),$ $X\in L^\infty(\mathcal{G})$, where $\gamma$ is a statistical function that is Fatou and coherent. The risk measure defined in equation \eqref{eq:statFunc} would be a natural risk measure to use on $L^\infty$. According to Proposition \ref{prop:statFunc}, this risk measure is Fatou, $\mathcal{G}$-law invariant, and coherent. For a similar approach in the dynamical setting, see \cite{ES03} and \cite{HS22}.

\begin{example}\label{example:motRep}
    For $\beta\in [0,1)$, define $\mathcal S = \{\mu \in \mathcal{E}(\mathbb{P}):~D_\mu \le 1/(1-\beta)\}$ and let $\tilde{\mathbb{E}}:\mathcal{P}_c(\mathbb{R})\to\mathbb{R}$ be the expectation statistical function (mapping a distribution to its mean). Define $\rho:L^\infty\to\mathbb{R}$ by
    $$\rho(X)=\sup_{\mu \in \mathcal{S}}\tilde{\mathbb{E}}(X_{\#}\mu)=\sup_{\mu \in \mathcal{S}}\mathbb{E}^\mu[X],~~X\in L^\infty.$$
    In this case, $\rho$ is of the form in   \eqref{eq:friday} and  \eqref{eq:statFunc}. As $\mathcal{S}^\mathcal{G}=\{\p|_{\mathcal G}\}$ is almost $\mathcal G$-law invariant, we can also verify that $\rho$ is $\mathcal{G}$-law invariant from Theorem \ref{th:friday}.  Take any $\sigma$-algebra $\mathcal H$ independent of $\mathcal{G}$, such that $(\Omega,\mathcal H,\p)$ is atomless (assumed to exist; this corresponds to a notion of conditional atomless property; see Remark \ref{rem:atomless} in Appendix \ref{app:1}). We claim 
\begin{equation}\label{eq:mainex}
 \rho(X)= \E[X],~~X\in L^\infty(\mathcal{G})~~\text{and}~~
    \rho(X) =\ES_{\beta}(X),~~Y\in L^\infty(\mathcal{H}).
\end{equation}
Furthermore, these are the extreme values for random variables with a particular distribution. That is, if $Q\in \mathcal{P}_c(\mathbb{R})$ and  $X\in L^\infty$ with $X_{\#}\mathbb{P}=Q$, we claim 
$$\tilde{\mathbb{E}}(Q)\leq \rho(X)\leq \tilde{\ES}_{\beta}(Q),$$
where $\tilde{\ES}_{\beta}:\mathcal{P}_c(\mathbb{R})\to\mathbb{R}$ is the ES statistical function. The $\sigma$-algebras $\mathcal{G}$ and $\mathcal{H}$ are extreme as $\rho(X)=\tilde{\mathbb{E}}(Q)$ if $X\in L^\infty(\mathcal{G})$ and $\rho(X)=\tilde{\ES}_{\beta}(Q)$ if $X\in L^\infty(\mathcal{H})$. For proofs of the above claims, see Section \ref{app:32}.
\end{example} 

By allowing $\mathcal{S}$ to vary but maintaining the condition that $\mathcal{S}^\mathcal{G}=\{\p|_{\mathcal G}\}$, we allow the risk measure to behave ``nicely" on $L^\infty(\mathcal G)$, while leaving flexibility outside this $\sigma$-field. 
Certainly, it is not necessary to require $\mu\in \mathcal{E}(\mathbb{P})$ for all $\mu\in \mathcal S$ (like $\mathcal{S}$ in Example \ref{example:motRep}); according to Theorem \ref{th:friday}, it is necessary and sufficient to take $\mathcal{S}^\mathcal{G}$ to be almost $\mathcal{G}$-law invariant. 
Indeed, generalizing Example \ref{example:motRep},
we can have $\rho$ behave like a law-invariant risk measure $\widehat \rho$ on $L^\infty(\mathcal G)$ and like a tail risk measure generated by $\widehat \rho$ on $L^\infty(\mathcal H)$ in the sense of \cite{LW21}, this is illustrated by Example \ref{example:tail}.

\begin{example}
    \label{example:tail}
    Given a Fatou law-invariant coherent risk measure $\widehat{\rho}:L^\infty\to\mathbb{R}$, there exists a Fatou coherent statistical function $\gamma:\mathcal{P}_c(\mathbb{R})\to\mathbb{R}$ such that $\widehat{\rho}(X)=\gamma(X_{\#}\mathbb{P})$ for all $X\in L^\infty$. For $\beta\in [0,1)$, define $\mathcal S = \{\mu \in \mathcal{E}(\mathbb{P}):~D_\mu \le 1/(1-\beta)\}$. By Proposition \ref{prop:statFunc}, the risk measure $\rho$ given by \eqref{eq:statFunc} is Fatou, $\mathcal{G}$-law invariant, and coherent. We claim that the risk measure $\rho$  satisfies
     $$
            \rho(X)= \widehat{\rho}(X),~~X\in L^\infty(\mathcal G)~~~\text{and}~~~
            \rho(X) =\widehat{\rho}^\beta(X),~~X\in L^\infty(\mathcal H),
        $$
        where $\widehat{\rho}^\beta$ is the $\beta$-tail risk measure generated by $\widehat{\rho}$. See Section \ref{app:32} for a proof of this claim. 
\end{example}

\subsection{Numerical illustrations}
\label{sec:computation}

In this section, we numerically evaluate the formula from Proposition \ref{propsition:computational} to generate plots and explore how model uncertainty shapes portfolio allocations.

Let $(\Omega,\mathcal{F})=(\mathbb{R}^2,\mathcal{B}(\mathbb{R}^2))$ and $\mathbb{P}$ be multivariate Gaussian with mean $m\in \mathbb{R}^2$ and covariance matrix $\Sigma\in \mathbb{R}^{2\times 2}$, where $\Sigma_{ij}=(c+(1-c)\id_{\{i=j\}})\sigma_i\sigma_j$ for $i,j\in \{1,2\}$ (that is, an equicorrelation matrix). Let $X_1,X_2$ be the random variables given by the projections, representing the single-period percentage losses of two financial positions. We interpret $\mathbb{P}$ as the initial estimated bivariate distribution of these losses. Assuming the distribution of $X_1$ is accurately specified (see Section \ref{sec:21}), set $\mathcal{G} = \sigma(X_1)$. The objective is to evaluate the risk of portfolios of the form $\pi_1 X_1 + \pi_2 X_2$ using the risk measure $\rho_\beta = \sup_{\mu \in \mathcal{Q}_\beta} \mathrm{ES}^{\mu}_\alpha$, where $(\pi_1, \pi_2)$ lies in the unit simplex of $\mathbb{R}^2$. The parameter $\beta \in [0,1)$ represents the decision maker's confidence in $\p$ for the events outside $\mathcal{G}$. When $\beta = 0$, the risk measure $\rho_0$ coincides with $\mathrm{ES}_\alpha$, reflecting complete confidence in the reference probability model $\mathbb{P}$. As $\beta$ increases, it signals a growing doubt in the model's accuracy outside of $\mathcal{G}$, leading to more conservative risk evaluations.

In Section \ref{app:computation}, for $\beta\in [0,1)$, we derive an integral formula for  
\begin{equation}
\label{eq:RW-1}
    f_\beta(\pi_1,\pi_2,x)=\mathbb{E}\left[\mathrm{ES}_\beta\left((\pi_1 X_1+\pi_2 X_2-x)^+|X_1\right)\right],
\end{equation}
which can be evaluated using numerical integration. By Proposition \ref{propsition:computational},
$$\rho_{\beta}\left(\pi_1 X_1+\pi_2 X_2\right)=\min_{x\in \mathbb{R}}\left(x+\frac{1}{1-\alpha}f_\beta(\pi_1,\pi_2,x)\right),$$ which can be computed numerically using a convex program.

Figures \ref{fig:beta1} and \ref{fig:sigma1} display plots of the computed values of $\rho_\beta$ for various portfolio weights, modeling assumptions, and values of $\beta$. The horizontal axes in all panels are given by $\pi_1$, the weight allocated to $X_1$. Since $\rho_\beta(X_1)=\mathrm{ES}_\alpha(X_1)$ for all $\beta\in [0,1)$, the lines agree when $\pi_1=1$. 

\begin{figure}[t]%
    \centering
    \caption{Varying $\beta$, while fixing $m_1=0$, $\sigma_1=0.1$, $\sigma_2=0.1$, $c=0.5$, $\alpha=0.95$; the red dotted curve connects the optimizers for different $\beta$}%
    \includegraphics[width=15.5cm]{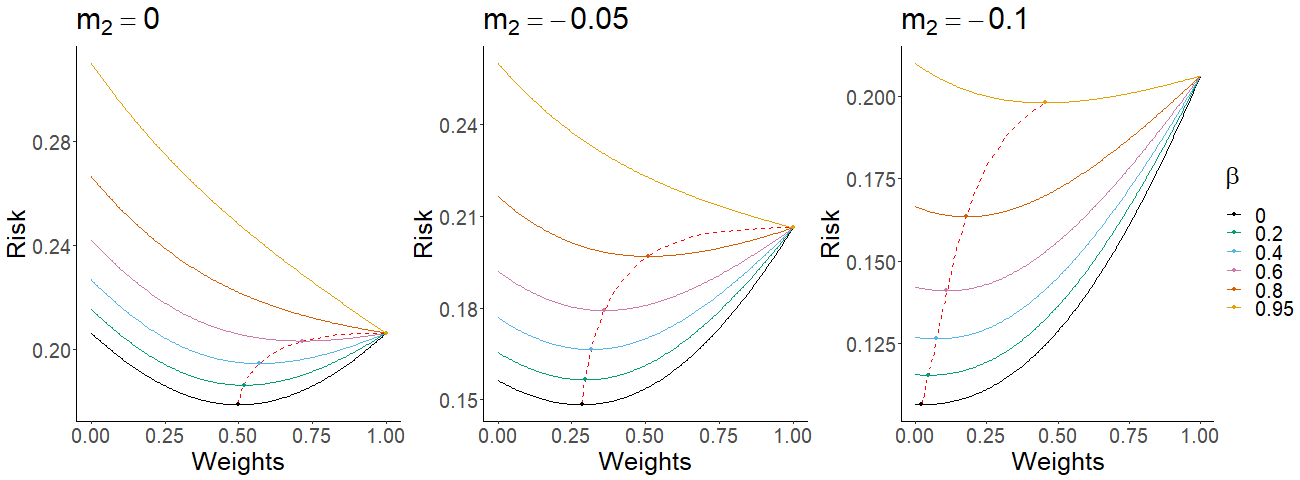}%
    \label{fig:beta1}%
\end{figure}
\begin{figure}[t]%
    \centering
    \caption{Varying $\sigma_2$, while fixing $ m_1=0$, $\sigma_1=0.1$, $c=0.5$, $\alpha=0.95$, $\beta=0.95$; the red dotted curve connects the optimizers for different $\sigma_2$}%
    \includegraphics[width=15.5cm]{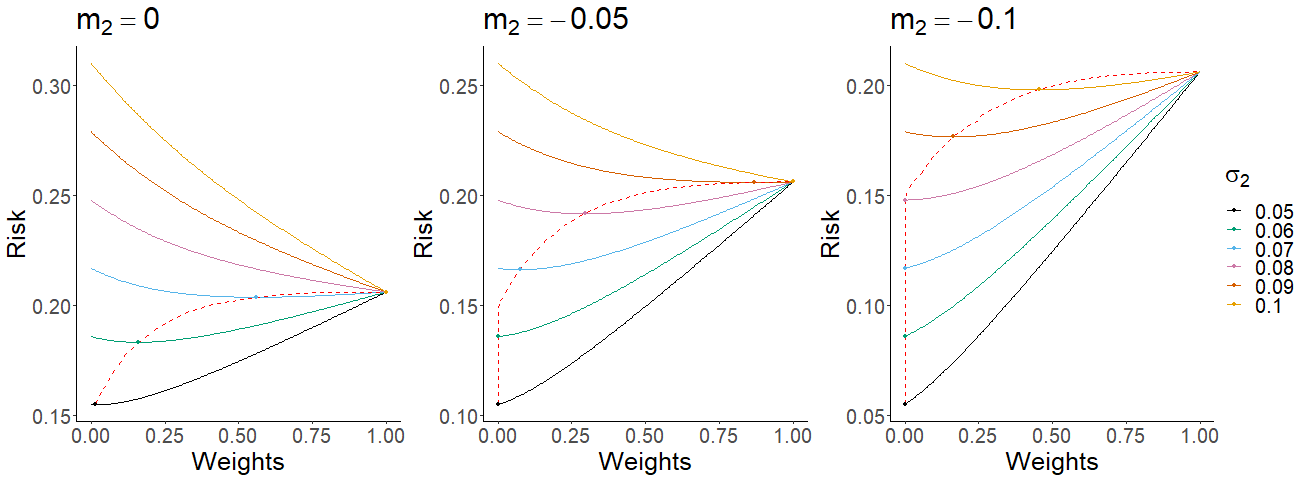}%
    \label{fig:sigma1}%
\end{figure}

Figure \ref{fig:beta1} illustrates how the risk-minimizing portfolio evolves as $\beta$ increases. When $m_2 = 0$ (left panel), an agent with $\beta = 0$ optimally selects an equal allocation between $X_1$ and $X_2$. This is consistent with the preference for pure diversification as both $X_1$ and $X_2$ share the same correctly-specified distribution. As $\beta$ increases, the risk-minimizing portfolio shifts monotonically toward a full allocation to $X_1$. This shift reflects the agent's growing concern that the model may be misspecified outside the estimated distribution for $X_1$. The center and right panels ($m_2 = -0.05$ and $-0.1$) represent cases where $X_2$ is initially expected to yield a higher return. In these cases, the same shift towards a full investment into $X_1$ persists, albeit with reduced intensity as the perceived downside of model uncertainty is partially offset by the favorable mean of $X_2$.

Figure \ref{fig:sigma1} considers a fixed $\beta = 0.95$ and varying $\sigma_2$. As expected, when $\sigma_2$ is very small, even the worst-case distributions within the ``$\beta$-neighborhood" of the estimated distribution for $X_2$ are more favorable than the well-specified distribution of $X_1$. Consequently, the agent continues to allocate fully to $X_2$, despite the high level of model uncertainty. When $\sigma_2$ is large, the risk-minimizing portfolio has an increasing weight allocated to $X_1$.

\section{Conclusion}\label{sec:con}

Our main purpose is to propose and study the concept of partial law invariance 
as a generalization of law invariance for risk measures and preference relations. This new property is appealing from a decision-theoretic perspective and interesting from a purely mathematical perspective. To summarize the main motivation, different economic sources have different levels of model uncertainty, and a law-invariant mapping or preference relation does not take these differences into account,
thus calling for the concept of partial law invariance to capture this discrepancy.
  
The economic relevance of the new framework is illustrated throughout the paper with examples. In addition to conceptual development, the new framework also leads to mathematical novelty in terms of new classes of risk measures and new representation results in forms not present in the literature. In particular, coherent and convex adjustments play an important role in understanding the new property and its representation in both the context of coherent and convex risk measures. 

The possibility of characterizing partial law invariance to other classes of risk measures may be worth discussing. The star-shaped risk measures, studied by \cite{CCMTW22}, are a natural generalization of convex risk measures, and they satisfy the following property in addition to $\rho(0)=0$.
\begin{enumerate}
    \item[] Star-shapedness: $\rho(\lambda X)\leq \lambda \rho(X)$ for all $X\in L^\infty$ and $\lambda\in [0,1]$.
\end{enumerate}
Representations of law-invariant (quasi-)star-shaped risk measures have recently been studied by \cite{HWWX21} and \cite{LRZ23}, which admit a quite different structure than the Kusuoka-type.  
Another research question is how to incorporate different probabilities, instead of projections of the same probability $\p$ as in our setting, in the context of multiple sources.

\newpage
\clearpage

\ECSwitch

\makeatletter
\renewcommand*\theHsection      {EC.\arabic{section}}
\renewcommand*\theHsubsection   {EC.\arabic{section}.\arabic{subsection}}
\renewcommand*\theHsubsubsection{EC.\arabic{section}.\arabic{subsection}.\arabic{subsubsection}}
\renewcommand*{\theHtheorem}      {EC.\arabic{theorem}}
\renewcommand*{\theHlemma}        {EC.\arabic{lemma}}
\renewcommand*{\theHproposition}  {EC.\arabic{proposition}}
\renewcommand*{\theHcorollary}    {EC.\arabic{corollary}}
\renewcommand*{\theHdefinition}   {EC.\arabic{definition}}
\renewcommand*{\theHequation}  {EC.\arabic{section}.\arabic{equation}}
\makeatother

\ECHead{E-companion: Technical Appendices}

The e-companion contains 7 appendices on the background, proofs of technical results, additional results and discussions, and extensions.

\section{Review on risk measures}
\label{app:reviewRisk}

This appendix reviews risk measures in the framework of \cite{ADEHII99}.

We say that a risk measure $\rho$ is \emph{Lebesgue (continuous)} if for all uniformly bounded sequences $(X_n)_{n\in \N}$ in $L^\infty$ that converge $\p$-a.s.~to some $X\in L^\infty$, it holds that $$\lim_{n\to\infty}\rho(X_n)=\rho(X).$$
    An example of a Lebesgue (thus Fatou) coherent risk measure is $\ES_\alpha^\mu$ for $\mu \in \mathcal M_1$ and $\alpha \in [0,1)$. 

Some well-known facts are collected below. 
A coherent risk measure $\rho$ is Fatou if and only if $\rho$ is $w(L^\infty,\mathcal{M})$-lower semicontinuous, a proof of which can be found in \citet[Theorem 4.33]{FSII16}. 
A coherent risk measure has a dual representation in the following form; see, e.g., \citet[Theorem 4.33]{FSII16}. Recall that for a Fatou coherent risk measure $\rho:L^\infty\to\mathbb{R}$, it's \emph{supporting set} is given by
$$\mathcal{S}_\rho=\left\{\mu\in \mathcal{M}_1:\mathbb{E}^{\mu}[X]\leq \rho(X)~\text{for all}~X\in L^\infty\right\}.$$
\begin{proposition}[\cite{DII02}]
    \label{prop:coherent}
    For $\rho:L^\infty\to\mathbb{R}$, the following are equivalent.
    \begin{enumerate}[(i) ]
        \item The mapping $\rho$ is a Fatou coherent risk measure;
        \item there exists a $w(\mathcal{M},L^\infty)$-closed and convex set $\mathcal{S}\subseteq \mathcal{M}_1$ such that 
        \begin{equation}
            \rho(X)=\sup_{\mu\in\mathcal{S}}\mathbb{E}^\mu[X],~~~X\in L^\infty.\label{eq:Rep1}
        \end{equation}
    
    \end{enumerate}
    In case (ii) holds, $\mathcal{S}$ is uniquely given by $\mathcal{S}_\rho$ (the supporting set of $\rho$). Moreover,  $\mathcal{R}\subseteq \mathcal{M}_1$ gives representation \eqref{eq:Rep1} if and only if $\conv(\mathcal{R})$ is $w(\mathcal{M},L^\infty)$-dense in $\mathcal{S}_\rho$.
\end{proposition}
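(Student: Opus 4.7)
The plan is to handle the equivalence via convex duality on the acceptance set, and then establish uniqueness and the density characterization via a Hahn--Banach separation argument in $(\mathcal{M},w(\mathcal{M},L^\infty))$.

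For the direction (ii)$\Rightarrow$(i), the four coherence axioms follow directly from the supremum representation together with the linearity of $X\mapsto \E^\mu[X]$: cash invariance from $\E^\mu[X+c]=\E^\mu[X]+c$, monotonicity from the positivity of $\mu$, and convexity and positive homogeneity from the corresponding properties of the supremum. Fatou continuity is obtained by noting that for a uniformly bounded sequence $X_n\to X$ $\p$-a.s., the dominated convergence theorem gives $\E^\mu[X_n]\to\E^\mu[X]$ for each $\mu\in\mathcal S$; hence $\E^\mu[X]\le\liminf_n\rho(X_n)$, and taking the supremum over $\mu\in\mathcal S$ yields $\rho(X)\le\liminf_n\rho(X_n)$.

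For the direction (i)$\Rightarrow$(ii), I would work with the acceptance set $\mathcal A_\rho=\{X\in L^\infty:\rho(X)\le 0\}$. Coherence makes $\mathcal A_\rho$ a convex cone containing $L^\infty_-$, and (as noted in the paragraph preceding the statement) Fatou continuity is equivalent to $\rho$ being $w(L^\infty,\mathcal M)$-lower semicontinuous, so $\mathcal A_\rho$ is $w(L^\infty,\mathcal M)$-closed. Applying the Hahn--Banach separation theorem in the locally convex space $(L^\infty,w(L^\infty,\mathcal M))$, the cone $\mathcal A_\rho$ is the intersection of closed half-spaces $\{X:\E^\mu[X]\le 0\}$ for $\mu$ in the polar $\mathcal A_\rho^\circ\subseteq\mathcal M$. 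From monotonicity (so $-L^\infty_+\subseteq\mathcal A_\rho$) the polar contains only nonnegative measures, and from cash invariance (so $X-\rho(X)+\varepsilon\in\mathcal A_\rho$ for all $\varepsilon>0$) one can normalize to obtain $\mu(\Omega)=1$. Defining $\mathcal S_\rho$ as this polar set intersected with $\mathcal M_1$, cash invariance and positive homogeneity combine with the separation to yield $\rho(X)=\sup_{\mu\in\mathcal S_\rho}\E^\mu[X]$ via the standard identity $\rho(X)=\inf\{c\in\R:X-c\in\mathcal A_\rho\}$.

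For uniqueness of $\mathcal S_\rho$, suppose $\rho$ admits representations by two $w(\mathcal M,L^\infty)$-closed convex sets $\mathcal S_1,\mathcal S_2\subseteq\mathcal M_1$, and take $\mu\in\mathcal S_1\setminus\mathcal S_2$. By Hahn--Banach separation in $(\mathcal M,w(\mathcal M,L^\infty))$, there exists $X\in L^\infty$ such that $\E^\mu[X]>\sup_{\nu\in\mathcal S_2}\E^\nu[X]=\rho(X)$, contradicting $\mu\in\mathcal S_1\subseteq\mathcal S_\rho$. Hence $\mathcal S_1=\mathcal S_2=\mathcal S_\rho$. For the final density claim, note that $\sup_{\mu\in\mathcal R}\E^\mu[X]=\sup_{\mu\in\conv(\mathcal R)}\E^\mu[X]=\sup_{\mu\in\overline{\conv}(\mathcal R)}\E^\mu[X]$, where the closure is taken in $w(\mathcal M,L^\infty)$; so $\mathcal R$ represents $\rho$ iff the $w(\mathcal M,L^\infty)$-closure of $\conv(\mathcal R)$ equals $\mathcal S_\rho$ by the uniqueness just proved. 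The main technical obstacle is ensuring that the polar set lies in $\mathcal M$ (countably additive) rather than in the larger dual $(L^\infty)^*$ of finitely additive set functions; this is precisely where the Fatou assumption (equivalently $w(L^\infty,\mathcal M)$-closedness of $\mathcal A_\rho$) is essential, as opposed to the weaker $\sigma(L^\infty,(L^\infty)^*)$-closedness that would only yield a representation over finitely additive probabilities.
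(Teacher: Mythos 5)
Your proof is correct and follows the standard acceptance-set/bipolar argument of the cited sources (Delbaen 2002; F\"ollmer--Schied, Theorem 4.33), which is precisely what the paper relies on: the paper states this proposition as a known result with a citation and does not reprove it. All the key points are in place, including the normalization of the polar cone via cash invariance, the separation argument for uniqueness, and the observation that Fatou continuity (equivalently, $w(L^\infty,\mathcal{M})$-closedness of $\mathcal{A}_\rho$) is what rules out merely finitely additive representing functionals.
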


\begin{example}
\label{ex:ES}
    For $\mu\in \mathcal{M}_1$ and $\alpha\in [0,1)$,   $\mathrm{ES}^\mu_\alpha$ admits the dual representation 
    $$\mathrm{ES}^\mu _\alpha(X)=\sup_{\nu\in \mathcal{S}_{\alpha,\mu}}\mathbb{E}^\nu[X],~~~~X\in L^\infty,$$
where  $\mathcal{S}_{\alpha,\mu}= \{\nu\in \mathcal{M}_1\mid ({1-\alpha} ) D_\nu\leq   D_\mu \}$; see e.g., \citet[Theorem 4.52]{FSII16}.
Since $\mathcal{S}_{\alpha,\mu}$ is $w\left(\mathcal{M},L^\infty\right)$-closed and convex, $\mathcal{S}_{\mathrm{ES}^\mu_\alpha}=\mathcal{S}_{\alpha,\mu}$ holds. 
\end{example}

Law-invariant coherent risk measures also admit a representation of \cite{KII01},
and Fatou continuity is automatic for law-invariant convex risk measures, as shown by \cite{JST06} and \cite{S10}. Let $\mathcal{M}_B$ denote the probability measures on $[0,1)$ equipped with the restricted Borel $\sigma$-algebra.

\begin{proposition}[\cite{KII01,JST06,S10}]
    \label{prop:lawinv}
    For $\rho:L^\infty\to\mathbb{R}$, the following are equivalent. 
    \begin{enumerate}[(i) ]
        \item The mapping $\rho$ is a Fatou law-invariant coherent risk measure;        
        \item the mapping $\rho$ is a law-invariant coherent risk measure;
        \item there exists some $\mathcal{D}\subseteq \mathcal{M}_B$ such that
 $$ \rho(X)=\sup_{Q\in\mathcal{D}}\int_{[0,1)}\mathrm{ES}_\alpha(X)~Q(\d\alpha),\hspace{3pt}X\in L^\infty. 
 $$
    \end{enumerate}
\end{proposition}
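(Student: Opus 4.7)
My plan is to prove the chain of implications (iii) $\Rightarrow$ (i) $\Rightarrow$ (ii) $\Rightarrow$ (iii). The first and third implications are the core content, while (i) $\Rightarrow$ (ii) is immediate.

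For (iii) $\Rightarrow$ (i), I would verify each property separately. Each $\mathrm{ES}_\alpha$ is a Fatou, law-invariant, coherent risk measure (Example \ref{ex:ES} gives the Fatou/coherent part; law invariance is standard). Integration against $Q \in \mathcal{M}_B$ preserves all three properties by the dominated convergence theorem (for Fatou) and linearity in $X$ (for coherence and law invariance). Taking suprema then preserves coherence, law invariance, and Fatou continuity --- the last because a supremum of $w(L^\infty,\mathcal M)$-lower semicontinuous functionals is lower semicontinuous, which is equivalent to Fatou for coherent risk measures. The finiteness of $\rho$ on $L^\infty$ follows from $\mathrm{ES}_\alpha(X) \leq \|X\|_\infty$.

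For (ii) $\Rightarrow$ (i), the automatic Fatou continuity, I would invoke the Jouini--Schachermayer--Touzi / Svindland strategy. Given a bounded sequence $X_n \to X$ $\p$-a.s., apply Koml\'os's theorem to extract a subsequence and random variables $\widetilde X_{n_k}$ equimeasurable with $X_{n_k}$ such that the Ces\`aro averages $\frac{1}{N}\sum_{k=1}^N \widetilde X_{n_k}$ converge to some $\widetilde X \sim X$ in $L^1$ (and hence $\p$-a.s.\ along a further subsequence). Law invariance gives $\rho(\widetilde X_{n_k}) = \rho(X_{n_k})$, and convexity gives $\rho\!\left(\frac{1}{N}\sum \widetilde X_{n_k}\right) \leq \frac{1}{N}\sum \rho(\widetilde X_{n_k})$. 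Monotonicity and cash invariance imply Lipschitz continuity with respect to $\|\cdot\|_\infty$ on constants, and combined with the boundedness of the sequence one deduces $\rho(X) = \rho(\widetilde X) \leq \liminf_n \rho(X_n)$. This is the main technical obstacle and the deepest part of the proof.

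For (i) $\Rightarrow$ (iii), I would use Proposition \ref{prop:coherent} to write $\rho(X) = \sup_{\mu \in \mathcal{S}_\rho} \E^\mu[X]$, and observe that law invariance of $\rho$ transfers to $\mathcal{S}_\rho$ in the sense that $\mathcal{S}_\rho$ is closed under equimeasurable rearrangements of Radon--Nikodym derivatives (this uses atomlessness and is Theorem 4.59 of F\"ollmer--Schied). Hence for each $\mu \in \mathcal{S}_\rho$,
\begin{equation*}
\sup_{\nu:\, D_\nu \overset{\mathrm{d}}{=} D_\mu} \E[D_\nu X] = \int_0^1 q_{D_\mu}(u)\, q_X(u)\, \d u
\end{equation*}
by the Hardy--Littlewood rearrangement inequality (attained since $(\Omega,\mathcal F,\p)$ is atomless). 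The key algebraic step is to rewrite this comonotonic integral as a mixture of Expected Shortfalls: defining $Q_\mu \in \mathcal M_B$ by $Q_\mu([0,t)) = \int_{[0,t)}(1-s)\,\d q_{D_\mu}(s)$ as in \eqref{eq:Q-def}, an integration by parts (using Fubini on the region $\{(u,s): q_{D_\mu}(u) \geq q_{D_\mu}(s)\}$) yields $\int_0^1 q_{D_\mu}(u) q_X(u)\,\d u = \int_{[0,1)} \mathrm{ES}_\alpha(X)\, Q_\mu(\d\alpha)$. Setting $\mathcal D = \{Q_\mu : \mu \in \mathcal S_\rho\}$ completes the representation. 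I expect the most delicate bookkeeping to be the rearrangement/Fubini identity and the Koml\'os construction in step two, with everything else following from standard tools already referenced in the paper.
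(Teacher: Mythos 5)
The paper does not prove this proposition; it is stated in the review appendix as a known result and attributed to \cite{KII01}, \cite{JST06} and \cite{S10}, so there is no in-paper argument to compare against. Two of your three steps are essentially the standard arguments: (iii)~$\Rightarrow$~(i) by stability of Fatou coherence under mixtures and suprema is fine, and (i)~$\Rightarrow$~(iii) via the supporting set, law invariance of $\mathcal S_\rho$ (\citet[Theorem 4.59]{FSII16}), the Hardy--Littlewood inequality, and the change of variable to $Q_\mu$ in \eqref{eq:Q-def} is exactly the route of \citet[Theorem 4.62]{FSII16}, which the paper itself invokes in \eqref{eq:sup}.

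The genuine gap is in (ii)~$\Rightarrow$~(i), the automatic Fatou property, which you correctly identify as the deepest step but do not actually prove. Your Koml\'os step is vacuous in this setting: the hypothesis already gives $X_n\to X$ $\p$-a.s., so Ces\`aro averages of any subsequence (or of equimeasurable copies) converge a.s.\ to the same limit, and you are left with exactly the original problem --- a uniformly bounded, a.s.\ convergent sequence $Y_N\to X$ with $\rho(Y_N)\le \liminf_n\rho(X_n)+\varepsilon$, from which you must conclude $\rho(X)\le\liminf_n\rho(X_n)$. Your closing sentence simply asserts this conclusion; monotonicity and cash invariance only give $1$-Lipschitz continuity in $\|\cdot\|_\infty$, and a.s.\ convergence of bounded sequences does not imply $\|\cdot\|_\infty$-convergence, so nothing bridges the gap. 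Note also that the claim is false without atomlessness of $(\Omega,\mathcal F,\p)$, so any correct proof must use it essentially. The actual arguments of \cite{JST06} and \cite{S10} proceed differently: one first establishes dilatation monotonicity, $\rho(\E[X|\mathcal J])\le\rho(X)$ for finite sub-$\sigma$-algebras $\mathcal J$ (a consequence of law invariance plus convexity on an atomless space), and then shows that the norm-closed convex law-invariant acceptance set is $w(L^\infty,L^1)$-closed via a Krein--\v Smulian/biconjugate argument, using that $\rho$ restricted to $\mathcal J$-measurable functions is a finite-dimensional convex (hence continuous) function. Replacing your Koml\'os paragraph with that scheme, or simply citing the result as the paper does, would close the argument.
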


It is well known that the monetary risk measure $\rho:L^\infty\to\mathbb{R}$ is convex if and only if $\mathcal{A}_\rho$ is convex, and coherent if and only if $\mathcal{A}_\rho$ is a convex cone.

\section{Additional results and proofs accompanying Section \ref{sec:def}}
\label{app:1}

\begin{proof}{Proof of Proposition \ref{prop:cerEq}:}
    We claim the set $\{c\in \mathbb{R}:X\succsim c\}$ must be of the form $[a,\infty)$ for some $a\in \mathbb{R}$. To see this, let $x\in \{c\in \mathbb{R}:X\succsim c\}$ and $y> x$, then $X\succsim x\succ y$ and $y\in \{c\in \mathbb{R}:X\succsim c\}$ by $(\mathrm{C2})$, which proves $\{c\in \mathbb{R}:X\succsim c\}$ is an interval with no right endpoint. Since $\{c\in \mathbb{R}:X\succsim c\}$ is closed by $(\mathrm{C1})$, it must be of the form $[a,\infty)$ for some $a\in \mathbb{R}$. Using the same argument, we can show that the set $\{c\in \mathbb{R}:c\succsim X\}$  must be of the form $(\infty,b]$ for some $b\in \mathbb{R}$. As $\succsim$ is a total ordering, it must hold that $a\leq b$ since if this were not true, there would be $c\in \mathbb{R}$ that was not comparable to $X$. Assume that $a<b$, then $X\succsim a\succ b\succsim X$, which is clearly a contradiction. Thus, $a=b$ and we can define $c_X=a=b$. Then, by the definition of $a$ and $b$, $c_X\sim X$.
\Halmos \end{proof}

\begin{proof}{Proof of Proposition \ref{FirstMot}:}
  Assume that $\succsim$ is $\mathcal{G}$-law invariant. Let $X,Y\in \mathcal{X}(\mathcal{G})$ with $X\geq_1 Y$. As $(\Omega,\mathcal{G},\mathbb{P})$ is atomless, we can find a $\mathcal{G}$-measurable $U$ such that $U$ has a uniform distribution under $\mathbb{P}$. Define $\tilde{X}=F_{\mathbb{P},X}^{-1}(U)$ and $\tilde{Y}=F_{\mathbb{P},Y}^{-1}(U)$, thus $X\overset{\d}{=}_\mathbb{P}\tilde{X}$,  $Y\overset{\d}{=}_\mathbb{P}\tilde{Y}$, and $\tilde{X}\geq \tilde{Y}.$ By $\mathcal{G}$-law invariance and monotonicity, $Y\sim \tilde{Y}\succsim \tilde{X}\sim X$, thus $\succsim$ is $\mathcal{G}$-FSD mononotonic. The converse is straightforward.
\Halmos \end{proof}

\begin{proof}{Proof of Proposition \ref{prop:connection}:}
  Let $X,Y\in L^\infty(\mathcal{G})$ and $X \laweq_\p Y$.
  For $\mu\in \Q$ and $x\in \R$ we have 
      $$\mu(X\le x) = \E\left [D_\mu\id_{\{X\le x\}} \right]=\E\left [D_\mu\right] \E\left [  \id_{\{X\le x\}} \right]   = \p(X\le x)$$
  and similarly $\mu(Y\le x)= \p(Y\le x)$.
  Hence,  $X$ and $Y$ are identically distributed under each element of $\Q$. Since $\rho$ is $\Q$-based, we have $\rho(X)=\rho(Y)$. This implies that $\rho$ is $\mathcal G$-law invariant.
\Halmos \end{proof}

The next remark discusses how Proposition \ref{prop:connection} is related to conditional non-atomicity proposed by \cite{SSWWII19}.
\begin{remark} \label{rem:atomless} 
Let $\Q$ be a collection of probability measures on $(\Omega,\mathcal{F})$ and $\M_1(\Q)$ be the set of probability measures dominating (in the sense of absolute continuity) each element of $\Q$, and we assume it is non-empty (it is always non-empty if $\Q$ is finite or $\Q\subseteq \M_1$). The set $\Q$ is said to be \emph{conditionally atomless} if there exist  $\nu\in \mathcal M_1(\Q)$ and random variable $X$ such that under $\nu$, $X$ is continuously distributed and independent of $\sigma (D_\mu:\mu\in \Q )$.   
As discussed by \cite{D21},  
an equivalent statement for conditional non-atomicity of $\Q$ is that 
there exists an atomless $\sigma$-algebra $\mathcal G$ independent of $\sigma(D_\mu:\mu\in \Q)$ under some $\nu\in \M_1(\Q)$.
Clearly, when $(\Omega,\mathcal{G},\mathbb{P})$ is atomless, the condition that $\mathcal G$ is independent of $\sigma (D_\mu:\mu\in \Q )$ in  Proposition \ref{prop:connection} is sufficient for conditional non-atomicity of $\mathcal{Q}$. 
\end{remark}

\section{Additional results and proofs accompanying Section \ref{sec:3}}
\label{app:2}

\subsection{Additional results and proofs accompanying Section \ref{sec:31}}

For a preference relation $\succsim$, we define the following properties:
\begin{enumerate}
    \item[] Translation invariance: $X+c\succsim Y+c.$ for all $X,Y\in L^\infty$ with $X\succsim Y$ and $c\in \mathbb{R}$;
    \item[] Convexity: $\lambda X+(1-\lambda)Y\succsim Z$ for all $X,Y,Z\in L^\infty$ with $X\succsim Z$ and $Y\succsim Z$ and $\lambda\in [0,1]$;
    \item[] Positive homogeneity: $\lambda X\succsim\lambda Y$ for all $X,Y\in L^\infty$ with $X\succsim Y$ and $\lambda\geq 0$;
    \item[] Continuity from below: for all increasing sequences $(X_n)_{n\in \N}$ in $L^\infty$   that converges to $X$ and $Y\in L^\infty$ with $  X_n\succsim Y$ for $n\in \N$, we have $  X\succsim Y$.
\end{enumerate}  

The next proposition clarifies the connection between CRM preferences and these properties, and we provide a simple self-contained proof.
 \begin{proposition}
    \label{proposition:CRM}
    For a preference relation $\succsim$, the following are equivalent.
    \begin{enumerate}[(i) ]
        \item The preference relation $\succsim$ satisfies monotonicity, translation invariance, convexity, and positive homogeneity;
        \item  The preference relation $\succsim$ is a CRM preference relation. 
    \end{enumerate}
    Furthermore, if any of the above is true, $\succsim$ is continuous from below if and only if $\rho$ is Fatou continuous.
\end{proposition}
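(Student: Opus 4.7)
The plan is to use Proposition \ref{prop:cerEq} to construct an explicit coherent risk measure representing $\succsim$ and then to translate each axiom on $\succsim$ into the corresponding axiom on this risk measure (and vice versa). Specifically, for the direction (i)$\Rightarrow$(ii) I would define $\rho(X)=c_X$, where $c_X$ is the unique certainty equivalent. The axiom (C2) forces constants to be ordered in reverse, so $c_X\le c_Y$ is equivalent to $c_X\succsim c_Y$, and hence $X\succsim Y\iff \rho(X)\le \rho(Y)$. Thus $\rho$ represents $\succsim$ in the required sense, and $\rho(c)=c$ for every $c\in\R$.

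Next I would verify each of the four coherence axioms for $\rho=c_{(\cdot)}$ by a uniform ``subtract the certainty equivalent'' trick based on translation invariance. For monotonicity, $X\le Y$ gives $X\succsim Y$ by monotonicity of $\succsim$, hence $c_X\le c_Y$. For cash invariance, from $X\sim c_X$ and translation invariance one obtains $X+c\sim c_X+c$, so $\rho(X+c)=\rho(X)+c$. For positive homogeneity, $\lambda X\sim \lambda c_X$ follows from positive homogeneity of $\succsim$ applied to $X\sim c_X$ (together with (C2) to resolve the reversed ordering of constants). For convexity, set $X'=X-c_X$ and $Y'=Y-c_Y$; translation invariance gives $X'\sim 0$ and $Y'\sim 0$, and convexity of $\succsim$ yields $\lambda X'+(1-\lambda)Y'\succsim 0$. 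A second application of translation invariance returns $\lambda X+(1-\lambda)Y\succsim \lambda c_X+(1-\lambda)c_Y$, which is exactly $\rho(\lambda X+(1-\lambda)Y)\le \lambda\rho(X)+(1-\lambda)\rho(Y)$.

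The direction (ii)$\Rightarrow$(i) is the routine reverse translation: if $\rho$ represents $\succsim$, each coherence axiom for $\rho$ immediately yields the corresponding axiom on $\succsim$, using again that $X\succsim Y\iff \rho(X)\le\rho(Y)$ and $\rho(c)=c$. The regularity conditions (C1) and (C2) come for free from continuity and monotonicity of $\rho$ in its real argument.

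For the ``furthermore'' part, I would argue that continuity from below of $\succsim$ translates to continuity from below of $\rho$ in the usual risk-measure sense, namely $X_n\uparrow X$ implies $\rho(X_n)\uparrow\rho(X)$: the $\le$ direction is automatic from monotonicity, and the $\ge$ direction is exactly continuity from below applied with $Y=c$ for $c$ slightly larger than $\lim\rho(X_n)$. For coherent (in particular, monetary) risk measures on $L^\infty$, continuity from below is equivalent to Fatou continuity — a standard result in the risk measure literature (e.g., \citet[Theorem 4.22]{FSII16}) — and I would simply invoke this. The main obstacle, and the only step requiring care, is the handling of the reversed ordering of constants forced by (C2) when translating axioms involving $\succsim$ applied to constants (especially in positive homogeneity, where the sign of the scalar matters); this is why I would consistently reduce to indifference relations $X\sim c_X$ and then manipulate via translation invariance rather than working with the ordering directly.
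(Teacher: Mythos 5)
Your proposal is correct and takes essentially the same approach as the paper: both define $\rho(X)=c_X$ via Proposition \ref{prop:cerEq}, check monotonicity and cash invariance directly from the corresponding axioms on $\succsim$, and settle the ``furthermore'' part by showing that continuity from below of $\succsim$ is equivalent to $\rho(X_n)\uparrow\rho(X)$ along increasing sequences and then invoking the standard equivalence with Fatou continuity for monetary risk measures. The only cosmetic difference is that you verify convexity and positive homogeneity of $\rho$ directly via the subtract-the-certainty-equivalent trick, whereas the paper observes that the acceptance set $\mathcal{A}_\rho=\{X\in L^\infty: X\succsim 0\}$ is a convex cone and cites \citet[Proposition 4.6]{FSII16}; both arguments are equally valid.
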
 

\begin{proof}{Proof:}
    $(\mathrm{i})\Rightarrow (\mathrm{ii})$: Define the mapping $\rho:L^\infty\to\mathbb{R}$ by $\rho(X)=c_X.$ We know that $\rho$ represents the preference relation $\succsim$. As $\succsim$ is monotonic, it is clear that $\rho$ is monotonic. Let $X\in L^\infty$, as $X\sim \rho(X)$, by translation invariance, $\rho(X+c)\sim X+c\sim \rho(X)+c$ for all $c\in \mathbb{R}$. By assumption $(\mathrm{C2})$, $\rho(X+c)=\rho(X)+c$ for all $c\in \mathbb{R}$. Define $\mathcal{A}_\rho=\{X\in L^\infty:\rho(X)\leq 0\}=\{X\in L^\infty:X\succsim 0\}$. It is easy to see that $\mathcal{A}_\rho$ is a convex cone as the preference relation $\succsim$ is convex and positively homogenous. Thus, by \citet[Proposition 4.6]{FSII16}, $\rho$ is a coherent risk measure. $(\mathrm{ii})\Rightarrow (\mathrm{i})$: This is straightforward.

    To prove the last statement, let $\rho:L^\infty\to \mathbb{R}$ be a coherent risk measure that represents the preference relation $\succsim$.
    
    We claim that the preference relation $\succsim$ is continuous from below if and only if for all increasing sequences $(X_n)_{n\in \N}$ in $L^\infty$   that converges to $X$, we have $\rho(X_n)\uparrow \rho(X)$. For the forward, assume that $(X_n)_{n\in \N}$ in $L^\infty$ is an increasing sequence that converges to $X$. Let $l=\lim_{k\to\infty}\rho(X_k)$, which exists as $(\rho(X_n))_{n=1}^\infty$ is increasing.  Since $\rho(X_n)\leq l$ for all $n\in \N$ and $\rho(l)=l$, $X_n\succsim l$ for all $n\in \N$. Therefore, $X\succsim l$ and 
    $$\rho(X)\leq l = \lim_{k\to\infty}\rho(X_k)\leq \rho(X),$$ which implies $\rho(X_n)\uparrow \rho(X)$. For the converse, assume that $(X_n)_{n\in \N}$ in $L^\infty$ is an increasing sequence that converges to $X$ and $Y\in L^\infty$ with $  X_n\succsim Y$ for $n\in \N$. Therefore, it holds that $\rho(X_n)\leq \rho(Y)$ for all $n\in N$ and 
    $$\rho(X)=\lim_{n\to\infty}\rho(X_n)\leq \rho(Y),$$
    meaning that $X\succsim Y$. 
    
    As $\rho$ is monetary, it is a
well-known fact that $\rho$ is Fatou continuous if and only if for all increasing sequences $(X_n)_{n\in \N}$ in $L^\infty$   that converges to $X$, we have $\rho(X_n)\uparrow \rho(X)$, which proves the claim.
\Halmos \end{proof}

\begin{proof}{Proof of Proposition \ref{proposition:CRM2}:}
    $\mathrm{(i)}\Leftrightarrow \mathrm{(ii)}$: Follows from Proposition \ref{FirstMot}. $\mathrm{(i)}\Leftrightarrow \mathrm{(iii)}$: The functional $\rho:L^\infty\to\mathbb{R}$ defined by $\rho(X)=c_X$ is coherent as the preference relation $\succsim$ is a CRM preference relation. We say that $\rho$ is $\mathcal{G}$-SSD monotonic if $\rho(X)\geq \rho(Y)$ for all $X,Y\in L^\infty(\mathcal{G})$ with $X\geq_2 Y$. 
    It is well-known that a convex risk measure is law-invariant 
 if and only if it is $\mathcal F$-SSD monotonic; see e.g., \citet[Theorem 2.1 and Proposition 3.2]{MW20}. 
    Restricting this to $L^\infty (\mathcal G$),   the preference relation $\succsim$ is $\mathcal{G}$-law invariant if and only if $\rho$ is $\mathcal{G}$-SSD monotonic. It is clear that $\rho$ is $\mathcal{G}$-SSD monotonic if and only if the preference relation $\succsim$ is $\mathcal{G}$-SSD monotonic.
\Halmos \end{proof}

\begin{proof}{Proof of Proposition \ref{proposition:acceptance}:}
    The forward is trivial. For the converse note that $\{X\in L^\infty\mid \rho(X)\leq c\}=\mathcal{A}_\rho+c$. Therefore, the level sets of $\rho$ are $\mathcal G$-law invariant; the claim is straightforward to prove after this observation.
\Halmos \end{proof}

\subsection{Discussion and proof of Theorem \ref{th:friday}}
\label{app:friday}
A proof of Theorem \ref{th:friday} follows from an analysis of the supporting set of $\rho$, which will be presented in a few separate results. 

\begin{proposition}
    \label{prop:good}
    Let $\rho:L^\infty\to \mathbb{R}$ be a Fatou coherent risk measure and $\tilde{\rho}=\rho|_{L^\infty(\mathcal{G})}$. Then, $\tilde \rho$ is a Fatou coherent risk measure. Furthermore,   $\mathcal{S}_\rho^\mathcal{G}\subseteq \mathcal{S}_{\tilde{\rho}}$ and $\mathcal{S}_\rho^\mathcal{G}$ is $w(\mathcal{M}(\mathcal{G}),L^\infty(\mathcal{G}))$-dense in $\mathcal{S}_{\tilde{\rho}}$.
\end{proposition}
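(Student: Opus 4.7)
The plan is to handle the three claims in sequence, using the dual representation of Fatou coherent risk measures (Proposition \ref{prop:coherent}) and a Hahn--Banach separation argument for the density claim.

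First, I would verify that $\tilde\rho$ is a Fatou coherent risk measure by pulling back each defining property of $\rho$ to $L^\infty(\mathcal G)$. Cash invariance, monotonicity, convexity, and positive homogeneity are inherited immediately because $L^\infty(\mathcal G)$ is a linear subspace of $L^\infty$ containing the constants and closed under the lattice operations needed. For Fatou continuity, any uniformly bounded $\p$-a.s.\ convergent sequence in $L^\infty(\mathcal G)$ is also such a sequence in $L^\infty$, with limit in $L^\infty(\mathcal G)$ (which is closed under a.s.\ limits of bounded sequences), so the Fatou inequality for $\rho$ transfers to $\tilde\rho$.

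Second, for the inclusion $\mathcal S_\rho^{\mathcal G}\subseteq \mathcal S_{\tilde\rho}$, I would take $\mu\in\mathcal S_\rho$ and note that for every $X\in L^\infty(\mathcal G)$ the random variable $D_\mu X$ satisfies $\E[D_\mu X]=\E[\E[D_\mu\mid\mathcal G]X]$, so $\E^{\mu|_\mathcal G}[X]=\E^\mu[X]\le \rho(X)=\tilde\rho(X)$. Hence $\mu|_{\mathcal G}\in\mathcal S_{\tilde\rho}$.

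Third, for density, I would argue by contradiction using the dual pairing $(L^\infty(\mathcal G),\mathcal M(\mathcal G))$. The set $\mathcal S_\rho^{\mathcal G}$ is convex because $\mathcal S_\rho$ is convex (Proposition \ref{prop:coherent}) and restriction $\mu\mapsto\mu|_{\mathcal G}$ is a linear operation. Suppose $\mathcal S_\rho^{\mathcal G}$ is not $w(\mathcal M(\mathcal G),L^\infty(\mathcal G))$-dense in $\mathcal S_{\tilde\rho}$; then there exists $\nu\in\mathcal S_{\tilde\rho}$ outside the weak closure of $\mathcal S_\rho^{\mathcal G}$. By the Hahn--Banach separation theorem in the locally convex space $(\mathcal M(\mathcal G),w(\mathcal M(\mathcal G),L^\infty(\mathcal G)))$, there is $X\in L^\infty(\mathcal G)$ with
\begin{equation*}
\E^\nu[X]>\sup_{\mu\in\mathcal S_\rho^{\mathcal G}}\E^\mu[X].
\end{equation*}
For $X\in L^\infty(\mathcal G)$, however, $\E^{\mu|_{\mathcal G}}[X]=\E^\mu[X]$ for each $\mu\in\mathcal S_\rho$, so the right-hand side equals $\sup_{\mu\in\mathcal S_\rho}\E^\mu[X]=\rho(X)=\tilde\rho(X)$ by Proposition \ref{prop:coherent}. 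This yields $\E^\nu[X]>\tilde\rho(X)$, contradicting $\nu\in\mathcal S_{\tilde\rho}$.

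The main subtlety I anticipate is ensuring the Hahn--Banach separation is applied in the correct dual pairing so that the separating functional comes from $L^\infty(\mathcal G)$ rather than some larger space; this is secured by the fact that the topological dual of $(\mathcal M(\mathcal G),w(\mathcal M(\mathcal G),L^\infty(\mathcal G)))$ is exactly $L^\infty(\mathcal G)$, which is the whole point of working with this dual pairing. The remaining steps are essentially bookkeeping around conditional expectations.
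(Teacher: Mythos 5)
Your proposal is correct and follows essentially the same route as the paper: the key step in both is the tower-property identity $\tilde\rho(X)=\sup_{\mu\in\mathcal S_\rho}\E[\E[D_\mu\mid\mathcal G]X]=\sup_{\mu\in\mathcal S_\rho^{\mathcal G}}\E^{\mu}[X]$ for $X\in L^\infty(\mathcal G)$, combined with convexity of $\mathcal S_\rho^{\mathcal G}$. The only difference is that you unroll the Hahn--Banach separation argument explicitly, whereas the paper obtains the density (and the inclusion) by citing the last assertion of Proposition \ref{prop:coherent} applied in the pairing $(\mathcal M(\mathcal G),L^\infty(\mathcal G))$.
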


\begin{proof}{Proof:}
    The fact that $\tilde{\rho}$ is a Fatou coherent risk measure is clear. Since $\mathcal{S}_\rho^\mathcal{G}$ is convex and for $X\in L^\infty(\mathcal{G})$,
    \begin{align*}
    \tilde{\rho}(X)=\rho(X)=\sup_{\mu\in\mathcal{S}_\rho}\mathbb{E}^{\mu}[X]=\sup_{\mu\in\mathcal{S}_\rho}\mathbb{E}[\mathbb{E}[D_\mu|\mathcal{G}]X]=\sup_{\mu\in \mathcal{S}_\rho^{\mathcal{G}}}\mathbb{E}^{\mu}[ X],
\end{align*}
we have that $\mathcal{S}_\rho^\mathcal{G}$ is $w(\mathcal{M}(\mathcal{G}),L^\infty(\mathcal{G}))$-dense in $\mathcal{S}_{\tilde{\rho}}$ by Proposition \ref{prop:coherent}.
\Halmos \end{proof}

It is natural to wonder when $\mathcal{S}_\rho^\mathcal{G}=\mathcal{S}_{\tilde{\rho}}$ in Proposition \ref{prop:good}. Proposition \ref{prop:curious} offers sufficient conditions for this equality. 

\begin{lemma}\label{lemma:re}
    The mapping $R:\mathcal{M}\to\mathcal{M}(\mathcal{G}):\mu\mapsto \mu|_\mathcal{G}$ is $w(\mathcal{M},L^\infty)/w(\mathcal{M}(\mathcal{G}),L^\infty(\mathcal{G}))$-continuous.
\end{lemma}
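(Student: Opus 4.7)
The plan is to invoke the universal property of weak topologies: a map into a space equipped with the weak topology $w(\mathcal{Y},\mathcal{Z})$ is continuous if and only if each pairing $y \mapsto \langle y,z\rangle$ for $z \in \mathcal{Z}$ pulls back to a continuous function on the domain. Applying this characterization, $R$ is $w(\mathcal{M},L^\infty)/w(\mathcal{M}(\mathcal{G}),L^\infty(\mathcal{G}))$-continuous if and only if, for every $X \in L^\infty(\mathcal{G})$, the map $\mu \mapsto \langle R(\mu),X\rangle = \mathbb{E}^{\mu|_\mathcal{G}}[X]$ is continuous from $(\mathcal{M},w(\mathcal{M},L^\infty))$ to $\mathbb{R}$.

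The key step is to observe that this pullback is trivial: for $X \in L^\infty(\mathcal{G})$, since $X$ is $\mathcal{G}$-measurable and bounded (hence also an element of $L^\infty$), the tower property gives
\[
\mathbb{E}^{\mu|_\mathcal{G}}[X] = \mathbb{E}\!\left[\mathbb{E}[D_\mu \mid \mathcal{G}]\, X\right] = \mathbb{E}\!\left[D_\mu\, X\right] = \mathbb{E}^\mu[X] = \langle \mu, X\rangle.
\]
Thus the pullback of the pairing $\langle\,\cdot\,,X\rangle$ under $R$ coincides with the pairing $\langle\,\cdot\,,X\rangle$ on $\mathcal{M}$ itself, which is continuous on $(\mathcal{M},w(\mathcal{M},L^\infty))$ by definition of the weak topology.

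Since this holds for every $X \in L^\infty(\mathcal{G})$, the continuity of $R$ follows. There is no real obstacle here: once the tower-property identification is in place, the conclusion is essentially automatic from the definition of the two weak topologies.
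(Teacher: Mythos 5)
Your proof is correct and rests on the same key identity as the paper's, namely that $\mathbb{E}^{\mu|_\mathcal{G}}[X]=\mathbb{E}^{\mu}[X]$ for $X\in L^\infty(\mathcal{G})$; the paper simply verifies continuity via convergent nets rather than citing the universal property of the initial (weak) topology, which is an equivalent formulation of the same argument.
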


\begin{proof}{Proof:}
    Let $\mu_\lambda,\mu\in \mathcal{M}$ with $\lim_\lambda\mu_\lambda=\mu$ in $w(\mathcal{M},L^\infty)$. Fix some $X\in L^\infty(\mathcal{G})$. We have
    \begin{align*}
        \mathbb{E}^{R(\mu_\lambda)}[X]=\mathbb{E}^{\mu_\lambda}[X]\to\mathbb{E}^\mu[X]=\mathbb{E}^{R(\mu)}[X].
    \end{align*}
    Therefore $R(\mu_\lambda)\to R(\mu)$ in $w(\mathcal{M}(\mathcal{G}),L^\infty(\mathcal{G}))$.
\Halmos \end{proof}

\begin{proposition}
\label{prop:curious}
    Let $\rho:L^\infty\to \mathbb{R}$ be a Fatou coherent risk measure and $\tilde{\rho}=\rho|_{L^\infty(\mathcal{G})}$. The following are sufficient conditions for $\mathcal{S}_\rho^\mathcal{G}=\mathcal{S}_{\tilde{\rho}}$. 
    \begin{enumerate}[(i) ]
        \item The risk measure $\rho$ satisfies $\rho(\mathbb{E}[X|\mathcal{G}])\leq \rho(X)$ for all $X\in L^\infty$ (equivalently, $\widehat{\mathcal{S}}^\mathcal{G}_\rho\subseteq \widehat{\mathcal{S}}_\rho$).
        \item The risk measure $\rho$ is Lebesgue continuous.
    \end{enumerate}
\end{proposition}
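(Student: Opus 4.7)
Since Proposition \ref{prop:good} already delivers $\mathcal{S}_\rho^\mathcal{G}\subseteq \mathcal{S}_{\tilde\rho}$ together with $w(\mathcal{M}(\mathcal G),L^\infty(\mathcal G))$-density of $\mathcal{S}_\rho^{\mathcal G}$ in $\mathcal{S}_{\tilde\rho}$, under either hypothesis it will suffice to prove the reverse inclusion $\mathcal{S}_{\tilde\rho}\subseteq \mathcal{S}_\rho^{\mathcal G}$. The two parts then proceed along quite different lines.

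For part (i), I would first confirm the parenthetical equivalence. Given $\mu\in\mathcal S_\rho$, let $\mu^{\mathcal G}$ denote the measure on $\mathcal F$ with density $\mathbb E[D_\mu|\mathcal G]$; then $\mathbb E^{\mu^{\mathcal G}}[X]=\mathbb E^\mu[\mathbb E[X|\mathcal G]]$ for all $X\in L^\infty$, and combining this identity with the dual representation $\rho(Y)=\sup_{\mu\in\mathcal S_\rho}\mathbb E^\mu[Y]$ immediately shows that $\rho(\mathbb E[X|\mathcal G])\le \rho(X)$ for all $X\in L^\infty$ is equivalent to $\mu^{\mathcal G}\in\mathcal S_\rho$ for every $\mu\in\mathcal S_\rho$, the latter being precisely $\widehat{\mathcal S}_\rho^{\mathcal G}\subseteq \widehat{\mathcal S}_\rho$. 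For the target inclusion $\mathcal S_{\tilde\rho}\subseteq \mathcal S_\rho^{\mathcal G}$, I would take $\nu\in\mathcal S_{\tilde\rho}\subseteq \mathcal M_1(\mathcal G)$ and lift it trivially to a measure $\bar\nu\in\mathcal M_1$ with the same density $D_\nu\in L^1(\mathcal G)\subseteq L^1$. The $\mathcal G$-measurability of $D_\nu$ and the tower property yield
$$\mathbb E^{\bar\nu}[X]=\mathbb E[D_\nu X]=\mathbb E[D_\nu\,\mathbb E[X|\mathcal G]]=\mathbb E^\nu[\mathbb E[X|\mathcal G]]\le \tilde\rho(\mathbb E[X|\mathcal G])=\rho(\mathbb E[X|\mathcal G])\le \rho(X),$$
for every $X\in L^\infty$, where the last step uses hypothesis (i). Hence $\bar\nu\in\mathcal S_\rho$ and clearly $\bar\nu|_{\mathcal G}=\nu$, so $\nu\in\mathcal S_\rho^{\mathcal G}$.

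For part (ii), the plan is topological. I would invoke the standard duality fact that, for a Fatou coherent risk measure, Lebesgue continuity is equivalent to uniform integrability of the density set $\widehat{\mathcal S}_\rho\subseteq L^1$; by the Dunford--Pettis theorem this makes $\widehat{\mathcal S}_\rho$ relatively $\sigma(L^1,L^\infty)$-compact, and combined with its $w(\mathcal M,L^\infty)$-closedness from Proposition \ref{prop:coherent} this gives $w(\mathcal M,L^\infty)$-compactness of $\mathcal S_\rho$ itself. Lemma \ref{lemma:re} says that the restriction map $R:\mathcal M\to\mathcal M(\mathcal G)$, $\mu\mapsto \mu|_{\mathcal G}$, is continuous between the two weak topologies, so $\mathcal S_\rho^{\mathcal G}=R(\mathcal S_\rho)$ is $w(\mathcal M(\mathcal G),L^\infty(\mathcal G))$-compact and in particular closed; the density assertion of Proposition \ref{prop:good} then upgrades to the equality $\mathcal S_\rho^{\mathcal G}=\mathcal S_{\tilde\rho}$. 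The main obstacle in the whole proof is precisely this Lebesgue-continuity / uniform-integrability link used in (ii), which I would cite from the literature rather than reprove; everything else is bookkeeping.
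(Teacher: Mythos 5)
Your proposal is correct and follows essentially the same route as the paper: in (i) you lift each $\nu\in\mathcal{S}_{\tilde\rho}$ trivially to $\mathcal{M}_1$ and run the same chain $\mathbb{E}^{\nu}[X]=\mathbb{E}[D_\nu\,\mathbb{E}[X|\mathcal{G}]]\le\rho(\mathbb{E}[X|\mathcal{G}])\le\rho(X)$, and in (ii) you combine $w(\mathcal{M},L^\infty)$-compactness of $\mathcal{S}_\rho$ with the continuity of the restriction map from Lemma \ref{lemma:re} and the density statement of Proposition \ref{prop:good}. The only differences are cosmetic: you additionally verify the parenthetical equivalence in (i), and you derive the compactness in (ii) from uniform integrability and Dunford--Pettis where the paper simply cites the standard reference.
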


\begin{proof}{Proof:}
    $(\mathrm{i})$ For $\mu\in \mathcal{S}_{\tilde{\rho}}$, we have 
    \begin{align*}
        \mathbb{E}^{\mu}[X]=\mathbb{E}[D_\mu\mathbb{E}[X|\mathcal{G}]]\leq \rho(\mathbb{E}[X|\mathcal{G}])\leq \rho(X),
    \end{align*}
    for all $X\in L^\infty$. Therefore,  the extension of $\mu$ to $\mathcal F$, 
    i.e., the measure $\hat{\mu}:\mathcal{F}\to[0,1]$ given by $     
    A\mapsto \E^\mu[\id_A]$, is in $\mathcal{S}_\rho$. Since $\hat{\mu}|_\mathcal{G}=\mu$, we have $\mathcal{S}_{\tilde{\rho}}\subseteq \mathcal{S}_\rho^\mathcal{G}$. 
    $(\mathrm{ii})$ The restriction $\mu\mapsto \mu|_\mathcal{G}$ is $w(\mathcal{M},L^\infty)/w(\mathcal{M}(\mathcal{G}),L^\infty(\mathcal{G}))$-continuous by Lemma \ref{lemma:re} and $\mathcal{S}_\rho$ is $w(\mathcal{M},L^\infty)$-compact (see \citet[Corollary 4.38]{FSII16}), and hence we have that $\mathcal{S}^\mathcal{G}_{\rho}$ is $w(\mathcal{M}(\mathcal{G}),L^\infty(\mathcal{G}))$-compact. Therefore, $\mathcal{S}_\rho^\mathcal{G}$ is closed since $w(L^1,L^\infty)$ is Hausdorff, which implies $\mathcal{S}_\rho^\mathcal{G}=\mathcal{S}_{\tilde{\rho}}$.
\Halmos \end{proof}

\begin{remark}
\label{rem:open}
Although Proposition \ref{prop:curious} only provides sufficient conditions for $\mathcal{S}_\rho^\mathcal{G} = \mathcal{S}_{\tilde{\rho}}$, 
we did not find any counter-example in which  $\mathcal{S}_\rho^\mathcal{G}\ne \mathcal{S}_{\tilde{\rho}}$, 
which seems to be a challenging question. 
This relates to whether we can omit the operation of taking a closure in Theorem \ref{th:friday}. 
If $\mathcal{S}_\rho^\mathcal{G} = \mathcal{S}_{\tilde{\rho}}$  always holds, then that operation can be omitted.
\end{remark}

We are ready to prove Theorem \ref{th:friday}.

\begin{proof}{Proof of Theorem \ref{th:friday}:}
     $(\mathrm{i})\Rightarrow (\mathrm{iii})$: Since $\tilde{\rho}=\rho|_{L^\infty(\mathcal{G})}$ is law-invariant,  we have that $\mathcal{S}_{\tilde{\rho}}$ is $\mathcal{G}$-law invariant. By Proposition \ref{prop:good}, $\mathcal{S}^\mathcal{G}_\rho$ is almost $\mathcal{G}$-law invariant. $(\mathrm{iii})\Rightarrow (\mathrm{ii})$: By Proposition \ref{prop:coherent}, representation \eqref{eq:friday} holds with $\mathcal{S}_\rho$. $(\mathrm{ii})\Rightarrow (\mathrm{i})$: It is clear that $\rho$ is a Fatou coherent risk measure. It is also clear that $\mathcal{S}^\mathcal{G}$ is convex and supports $\tilde{\rho}=\rho|_{L^\infty(\mathcal{G})}$. 
    Since $\mathcal{S}^\mathcal{\mathcal{G}}$ is almost $\mathcal{G}$-law invariant, $\tilde{\rho}$ is law invariant. This implies that $\rho$ is $\mathcal{G}$-law invariant.
\Halmos \end{proof}

\section{An additional example accompanying Section \ref{sec:mults}}
\label{app:mult}

\begin{example}
    Let $\mathfrak{G}\subseteq \Sigma(\mathcal{F})$ be a collection of pair-wise independent $\sigma$-algebras. For each $\mathcal{G}\in \mathfrak{G}$, let $\rho_\mathcal{G}:L^\infty(\mathcal{G})\to\mathbb{R}$ be a Fatou law-invariant coherent risk measure. Define the risk measure
    $$\rho(X)=\sup_{\mathcal{G}\in \mathfrak{G}}\rho_\mathcal{G}(\mathbb{E}[X|\mathcal{G}]),~~~X\in L^\infty.$$
    Using a similar argument as Example $\ref{example:multcon}$ in Section \ref{sec:32}, one can show that $\rho$ is a Fatou $\mathfrak{G}$-law-invariant coherent risk measure. To see how this example relates to Theorem \ref{th:multi}, for each $\mathcal{G}\in \mathfrak{G}$, let $\iota_\mathcal{G}:\mathcal{M}_1(\mathcal{G})\to\mathcal{M}_1$ be the embedding given by $[\iota_\mathcal{G}\mu](A)=\mathbb{E}^{\mu}[\id_A]$, $A\in \mathcal{F}$. Also, denote the supporting set of $\rho_\mathcal{G}$ by $\mathcal{S}_\mathcal{G}$. Define $\mathcal{S}=\conv\left(\bigcup_{\mathcal{G}\in\mathfrak{G}}\iota_\mathcal{G}(\mathcal{S}_\mathcal{G})\right).$ By Proposition \ref{prop:example1}, $\sup_{\mu\in \iota_{\mathcal{G}}(\mathcal{S}_\mathcal{G})}\mathbb{E}^\mu[X]=\rho_{\mathcal{G}}(\mathbb{E}[X|\mathcal{G}])$ for all $X\in L^\infty$. Thus, $\rho(X)=\sup_{\mu\in \mathcal{S}}\mathbb{E}^\mu[X]$, $X\in L^\infty.$
    Since $\mathfrak{G}$ is independent, for all $\mathcal{G}\in \mathfrak{G}$, we have $\mathcal{S}^\mathcal{G}=\mathcal{S}_\mathcal{G}$, which is $w(\mathcal{M}(\mathcal{G}),L^\infty(\mathcal{G}))$-closed and $\mathcal{G}$-law invariant. Therefore, the set $\mathcal{S}$ is a valid candidate for the set found in Theorem \ref{th:multi}.

\end{example}

\section{Additional results and proofs accompanying Section \ref{sec:4}}
\label{app:3}

\subsection{Additional results and proofs accompanying Section \ref{sec:coh-cha}}
\label{app:42}

\begin{proof}{Proof of Proposition \ref{prop:motivation}:}
    (i)$\Rightarrow$(ii): For $X\in L^\infty$ define $\tau(X)=\rho(X)-\mathbb{E}\left[D_\mu X\right]$. The fact that $\tau$ is a $\mu$-coherent adjustment is easily verified. To check that $\tau$ is Fatou, let $(X_n)_{n=1}^\infty$ be a uniformly bounded sequence in $L^\infty$ which converges $\mathbb{P}$-a.s.~to some $X\in L^\infty$. By the Dominated Convergence Theorem (DCT) we have $\lim_{n\to\infty}\mathbb{E}\left[D_\mu X_n\right]=\mathbb{E}\left[D_\mu X\right]$. Therefore,
    \begin{align*}
        \liminf_{n\to\infty}\tau(X_n)= \liminf_{n\to\infty}(\rho(X_n)-\mathbb{E}\left[D_\mu X_n\right])&\geq \liminf_{n\to\infty}\rho(X_n)-\lim_{n\to\infty}\mathbb{E}\left[D_\mu X_n\right]\\&\geq \rho(X)-\mathbb{E}\left[D_\mu X\right]=\tau(X).
    \end{align*}
    The claim follows since $\rho(X)=\mathbb{E}\left[D_\mu X\right]+\tau(X)$ for all $X\in L^\infty$.
    
    (ii)$\Rightarrow $(i): The facts that $\rho$ is a coherent risk measure and that its restriction is $\E^\mu$ are easily verified. Fatou continuity follows an argument similar to the one given above.
\Halmos \end{proof}

We provide some facts regarding the representation of $\mu$-coherent adjustments.

\begin{proposition}
    \label{prop:FatouCont}
    Let $\tau:L^\infty\to\mathbb{R}$ and $\mu\in \mathcal{M}_1(\mathcal{G})$. The following are equivalent. 
    \begin{enumerate}[(i) ]
        \item The mapping $\tau\in \mathrm{CA}(\mu)$;
        \item there exists a $w(\mathcal{M},L^\infty)$-closed and convex set $\mathcal{S}_\tau\subseteq \mathcal{E}(\mu)$  such that
      \begin{align}\label{eq:Fatou1}
          \tau(X)=\sup_{\nu\in \mathcal{S}_\tau}\left(\mathbb{E}^\nu[ X]-\mathbb{E}^\mu[X]\right), ~~~X\in L^\infty;
      \end{align} 
    \item there exists $\mathcal{R}\subseteq\mathcal{E}(\mu)$
     such that $\tau(X)=\sup_{\nu\in \mathcal{R}}\left(\mathbb{E}^\nu[ X]-\mathbb{E}^\mu[X]\right)$, $X\in L^\infty$.
    \end{enumerate}
In case (ii) holds, the closed and convex set $\mathcal{S}_\tau$ is uniquely given by $$\left\{\nu\in \mathcal{M}_1: \mathbb{E}^\nu[X]-\mathbb{E}^\mu[X]\leq \tau(X) \text{ for all }X\in L^\infty\right\}.$$  
\end{proposition}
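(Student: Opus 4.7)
The plan is to reduce everything to the standard representation of Fatou coherent risk measures (Proposition~\ref{prop:coherent}) by using Proposition~\ref{prop:motivation}, which identifies $\mu$-coherent adjustments with Fatou coherent risk measures that agree with $\E^\mu$ on $L^\infty(\mathcal{G})$. So in the direction (i)$\Rightarrow$(ii), I would set $\rho = \E^\mu + \tau$; by Proposition~\ref{prop:motivation} $\rho$ is a Fatou coherent risk measure with $\rho|_{L^\infty(\mathcal{G})} = \E^\mu|_{L^\infty(\mathcal{G})}$. Applying Proposition~\ref{prop:coherent} yields a unique $w(\mathcal{M},L^\infty)$-closed convex set $\mathcal{S}_\rho \subseteq \mathcal{M}_1$ with $\rho(X) = \sup_{\nu\in\mathcal{S}_\rho}\E^\nu[X]$, and subtracting $\E^\mu[X]$ on both sides produces the desired form of \eqref{eq:Fatou1} with $\mathcal{S}_\tau := \mathcal{S}_\rho$.

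The key step that requires actual work in this direction is verifying $\mathcal{S}_\rho \subseteq \mathcal{E}(\mu)$. For $\nu \in \mathcal{S}_\rho$ and any $X \in L^\infty(\mathcal{G})$, I would use $\E^\nu[X] \le \rho(X) = \E^\mu[X]$ together with the same inequality applied to $-X$ to deduce $\E^\nu[X] = \E^\mu[X]$, hence $\nu|_\mathcal{G} = \mu$. This is the only place in (i)$\Rightarrow$(ii) where the specific structure of $\mu$-coherent adjustments (as opposed to general Fatou coherent risk measures) enters.

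The implication (ii)$\Rightarrow$(iii) is immediate. For (iii)$\Rightarrow$(i), I would define $\rho := \E^\mu + \tau = \sup_{\nu \in \mathcal{R}} \E^\nu$ and verify via Proposition~\ref{prop:motivation} that $\tau \in \mathrm{CA}(\mu)$. Each $\E^\nu$ is a Fatou coherent risk measure (Fatou continuity follows from the dominated convergence theorem using that $D_\nu \in L^1$), the pointwise supremum of Fatou coherent risk measures that is finite-valued is again Fatou coherent (the Fatou inequality passes to the supremum, and coherence axioms are stable under suprema), and $\rho|_{L^\infty(\mathcal{G})} = \E^\mu|_{L^\infty(\mathcal{G})}$ because every $\nu \in \mathcal{R} \subseteq \mathcal{E}(\mu)$ agrees with $\mu$ on $\mathcal{G}$.

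Finally, the uniqueness claim for $\mathcal{S}_\tau$ in (ii) would follow by transporting the uniqueness part of Proposition~\ref{prop:coherent} through the bijection $\tau \mapsto \rho = \E^\mu+\tau$: the unique closed convex representing set is $\mathcal{S}_\rho = \{\nu \in \mathcal{M}_1 : \E^\nu[X] \le \rho(X) \text{ for all } X \in L^\infty\}$, and rewriting in terms of $\tau$ gives exactly the set displayed in the statement (which automatically sits inside $\mathcal{E}(\mu)$ by the argument of the second paragraph). I do not anticipate a serious obstacle here; the only subtle check is the equality $\nu|_\mathcal{G} = \mu$ for supporting measures, and that is a one-line two-sided inequality argument.
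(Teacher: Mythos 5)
Your proposal is correct and follows essentially the same route as the paper: both directions reduce to the dual representation of the associated Fatou coherent risk measure $\rho=\mathbb{E}^\mu+\tau$ via Proposition~\ref{prop:motivation} and Proposition~\ref{prop:coherent}. The only cosmetic differences are that you verify $\mathcal{S}_\rho\subseteq\mathcal{E}(\mu)$ by a direct two-sided inequality on $L^\infty(\mathcal{G})$ where the paper invokes Proposition~\ref{prop:good}, and in (iii)$\Rightarrow$(i) you pass Fatou continuity through the supremum of coherent risk measures rather than writing out the liminf interchange for $\tau$ itself; both are sound.
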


\begin{proof}{Proof:}
    $(\mathrm{i})\Rightarrow (\mathrm{ii})$: Assume   $\tau\in \mathrm{CA}(\mu)$. By Proposition \ref{prop:motivation}, the mapping $\rho=\mathbb{E}^\mu+\tau$
    is a Fatou coherent risk measure.  We have
    \begin{align*}
        \mathbb{E}^{\mu}\left[ X\right]+\tau(X)=\sup_{\nu\in \mathcal{S}_\rho}\mathbb{E}^{\nu}\left[ X\right],\hspace{3pt}X\in L^\infty.
    \end{align*} By Proposition \ref{prop:good}, $\mathcal{S}^\mathcal{G}_\rho$ must be contained in the supporting set of $\rho|_{L^\infty(\mathcal{G})}=\mathbb{E}^\mu|_{L^\infty(\mathcal{G})}$, and therefore $\mathcal{S}^\mathcal{G}_\rho=\{\mu\}$ (meaning $\mathcal{S}_\rho\subseteq \mathcal{E}(\mu)$). Note  that
    \begin{align*}
        \mathbb{E}^{\mu}[X]+\tau(X)=\mathbb{E}^{\mu}[X]+\sup_{\nu\in \mathcal{S}_\rho}\left(\mathbb{E}^{\nu}[X]-\mathbb{E}^{\mu}[X]\right),~~~X\in L^\infty,
    \end{align*}
    implying $\tau(X)=\sup_{\nu\in \mathcal{S}_\rho}\left(\mathbb{E}^{\nu}[X]-\mathbb{E}^{\mu}[X]\right)$, $X\in L^\infty$. Therefore \eqref{eq:Fatou1} holds with $\mathcal{S}_\tau=\mathcal{S}_\rho$. Furthermore, 
    \begin{align*}
        \mathcal{S}_{\tau}=\mathcal{S}_\rho&=\left\{\nu\in\mathcal{M}_1\mid \mathbb{E}^{\nu}\left[ X\right]\leq \rho(X)\text{ for all }X\in L^\infty\right\}\\&=\left\{\nu\in\mathcal{M}_1\mid \mathbb{E}^{\nu}[X]-\mathbb{E}^{\mu}[X]\leq \tau(X)\text{ for all }X\in L^\infty\right\}.
    \end{align*}
    The uniqueness follows directly from Proposition \ref{prop:coherent}.
    
    $(\mathrm{ii})\Rightarrow (\mathrm{iii})$: Obvious. $(\mathrm{iii})\Rightarrow (\mathrm{i})$: Let $\mathcal{R}\subseteq \mathcal{M}_1$ have the stated properties. Proving $\tau$
    is a $\mu$-coherent adjustment is straightforward. Let $(X_n)_{n=1}^\infty$ be a bounded sequence that converges $\mathbb{P}$-a.s.~to $X$. By the DCT, $\lim_{n\to\infty}\mathbb{E}\left[\left(D_\nu-D_\mu\right) X_n\right]=\mathbb{E}\left[\left(D_\nu-D_\mu\right) X\right]$. Therefore,
    \begin{align*}
        \tau(X)=\sup_{\nu\in \mathcal{R}}\mathbb{E}\left[\left(D_\nu-D_\mu\right) X\right]=\sup_{\nu\in \mathcal{R}}\liminf\limits_{n\to\infty}\mathbb{E}\left[\left(D_\nu-D_\mu\right) X_n\right]&\leq \liminf\limits_{n\to\infty}\sup_{\nu\in \mathcal{R}}\mathbb{E}\left[\left(D_\nu-D_\mu\right) X_n\right]\\&=\liminf\limits_{n\to\infty}\tau(X_n),
    \end{align*}
    proving the Fatou continuity of $\tau$.
\Halmos \end{proof}

\begin{corollary}
    \label{corr2}
    Let $\mu\in\mathcal{M}_1(\mathcal{G})$ and $\tau:L^\infty\to\mathbb{R}$ be a $\mu$-coherent adjustment. The following are equivalent.
    \begin{enumerate}[(i) ]
        \item The mapping $\tau\in \mathrm{CA}(\mu)$;
        \item the mapping $\tau$ is $w(L^\infty,\mathcal{M})$-lower semicontinuous;
        \item the mapping $\mathbb{E}^\mu+\tau$ is Fatou;
        \item the mapping $\mathbb{E}^\mu+\tau$ is $w(L^\infty,\mathcal{M})$-lower semicontinuous.
    \end{enumerate} 
\end{corollary}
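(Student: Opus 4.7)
The plan is to prove the cycle $(\mathrm{i})\Leftrightarrow(\mathrm{iii})\Leftrightarrow(\mathrm{iv})\Leftrightarrow(\mathrm{ii})$, each step being a short consequence of material already in the paper.

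For $(\mathrm{i})\Leftrightarrow(\mathrm{iii})$, I would simply quote Proposition \ref{prop:motivation}. Since $\tau$ is a $\mu$-coherent adjustment by assumption, the mapping $\rho=\mathbb{E}^\mu+\tau$ satisfies properties (ii) of that proposition (without the Fatou clause); the only remaining equivalence is the Fatou continuity of $\tau$ versus the Fatou continuity of $\rho$. That equivalence is immediate because $X\mapsto\mathbb{E}^\mu[X]=\mathbb{E}[D_\mu X]$ is Lebesgue (and in particular Fatou) continuous by the dominated convergence theorem applied to a uniformly bounded sequence $(X_n)\to X$ a.s.; thus $\liminf_n\rho(X_n)\ge\rho(X)$ iff $\liminf_n\tau(X_n)\ge\tau(X)$.

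For $(\mathrm{iii})\Leftrightarrow(\mathrm{iv})$, I would invoke the standard fact (recalled at the start of Appendix \ref{app:reviewRisk}, with a reference to \citet[Theorem 4.33]{FSII16}) that a coherent risk measure on $L^\infty$ is Fatou if and only if it is $w(L^\infty,\mathcal{M})$-lower semicontinuous. This is applicable because $\mathbb{E}^\mu+\tau$ is coherent (again via Proposition \ref{prop:motivation}, which does not require Fatou continuity of $\tau$ for the coherence of $\mathbb{E}^\mu+\tau$).

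For $(\mathrm{ii})\Leftrightarrow(\mathrm{iv})$, the linear functional $\mathbb{E}^\mu:L^\infty\to\mathbb{R}$ is, by the very definition of the weak topology $w(L^\infty,\mathcal{M})$, continuous on $L^\infty$; therefore adding or subtracting it preserves $w(L^\infty,\mathcal{M})$-lower semicontinuity of any functional. Hence $\tau$ is $w(L^\infty,\mathcal{M})$-lsc iff $\mathbb{E}^\mu+\tau$ is.

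None of the steps should be a real obstacle, since they are all assembled from results already in place; the only point requiring minor care is to make sure that Proposition \ref{prop:motivation} is applied in the direction that does not itself presuppose Fatou continuity of $\tau$, which is why I keep $(\mathrm{i})\Leftrightarrow(\mathrm{iii})$ as the one step that actually uses the definition of $\mathrm{CA}(\mu)$ and treat the remaining equivalences purely topologically.
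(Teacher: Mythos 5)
Your proof is correct. It differs from the paper's in a way worth noting: the paper proves the one-way cycle $(\mathrm{i})\Rightarrow(\mathrm{ii})\Rightarrow(\mathrm{iv})\Rightarrow(\mathrm{iii})\Rightarrow(\mathrm{i})$ and handles both steps involving (i) through the dual representation of Proposition \ref{prop:FatouCont} — the implication $(\mathrm{i})\Rightarrow(\mathrm{ii})$ by writing $\tau$ as a supremum of $w(L^\infty,\mathcal{M})$-continuous functionals $\mathbb{E}^\nu-\mathbb{E}^\mu$, and $(\mathrm{iii})\Rightarrow(\mathrm{i})$ by re-deriving that representation for $\mathbb{E}^\mu+\tau$. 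You instead link (i) to (iii) directly by the dominated convergence theorem applied to $\mathbb{E}^\mu$ along uniformly bounded a.s.-convergent sequences, so that $\liminf_n\tau(X_n)\ge\tau(X)$ holds iff $\liminf_n\rho(X_n)\ge\rho(X)$ for $\rho=\mathbb{E}^\mu+\tau$; this bypasses Proposition \ref{prop:FatouCont} entirely and is the same computation that appears inside the proof of Proposition \ref{prop:motivation}. The remaining two equivalences — $(\mathrm{iii})\Leftrightarrow(\mathrm{iv})$ via the standard Fatou/lower-semicontinuity equivalence for coherent risk measures, and $(\mathrm{ii})\Leftrightarrow(\mathrm{iv})$ via $w(L^\infty,\mathcal{M})$-continuity of $\mathbb{E}^\mu$ — coincide with the paper's steps (you just record both directions where the paper records one). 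Your care in observing that coherence of $\mathbb{E}^\mu+\tau$ and the identity $\rho|_{L^\infty(\mathcal{G})}=\mathbb{E}^\mu|_{L^\infty(\mathcal{G})}$ follow from the $\mu$-coherent-adjustment axioms alone, without presupposing Fatou continuity, is exactly the point that makes the argument non-circular. The net effect is a slightly more elementary and self-contained proof; what it gives up is that the paper's route simultaneously exhibits the supporting set $\mathcal{S}_\tau$ of $\tau$, which is reused elsewhere (e.g., in Remark \ref{re:th:1-2} and Theorem \ref{theorem:1}).
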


\begin{proof}{Proof:}
    $(\mathrm{i})\Rightarrow (\mathrm{ii})$: This follows from the representation given in Proposition \ref{prop:FatouCont} and by the fact that the supremum of continuous functions is lower semicontinuous. $(\mathrm{ii})\Rightarrow (\mathrm{iv})$: The sum of lower semicontinuous functions is lower semicontinuous. $(\mathrm{iv})\Rightarrow (\mathrm{iii})$: This follows since $\mathbb{E}^\mu+\tau$ is a coherent risk measure that is lower $w(L^\infty,\mathcal{M})$-lower semicontinuous. $(\mathrm{iii})\Rightarrow(\mathrm{i})$: Since $\mathbb{E}^\mu+\tau$ is Fatou, we can use a similar argument as Proposition \ref{prop:FatouCont} to show that $\tau$ has the representation \eqref{eq:Fatou1}. Thus by Proposition \ref{prop:FatouCont}, the mapping $\tau$ is Fatou.
\Halmos \end{proof}

\begin{proof}{Proof of Theorem \ref{theorem:1}:}
    Let $\rho$ be a Fatou coherent risk measure. For $\mu\in\mathcal{S}_\rho^\mathcal{G}$, we define $\mathcal{S}(\mu)=\mathcal{E}(\mu)\cap \mathcal{S}_\rho$, note that $\mathcal{S}_\rho=\bigcup_{\mu\in \mathcal{S}_\rho^\mathcal{G}}\mathcal{S}(\mu)$. We have
    \begin{align*}
        \rho(X)=\sup_{\nu\in\mathcal{S}_\rho}\mathbb{E}\left[D_\mu X\right]&=\sup_{\nu\in\mathcal{S}_\rho}\left(\mathbb{E}\left[\mathbb{E}[D_\nu|\mathcal{G}]X\right]+\mathbb{E}[(D_\nu-\mathbb{E}[D_\nu|\mathcal{G}])X]\right)\\&=\sup_{\mu\in\mathcal{S}_\rho^\mathcal{G}}\sup_{\nu\in \mathcal{S}(\mu)}\left(\mathbb{E}\left[D_\mu X\right]+\mathbb{E}\left[(D_\nu-D_\mu)X\right]\right)\\&=\sup_{\mu\in\mathcal{S}_\rho^\mathcal{G}}\left(\mathbb{E}^\mu[X]+\tau_\mu(X)\right),~~~X\in L^\infty,
    \end{align*}
    where $\tau_\mu(X)=\sup_{\nu\in \mathcal{S}(\mu)}(\mathbb{E}^\nu[X]-\mathbb{E}^\mu[X])$, $X\in L^\infty$. The fact that $\tau_\mu\in \mathrm{CA}(\mu)$ follows from Proposition \ref{prop:FatouCont}. It is straightforward to see that $\mathcal{S}_\rho^\mathcal{G}$ is convex. For the converse, note that ($\ref{eq:32}$) is the supremum of coherent risk measures (by Proposition \ref{prop:motivation}; see \eqref{eq:31}) and thus a coherent risk measure. By Corollary \ref{corr2}, $\mathbb{E}^\mu+\tau_\mu$ is $w(L^\infty,\mathcal{M})$-lower semicontinuous for all $\mu\in \mathcal{R}$. As the supremum of $w(L^\infty,\mathcal{M})$-lower semicontinuous and coherent risk measures,   $\rho$ is also one, and hence it is Fatou. Let $\tilde{\rho}=\rho|_{L^\infty(\mathcal{G})}$. We have that $\mathcal{R}$ in ($\ref{eq:32}$) is dense in $\mathcal{S}_{\tilde{\rho}}$,  and the last claim now follows from the same argument in the proof of Theorem  \ref{th:friday}.
\Halmos \end{proof}

\subsection{Additivity on the subspace}
\label{sec:R1-additivity}

In this section, we discuss coherent risk measures that are not only subadditive but also additive on $L^\infty(\mathcal G)$, which turn out to have a strong connection to coherent adjustments.

For $\rho:L^\infty\to\mathbb{R}$, we say that $\rho$ is \emph{$\mathcal{G}$-additive} if $\rho(X+Y)=\rho(X)+\rho(Y)$ for all $X,Y\in L^\infty(\mathcal{G})$.
Additivity has the interpretation as risk neutrality under a particular probability, an idea that goes back to at least \cite{dF31}; see \citet[p.~23]{W10} for discussions on this.
Hence, 
$\mathcal{G}$-additivity reflects that the risk measure is risk neutral on the subset $L^\infty(\mathcal G)$ of $L^\infty$. In other words,
such a risk measure is neutral for risks without uncertainty; recall that in the context of Section \ref{sec:21}, random losses in $L^\infty (\mathcal G)$ do not have uncertainty in their distributions.

 For $X\in L^\infty$, we say that $\rho$ is \emph{linear along} $X$ if $\rho(\lambda X)=\lambda\rho(X)$ for all $\lambda\in\mathbb{R}$; see \cite{BKMS21} and \cite{LM22} for more discussions  on this technical property. The following proposition gives necessary and sufficient conditions for dropping the supremum in representation (\ref{eq:32}) in the case of $\mathcal{G}$-law invariance. 

\begin{proposition}\label{prop:relation}
    Let $\rho:L^\infty\to\mathbb{R}$ be a Fatou coherent risk measure. The following are equivalent.
    \begin{enumerate}[(i) ]
        \item The mapping $\rho$ is $\mathcal{G}$-additive and $\mathcal{G}$-law invariant;
        \item the mapping $\rho$ is linear along a nonconstant $X\in L^\infty(\mathcal{G})$ and $\mathcal{G}$-law invariant;
        \item there exists $\tau\in \mathrm{CA}(\mathbb{P}|_\mathcal{G})$ such that
        $$
            \rho(X)=\mathbb{E}[X]+\tau(X),\hspace{3pt}X\in L^\infty.
        $$
    \end{enumerate}
\end{proposition}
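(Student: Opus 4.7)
The plan is to prove the cycle $(\mathrm{iii}) \Rightarrow (\mathrm{i}) \Rightarrow (\mathrm{ii}) \Rightarrow (\mathrm{iii})$. For $(\mathrm{iii}) \Rightarrow (\mathrm{i})$, I would observe that $\mathcal{G}$-invariance of $\tau$ combined with positive homogeneity forces $\tau$ to vanish on $L^\infty(\mathcal{G})$: setting $X=0$ in $\mathcal{G}$-invariance gives $\tau(Y)=\tau(0)$ for every $Y\in L^\infty(\mathcal{G})$, and $\tau(0)=0$ by positive homogeneity at $\lambda=0$. Hence $\rho|_{L^\infty(\mathcal{G})}=\mathbb{E}$, which is manifestly $\mathcal{G}$-additive and $\mathcal{G}$-law invariant. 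For $(\mathrm{i})\Rightarrow(\mathrm{ii})$, applying $\mathcal{G}$-additivity to $X+(-X)=0$ together with positive homogeneity yields $\rho(-X)=-\rho(X)$, and therefore linearity along every $X\in L^\infty(\mathcal{G})$. A nonconstant such $X$ exists by atomlessness of $(\Omega,\mathcal{G},\mathbb{P})$, for instance the indicator of an event in $\mathcal{G}$ with probability $1/2$.

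The substantive direction is $(\mathrm{ii})\Rightarrow(\mathrm{iii})$, and its crux is the claim that $\tilde{\rho}:=\rho|_{L^\infty(\mathcal{G})}$ coincides with the expectation on the whole subspace $L^\infty(\mathcal{G})$. By Proposition \ref{prop:good}, $\tilde{\rho}$ is a Fatou coherent risk measure, and the $\mathcal{G}$-law invariance of $\rho$ renders $\tilde{\rho}$ law-invariant on the atomless space $(\Omega,\mathcal{G},\mathbb{P})$. Using the standard inequality $\tilde{\rho}(Y)\geq\mathbb{E}[Y]$ for law-invariant coherent risk measures (see \cite{BM06}), linearity along the nonconstant $X$ yields
$$\tilde{\rho}(X)=-\tilde{\rho}(-X)\leq -\mathbb{E}[-X]=\mathbb{E}[X]\leq \tilde{\rho}(X),$$
so $\tilde{\rho}(X)=\mathbb{E}[X]$. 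I would then invoke the Kusuoka representation (Proposition \ref{prop:lawinv}) to write $\tilde{\rho}(Y)=\sup_{Q\in\mathcal{D}}\int_{[0,1)}\mathrm{ES}_\alpha(Y)\,\mathrm{d}Q(\alpha)$. A short computation based on the identity $\mathbb{E}[X]=\alpha m_\alpha+(1-\alpha)\mathrm{ES}_\alpha(X)$, where $m_\alpha$ is the average of the quantile of $X$ over $[0,\alpha]$, together with the fact that $m_\alpha<\mathrm{ES}_\alpha(X)$ whenever $X$ is nonconstant and $\alpha\in(0,1)$, shows that $\mathrm{ES}_\alpha(X)>\mathbb{E}[X]$ strictly for every $\alpha\in(0,1)$. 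Combined with $\mathrm{ES}_\alpha(X)\geq\mathbb{E}[X]$, the forced equality $\int_{[0,1)}\mathrm{ES}_\alpha(X)\,\mathrm{d}Q(\alpha)=\mathbb{E}[X]$ leaves no option but $Q=\delta_0$ for every $Q\in\mathcal{D}$, giving $\tilde{\rho}=\mathrm{ES}_0=\mathbb{E}$.

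Having established $\rho|_{L^\infty(\mathcal{G})}=\mathbb{E}=\mathbb{E}^{\mathbb{P}|_\mathcal{G}}|_{L^\infty(\mathcal{G})}$, I would apply Proposition \ref{prop:motivation} with $\mu=\mathbb{P}|_\mathcal{G}$ to produce the desired $\tau\in\mathrm{CA}(\mathbb{P}|_\mathcal{G})$ satisfying $\rho=\mathbb{E}+\tau$, completing the cycle. The main obstacle I anticipate is precisely the step from linearity along a single nonconstant $X$ to the global identity $\tilde{\rho}=\mathbb{E}$; the Kusuoka argument above is one clean route, and a viable alternative would be to exploit the law invariance of $\widehat{\mathcal{S}}_{\tilde{\rho}}\subseteq L^1(\mathcal{G})$ through a Hardy--Littlewood or Chebyshev rearrangement argument to force every density there to equal $1$ almost surely.
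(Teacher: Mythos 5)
Your proof is correct, and its overall skeleton (establish that $\rho$ restricted to $L^\infty(\mathcal G)$ is the expectation, then invoke Proposition \ref{prop:motivation} with $\mu=\p|_{\mathcal G}$) matches the paper's. The one place where you genuinely diverge is the crux of $(\mathrm{ii})\Rightarrow(\mathrm{iii})$: the paper disposes of the ``collapse to the mean'' step by citing Theorems 4.5 and 4.7 of \cite{BKMS21}, which state precisely that a law-invariant coherent (indeed convex) risk measure on an atomless space that is linear along a single nonconstant direction must coincide with $\E$. You instead rederive this from first principles: the B\"auerle--M\"uller bound $\tilde\rho\ge\E$ plus linearity along $X$ pins down $\tilde\rho(X)=\E[X]$, and then the Kusuoka representation forces every $Q\in\mathcal D$ to be $\delta_0$ because each $\int\ES_\alpha(X)\,Q(\d\alpha)$ is squeezed between $\E[X]$ and $\tilde\rho(X)=\E[X]$ while $\ES_\alpha(X)>\E[X]$ strictly on $(0,1)$ for nonconstant $X$ (your identity $\E[X]=\alpha m_\alpha+(1-\alpha)\ES_\alpha(X)$ with $m_\alpha<\ES_\alpha(X)$ is the right way to see the strictness). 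This buys a self-contained argument at the cost of a few extra lines; it is, however, tied to coherence via Kusuoka, whereas the cited result of \cite{BKMS21} covers the convex case as well. Your $(\mathrm{i})\Rightarrow(\mathrm{ii})$ and $(\mathrm{iii})\Rightarrow(\mathrm{i})$ steps coincide with the paper's (the former via $\rho(X)+\rho(-X)=\rho(0)=0$, the latter via $\tau\equiv 0$ on $L^\infty(\mathcal G)$ from $\mathcal G$-invariance and $\tau(0)=0$).
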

\begin{proof}{Proof:}
    (i)$\Rightarrow $(ii): Since $\rho$ is $\mathcal{G}$-additive and coherent, it is linear on $L^\infty(\mathcal{G})$. Since $L^\infty(\mathcal{G})$ contains non-constant random variables, (ii) holds. (ii)$\Rightarrow$(iii): The fact that $\rho|_{L^\infty(\mathcal{G})}=\mathbb{E}|_{L^\infty(\mathcal{G})}$ follows from Theorem 4.5 and Theorem 4.7 of \cite{BKMS21}. Using Proposition \ref{prop:motivation}, we prove the claim. (iii)$\Rightarrow$(i): This is clear by definition.
\Halmos \end{proof}

\subsection{Uniqueness of the representation in Theorem \ref{theorem:1}}
\label{app:uniq}

A natural question would be, under what conditions is representation (\ref{eq:32}) unique? When the coherent risk measure is Lebesgue continuous, there is a definite answer to this. We first need some results before we can state and prove the theorem regarding uniqueness.

\begin{proposition}
    \label{Proposition:append1}
    If $X_\lambda\to X$ in $w\left(L^1,L^\infty\right)$ then $\mathbb{E}[X_\lambda|\mathcal{G}]\to \mathbb{E}[X|\mathcal{G}]$ in $w\left(L^1,L^\infty\right)$.
\end{proposition}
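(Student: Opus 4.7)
The plan is to unfold the definition of the weak topology $w(L^1,L^\infty)$ and reduce convergence of conditional expectations to convergence of the original net against a specially chosen test function. Recall that the weak topology $w(L^1,L^\infty)$ on $L^1$ is induced by the pairing $\langle Z, Y\rangle = \mathbb{E}[ZY]$ for $Z\in L^1$ and $Y\in L^\infty$. Hence I need to show that for every $Y\in L^\infty$,
\[
\mathbb{E}\bigl[\mathbb{E}[X_\lambda\mid\mathcal{G}]\,Y\bigr] \longrightarrow \mathbb{E}\bigl[\mathbb{E}[X\mid\mathcal{G}]\,Y\bigr].
\]

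The key observation is that the test function $Y$ can be replaced by $\mathbb{E}[Y\mid\mathcal{G}]$ without changing either side. Concretely, for each fixed $Y\in L^\infty$, two applications of the tower property yield
\[
\mathbb{E}\bigl[\mathbb{E}[X_\lambda\mid\mathcal{G}]\,Y\bigr]
= \mathbb{E}\bigl[\mathbb{E}[X_\lambda\mid\mathcal{G}]\,\mathbb{E}[Y\mid\mathcal{G}]\bigr]
= \mathbb{E}\bigl[X_\lambda\,\mathbb{E}[Y\mid\mathcal{G}]\bigr],
\]
where I used that $\mathbb{E}[Y\mid\mathcal{G}]$ is $\mathcal{G}$-measurable and bounded (since $Y\in L^\infty$ implies $\mathbb{E}[Y\mid\mathcal{G}]\in L^\infty(\mathcal{G})\subseteq L^\infty$). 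The same identity holds with $X_\lambda$ replaced by $X$.

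Since $\mathbb{E}[Y\mid\mathcal{G}]\in L^\infty$, the hypothesis $X_\lambda\to X$ in $w(L^1,L^\infty)$ applied with test function $\mathbb{E}[Y\mid\mathcal{G}]$ gives $\mathbb{E}[X_\lambda\,\mathbb{E}[Y\mid\mathcal{G}]]\to\mathbb{E}[X\,\mathbb{E}[Y\mid\mathcal{G}]]$. Combining this with the displayed identity above establishes the required convergence for arbitrary $Y\in L^\infty$, completing the proof. There is no serious obstacle here; the only ingredients are the tower property, the $L^\infty$-boundedness of conditional expectation, and the definition of the weak topology, so the argument is essentially a one-line manipulation.
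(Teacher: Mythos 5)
Your proof is correct and is essentially the same argument as the paper's: both use the tower property to rewrite $\mathbb{E}[\mathbb{E}[X_\lambda|\mathcal{G}]Y]$ as $\mathbb{E}[X_\lambda\,\mathbb{E}[Y|\mathcal{G}]]$ and then apply the weak convergence hypothesis with the test function $\mathbb{E}[Y|\mathcal{G}]\in L^\infty$. No issues.
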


\begin{proof}{Proof:}
    Let $X_\lambda\to X$ in $w\left(L^1,L^\infty\right)$ and $Y\in L^\infty$, we have
    \begin{align*}
        \mathbb{E}[\mathbb{E}[X_\lambda|\mathcal{G}]Y]=\mathbb{E}[X_\lambda \mathbb{E}[Y|\mathcal{G}]]\to\mathbb{E}[X\mathbb{E}[Y|\mathcal{G}]]=\mathbb{E}[\mathbb{E}[X|\mathcal{G}]Y].
    \end{align*}
    Implying that $\mathbb{E}[X_\lambda|\mathcal{G}]\to \mathbb{E}[X|\mathcal{G}]$ in $w\left(L^1,L^\infty\right)$.
\Halmos \end{proof}

\begin{corollary}\label{coro:lem-1}
    The set $L^1(\mathcal{G})$ is $w\left(L^1,L^\infty\right)$-closed in $L^1$.
\end{corollary}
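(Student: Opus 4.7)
The plan is to combine Proposition \ref{Proposition:append1} with the Hausdorff property of the weak topology. Let $(X_\lambda)_{\lambda\in\Lambda}$ be a net in $L^1(\mathcal{G})$ converging to some $X\in L^1$ in $w(L^1,L^\infty)$; the goal is to show that $X\in L^1(\mathcal{G})$.

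Since each $X_\lambda$ is $\mathcal{G}$-measurable, we have $\mathbb{E}[X_\lambda\mid\mathcal{G}]=X_\lambda$ for all $\lambda$. By Proposition \ref{Proposition:append1}, the conditional expectation operator is $w(L^1,L^\infty)$-continuous, so $\mathbb{E}[X_\lambda\mid\mathcal{G}]\to\mathbb{E}[X\mid\mathcal{G}]$ in $w(L^1,L^\infty)$. Substituting the identity, this says $X_\lambda\to\mathbb{E}[X\mid\mathcal{G}]$ in $w(L^1,L^\infty)$. But by assumption $X_\lambda\to X$ in the same topology, and $w(L^1,L^\infty)$ is Hausdorff (as a weak topology arising from a separating dual pairing), so limits are unique. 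Hence $X=\mathbb{E}[X\mid\mathcal{G}]$ $\p$-a.s., which exhibits $X$ as a $\mathcal{G}$-measurable random variable, i.e., $X\in L^1(\mathcal{G})$.

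There is no real obstacle here: the argument is a two-line application of the preceding proposition together with uniqueness of limits. The only subtlety worth flagging is that one must use nets (not just sequences) for the closure statement in a weak topology, but Proposition \ref{Proposition:append1} is already stated with nets, so this causes no issue.
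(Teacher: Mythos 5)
Your argument is exactly the paper's: apply Proposition \ref{Proposition:append1} to get $X_\lambda=\mathbb{E}[X_\lambda\mid\mathcal{G}]\to\mathbb{E}[X\mid\mathcal{G}]$, then use the Hausdorff property of $w(L^1,L^\infty)$ to conclude $X=\mathbb{E}[X\mid\mathcal{G}]\in L^1(\mathcal{G})$. The proof is correct and matches the paper's reasoning step for step, including the use of nets.
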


\begin{proof}{Proof:}
    Let $X_\lambda\in L^1(\mathcal{G})$ and $X_\lambda\to X$ in $w\left(L^1,L^\infty\right)$. This also implies that $X_\lambda\to \mathbb{E}[X|\mathcal{G}]$ in $w\left(L^1,L^\infty\right)$ by Proposition \ref{Proposition:append1}. Since $w\left(L^1,L^\infty\right)$ is Hausdorff, $X=\mathbb{E}[X|\mathcal{G}]\implies X\in L^1(\mathcal{G})$. Therefore $L^1(\mathcal{G})$ is $w\left(L^1,L^\infty\right)$-closed in $L^1$.
\Halmos \end{proof}

If $\left(\mathfrak{X},\mathfrak{Y},\langle\; ,\;\rangle\right)$ is a dual pairing, we say that $\mathcal{R}\subseteq \mathfrak{X}$ is $w(\mathfrak{X},\mathfrak{Y})$-precompact if the $w(\mathfrak{X},\mathfrak{Y})$-closure of $\mathcal{R}$ is $w(\mathfrak{X},\mathfrak{Y})$-compact.

The next proposition, parallel to  Proposition \ref{prop:FatouCont}, characterizes Lebesgue-continuous $\mu$-coherent adjustments.
\begin{proposition}
    \label{prop:LebCont}  
        Let $\tau:L^\infty\to\mathbb{R}$ and $\mu\in \mathcal{M}_1(\mathcal{G})$. The following are equivalent. 
    \begin{enumerate}[(i) ]
        \item The mapping $\tau$ is a Lebesgue-continuous $\mu$-coherent adjustment;
        \item there exists a $w(\mathcal{M},L^\infty)$-compact and convex set $\mathcal{S}_\tau\subseteq \mathcal{E}(\mu)$ such that
        \begin{align}
            \tau(X)=\sup_{\nu\in \mathcal{S}_\tau}\left(\mathbb{E}^\nu[ X]-\mathbb{E}^\mu[X]\right),~~~ X\in L^\infty;\label{eq:FatouCon}
        \end{align} 
    \item there exists $w(\mathcal{M},L^\infty)$-precompact $\mathcal{R}\subseteq \mathcal{E}(\mu)$ such that
    \begin{align*}
        \tau(X)=\sup_{\nu\in \mathcal{R}}\left(\mathbb{E}^\nu[ X]-\mathbb{E}^\mu[X]\right)~~X\in L^\infty.
    \end{align*}
    \end{enumerate}
In case (ii) holds, the set $\mathcal{S}_\tau$ is uniquely given by $\left\{\nu\in \mathcal{M}_1: \mathbb{E}^\nu[X]-\mathbb{E}^\mu[X]\leq \tau(X) \text{ for all }X\in L^\infty\right\}$.   
\end{proposition}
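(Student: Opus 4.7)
The proof will follow the same overall structure as Proposition \ref{prop:FatouCont}, with weak compactness replacing weak closedness throughout. The classical bridge is the Dunford--Pettis theorem: a convex subset of $L^1$ is $w(L^1,L^\infty)$-precompact if and only if it is uniformly integrable, and, in turn, a Fatou coherent risk measure is Lebesgue continuous if and only if the densities of its supporting set form a uniformly integrable family (see the standard results in \cite{D02} and \cite{JST06}). This can be transferred from $L^1$ to $\mathcal{M}$ through the canonical identification $\mu\mapsto D_\mu$ discussed in Section~\ref{sec:32}.

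The plan is as follows. For (i)$\Rightarrow$(ii), set $\rho=\mathbb{E}^\mu+\tau$. By the Lebesgue analogue of Proposition \ref{prop:motivation}, $\rho$ is a Lebesgue coherent risk measure, so Proposition \ref{prop:FatouCont} yields a $w(\mathcal{M},L^\infty)$-closed convex $\mathcal{S}_\tau\subseteq \mathcal{E}(\mu)$ (the inclusion into $\mathcal{E}(\mu)$ is obtained exactly as in the proof of Proposition \ref{prop:FatouCont}, since $\rho|_{L^\infty(\mathcal{G})}=\mathbb{E}^\mu|_{L^\infty(\mathcal{G})}$ forces $\mathcal{S}_\rho^\mathcal{G}=\{\mu\}$). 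To upgrade from closed to compact, invoke the Delbaen--Jouini--Schachermayer--Touzi characterization: the Lebesgue continuity of $\rho$ is equivalent to $\widehat{\mathcal{S}}_\rho$ being uniformly integrable, which by Dunford--Pettis and the weak closedness is equivalent to $w(\mathcal{M},L^\infty)$-compactness of $\mathcal{S}_\rho=\mathcal{S}_\tau$. The implication (ii)$\Rightarrow$(iii) is immediate by taking $\mathcal{R}=\mathcal{S}_\tau$.

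For (iii)$\Rightarrow$(i), the $\mu$-coherent adjustment properties of $\tau$ follow from inspection of the supremum representation, exactly as in the proof of Proposition \ref{prop:FatouCont}. The content is in showing Lebesgue continuity. Let $(X_n)_{n\in\N}$ be a uniformly bounded sequence in $L^\infty$ converging $\mathbb{P}$-a.s.\ to $X\in L^\infty$, and define $f_n(\nu)=\mathbb{E}^\nu[X_n]-\mathbb{E}^\mu[X_n]$ and $f(\nu)=\mathbb{E}^\nu[X]-\mathbb{E}^\mu[X]$. By Dunford--Pettis applied to $\widehat{\mathcal{R}}-D_\mu$, the family of densities $\{D_\nu-D_\mu:\nu\in\mathcal{R}\}$ is uniformly integrable, so by Vitali's convergence theorem, $f_n\to f$ uniformly on $\mathcal{R}$. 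Taking suprema commutes with uniform limits, so $\tau(X_n)\to \tau(X)$, proving Lebesgue continuity.

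The main obstacle will be the uniform-convergence step in (iii)$\Rightarrow$(i): the pointwise DCT-based argument in Proposition \ref{prop:FatouCont} only delivers Fatou continuity, and precompactness of $\mathcal{R}$ is exactly what is needed to promote this to Lebesgue continuity via uniform integrability. Finally, uniqueness of $\mathcal{S}_\tau$ in (ii) follows from Proposition \ref{prop:FatouCont}: since compactness implies closedness and any candidate is Fatou-supporting for the same $\tau$, the uniqueness statement there identifies $\mathcal{S}_\tau$ with $\{\nu\in\mathcal{M}_1:\mathbb{E}^\nu[X]-\mathbb{E}^\mu[X]\le\tau(X)\text{ for all }X\in L^\infty\}$.
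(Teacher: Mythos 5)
Your proposal is correct and follows essentially the same route as the paper: reduce (i)$\Rightarrow$(ii) to Proposition \ref{prop:FatouCont} plus the standard equivalence between Lebesgue continuity and weak compactness (uniform integrability) of the supporting set, and handle (iii)$\Rightarrow$(i) by using precompactness of $\mathcal{R}$ to upgrade the Fatou argument to full convergence of the suprema. The only cosmetic difference is that you unpack the Dunford--Pettis/uniform-integrability mechanism explicitly, whereas the paper simply cites that $X\mapsto\sup_{\nu\in\mathcal{R}}\mathbb{E}[D_\nu X]$ is a Lebesgue-continuous coherent risk measure and then passes the limit through the supremum.
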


\begin{proof}{Proof:}
    Everything follows the same line of reasoning as Proposition \ref{prop:FatouCont}, except for proving Lebesgue continuity in
    $(\mathrm{iii})\Rightarrow (\mathrm{i})$. Let $(X_n)_{n=1}^\infty$ be a sequence in $L^\infty$ of uniformly bounded random variables that converge $\mathbb{P}$-a.s.~to $X\in L^\infty$. Since $X\mapsto \sup_{\nu\in \mathcal{R}}\mathbb{E}[D_\nu X]$ is a Lebesgue-continuous coherent risk measure, we have
    \begin{align*}
        \lim_{n\to\infty}\tau(X_n)=\lim_{n\to\infty}\sup_{\nu\in\mathcal{R}}\mathbb{E}\left[\left(D_\nu-D_\mu\right)X_n\right]&=\lim_{n\to\infty}\sup_{\nu\in\mathcal{R}}\mathbb{E}\left[D_\nu X_n\right]-\lim_{n\to\infty}\mathbb{E}[D_\mu X_n]\\&=\sup_{\nu\in\mathcal{R}}\mathbb{E}[D_\nu X]-\mathbb{E}[D_\mu X]=\tau(X).
    \end{align*}
    This proves the Lebesgue continuity of $\tau$.
\Halmos \end{proof}

Let $\mathcal{R}\subseteq \mathcal{M}_1(\mathcal{G})$ be $w(\mathcal{M}(\mathcal{G}),L^\infty(\mathcal{G}))$-closed and convex, and $(\tau_\mu)_{\mu\in \mathcal{R}}$ be a CA-assignment. We say that $(\tau_\mu)_{\mu\in \mathcal{R}}$ is \emph{Lebesgue continuous} if $\tau_\mu$ is Lebesgue continuous for all $\mu\in \mathcal{R}$.
We say that $(\tau_\mu)_{\mu\in \mathcal{R}}$ is \emph{regular} if
    \begin{enumerate}[(a)]
        \item  For any fixed $X\in L^\infty$, concavity of $\mu\mapsto \tau_{\mu}(X)$ holds; that is, for all $\mu_1,\mu_2\in \mathcal{R}$ and $\lambda\in (0,1)$ we have
        \begin{align*}
          \lambda   \tau_{\mu_1}( X)+(1-\lambda) \tau_{\mu_2}(X)\leq \tau_{\lambda\mu_1+(1-\lambda)\mu_2}(X).
        \end{align*} 
        See e.g., \cite{WWW20} for concavity for distributional mixtures in the context of Choquet integrals. 
        \item A weak form of semicontinuity holds:
        If $(\mu_\lambda)_{\lambda\in \Lambda}$ is a net in $\mathcal{R}$ that converges to $\mu\in \mathcal{R}$ 
        in $w(\mathcal{M}(\mathcal{G}),L^\infty(\mathcal{G}))$,
         then for all $X\in L^\infty$,
        \begin{align*}
            \tau_{\mu}(X)\geq\liminf_{\lambda}\tau_{\mu_\lambda}(X).
        \end{align*}
    \end{enumerate}
 
        See \cite{B93} for background information on nets, subnets, and their convergence.

\begin{proposition}
    \label{Th:unique}
    Let $\rho:L^\infty\to\mathbb{R}$ be a Lebesgue-continuous coherent risk measure. There exists a unique $w(\mathcal{M}(\mathcal{G}),L^\infty(\mathcal{G}))$-compact and convex set $\mathcal{R}\subseteq \mathcal{M}_1(\mathcal{G})$ and a unique  regular Lebesgue-continuous CA-assignment $(\tau_\mu)_{\mu\in \mathcal{R}}$
    such that $\rho$ can be represented by equation \eqref{eq:32}.
\end{proposition}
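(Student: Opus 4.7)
The plan is to produce an explicit canonical representation and then show any regular one must coincide with it. The canonical choice is $\mathcal{R}=\mathcal{S}_\rho^\mathcal{G}$ together with $\tau_\mu^*(X)=\sup_{\nu\in\mathcal{E}(\mu)\cap\mathcal{S}_\rho}(\mathbb{E}^\nu[X]-\mathbb{E}^\mu[X])$.

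For existence, since $\rho$ is Lebesgue coherent, $\mathcal{S}_\rho$ is $w(\mathcal{M},L^\infty)$-compact and convex, and hence its continuous image $\mathcal{R}=\mathcal{S}_\rho^\mathcal{G}$ under the restriction map is $w(\mathcal{M}(\mathcal{G}),L^\infty(\mathcal{G}))$-compact and convex by Lemma \ref{lemma:re}. For each $\mu\in\mathcal{R}$, the fiber $\mathcal{E}(\mu)\cap\mathcal{S}_\rho$ is a $w$-closed subset of the compact $\mathcal{S}_\rho$, hence compact, so $\tau_\mu^*$ is a Lebesgue $\mu$-coherent adjustment by Proposition \ref{prop:LebCont}. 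The representation \eqref{eq:32} is then a direct consequence of the disjoint decomposition $\mathcal{S}_\rho=\bigsqcup_{\mu\in\mathcal{R}}(\mathcal{E}(\mu)\cap\mathcal{S}_\rho)$ obtained from the projection $\nu\mapsto\nu|_\mathcal{G}$. Concavity of $\mu\mapsto\tau_\mu^*(X)$ follows from $\lambda\nu_1+(1-\lambda)\nu_2\in\mathcal{E}(\lambda\mu_1+(1-\lambda)\mu_2)\cap\mathcal{S}_\rho$ when $\nu_i\in\mathcal{E}(\mu_i)\cap\mathcal{S}_\rho$. Weak upper semicontinuity is checked by extracting a $w$-convergent subnet of approximate maximizers inside the compact $\mathcal{S}_\rho$ and identifying, via Lemma \ref{lemma:re}, the projection of the limit with $\mu$.

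For uniqueness, let $(\mathcal{R}',(\tau'_\mu))$ be any representation satisfying the stated properties. Since $\tau'_\mu$ is $\mathcal{G}$-invariant, subadditive and positively homogeneous, $\tau'_\mu(0)=0$ and thus $\tau'_\mu\equiv 0$ on $L^\infty(\mathcal{G})$. Restricting \eqref{eq:32} to $L^\infty(\mathcal{G})$ then shows that $\mathcal{R}'$ is a $w$-closed convex supporting set for the Lebesgue coherent risk measure $\rho|_{L^\infty(\mathcal{G})}$; by the uniqueness in Proposition \ref{prop:coherent} and Proposition \ref{prop:curious}(ii), $\mathcal{R}'=\mathcal{S}_{\rho|_{L^\infty(\mathcal{G})}}=\mathcal{S}_\rho^\mathcal{G}=\mathcal{R}$. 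For the adjustments, set $\mathcal{S}'=\bigcup_{\mu\in\mathcal{R}}\mathcal{S}_{\tau'_\mu}$ with each $\mathcal{S}_{\tau'_\mu}$ the $w$-compact convex support from Proposition \ref{prop:LebCont}; clearly $\rho(X)=\sup_{\nu\in\mathcal{S}'}\mathbb{E}^\nu[X]$. Concavity of $\mu\mapsto\tau'_\mu(X)$ gives $\lambda\nu_1+(1-\lambda)\nu_2\in\mathcal{S}_{\tau'_{\lambda\mu_1+(1-\lambda)\mu_2}}$ when $\nu_i\in\mathcal{S}_{\tau'_{\mu_i}}$, making $\mathcal{S}'$ convex. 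Weak upper semicontinuity, combined with the continuity of restriction (Lemma \ref{lemma:re}), makes $\mathcal{S}'$ closed: for a net $\nu_\alpha\in\mathcal{S}_{\tau'_{\mu_\alpha}}$ with $\nu_\alpha\to\nu$, setting $\mu:=\nu|_\mathcal{G}$ yields $\mu_\alpha\to\mu$ and, for every $X$,
$$\mathbb{E}^\nu[X]-\mathbb{E}^\mu[X]=\lim_\alpha\left(\mathbb{E}^{\nu_\alpha}[X]-\mathbb{E}^{\mu_\alpha}[X]\right)\leq \liminf_\alpha\tau'_{\mu_\alpha}(X)\leq \tau'_\mu(X),$$
placing $\nu$ in $\mathcal{S}_{\tau'_\mu}\subseteq\mathcal{S}'$. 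By Proposition \ref{prop:coherent}, $\mathcal{S}'=\mathcal{S}_\rho$, and since each $\mathcal{S}_{\tau'_\mu}\subseteq\mathcal{E}(\mu)\cap\mathcal{S}_\rho$ while these fibers are disjoint across $\mathcal{R}$, matching unions fiber-by-fiber forces $\mathcal{S}_{\tau'_\mu}=\mathcal{E}(\mu)\cap\mathcal{S}_\rho$ and hence $\tau'_\mu=\tau_\mu^*$.

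The main obstacle is the closure step in the uniqueness argument. The weak topologies on $\mathcal{M}$ and $\mathcal{M}(\mathcal{G})$ are not metrizable on unbounded parts, so one must argue with nets; the weak $\liminf$-form of upper semicontinuity built into the regularity hypothesis is exactly what is needed for the limit argument above to go through, and any weaker form of semicontinuity would fail to make $\mathcal{S}'$ closed. The same compactness-and-subnet technique handles the verification of weak upper semicontinuity of $\tau_\mu^*$ in the existence step.
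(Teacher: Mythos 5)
Your proof is correct and follows essentially the same route as the paper: the canonical representation via $\mathcal{R}=\mathcal{S}_\rho^\mathcal{G}$ and the fiberwise adjustments $\tau_\mu^*$ supported on $\mathcal{E}(\mu)\cap\mathcal{S}_\rho$, with regularity verified through convexity of $\mathcal{S}_\rho$ and a compactness-and-subnet argument, and uniqueness obtained by assembling the supporting sets $\mathcal{S}_{\tau'_\mu}$ into a set $\mathcal{S}'$ that regularity forces to be convex and closed, hence equal to $\mathcal{S}_\rho$. The one place you go slightly beyond the paper is in explicitly pinning down the uniqueness of the set $\mathcal{R}'$ itself (via $\tau'_\mu\equiv 0$ on $L^\infty(\mathcal{G})$ and the uniqueness of compact convex supporting sets, combined with Proposition \ref{prop:curious}), a step the paper's proof takes for granted by assuming the competing assignment is already indexed by $\mathcal{S}_\rho^\mathcal{G}$.
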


\begin{proof}{Proof:}
    Let $\tilde{\rho}=\rho|_{L^\infty(\mathcal{G})}$, by Proposition \ref{prop:curious}, we have that $\mathcal{S}^\mathcal{G}_\rho=\mathcal{S}_{\tilde{\rho}}$ is compact. Let $(\tau_\mu)_{\mu\in \mathcal{S}_\rho^\mathcal{G}}$ be the same as in the construction of representation (\ref{eq:32}) in the proof of Theorem \ref{theorem:1}. We claim that for all $\mu\in \mathcal{S}_\rho^\mathcal{G}$, $\mathcal{S}(\mu)$ is $w(\mathcal{M},L^\infty)$-compact and convex. Fix some $\mu\in \mathcal{S}_\rho^\mathcal{G}$, convexity of $\mathcal{S}(\mu)$ follows from the convexity of $\mathcal S_\rho$ and the convexity of $\mathcal{E}(\mu)$. For compactness note that $\mathcal{S}(\mu)$ is $w(\mathcal{M},L^\infty)$-precompact as $\mathcal{S}(\mu)\subseteq \mathcal{S}_\rho$. Therefore, compactness of $\mathcal{S}(\mu)$ would follow if we prove that $\mathcal{S}(\mu)$ is  $w(\mathcal{M},L^\infty)$-closed. This can be seen from the fact that $\mathcal{S}_\rho$ and $\mathcal{E}(\mu)$ are both $w(\mathcal{M},L^\infty)$-closed. From the construction in \ref{theorem:1}, $\tau_\mu(X)=\sup_{\nu\in\mathcal{S}(\mu)}\left(\mathbb{E}^\nu[X]-\mathbb{E}^\mu[X]\right)$, $X\in L^\infty$, thus, by Proposition \ref{prop:LebCont},  $\tau_\mu$ is Lebesgue continuous.
    
    We will now show that $(\tau_\mu)_{\mu\in \mathcal{S}_\rho^\mathcal{G}}$ is regular. First, to show concavity (a), let $\mu_1,\mu_2\in \mathcal{S}_\rho^\mathcal{G}$, $\nu_1\in \mathcal{S}(\mu_1), \nu_2\in \mathcal{S}(\mu_2)$, $X\in L^\infty$, and $\lambda\in (0,1)$. Define $\upsilon=\lambda\mu_1+(1-\lambda)\mu_2$. Since $\mathcal{S}_\rho$ is convex, we have $\lambda\nu_1+(1-\lambda)\nu_2\in \mathcal{S}(\upsilon)$, and thus
        \begin{align}
            \mathbb{E}\left[\left(D_{\nu_1}-D_{\mu_1}\right)\lambda X\right]+\mathbb{E}\left[\left(D_{\nu_2}-D_{\mu_2}\right)(1-\lambda) X\right]\leq \tau_{\upsilon}(X).\label{eq:1234}
        \end{align}
    Taking the supremum of the left side of (\ref{eq:1234}) over $(\nu_1,\nu_2)\in \mathcal{S}(\mu_1)\times \mathcal{S}(\mu_2)$ obtains the result. To show semicontinuity (b), let $(\mu_\lambda)_{\lambda\in \Lambda}$ be a net in $\mathcal{S}_\rho^\mathcal{G}$ and $\mu\in \mathcal{S}_\rho^\mathcal{G}$ such that $\mu_\lambda\to\mu$ in $w(\mathcal{M}(\mathcal{G}),L^\infty(\mathcal{G}))$ and let $X\in L^\infty$. For each $\lambda$ choose $\nu_\lambda^*\in \mathcal{S}(\mu_\lambda)$ such that $\tau_{\mu_\lambda}(X)=\mathbb{E}[(D_{\nu_\lambda^*}-D_{\mu_\lambda})X]$, which is possible since $\mathcal{S}(\mu_\lambda)$ is $w(\mathcal{M},L^\infty)$-compact. Since the net $(\nu_\lambda^*)_{\lambda\in \Lambda}$ takes values in the $w(\mathcal{M},L^\infty)$-compact set $\mathcal{S}_\rho$, there exists a convergent subnet $(\nu_{\gamma}^*)_{\gamma\in \Gamma}$.  Denote the limit of this subnet by $\nu\in \mathcal{S}_\rho$. For $Y\in L^\infty(\mathcal{G})$, by using the tower property and the convergence, we get
        \begin{align*}
            \mathbb{E}\left[\mathbb{E}[D_\nu|\mathcal{G}]Y\right]=\mathbb{E}\left[D_\nu Y\right]=\lim_\gamma\mathbb{E}\left[D_{\nu_{\gamma}^*}Y\right]=\lim_\gamma \mathbb{E}\left[D_{\mu_\gamma}Y\right]=\mathbb{E}[D_\mu Y].
        \end{align*}
        Thus $\nu\in \mathcal{S}(\mu)$ and 
        \begin{align*}
            \tau_\mu(X)\geq\mathbb{E}\left[(D_\nu-D_\mu)X\right]=\lim_\gamma\mathbb{E}\left[\left(D_{\nu^*_\gamma}-D_{\mu_\gamma}\right)X\right]&=\lim_{\gamma}\tau_{\mu_\gamma}(X)\\&=\liminf_\gamma\tau_{\mu_\gamma}(X) \ge \liminf_\lambda \tau_{\mu_\lambda}(X).
        \end{align*}
    This shows that $(\tau_\mu)_{\mu\in \mathcal{S}_\rho^\mathcal{G}}$ is regular. 
    
    Now we prove the uniqueness of $(\tau_\mu)_{\mu\in \mathcal{S}_\rho^\mathcal{G}}$ by assuming that there exists another regular Lebesgue-continuous CA-assignment $(\tilde{\tau}_\mu)_{\mu\in \mathcal{S}_\rho^\mathcal{G}}$ such that
        \begin{align*}
            \rho(X)=\sup_{\mu\in\mathcal{S}^\mathcal{G}_\rho}\left(\mathbb{E}^\mu[ X]+\tilde{\tau}_\mu(X)\right),~~~X\in L^\infty.
        \end{align*}
     We will show that $\tau_\mu=\tilde{\tau}_\mu$ for all $\mu\in \mathcal{S}_\rho^\mathcal{G}$. For each $\mu\in\mathcal{S}_\rho^\mathcal{G}$, let $\mathcal{R}(\mu)\subseteq \mathcal{M}_1$ be the $w(\mathcal{M},L^\infty)$-compact convex set that gives representation \eqref{eq:FatouCon} in Proposition \ref{prop:LebCont} for $\tilde{\tau}_\mu$. Let $\tilde{\mathcal{S}}=\bigcup_{\mu\in\mathcal{S}_\rho^\mathcal{G}}\mathcal{R}(\mu)\subseteq \mathcal{M}_1$. Note that
    \begin{align*}
        \sup_{\nu\in \tilde{\mathcal{S}}}\mathbb{E}^{\nu}[ X]=\sup_{\mu\in \mathcal{S}_\rho^\mathcal{G}}\left(\sup_{\nu\in \mathcal{R}(\mu)}\mathbb{E}^{\nu}[ X]\right)&=\sup_{\mu\in \mathcal{S}_\rho^\mathcal{G}}\left(\mathbb{E}^{\mu}[X]+\sup_{\nu\in \mathcal{R}(\mu)}\left(\mathbb{E}^\nu[ X]-\mathbb{E}^\mu[X]\right)\right)\\&=\sup_{\mu\in \mathcal{S}_\rho^\mathcal{G}}\left(\mathbb{E}^{\mu}[ X]+\tilde{\tau}_\mu(X)\right)=\rho(X),~~~X\in L^\infty.
    \end{align*}
    Therefore $\conv(\tilde{\mathcal{S}})$ is $w(\mathcal{M},L^\infty)$-dense in $\mathcal{S}_\rho$ by Proposition \ref{prop:coherent}. We will show that regularity of $(\tilde{\tau}_\mu)_{\mu\in \mathcal{S}_\rho^\mathcal{G}}$ implies that $\tilde{\mathcal{S}}$ is convex and $w(\mathcal{M},L^\infty)$-closed. Let $\nu_1\in \mathcal{R}(\mu_1)$,  $\nu_2\in \mathcal{R}(\mu_2)$, and $\lambda\in (0,1)$. We have, for $X\in L^\infty$,
    \begin{align*}
        &\mathbb{E}\left[\left(\lambda D_{\nu_1}+(1-\lambda)D_{\nu_2}-\lambda D_{\mu_1}-(1-\lambda)D_{\mu_2}\right)X\right]\\&=\mathbb{E}\left[\left(D_{\nu_1}-D_{\mu_1}\right)\lambda X\right]+\mathbb{E}\left[\left(D_{\nu_2}-D_{\mu_2}\right)(1-\lambda) X\right]\\&\leq \tilde{\tau}_{\mu_1}(\lambda X)+\tilde{\tau}_{\mu_2}((1-\lambda)X) \leq \tau_{\lambda D_{\mu_1}+(1-\lambda)D_{\mu_2}}(X).
    \end{align*}
    Therefore, $\lambda \nu_1+(1-\lambda)\nu_2\in \mathcal{R}(\lambda \mu_1+(1-\lambda)\mu_2)\subseteq \tilde{\mathcal{S}}$
    by Proposition \ref{prop:LebCont}. This implies that $\tilde{\mathcal{S}}$ is convex. Let $(\nu_\lambda)_{\lambda\in \Lambda}$ be a net taking values in $\tilde{\mathcal{S}}$ and $\nu\in \mathcal{M}_1$ such that $\lim_\lambda \nu_\lambda=\nu$ in $w(\mathcal{M},L^\infty)$, and define $\mu_\lambda=\nu_\lambda|_\mathcal{G}$ and $\mu=\nu|_\mathcal{G}$. By Proposition \ref{Proposition:append1},  $\lim_\lambda\mu_\lambda=\mu$ in $w(\mathcal{M}(\mathcal{G}),L^\infty(\mathcal{G}))$. Since $\mathcal{S}_\rho^\mathcal{G}$ is $w(\mathcal{M}(\mathcal{G}),L^\infty(\mathcal{G}))$-closed, we have that $\mu\in \mathcal{S}_\rho^\mathcal{G}$. For $X\in L^\infty$,  
    \begin{align*}
        \mathbb{E}\left[\left(D_\nu-D_\mu\right)X\right]=\lim_\lambda\mathbb{E}\left[\left(D_{\nu_\lambda}-D_{\mu_\lambda}\right)X\right]\leq\liminf_\lambda\tilde{\tau}_{\mu_\lambda}(X)\leq \tilde{\tau}_\mu(X),
    \end{align*}
    implying that $\nu\in \mathcal{R}(\mu)\subseteq \tilde{\mathcal{S}}$ and $\tilde{\mathcal{S}}$ is $w(\mathcal{M},L^\infty)$-closed. Therefore $\tilde{\mathcal{S}}=\mathcal{S}_\rho$, which means $\mathcal{S}(\mu)=\mathcal{R}(\mu)$ for all $\mu\in \mathcal{S}_\rho^\mathcal{G}$, implying $\tilde{\tau}_\mu=\tau_\mu$ for all $\mu\in\mathcal{S}_\rho^\mathcal{G}$. 
\Halmos \end{proof}

\subsection{Additional results and proofs accompanying Section \ref{sec:strong}}
\label{app:strong}

\begin{proof}{Proof of Lemma \ref{lemma:wednesday}:}
    The only if is clear. Let $Y_1,Y_2\in L^\infty(\mathcal{G})$ with  $Y_1\overset{\mathrm{d}}{=}_\mathbb{P}Y_2$ and $Z\in \ker(\mathcal{G})$. Define $X=Z+Y_1$, we have $\mathbb{E}[X|\mathcal{G}]=Y_1$ therefore
    $$\rho(Z+Y_1)=\rho(X)=\rho(X-\mathbb{E}[X|\mathcal{G}]+Y_2)=\rho(Z+Y_2).$$
    Therefore $\rho$ is strongly $\mathcal{G}$-law invariant.
\Halmos \end{proof}

The proof of Theorem \ref{theo:gen1} relies on several results we present below. 

\begin{lemma}\citet[Lemma A.32]{FSII16}\label{lem:fel}
    Let $\mathcal{J}\subseteq \mathcal{F}$ be a sub-$\sigma$-algebra such that $(\Omega,\mathcal{J},\mathbb{P})$ is atomless. For all $X\in L^1(\mathcal{J})$ there exists a uniform $(0,1)$ random variable $U\in L^\infty(\mathcal{J})$ such that $X=F_{\mathbb{P},X}^{-1}(U)$~~$\mathbb{P}$-a.s.
\end{lemma}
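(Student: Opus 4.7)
The plan is to follow the classical construction combining the probability integral transform with an explicit randomization over the atoms of $X$'s distribution. Write $F = F_{\mathbb{P},X}$ and let $\{a_i\}_{i \in I}$ (with $I$ at most countable) enumerate the atoms of $F$, i.e., the points where $\mathbb{P}(X = a_i) = F(a_i) - F(a_i^-) > 0$. Partition $\Omega$ into $A_c = \{X \notin \{a_i\}_{i \in I}\}$ and the pairwise disjoint $A_i = \{X = a_i\}$, all of which lie in $\mathcal{J}$ because $X$ is $\mathcal{J}$-measurable.

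Next, I would verify that for each $i$ with $\mathbb{P}(A_i) > 0$, the conditional space $(A_i, \mathcal{J}|_{A_i}, \mathbb{P}(\cdot \mid A_i))$ is atomless. This is immediate: if a set $B \in \mathcal{J}|_{A_i}$ had positive conditional mass and admitted no $\mathcal{J}$-measurable proper subdivision, it would be an atom of $(\Omega, \mathcal{J}, \mathbb{P})$. Using the standard fact that every atomless probability space carries a uniformly distributed random variable, choose a $V_i \in L^\infty(\mathcal{J}|_{A_i})$ that is uniform on $[F(a_i^-), F(a_i)]$ under $\mathbb{P}(\cdot \mid A_i)$. Then define
\[
    U \;=\; F(X)\,\mathbf{1}_{A_c} \;+\; \sum_{i \in I} V_i \,\mathbf{1}_{A_i},
\]
which is $\mathcal{J}$-measurable and takes values in $[0,1]$.

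To finish, I would check two things. First, $U$ is uniform on $(0,1)$: for any $u \in (0,1)$, splitting $\{U \le u\}$ across $A_c$ and the $A_i$'s, the contribution of $A_c$ accounts exactly for the continuous part of $F$ (since $F(X)\mathbf{1}_{A_c}$ has the distribution of $F$ restricted to its non-atomic part), while each $V_i$ contributes $\min\{1, \max\{0, (u - F(a_i^-))/(F(a_i) - F(a_i^-))\}\} \cdot \mathbb{P}(A_i)$, and these pieces assemble precisely to $u$. Second, $F^{-1}(U) = X$ $\mathbb{P}$-a.s.: on $A_c$, $X$ is a continuity point of $F$ so $F^{-1}(F(X)) = X$; on each $A_i$, $V_i \in [F(a_i^-), F(a_i)]$ and $F^{-1}$ maps the whole interval $(F(a_i^-), F(a_i)]$ to $a_i$, yielding $F^{-1}(V_i) = a_i = X$ (with the endpoint $F(a_i^-)$ occurring only on a null set by continuity of the distribution of $V_i$).

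The only real obstacle is the construction of the $V_i$'s, which requires justifying that the restricted spaces are atomless and invoking the standard existence of a uniform random variable on an atomless probability space; both are routine given the hypothesis on $(\Omega, \mathcal{J}, \mathbb{P})$. Since the statement is literally \citet[Lemma A.32]{FSII16}, I would keep the proof sketch brief and refer to the cited reference for the bookkeeping on the distribution of $U$.
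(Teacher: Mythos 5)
Your construction is correct and is essentially the standard argument for this result: the paper itself gives no proof and simply cites \citet[Lemma A.32]{FSII16}, whose proof is the same device of taking $F(X)$ off the atoms and randomizing uniformly over $[F(a_i^-),F(a_i)]$ on each atom, using atomlessness of the restricted spaces. The only small imprecision is on $A_c$: continuity of $F$ at the realized value of $X$ does not by itself give $F^{-1}(F(X))=X$ (a flat stretch of $F$ terminating at $X$ would break it), but the set of such points carries zero mass under the law of $X$, so the $\mathbb{P}$-a.s.\ identity still holds and the proof goes through.
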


\begin{lemma}\label{lemNeed}
    Let $X\in L^\infty$ (resp. $L^1$), $Y\in L^1$ (resp. $L^\infty$) and $Z\in L^1(\mathcal{G})$ (resp. $L^\infty(\mathcal{G})$) with $Y\overset{\mathrm{d}}{=}_\mathbb{P}Z$. Then there exists $W\in L^
    \infty(\mathcal{G})$ (resp. $L^1(\mathcal{G})$) such that $X\overset{\mathrm{d}}{=}_\mathbb{P}W$ and $\mathbb{E}[XY]=\mathbb{E}[WZ].$
\end{lemma}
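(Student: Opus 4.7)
My plan is to realize $W$ as $F_X^{-1}(T_\lambda(U))$ for a suitable measure-preserving map $T_\lambda$ on $(0,1)$, with the parameter $\lambda$ chosen via the intermediate value theorem. First, since $(\Omega,\mathcal G,\p)$ is atomless and $Z \in L^1(\mathcal G)$, Lemma \ref{lem:fel} supplies a $\mathcal G$-measurable uniform random variable $U$ satisfying $Z = F_{\p,Z}^{-1}(U)$ almost surely. This is the key reduction: it lets me couple any $\mathcal G$-measurable candidate $W$ with $Z$ through the common variable $U$.

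Next, I would introduce a one-parameter family $T_\lambda : (0,1) \to (0,1)$, $\lambda \in [0,1]$, of measure-preserving bijections interpolating between the reversal $T_0(u) = 1-u$ and the identity $T_1(u) = u$. A concrete choice is $T_\lambda(u) = u$ on $[0,\lambda)$ and $T_\lambda(u) = 1+\lambda-u$ on $[\lambda,1]$, whose second piece is the reflection of $[\lambda,1]$ onto itself and hence measure preserving. Setting $W_\lambda := F_X^{-1}(T_\lambda(U))$ then yields a $\mathcal G$-measurable random variable with $W_\lambda \laweq_{\p} X$; it lies in $L^\infty(\mathcal G)$ in the first case (since $F_X^{-1}$ is then bounded) and in $L^1(\mathcal G)$ in the second.

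I would then check that $\lambda \mapsto \E[W_\lambda Z] = \int_0^1 F_X^{-1}(T_\lambda(u)) F_{\p,Z}^{-1}(u)\,\d u$ is continuous on $[0,1]$, using dominated convergence with the integrand bounded by a constant multiple of $|F_{\p,Z}^{-1}|$ (when $X \in L^\infty$) or of $|F_X^{-1}|$ (when $Z \in L^\infty$); in both cases the dominating function is integrable. The endpoint values are the counter-monotonic integral $\int_0^1 F_X^{-1}(1-u) F_{\p,Z}^{-1}(u)\,\d u$ and the comonotonic integral $\int_0^1 F_X^{-1}(u) F_{\p,Z}^{-1}(u)\,\d u$. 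The Fr\'echet--Hoeffding bounds for $\E[XY]$, combined with $Y \laweq_\p Z$ and hence $F_{\p,Y} = F_{\p,Z}$, place $\E[XY]$ in the closed interval with these endpoints. The intermediate value theorem then produces $\lambda^\ast \in [0,1]$ with $\E[W_{\lambda^\ast} Z] = \E[XY]$, and taking $W := W_{\lambda^\ast}$ finishes the proof.

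The main conceptual step is the reduction via Lemma \ref{lem:fel}: once $Z$ is expressed as a quantile transform of a $\mathcal G$-measurable uniform variable, the remainder is an interpolation argument coupled with Fr\'echet--Hoeffding. The only mild subtlety is handling the two integrability regimes with a single dominating function for the continuity step, but this is immediate from the symmetric roles of $X$ and $Z$ in the two cases.
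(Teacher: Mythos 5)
Your proposal is correct and takes essentially the same approach as the paper's proof: both reduce via Lemma \ref{lem:fel} to a $\mathcal G$-measurable uniform $U$ with $Z=F_{\p,Z}^{-1}(U)$, construct a one-parameter family of partial reversals of $U$ (the paper reverses $[0,p]$ and fixes $(p,1]$, you fix $[0,\lambda)$ and reverse $[\lambda,1]$ --- a mirror image of the same device), and apply the intermediate value theorem between the comonotonic and counter-monotonic endpoints, which bracket $\E[XY]$ by the Hardy--Littlewood/Fr\'echet--Hoeffding inequalities. The two integrability regimes are handled by the same dominated-convergence/uniform-integrability argument as in the paper.
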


\begin{proof}{Proof:}

Let $X\in L^\infty$, $Y\in L^1$ and $Z\in L^1(\mathcal{G})$ with $Y\overset{\mathrm{d}}{=}_\mathbb{P}Z$. Since $(\Omega,\mathcal{G},\mathbb{P})$ is atomless, by Lemma \ref{lem:fel}, there exists a uniform on $(0,1)$ random variable $U_Z\in L^\infty(\mathcal{G})$ such that $F_{\mathbb{P},Y}^{-1}(U_Z)=Z$ $\mathbb{P}$-a.s. For all $p\in [0,1]$, define $V_p=(p-U_Z)\id_{\{U_Z\leq p\}}+U_Z\id_{\{U_Z>p\}}$. It is easy to see that $V_p$ is a $\mathcal{G}$-measurable uniform $(0,1)$ random variable.  
For $p\in [0,1]$, define $W_p=F_{\mathbb{P},X}^{-1}(V_p)$. We have $W_p\in L^\infty(\mathcal{G})$ and $W_p\overset{\mathrm{d}}{=}_\mathbb{P}X$. Given a sequence $(p_n)_{n=1}^\infty$ with $p_n\to p$, it is clear that $W_{p_n}Z\to W_pZ$ $\mathbb{P}$-a.s.

For all $n\in \mathbb{N}$, $W_{p_n}\overset{\mathrm{d}}{=}_\mathbb{P}X$, implying that there exists an $M>0$ such that $\mathbb{P}(\sup_{n\in \mathbb{N}}|W_{p_n}|\leq M)=1$. Thus, $$\sup_{n\in \mathbb{N}}|W_{p_n}Z|\leq M|Z|\in L^1(\mathcal{G}),$$
and $(W_{p_n}Z)_{n=1}^\infty$ is uniformly integrable. All of this combines to show that $\lim_{n\to\infty}\mathbb{E}[W_{p_n}Z]=\mathbb{E}[W_pZ].$ Therefore, the function
$$\zeta: [0,1]\to \mathbb{R}:p\mapsto \mathbb{E}[W_pZ]$$
is continuous. Note that
$$\zeta(1)=\mathbb{E}\left[F_{\mathbb{P},X}^{-1}(1-U_Z)F_{\mathbb{P},Y}^{-1}(U_Z)\right]\leq \mathbb{E}[XY]\leq \mathbb{E}\left[F_{\mathbb{P},X}^{-1}(U_Z)F_{\mathbb{P},Y}^{-1}(U_Z)\right]=\zeta(0).$$
Thus, by the intermediate value theorem, there exists a $p_0\in [0,1]$ such that $\mathbb{E}[W_{p_0}Z]=\mathbb{E}[XY].$  Let $X\in L^1$, $Y\in L^\infty$ and $Z\in L^\infty(\mathcal{G})$ with $Y\overset{\mathrm{d}}{=}_\mathbb{P}Z$. The only thing that has to be checked is the uniform integrability condition. Let $M=\|Z\|_\infty+1$ and $K>0$, we have
\begin{align*}
    \mathbb{E}\left[|W_{p_n}Z|;|W_{p_n}Z|>K\right]\leq M\mathbb{E}\left[|W_{p_n}|;|W_{p_n}|>K/M\right]=M\mathbb{E}\left[|X|;|X|>K/M\right]
\end{align*}
for all $n\in \mathbb{N}$. This shows that $(W_{p_n}Z)_{n=1}^\infty$ is uniformly integrable.
\Halmos \end{proof}

For a $\mathcal{G}$-law invariant set $\mathcal{R}\subseteq \mathcal{M}_1(\mathcal{G})$, we say that the CA-assignment $(\tau_\mu)_{\mu\in \mathcal{R}}$ is \emph{$\mathcal{G}$-law invariant} if for all $\mu,\nu\in \mathcal{R}$ with $D_\mu\overset{\mathrm{d}}{=}_\mathbb{P}D_\nu$, it holds that $\tau_\mu=\tau_\nu$.

\begin{proposition}\label{proposition:strong}
    Let $\rho$ be a Fatou coherent risk measure. The following are equivalent.
    \begin{enumerate}[(i) ]
        \item The risk measure $\rho$ is strongly $\mathcal{G}$-law invariant;
        \item the set $\mathcal{S}_\rho^\mathcal{G}$ is $\mathcal{G}$-law invariant and for all $\mu\in \mathcal{S}_\rho$ and $Y\in L^1(\mathcal{G})$ with $\mathbb{E}[D_\mu|\mathcal{G}]\overset{\mathrm{d}}{=}_\mathbb{P}Y$, it holds that 
        $$D_\mu-\mathbb{E}[D_\mu|\mathcal{G}]+Y\in \widehat{\mathcal{S}}_\rho.$$ 
        \item There exists a convex $\mathcal{G}$-law invariant set $\mathcal{R}\subseteq \mathcal{M}_1(\mathcal{G})$ and a $\mathcal{G}$-law invariant CA-assignment $(\tau_\mu)_{\mu\in \mathcal{R}}$ such that 
        $$\rho(X)=\sup_{\mu\in \mathcal{R}}(\mathbb{E}^\mu[X]+\tau_\mu(X)),~~~X\in L^\infty.$$
    \end{enumerate}
\end{proposition}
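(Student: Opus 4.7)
The plan is to establish the equivalences via the cycle $(\mathrm{iii}) \Rightarrow (\mathrm{i}) \Rightarrow (\mathrm{ii}) \Rightarrow (\mathrm{iii})$, with the passage $(\mathrm{i}) \Rightarrow (\mathrm{ii})$ carrying the main technical content. Throughout, I will repeatedly use the decomposition $D_\nu = A_\nu + B_\nu$ with $B_\nu = \mathbb{E}[D_\nu|\mathcal{G}] \in L^1(\mathcal{G})$ and $A_\nu = D_\nu - B_\nu \in \ker(\mathcal{G})$, together with Lemma \ref{lemNeed}, which lets one transfer an $L^p$/$L^q$ pairing between an arbitrary random variable and a $\mathcal{G}$-measurable one across law equivalence.

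For $(\mathrm{iii}) \Rightarrow (\mathrm{i})$, fix $Z \in \ker(\mathcal{G}) \cap L^\infty$ and $X, Y \in L^\infty(\mathcal{G})$ with $X \overset{\mathrm{d}}{=}_\mathbb{P} Y$. Since each $D_\mu$ is $\mathcal{G}$-measurable, $\mathbb{E}^\mu[Z] = \mathbb{E}[D_\mu \mathbb{E}[Z|\mathcal{G}]] = 0$, and $\tau_\mu(Z + X) = \tau_\mu(Z)$ by $\mathcal{G}$-invariance, so $\rho(Z + X) = \sup_{\mu \in \mathcal{R}}(\mathbb{E}^\mu[X] + \tau_\mu(Z))$. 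Applying Lemma \ref{lemNeed} to the triple $(D_\mu, X, Y)$ with $D_\mu \in L^1$, $X \in L^\infty$, $Y \in L^\infty(\mathcal{G})$ yields $D_{\mu'} \in L^1(\mathcal{G})$ with $D_{\mu'} \overset{\mathrm{d}}{=}_\mathbb{P} D_\mu$ (in particular $D_{\mu'} \geq 0$ with $\mathbb{E}[D_{\mu'}]=1$, so $\mu' \in \mathcal{M}_1(\mathcal{G})$) and $\mathbb{E}^\mu[X] = \mathbb{E}^{\mu'}[Y]$. The $\mathcal{G}$-law invariance of $\mathcal{R}$ gives $\mu' \in \mathcal{R}$, and the $\mathcal{G}$-law invariance of the CA-assignment gives $\tau_{\mu'} = \tau_\mu$, so $\rho(Z + X) \leq \rho(Z + Y)$; the reverse inequality follows by symmetry.

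For $(\mathrm{i}) \Rightarrow (\mathrm{ii})$, the core claim is: for $\mu \in \mathcal{S}_\rho$ and $Y \in L^1(\mathcal{G})$ with $Y \overset{\mathrm{d}}{=}_\mathbb{P} B_\mu$ and $A_\mu + Y \geq 0$, one has $A_\mu + Y \in \widehat{\mathcal{S}}_\rho$. Suppose otherwise; since $\mathcal{S}_\rho$ is $w(\mathcal{M}, L^\infty)$-closed and convex, Hahn--Banach separation produces $X \in L^\infty$ with $\mathbb{E}[(A_\mu + Y)X] > \rho(X)$. Split $X = Z + X_\mathcal{G}$, where $X_\mathcal{G} = \mathbb{E}[X|\mathcal{G}]$ and $Z = X - X_\mathcal{G} \in \ker(\mathcal{G})$; the cross terms $\mathbb{E}[A_\mu X_\mathcal{G}]$ and $\mathbb{E}[Y Z]$ vanish, leaving $\mathbb{E}[(A_\mu + Y)X] = \mathbb{E}[A_\mu Z] + \mathbb{E}[Y X_\mathcal{G}]$. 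Lemma \ref{lemNeed} applied to $X_\mathcal{G} \in L^\infty$, $Y \in L^1$, $B_\mu \in L^1(\mathcal{G})$ with $Y \overset{\mathrm{d}}{=}_\mathbb{P} B_\mu$ yields $W \in L^\infty(\mathcal{G})$ with $W \overset{\mathrm{d}}{=}_\mathbb{P} X_\mathcal{G}$ and $\mathbb{E}[X_\mathcal{G} Y] = \mathbb{E}[W B_\mu]$. Consequently $\mathbb{E}[(A_\mu + Y)X] = \mathbb{E}[A_\mu Z] + \mathbb{E}[B_\mu W] = \mathbb{E}[D_\mu(Z + W)] \leq \rho(Z + W)$. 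Strong $\mathcal{G}$-law invariance then gives $\rho(Z + W) = \rho(Z + X_\mathcal{G}) = \rho(X)$, contradicting the strict inequality. Conditioning on $\mathcal{G}$ in the core claim recovers the $\mathcal{G}$-law invariance of $\mathcal{S}_\rho^\mathcal{G}$.

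For $(\mathrm{ii}) \Rightarrow (\mathrm{iii})$, invoke Theorem \ref{theorem:1} with $\mathcal{R} = \mathcal{S}_\rho^\mathcal{G}$ and $\tau_\mu(X) = \sup_{\nu \in \mathcal{E}(\mu) \cap \mathcal{S}_\rho}(\mathbb{E}^\nu[X] - \mathbb{E}^\mu[X])$; the first half of (ii) provides the $\mathcal{G}$-law invariance of $\mathcal{R}$. For the $\mathcal{G}$-law invariance of the CA-assignment, take $\mu, \mu' \in \mathcal{R}$ with $D_\mu \overset{\mathrm{d}}{=}_\mathbb{P} D_{\mu'}$ and $\nu \in \mathcal{E}(\mu) \cap \mathcal{S}_\rho$; the second half of (ii), applied with $Y = D_{\mu'}$, yields $\nu' \in \mathcal{E}(\mu') \cap \mathcal{S}_\rho$ with $D_{\nu'} = A_\nu + D_{\mu'}$, so $\mathbb{E}^{\nu'}[X] - \mathbb{E}^{\mu'}[X] = \mathbb{E}[A_\nu X] = \mathbb{E}^\nu[X] - \mathbb{E}^\mu[X]$, and interchanging $\mu, \mu'$ yields $\tau_\mu = \tau_{\mu'}$. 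The main obstacle will be handling the positivity constraint $A_\mu + Y \geq 0$ tacitly imposed by the membership $A_\mu + Y \in \widehat{\mathcal{S}}_\rho$ and the matching constraint in the $\nu \mapsto \nu'$ correspondence; a related delicate point is that (ii) asserts $\mathcal{G}$-law invariance of $\mathcal{S}_\rho^\mathcal{G}$ itself rather than merely of its $w(\mathcal{M}(\mathcal{G}), L^\infty(\mathcal{G}))$-closure as guaranteed by Theorem \ref{th:friday}, and is related to the open question in Remark \ref{rem:open}.
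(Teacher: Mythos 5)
Your overall route coincides with the paper's: Lemma \ref{lemNeed} drives $(\mathrm{iii})\Rightarrow(\mathrm{i})$, Hahn--Banach separation plus Lemma \ref{lemNeed} drives $(\mathrm{i})\Rightarrow(\mathrm{ii})$, and the $\nu\mapsto\nu'$ transfer inside the canonical CA-assignment from Theorem \ref{theorem:1} drives $(\mathrm{ii})\Rightarrow(\mathrm{iii})$. The one genuine defect is the extra hypothesis $A_\mu+Y\geq 0$ that you build into your ``core claim'' and then flag as ``the main obstacle'' without resolving. Keeping that hypothesis breaks the chain: your weakened version of (ii) cannot be invoked with $Y=D_{\mu'}$ in $(\mathrm{ii})\Rightarrow(\mathrm{iii})$, since $A_\nu+D_{\mu'}\geq 0$ is not known a priori, and it also fails to deliver the $\mathcal{G}$-law invariance of $\mathcal{S}_\rho^\mathcal{G}$ for arbitrary $Y\overset{\mathrm{d}}{=}_{\mathbb{P}}\mathbb{E}[D_\mu|\mathcal{G}]$, because conditioning only covers those $Y$ admitted by the weakened claim.

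The repair is to notice that your own separation argument never uses positivity of $A_\mu+Y$: $\widehat{\mathcal{S}}_\rho$ is $w(L^1,L^\infty)$-closed and convex, so for \emph{any} $Y\in L^1(\mathcal{G})$ with $Y\overset{\mathrm{d}}{=}_{\mathbb{P}}\mathbb{E}[D_\mu|\mathcal{G}]$ and $A_\mu+Y\notin\widehat{\mathcal{S}}_\rho$, separation yields $X\in L^\infty$ with $\mathbb{E}[(A_\mu+Y)X]>\rho(X)$, and the rest of your computation (vanishing cross terms, Lemma \ref{lemNeed}, $\mathbb{E}[(A_\mu+Y)X]=\mathbb{E}[D_\mu(Z+W)]\leq\rho(Z+W)=\rho(X)$) contradicts strong $\mathcal{G}$-law invariance regardless of the sign of $A_\mu+Y$. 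Nonnegativity of $A_\mu+Y$ is therefore a \emph{conclusion} of (i), not a hypothesis to be imposed; this is exactly how the paper argues, and with the hypothesis dropped your $(\mathrm{ii})\Rightarrow(\mathrm{iii})$ step goes through verbatim. Your closing worry about closures is also moot here: the $(\mathrm{i})\Rightarrow(\mathrm{ii})$ argument places $Y=\mathbb{E}[A_\mu+Y\,|\,\mathcal{G}]$ in $\widehat{\mathcal{S}}_\rho^\mathcal{G}$ itself, and Theorem \ref{theorem:1} only needs $\mathcal{R}=\mathcal{S}_\rho^\mathcal{G}$ to be convex, not closed, so Remark \ref{rem:open} plays no role in this proposition.
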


\begin{proof}{Proof:}
    $(\mathrm{i})\Rightarrow (\mathrm{ii}):$ We will prove that $\mathcal{S}_\rho^\mathcal{G}$ is $\mathcal{G}$-law invariant after proving the latter claim. Let $\mu\in \mathcal{S}_\rho$ and let $Y\in L^1(\mathcal{G})$ with $\mathbb{E}[D_\mu|\mathcal{G}]\overset{\mathrm{d}}{=}_\mathbb{P}Y$ such that $$D_\mu-\mathbb{E}[D_\mu|\mathcal{G}]+Y\notin \widehat{\mathcal{S}}_\rho.$$
    By the Hahn-Banach separation Theorem, there exists $X\in L^\infty$ such that $$\rho(X)<\mathbb{E}[(D_\mu-\mathbb{E}[D_\mu|\mathcal{G}]+Y)X].$$ 
    By Lemma \ref{lemNeed}, find $Z\in L^\infty(\mathcal{G})$ with $\mathbb{E}[X|\mathcal{G}]\overset{\mathrm{d}}{=}_\mathbb{P}Z$ such that $\mathbb{E}[Y\mathbb{E}[X|\mathcal{G}]]=\mathbb{E}[\mathbb{E}[D_\mu|\mathcal{G}] Z]$. We have
    \begin{align*}
        \mathbb{E}[(D_\mu-\mathbb{E}[D_\mu|\mathcal{G}]+Y)X]&=\mathbb{E}[(D_\mu-\mathbb{E}[D_\mu|\mathcal{G}])X]+\mathbb{E}[YX]\\&=\mathbb{E}[(D_\mu-\mathbb{E}[D_\mu|\mathcal{G}])X]+\mathbb{E}[Y\mathbb{E}[X|\mathcal{G}]]\\&=\mathbb{E}[D_\mu(X-\mathbb{E}[X|\mathcal{G}])]+\mathbb{E}[\mathbb{E}[D_\mu|\mathcal{G}] Z]\\&=\mathbb{E}[D_\mu(X-\mathbb{E}[X|\mathcal{G}])]+\mathbb{E}[D_\mu Z]\\&=\mathbb{E}[D_\mu(X-\mathbb{E}[X|\mathcal{G}]+Z)]\\&=\mathbb{E}^\mu[X-\mathbb{E}[X|\mathcal{G}]+Z].
    \end{align*}
    Therefore, $\rho(X)<\mathbb{E}^\mu[X-\mathbb{E}[X|\mathcal{G}]+Z]\leq \rho(X-\mathbb{E}[X|\mathcal{G}]+Z)$, implying that $\rho$ is not strongly $\mathcal{G}$-law invariant. We will now prove that $\mathcal{S}_\rho^\mathcal{G}$ is $\mathcal{G}$-law invariant. Let $\mu \in \mathcal{S}_\rho$ and $Y\in L^1(\mathcal{G})$ such that $\mathbb{E}[D_\mu|\mathcal{G}]\overset{\mathrm{d}}{=}_\mathbb{P}Y$. We have $D_\mu-\mathbb{E}[D_\mu]+Y\in \widehat{\mathcal{S}}_\rho$ and $Y=\mathbb{E}[D_\mu-\mathbb{E}[D_\mu|\mathcal{G}]+Y|\mathcal{G}]\in \widehat{\mathcal{S}}_\rho^\mathcal{G}$. 
    
    $(\mathrm{ii})\Rightarrow (\mathrm{iii}):$
    Since $\rho$ is a Fatou coherent risk measure, by Theorem \ref{theorem:1}, we have 
    \begin{align*}
    \rho(X)=\sup_{\mu\in\mathcal{S}^\mathcal{G}_\rho}(\mathbb{E}^\mu[X]+\tau_\mu(X)),~~~ X\in L^\infty,
    \end{align*}
    where $\tau_\mu(X)=\sup_{\nu\in \mathcal{S}(\mu)}(\mathbb{E}^\nu[X]-\mathbb{E}^\mu[X])$, $X\in L^\infty$ and $\mathcal{S}(\mu)=\mathcal{E}(\mu)\cap \mathcal{S}_\rho$. The set $\mathcal{S}_\rho^\mathcal{G}$ is $\mathcal{G}$-law invariant by assumption and is convex as $\mathcal{S}_\rho$ is convex. Let $\mu_1,\mu_2\in\mathcal{S}_\rho^{\mathcal{G}}$ with $D_{\mu_1}\overset{\mathrm{d}}{=}_\mathbb{P}D_{\mu_2}$ and $\nu_1\in \mathcal{S}(\mu_1)$. By assumption, there exists $\upsilon\in \mathcal{S}_\rho$ with $D_\upsilon=D_{\nu_1}-D_{\mu_1}+D_{\mu_2}$. This implies that $\upsilon\in \mathcal{E}(\mu_2)$ and thus $\upsilon\in \mathcal{S}(\mu_2)$. Yet we have $D_\upsilon-D_{\mu_2}=D_{\nu_1}-D_{\mu_1}$, implying that for all $X\in L^\infty$,
\begin{align*}
    \tau_{\mu_2}(X) =\sup_{\upsilon\in \mathcal{S}(\mu_2)}\mathbb{E}[(D_\upsilon-D_{\mu_2})X]\geq \sup_{\upsilon\in \mathcal{S}(\mu_1)}\mathbb{E}[(D_\upsilon-D_{\mu_1})X]=\tau_{\mu_1}(X).
\end{align*}
By symmetry, we get the reverse inequality. 

$(\mathrm{iii})\Rightarrow (\mathrm{i}):$ Let $\mu\in \mathcal{R}$, $X\in L^\infty$ and $Y\in L^\infty(\mathcal{G})$ such that $\mathbb{E}[X|\mathcal{G}]\overset{\mathrm{d}}{=}_\mathbb{P}Y$. Define $W=X-\mathbb{E}[X|\mathcal{G}]+Y$. By Lemma \ref{lemNeed}, find $Z\in L^1(\mathcal{G})$ such $D_\mu\overset{\mathrm{d}}{=}_\mathbb{P}Z$ and $\mathbb{E}[D_\mu Y]=\mathbb{E}[Z \mathbb{E}[X|\mathcal{G}]]$. Let $\nu\in \mathcal{M}_1(\mathcal{G})$ be such that $D_\nu=Z$.
Since $\mathcal{R}$ is $\mathcal{G}$-law invariant, $\nu\in \mathcal{R}$.
We have 
\begin{align*}
    \mathbb{E}^\mu[W]+\tau_\mu(W)=\mathbb{E}[D_\mu Y]+\tau_\mu(X)=\mathbb{E}[D_\nu \mathbb{E}[X|\mathcal{G}]]+\tau_\nu(X)=\mathbb{E}[D_\nu X]+\tau_\nu(X)=\mathbb{E}^\nu[X]+\tau_\nu(X).
\end{align*}
 By Lemma \ref{lemNeed}, find $Z\in L^1(\mathcal{G})$ such $D_\mu\overset{\mathrm{d}}{=}_\mathbb{P}Z$ and $\mathbb{E}[D_\mu \mathbb{E}[X|\mathcal{G}]]=\mathbb{E}[ZY]$. Let $\nu\in \mathcal{M}_1(\mathcal{G})$ be such that $D_\nu=Z$. Since $\mathcal{R}$ is $\mathcal{G}$-law invariant, $\nu\in \mathcal{R}$. 
 We have
 \begin{align*}
     \mathbb{E}^\mu[X]+\tau_\mu(X)=\mathbb{E}[D_\mu \mathbb{E}[X|\mathcal{G}]]+\tau_\mu(W)=\mathbb{E}[D_\nu Y]+\tau_\nu(W)&=\mathbb{E}[D_\nu W]+\tau_\nu(W)\\&=\mathbb{E}^\nu[W]+\tau_\nu(W).
 \end{align*}
Therefore, $\{\mathbb{E}^\mu[X]+\tau_\mu(X):\mu\in \mathcal{R}\}=\{\mathbb{E}^\mu[W]+\tau_\mu(W):\mu\in \mathcal{R}\},$ and $\rho(X)=\rho(W).$
\Halmos \end{proof}

\begin{proof}{Proof of Theorem \ref{theo:gen1}:}
    $(\mathrm{i})\Rightarrow(\mathrm{ii})$: By Proposition \ref{proposition:strong}, we have that there exist a $\mathcal{G}$-law-invariant convex $\mathcal{R}\subseteq \mathcal{M}_1(\mathcal{G})$ and a $\mathcal{G}$-law-invariant CA-assignment $(\tau_\mu)_{\mu\in \mathcal{R}}$ such that 
    \begin{align*}
            \rho(X)=\sup_{\mu\in\mathcal{R}}\left(\mathbb{E}^{\mu}[ X]+\tau_\mu(X)\right),~~~X\in L^\infty.
        \end{align*} 
    By \citet[Theorem 4.62]{FSII16}, for $Q\in \mathcal{M}_B$ and $\mu\in \mathcal{M}_Q$, we have
    \begin{align*}
        \mathcal{M}_Q=\left\{\nu\in \mathcal{M}_1(\mathcal{G})\mid D_\nu\overset{\d}{=}D_\mu\right\}.
    \end{align*}
    Define $\mathcal{D}=\{Q\in \mathcal{M}_\mathcal{B}\mid \mathcal{M}_Q\cap \mathcal{R}\neq \varnothing\}$, and note that $\mathcal{R}=\bigcup_{Q\in \mathcal{D}}\mathcal{M}_Q$ since $\mathcal{R}$ is $\mathcal{G}$-law invariant.  Define $\tau_Q=\tau_\mu$ for some $\mu\in \mathcal{M}_Q$ (thus all $\mu\in \mathcal{M}_Q$ since $(\tau_\mu)_{\mu\in \mathcal{R}}$ is law invariant), therefore 
    \begin{align*}
        \rho(X)=\sup_{\mu\in\mathcal{R}}\left(\mathbb{E}^{\mu}[X]+\tau_\mu(X)\right)&=\sup_{Q\in \mathcal{D}}\sup_{\mu\in \mathcal{M}_Q}\left(\mathbb{E}^{\mu}[X]+\tau_\mu(X)\right)\\&\overset{\eqref{eq:sup}}{=}\sup_{Q\in \mathcal{D}}\left(\int_{[0,1)}\mathrm{ES}_\alpha(\mathbb{E}[X|\mathcal{G}])Q(\d\alpha)+\tau_Q(X)\right).
    \end{align*}
    $(\mathrm{ii})\Rightarrow (\mathrm{i})$: The fact that $\rho$ is cash invariant, convex, and positively homogenous is clear. Let $X,Y\in L^\infty$ with $X\leq Y$ and $Q\in \mathcal{D}$. Then we have
    \begin{align*}
        \int_{[0,1)}\mathrm{ES}_\alpha\left(\mathbb{E}[X|\mathcal{G}]\right)Q(\d\alpha)+\tau_Q(X)&=\sup_{\mu\in \mathcal{M}_Q}\left(\mathbb{E}^\mu\left[X\right]+\tau_Q(X)\right)\\&\leq \sup_{\mu\in \mathcal{M}_Q}\left(\mathbb{E}^{\mu}\left[ Y\right]+\tau_Q(Y)\right)\\&=\int_{[0,1)}\mathrm{ES}_\alpha\left(\mathbb{E}[Y|\mathcal{G}]\right)Q(\d\alpha)+\tau_Q(Y).
    \end{align*}
    Therefore $\rho$ is monotonic as it is the supremum of monotonic functionals, so it is coherent. 
    
    Fix $Q\in\mathcal{D}$. The mapping from      $L^\infty $ to $\mathbb{R}$  given by
    \begin{align*}
    X\mapsto \int_{[0,1)}\mathrm{ES}_\alpha(\mathbb{E}[X|\mathcal{G}])Q(\d\alpha)
    \end{align*}
   is Fatou continuous and coherent by Propositions \ref{prop:lawinv} and  \ref{FatouComp} (see Section \ref{app:32}), and therefore $w(L^\infty,\mathcal{M})$-lower semicontinuous. Thus, the functional  on $L^\infty$ given by 
    \begin{align*}
        X\mapsto \int_{[0,1)}\mathrm{ES}_\alpha(\mathbb{E}[X|\mathcal{G}])Q(\d\alpha)+\tau_Q(X)
    \end{align*}
    is $w(L^\infty,\mathcal{M})$-lower semicontinuous by Corollary \ref{corr2}, therefore $\rho$ is $w(L^\infty,\mathcal{M})$-lower semicontinuous as it is the supremum of $w(L^\infty,\mathcal{M})$-lower semicontinuous functionals. Since $\rho$ is coherent, it is Fatou. Let $X\in L^\infty$ and $Y\in L^\infty(\mathcal{G})$ such that $\mathbb{E}[X|\mathcal{G}]\overset{\mathrm{d}}{=}_\mathbb{P}Y$, let $Z=X-\mathbb{E}[X|\mathcal{G}]+Y$. For all $Q\in D$,
    \begin{align*}
      \int_{[0,1)}\mathrm{ES}_\alpha(\mathbb{E}[Z|\mathcal{G}])Q(\d \alpha)+\tau_Q(Z)&=\int_{[0,1)}\mathrm{ES}_\alpha(Y)Q(\d\alpha)+\tau_Q(X)\\&=\int_{[0,1)}\mathrm{ES}_\alpha(\mathbb{E}[X|\mathcal{G}])Q(\d\alpha)+\tau_Q(X),  
    \end{align*}
    where the last equality follows from the fact that $\mathrm{ES}_\alpha$ is law invariant. Therefore, $\rho(Z)=\rho(X)$.  
\Halmos \end{proof}

\subsection{Examples}\label{sec:ku-ex}

This section will give examples of partially law-invariant and strongly partially law-invariant risk measures. 
We use the following lemma from \citet[Section 4.3.i]{D3} frequently in this section.

\begin{lemma}
    \label{lem:good}
    Let $\rho_1:L^\infty\to\mathbb{R}$ and $\rho_2:L^\infty\to\mathbb{R}$ be Fatou coherent risk measures. Then $\rho_1\vee\rho_2$ is a Fatou coherent risk measure with supporting set equal to $\conv(\mathcal{S}_{\rho_1}\cup\mathcal{S}_{\rho_2})$.
\end{lemma}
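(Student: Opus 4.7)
The plan is to verify the defining axioms of a Fatou coherent risk measure for $\rho_1 \vee \rho_2$ directly, and then to identify its supporting set via the dual representation in Proposition~\ref{prop:coherent}. Cash invariance and positive homogeneity are immediate because the maximum commutes with the addition of a constant and with nonnegative scaling; monotonicity is trivial; and convexity follows from the pointwise inequality $\max(\lambda a_1 + (1-\lambda) a_2,\ \lambda b_1 + (1-\lambda) b_2) \leq \lambda \max(a_1,b_1) + (1-\lambda) \max(a_2,b_2)$ applied with $a_i = \rho_i(X)$ and $b_i = \rho_i(Y)$. For Fatou continuity, given a uniformly bounded sequence $X_n \to X$ a.s., each $\rho_i$ satisfies $\rho_i(X) \leq \liminf_n \rho_i(X_n) \leq \liminf_n (\rho_1 \vee \rho_2)(X_n)$, and taking the maximum over $i \in \{1,2\}$ yields the Fatou inequality for $\rho_1 \vee \rho_2$.

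For the identification of the supporting set, I would use linearity of $\mu \mapsto \mathbb{E}^\mu[X]$ to note that the supremum over $\mathcal{S}_{\rho_1} \cup \mathcal{S}_{\rho_2}$ coincides with that over $\conv(\mathcal{S}_{\rho_1} \cup \mathcal{S}_{\rho_2})$. Combining with the dual representations of $\rho_1$ and $\rho_2$ yields
\begin{align*}
(\rho_1 \vee \rho_2)(X) \;=\; \sup_{\mu \in \mathcal{S}_{\rho_1} \cup \mathcal{S}_{\rho_2}} \mathbb{E}^\mu[X] \;=\; \sup_{\mu \in \conv(\mathcal{S}_{\rho_1} \cup \mathcal{S}_{\rho_2})} \mathbb{E}^\mu[X].
\end{align*}
The uniqueness assertion in Proposition~\ref{prop:coherent} then identifies $\mathcal{S}_{\rho_1 \vee \rho_2}$ as the $w(\mathcal{M}, L^\infty)$-closure of $\conv(\mathcal{S}_{\rho_1} \cup \mathcal{S}_{\rho_2})$. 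The inclusion $\mathcal{S}_{\rho_1} \cup \mathcal{S}_{\rho_2} \subseteq \mathcal{S}_{\rho_1 \vee \rho_2}$ is also directly visible from $\mathbb{E}^\mu[X] \leq \rho_i(X) \leq (\rho_1 \vee \rho_2)(X)$ for $\mu \in \mathcal{S}_{\rho_i}$, and convexity of $\mathcal{S}_{\rho_1 \vee \rho_2}$ propagates this to the convex hull; for the reverse inclusion, any $\mu$ outside the closed convex hull can be strictly separated from $\conv(\mathcal{S}_{\rho_1} \cup \mathcal{S}_{\rho_2})$ by some $X \in L^\infty$ via Hahn--Banach, contradicting $\mathbb{E}^\mu[X] \leq (\rho_1 \vee \rho_2)(X)$.

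The only delicate point, and what I expect to be the main obstacle to a literal reading of the statement, is that the lemma asserts equality with $\conv(\mathcal{S}_{\rho_1} \cup \mathcal{S}_{\rho_2})$ rather than with its $w(\mathcal{M}, L^\infty)$-closure. Closedness of this convex hull follows when each $\mathcal{S}_{\rho_i}$ is $w(\mathcal{M}, L^\infty)$-compact, since then $\conv(\mathcal{S}_{\rho_1} \cup \mathcal{S}_{\rho_2})$ is the continuous image of the compact set $[0,1] \times \mathcal{S}_{\rho_1} \times \mathcal{S}_{\rho_2}$ under $(\lambda, \mu_1, \mu_2) \mapsto \lambda \mu_1 + (1-\lambda) \mu_2$, hence compact and closed. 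Such compactness is guaranteed, for instance, by Lebesgue continuity via Proposition~\ref{prop:curious}. In the generality of merely Fatou coherent risk measures, I would read the equality in the lemma as being up to $w(\mathcal{M}, L^\infty)$-closure, which is still enough for the later uses of the lemma in the paper.
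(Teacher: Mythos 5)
Your argument is correct, and it is worth noting that the paper itself gives no proof of this lemma: it is quoted from Delbaen's lecture notes (\citealp{D3}, Section 4.3.i), so your write-up supplies a proof the paper omits. The verification of coherence and Fatou continuity is routine and right, and the duality step is exactly what one should do: the supremum of $\mathbb{E}^\mu[X]$ over $\conv(\mathcal{S}_{\rho_1}\cup\mathcal{S}_{\rho_2})$ equals the supremum over the union by linearity, which equals $(\rho_1\vee\rho_2)(X)$, so Proposition \ref{prop:coherent} pins down $\mathcal{S}_{\rho_1\vee\rho_2}$ as the $w(\mathcal{M},L^\infty)$-closure of $\conv(\mathcal{S}_{\rho_1}\cup\mathcal{S}_{\rho_2})$.

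The one substantive point is the closure issue you flag, and your diagnosis is accurate. For merely Fatou coherent risk measures the supporting sets are $w(\mathcal{M},L^\infty)$-closed and convex but need not be compact (compactness is equivalent to the Lebesgue property), and the convex hull of a union of two closed convex sets is not automatically closed; so the literal equality in the statement is only guaranteed under an additional hypothesis such as weak compactness of the $\mathcal{S}_{\rho_i}$, under which $\conv(\mathcal{S}_{\rho_1}\cup\mathcal{S}_{\rho_2})$ is the continuous image of $[0,1]\times\mathcal{S}_{\rho_1}\times\mathcal{S}_{\rho_2}$ and hence compact. This caveat is harmless for the paper: in both examples of Section \ref{sec:ku-ex} the lemma is applied with one set a singleton and the other, $\mathcal{S}_1$, having densities uniformly bounded by $2$, hence uniformly integrable and $w(\mathcal{M},L^\infty)$-compact by Dunford--Pettis, so the convex hull is already closed and the equality holds exactly. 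Reading the general statement ``up to closure,'' as you propose, is the right interpretation and loses nothing in the applications.
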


Let $(\Omega_k,\mathcal{F}_k,\mathbb{P}_k)=([0,1],\mathcal{B}([0,1]),\lambda)$ for $k=1,2$. For the rest of the section let $(\Omega,\mathcal{F},\mathbb{P})=(\Omega_1\times \Omega_2,\mathcal{F}_1\otimes \mathcal{F}_2,\mathbb{P}_1\times\mathbb{P}_2)$ and $\mathcal{G}=\sigma(\pi_1)$ where $\pi_1(\omega,\tilde{\omega})=\omega$. Define the set $\mathcal{S}_1=\{\mu\in \mathcal{M}_1\mid D_\mu\in L^1(\mathcal{G}), D_\mu\leq 2\}$. This set is $w(\mathcal{M},L^\infty)$-closed and convex and corresponds to the Fatou coherent risk measure $\rho_1=\mathrm{ES}_{1/2}(\mathbb{E}[X|\mathcal{G}])$. 

The following example gives a Fatou $\mathcal{G}$-law-invariant coherent risk measure that is not strongly $\mathcal{G}$-law invariant.

\begin{example}
    Define $\nu\in\mathcal{M}_1$ by $D_\nu:\Omega\to \mathbb{R}:(\omega,\tilde{\omega})\mapsto 4\omega\tilde{\omega}$. We claim that the Fatou coherent risk measure $\rho:L^\infty\to\mathbb{R}:X\mapsto \mathbb{E}[D_\nu X]\vee \rho_1(X)$, which has supporting set $\mathcal{S}_\rho=\conv(\mathcal{S}_1\cup \{\nu\})$, is $\mathcal{G}$-law invariant but not strongly $\mathcal{G}$-law invariant. To see this, note $\mathbb{E}[D_\nu|\mathcal{G}](\omega,\tilde{\omega})=2\omega$. So the measure associated to $\mathbb{E}[D_\nu|\mathcal{G}]$ is in the set $\mathcal{S}_1$. Therefore, $\conv(\mathcal{S}_1\cup \{\nu\})^\mathcal{G}=\conv(\mathcal{S}_1^\mathcal{G}\cup\{\nu\}^\mathcal{G})=\mathcal{S}_1$,
    which is $\mathcal{G}$-law invariant. Thus, $\rho$ is $\mathcal{G}$-law invariant by Theorem \ref{th:friday}. Define the following random variables
    $$
        X(\omega,\tilde{\omega})=\omega,~~~
        Y(\omega,\tilde{\omega})=1-\omega~~~\text{and}~~~
        Z(\omega,\tilde{\omega})=\tilde{\omega}-1/2.
    $$
    We have that $Z\in \ker(\mathcal{G})$, $\mathbb{E}[D_\nu Z]=1/6$, $X,Y\in L^\infty(\mathcal{G})$ and $X\overset{\mathrm{d}}{=}_\mathbb{P}Y$, $\mathrm{ES}_\alpha(X)=\mathrm{ES}_\alpha(Y)=3/4$, $\mathbb{E}[D_\nu X]=2/3$ and $\mathbb{E}[D_\nu Y]=1/3.$ Thus, $\rho(Z+X)=5/6$ while $\rho(Z+Y)=3/4$, implying that $\rho$ is not strongly $\mathcal{G}$-law invariant by definition.
\end{example}

The following example gives a Fatou strongly $\mathcal{G}$-law invariant coherent risk measure that is not in the form of \eqref{eq:conditional1}.

\begin{example}
    Define $\nu\in\mathcal{M}_1$ by $D_\nu:\Omega\to \mathbb{R}:(\omega,\tilde{\omega})\mapsto 2\tilde{\omega}$. We claim that the Fatou coherent risk measure $\rho:L^\infty\to\mathbb{R}:X\mapsto \mathbb{E}[D_\nu X]\vee \rho_1(X)$ is strongly $\mathcal{G}$-law invariant. To see this let $Z\in \ker(\mathcal{G})$ and $X,Y\in L^\infty(\mathcal{G})$ with $X\overset{\mathrm{d}}{=}_\mathbb{P}Y$. We have
    \begin{align*}
        \rho(Z+X)=\rho_1(Z+X)\vee \mathbb{E}[D_\nu(Z+X)]&=\rho_1(X)\vee (\mathbb{E}[D_\nu Z]+\mathbb{E}[X])\\&=\rho_1(Y)\vee (\mathbb{E}[D_\nu Z]+\mathbb{E}[Y])=\rho(Z+Y).
    \end{align*}
    Note that $\widehat{\mathcal{S}}_\rho$ is not contained in $L^1(\G)$,
    thus $\rho$ is not in the form of \eqref{eq:conditional1}.
\end{example}

\subsection{Additional results and proofs accompanying Section \ref{sec:gen}}
\label{app:convex}

To do this, we will need some results from convex analysis. Let $\left(\mathfrak{X},\mathfrak{Y},\langle\cdot,\cdot\rangle\right)$ be a dual pairing and $f:\mathfrak{X}\to\mathbb{R}\cup\{\infty\}$ such that $\dom(f)=\left\{x\in\mathfrak{X}\mid f(x)<\infty\right\}\neq \varnothing$, define
\begin{align*}
    f^*&:\mathfrak{Y}\to\mathbb{R}\cup\{\infty\}:y\mapsto \sup_{x\in \mathfrak{X}}\left(\langle x,y\rangle-f(x)\right),\\f^{**}&:\mathfrak{X}\to\mathbb{R}\cup\{\infty,-\infty\}:x\mapsto \sup_{y\in \mathfrak{Y}}\left(\langle x,y\rangle -f^*(y)\right).
\end{align*}

The upper epigraph of $f$ is given by $G_f=\left\{(x,c)\in \mathfrak{X}\times\mathbb{R}\mid f(x)\leq c \right\}\subseteq \mathfrak{X}\times \mathbb{R}$. Equip $\mathfrak{X}\times \mathbb{R}$ with the product topology $w(\mathfrak{X},\mathfrak{Y})\times E_{\mathbb R}$, where $E_{\mathbb R}$ is the standard Euclidean topology.  
\begin{proposition}\label{prop:convex1}
    Let $f:\mathfrak{X}\to\mathbb{R}\cup\{\infty\}$ with $\dom(f)\neq \varnothing$.  
    \begin{enumerate}[(i) ]
        \item The mapping $f$ is convex if and only if $G_f$ is convex;
        \item the mapping $f$ is $w(\mathfrak{X},\mathfrak{Y})$-lower semicontinuous if and only if $G_f$ is $\left(w(\mathfrak{X},\mathfrak{Y})\times E_{\mathbb R}\right)$-closed;
        \item for all $x\in \mathfrak{X}$, $f^{**}(x)=\sup\left(g(x)\mid g\leq f,\text{ } g\text{ is }w(\mathfrak{X},\mathfrak{Y})\text{-lower semicontinuous and convex}\right)$ and $G_{f^{**}}$ is equal to the $(w(\mathfrak{X},\mathfrak{Y})\times E_{\mathbb R})$-closure of $\conv(G_f)$;
        \item the mapping $f$ is $w(\mathfrak{X},\mathfrak{Y})$-lower semicontinuous and convex if and only if $f^{**}=f$.
    \end{enumerate}
\end{proposition}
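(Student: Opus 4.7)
The plan is to verify four standard convex-analysis identities in turn, leveraging the epigraphical reformulation wherever possible. The proof splits naturally into the elementary parts (i), (ii), (iv) and the nontrivial Fenchel--Moreau-type part (iii).

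For part (i), I would show convexity of $f$ and of $G_f$ are equivalent by a direct computation. If $f$ is convex and $(x_1,c_1),(x_2,c_2)\in G_f$ with $\lambda\in[0,1]$, then $f(\lambda x_1+(1-\lambda)x_2)\leq \lambda f(x_1)+(1-\lambda)f(x_2)\leq \lambda c_1+(1-\lambda)c_2$, placing the convex combination in $G_f$. Conversely, if $G_f$ is convex, then $(x_i,f(x_i))\in G_f$ for $x_i\in\dom(f)$ immediately yields $f(\lambda x_1+(1-\lambda)x_2)\le \lambda f(x_1)+(1-\lambda)f(x_2)$, and the inequality is vacuous if either $x_i\notin\dom(f)$.

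For part (ii), I would argue via nets. If $f$ is $w(\mathfrak{X},\mathfrak{Y})$-lower semicontinuous and $(x_\alpha,c_\alpha)\to (x,c)$ in $G_f$, then $f(x)\leq \liminf_\alpha f(x_\alpha)\leq \liminf_\alpha c_\alpha=c$, so $(x,c)\in G_f$. Conversely, if $G_f$ is closed, I would argue by contradiction: a failure of lower semicontinuity at $x_0$ would give a net $x_\alpha\to x_0$ with $\limsup_\alpha f(x_\alpha)=L<f(x_0)$, so passing to a subnet and taking $c_\alpha=f(x_\alpha)\vee L$ yields $(x_\alpha,c_\alpha)\in G_f$ with $(x_\alpha,c_\alpha)\to(x_0,L)\in G_f$, contradicting $f(x_0)>L$.

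For part (iii), I would first observe that for each $y\in\mathfrak{Y}$ and $\beta\in\mathbb{R}$ with $\beta\geq f^*(y)$, the $w(\mathfrak{X},\mathfrak{Y})$-continuous affine map $h_{y,\beta}(x)=\langle x,y\rangle-\beta$ satisfies $h_{y,\beta}\leq f$; hence $f^{**}(x)$ is a supremum of $w(\mathfrak{X},\mathfrak{Y})$-continuous affine minorants of $f$, and in particular $f^{**}$ is itself lsc, convex, and dominated by every such $g$ in the supremum. For the reverse inequality, I would apply the geometric Hahn--Banach theorem in the locally convex space $(\mathfrak{X}\times\mathbb{R},w(\mathfrak{X},\mathfrak{Y})\times E_{\mathbb{R}})$ to the closed convex set $C:=\overline{\conv(G_f)}$ and a point $(x_0,c_0)$ with $c_0<f^{**}(x_0)$: the resulting continuous linear separating functional has the form $(x,t)\mapsto \langle x,y\rangle-\gamma t$, where the convention $f\leq +\infty$ forces $\gamma>0$, and after normalizing we obtain an affine $w(\mathfrak{X},\mathfrak{Y})$-continuous minorant of $f$ whose value at $x_0$ exceeds $c_0$. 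Both claims of (iii) then follow, with $G_{f^{**}}=C$ extracted by noting that $G_{f^{**}}$ is closed, convex, contains $G_f$, and is itself a valid upper epigraph.

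Part (iv) is then immediate from (iii): the identity $f=f^{**}$ holds iff $f$ coincides with its largest lsc convex minorant, which occurs iff $f$ is already lsc and convex. The main obstacle is clearly the Hahn--Banach step in part (iii); the subtlety is ensuring the separating functional has strictly negative slope in the second coordinate, which is the standard device for converting a closed convex set in $\mathfrak{X}\times\mathbb{R}$ into the epigraph of a continuous affine function.
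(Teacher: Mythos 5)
The paper offers no proof of this proposition at all---it is quoted from Zalinescu (2002)---so your argument is necessarily an independent one, and the only question is whether it is complete. Parts (i), (ii) and (iv) are fine (in (ii) you should allow for the subnet limit $L=-\infty$, in which case one truncates at an arbitrary real $M<f(x_0)$ rather than at $L$; this is cosmetic). The gap is in part (iii), at exactly the step you flag as the main obstacle: the assertion that ``the convention $f\leq+\infty$ forces $\gamma>0$'' is not a proof and is false as stated. Separating $(x_0,c_0)$ from $C=\overline{\conv(G_f)}$ only yields $\gamma\geq 0$, because $C$ contains the upward vertical ray through each of its points. The degenerate case $\gamma=0$, i.e.\ a vertical separating hyperplane, genuinely occurs---for instance whenever $x_0$ can already be separated from the projection of $C$ onto $\mathfrak{X}$---and nothing about $f$ avoiding the value $-\infty$ excludes it. Handling it is the entire content of the Fenchel--Moreau theorem: one first produces a single $w(\mathfrak{X},\mathfrak{Y})$-continuous affine minorant $h_0$ of $f$ by separating a suitable point $(x_1,c_1)\notin C$ with $x_1$ in the projection of $C$ (there verticality is impossible, since a vertical functional cannot strictly separate $(x_1,c_1)$ from $(x_1,t)\in C$), and then, given a vertical separator $\Lambda$ of $(x_0,c_0)$, considers $h_0+\lambda\Lambda$ for $\lambda$ large to manufacture a non-vertical continuous affine minorant whose value at $x_0$ exceeds $c_0$.

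Note moreover that this repair requires $f^*\not\equiv+\infty$, which does not follow from $\dom(f)\neq\varnothing$ alone, and without it the epigraph identity in (iii) can actually fail: on $\mathfrak{X}=\mathbb{R}$ take $f(x)=-x^2$ for $x\geq 0$ and $f(x)=+\infty$ for $x<0$; then $f^*\equiv+\infty$, so $f^{**}\equiv-\infty$ and $G_{f^{**}}=\mathbb{R}^2$, whereas $\overline{\conv(G_f)}=[0,\infty)\times\mathbb{R}$. So the statement must be read with the properness hypothesis carried by the cited source, and your proof has to invoke that hypothesis explicitly at the tilting step. In the paper's applications ($f=\rho^*$ for a finite convex risk measure $\rho$, or $f=N$ with $N\geq-\rho(0)$) the hypothesis holds, so the gap is repairable; but as written the Hahn--Banach step does not go through.
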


\noindent See \cite{Z02} for a proof of Proposition \ref{prop:convex1}. 

First, we obtain a representation of $\mu$-convex adjustments, similar to Proposition \ref{prop:FatouCont}. 
\begin{proposition}
    \label{prop:FatouContCon}
    Let $\tau:L^\infty\to\mathbb{R}$, $\mu\in \mathcal{M}_1(\mathcal{G})$, and $\rho = \mathbb{E}^\mu+\tau$. The following are equivalent. 
    \begin{enumerate}[(i) ]
        \item The mapping $\tau$ is a Fatou $\mu$-convex adjustment;
        \item the mapping $\rho$ is a $w(L^\infty,L^1)$-lower semicontinuous convex risk measure with $\dom(\rho^*)\subseteq \mathcal{E}(\mu)$;
        \item $\tau$ can be represented by
        \begin{align}\label{eq:convexassign}
            \tau(X)=\sup_{\nu\in \mathcal{E}(\mu)}\left(\mathbb{E}^\nu[X]-\mathbb{E}^\mu[X]-\rho^*(\nu)\right),~~~X\in L^\infty;
        \end{align}
        \item there exists $\alpha:\mathcal{E}(\mu)\to\mathbb{R}\cup\{\infty\}$ such that $\tau$ can be represented by
        \begin{align*}
            \tau(X)=\sup_{\nu\in \mathcal{E}(\mu)}\left(\mathbb{E}^\nu[X]-\mathbb{E}^\mu[X]-\alpha(\nu)\right),~~~X\in L^\infty.
        \end{align*}
    \end{enumerate}
\end{proposition}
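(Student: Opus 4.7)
The plan is to mirror the strategy used for coherent adjustments in Proposition \ref{prop:FatouCont}, but replace the positive-homogeneity arguments by Fenchel-Moreau duality on the pairing $(L^\infty, L^1)$ supplied by Proposition \ref{prop:convex1}(iv). Throughout I will assume $\tau \not\equiv -\infty$; the degenerate case corresponds to $\alpha \equiv +\infty$ in (iv) and $\rho^*\equiv +\infty$ in (ii), where the equivalence is trivial.

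For (i) $\Rightarrow$ (ii), I would first verify that $\rho := \mathbb{E}^\mu + \tau$ is a convex risk measure. Cash invariance is immediate because constants lie in $L^\infty(\mathcal{G})$, so $\mathcal{G}$-invariance of $\tau$ combined with the linearity of $\mathbb{E}^\mu$ gives $\rho(X+c) = \rho(X) + c$; monotonicity is exactly $\mu$-monotonicity of $\tau$; convexity is inherited from $\tau$ and the linearity of $\mathbb{E}^\mu$. Fatou continuity of $\tau$ transfers to $\rho$ via DCT applied to $\mathbb{E}^\mu$, and Fatou continuity for a monetary risk measure is equivalent to $w(L^\infty, L^1)$-lower semicontinuity. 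To prove $\dom(\rho^*) \subseteq \mathcal{E}(\mu)$, observe that $\mathcal{G}$-invariance forces $\tau$ to be constant on $L^\infty(\mathcal{G})$ (equal to $\tau(0)$), so $\rho$ is affine along $L^\infty(\mathcal{G})$. For any $\nu \in \mathcal{M}_1$ with $\nu|_\mathcal{G} \neq \mu$, pick $Y \in L^\infty(\mathcal{G})$ with $\mathbb{E}^\nu[Y] \neq \mathbb{E}^\mu[Y]$; then
\begin{equation*}
\mathbb{E}^\nu[cY] - \rho(cY) = c\bigl(\mathbb{E}^\nu[Y] - \mathbb{E}^\mu[Y]\bigr) - \tau(0)
\end{equation*}
is unbounded above in $c \in \mathbb{R}$, forcing $\rho^*(\nu) = +\infty$.

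For (ii) $\Rightarrow$ (iii) I would invoke Proposition \ref{prop:convex1}(iv): since $\rho$ is convex and $w(L^\infty,L^1)$-lower semicontinuous, $\rho = \rho^{**}$, so $\rho(X) = \sup_{Y \in L^1}(\mathbb{E}[XY] - \rho^*(Y))$. Cash invariance and monotonicity of $\rho$ restrict $\dom(\rho^*)$ to densities of probability measures, and the hypothesis restricts it further to $\mathcal{E}(\mu)$; subtracting $\mathbb{E}^\mu[X]$ from both sides yields \eqref{eq:convexassign}. The implication (iii) $\Rightarrow$ (iv) is trivial with $\alpha := \rho^*|_{\mathcal{E}(\mu)}$.

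For (iv) $\Rightarrow$ (i), starting from the representation I will verify the defining properties directly. $\mathcal{G}$-invariance holds because $\mathbb{E}^\nu[Y] = \mathbb{E}^\mu[Y]$ for every $\nu \in \mathcal{E}(\mu)$ and $Y \in L^\infty(\mathcal{G})$, so adding $Y$ to $X$ cancels term by term inside the supremum. $\mu$-monotonicity is immediate from monotonicity of $\mathbb{E}^\nu$, and convexity follows because $\tau$ is a supremum of affine functionals. Fatou continuity is obtained by applying DCT to each $\mathbb{E}^\nu[\cdot]$ and then using that the $\liminf$ of a supremum dominates the supremum of the $\liminf$s. The main obstacle I foresee is step (i) $\Rightarrow$ (ii), specifically the containment $\dom(\rho^*) \subseteq \mathcal{E}(\mu)$: care is needed to exploit the affinity of $\rho$ on $L^\infty(\mathcal{G})$ via scaling; the remaining implications are routine applications of Fenchel-Moreau duality and dominated convergence.
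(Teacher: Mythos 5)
Your proposal is correct and follows essentially the same route as the paper: establish that $\rho$ is a Fatou (hence $w(L^\infty,L^1)$-lower semicontinuous) convex risk measure, exclude any $\nu\in\mathcal{M}_1\setminus\mathcal{E}(\mu)$ from $\dom(\rho^*)$ by scaling a $\mathcal{G}$-measurable test function on which $\nu$ and $\mu$ disagree (the paper uses an indicator $\id_A$, you use a general $Y\in L^\infty(\mathcal{G})$, which is the same idea), obtain \eqref{eq:convexassign} from the biconjugation/dual representation, and prove Fatou continuity of the supremum representation via the standard $\liminf$--$\sup$ interchange. No gaps.
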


\begin{proof}{Proof:}
    (i)$\Rightarrow$(ii): It is clear that $\rho$ is a convex risk measure and is Fatou (simple application of DCT). This implies that $\rho$ is $w(L^\infty,L^1)$-lower semicontinuous. Note by \citet[Theorem 4.33]{FSII16}, $\dom(\rho^*)\subseteq \mathcal{M}_1$. Let $\nu\in \mathcal{M}_1\backslash \mathcal{E}(\mu)$. There exists $A\in \mathcal{G}$ such that $\mathbb{E}[D_\nu \id_A]>\mathbb{E}[D_\mu \id_A]$. Thus, we have 
    \begin{align*}
        \rho^*(\nu)=\sup_{X\in L^\infty}\left(\mathbb{E}^{\nu}[X]-\mathbb{E}^\mu[X]-\tau(X)\right)&\geq\sup_{\lambda\in \mathbb{R}}\left(\lambda\mathbb{E}^{\nu}[\id_A]-\lambda\mathbb{E}^\mu[\id_A]-\tau(\lambda \id_A)\right)\\&=\sup_{\lambda\in \mathbb{R}}\left(\lambda(\mathbb{E}^{\nu}[\id_A]-\mathbb{E}^\mu[\id_A])\right)-\tau(0)=\infty.
    \end{align*}
    Hence, $\nu\notin \dom(\rho)$.
   (ii)$\Rightarrow$(iii): Since $\rho$ is convex and $w(L^\infty,L^1)$-lower semicontinuous. By Proposition \ref{prop:convex}, we have for $X\in L^\infty$,
    \begin{align*}
        \tau(X)=\rho(X)-\mathbb{E}^\mu[X]=\sup_{\nu\in \dom(\rho^*)}\left(\mathbb{E}^\nu[X]-\rho^*(\nu)\right)-\mathbb{E}^\mu[X]&=\sup_{\nu\in \mathcal{E}(\mu)}\left(\mathbb{E}^\nu[X]-\rho^*(\nu)\right)-\mathbb{E}^\mu[X]\\&=\sup_{\nu\in \mathcal{E}(\mu)}\left(\mathbb{E}^\nu[X]-\mathbb{E}^\mu[X]-\rho^*(\nu)\right).
    \end{align*}
    (iii)$\Rightarrow$(iv): This is trivial. (iv)$\Rightarrow$(i): The fact that $\tau$ is a $\mu$-convex adjustment is clear. Let $(X_n)_{n=1}^\infty$ be a uniformly bounded sequence in $L^\infty$ which converges $\mathbb{P}$-a.s.~to some $X\in L^\infty$. We have  
    \begin{align*}
        \tau(X)=\sup_{\nu\in \mathcal{E}(\mu)}\left(\mathbb{E}[(D_\nu-D_\mu)X]-\alpha(\nu)\right)&=\sup_{\nu\in \mathcal{E}(\mu)}\left(\liminf\limits_{n\to\infty}\left(\mathbb{E}[(D_\nu-D_\mu)X_n]-\alpha(\nu)\right)\right)\\&\leq \liminf\limits_{n\to\infty}\sup_{\nu\in \mathcal{E}(\mu)}\left(\mathbb{E}[(D_\nu-D_\mu)X]-\alpha(\nu)\right)\\&=\liminf\limits_{n\to\infty}\tau(X_n).
    \end{align*}
    Thus proving the Fatou continuity of $\tau$.
\Halmos \end{proof}

For $f:\mathfrak{X}\to\mathbb{R}\cup\{\infty\}$ let $L_f^{\leq c}=\left\{x\in \mathfrak{X}\mid f(x)\leq c\right\}$, $L_f^{<c}=\left\{x\in \mathfrak{X}\mid f(x)<c\right\}$, and  $L_f^{=c}=\left\{x\in \mathfrak{X}\mid f(x)=c\right\}$.  
We will also need the following two lemmas. 

\begin{lemma}\label{lemma:appendix1}
    Let $\left(\mathfrak{X},\mathfrak{Y},\langle\cdot,\cdot\rangle\right)$ be a dual pairing and 
    $f:\mathfrak{X}\to\mathbb{R}\cup\{\infty\}$ be convex with $\dom(\rho)\neq \varnothing$. Then $\overline{L^{\leq c}_f}=L^{\leq c}_{f^{**}}$ for all $c>-f^*(0)=\inf_{x\in \mathfrak{X}}f(x)=\inf_{x\in \mathfrak{X}}f^{**}(x)$.
\end{lemma}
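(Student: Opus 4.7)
The plan is to prove the two inclusions separately, with the first being immediate and the second requiring a little convex-analytic work that exploits the hypothesis $c>-f^*(0)$.

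First I would record the identity $-f^*(0)=\inf_{x}f(x)=\inf_{x}f^{**}(x)$. The first equality is the definition of $f^*$ at $0$; for the second, $f^{**}\le f$ gives $\inf f^{**}\le\inf f$, while evaluating the Fenchel--Moreau supremum at $y=0$ yields $f^{**}(x)\ge-f^*(0)=\inf f$ pointwise, giving the reverse inequality. Thus the hypothesis $c>-f^*(0)$ is the same as $c>\inf f$, which guarantees the existence of some $z_{0}\in\mathfrak{X}$ with $f(z_{0})<c$. The inclusion $\overline{L^{\le c}_f}\subseteq L^{\le c}_{f^{**}}$ is then immediate: $L^{\le c}_f\subseteq L^{\le c}_{f^{**}}$ because $f^{**}\le f$, and $L^{\le c}_{f^{**}}$ is $w(\mathfrak{X},\mathfrak{Y})$-closed by Proposition \ref{prop:convex1}(ii), since $f^{**}$ is $w(\mathfrak{X},\mathfrak{Y})$-lower semicontinuous.

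The substantive content is the reverse inclusion $L^{\le c}_{f^{**}}\subseteq\overline{L^{\le c}_f}$. I would take $x_{0}$ with $f^{**}(x_{0})\le c$, so $(x_{0},c)\in G_{f^{**}}$. Because $f$ is convex the epigraph $G_f$ is convex, and by Proposition \ref{prop:convex1}(iii) $G_{f^{**}}$ is the $(w(\mathfrak{X},\mathfrak{Y})\times E_{\mathbb R})$-closure of $G_f$. Hence for every $w(\mathfrak{X},\mathfrak{Y})$-open neighborhood $U$ of $x_{0}$ and every $\varepsilon>0$ there exists $(x,r)\in G_f$ with $x\in U$ and $r<c+\varepsilon$, so in particular $f(x)<c+\varepsilon$. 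This point $x$ is close to $x_{0}$ but has $f$-value only close to (not at most) $c$; I will correct it by a small convex combination with $z_{0}$.

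Set $\delta=c-f(z_{0})>0$. Fix an arbitrary $w(\mathfrak{X},\mathfrak{Y})$-open neighborhood $W$ of $x_{0}$. Since the map $\Phi\colon\mathbb{R}\times\mathfrak{X}\to\mathfrak{X}$, $\Phi(t,x)=(1-t)x+t z_{0}$, is jointly continuous (scalar multiplication and vector addition are continuous in a topological vector space), and $\Phi(0,x_{0})=x_{0}\in W$, there exist $\eta>0$ and a $w(\mathfrak{X},\mathfrak{Y})$-open neighborhood $U\subseteq\mathfrak{X}$ of $x_{0}$ such that $\Phi((-\eta,\eta)\times U)\subseteq W$. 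Choose $\varepsilon>0$ with $t_{\varepsilon}:=\varepsilon/(\varepsilon+\delta)<\eta$, and, by the previous paragraph, pick $x\in U$ with $f(x)<c+\varepsilon$. Set $y=(1-t_{\varepsilon})x+t_{\varepsilon}z_{0}$, so $y\in W$. Convexity of $f$ gives
\begin{equation*}
f(y)\le(1-t_{\varepsilon})f(x)+t_{\varepsilon}f(z_{0})<(1-t_{\varepsilon})(c+\varepsilon)+t_{\varepsilon}(c-\delta)=c,
\end{equation*}
where the algebraic identity $(1-t_{\varepsilon})\varepsilon=t_{\varepsilon}\delta$ collapses the last expression to $c$. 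Thus $y\in W\cap L^{\le c}_f$, and since $W$ was arbitrary $x_{0}\in\overline{L^{\le c}_f}$.

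The only subtle step is ensuring that the correction by $z_{0}$ does not leave the prescribed neighborhood $W$; this is exactly what the joint continuity of $\Phi$ provides, and it is the only place where the strictness $c>-f^*(0)$ is used, through the existence of $z_{0}$ with $f(z_{0})<c$ and hence $\delta>0$. Everything else is the standard epigraph description of $f^{**}$ together with Proposition \ref{prop:convex1}.
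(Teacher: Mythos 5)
Your proof is correct. It uses the same two ingredients as the paper's argument --- the identification $G_{f^{**}}=\overline{\conv(G_f)}=\overline{G_f}$ from Proposition \ref{prop:convex1} and the existence, guaranteed by $c>-f^*(0)$, of a point strictly below level $c$ to perturb toward --- but it arranges them differently. The paper first proves $L^{\leq c-\epsilon}_{f^{**}}\subseteq\overline{L^{\leq c}_f}$ for every $\epsilon>0$ (hence $L^{<c}_{f^{**}}\subseteq\overline{L^{\leq c}_f}$) via a net in $G_f$ converging to $(x,c-\epsilon)$, and then separately shows $\overline{L^{<c}_{f^{**}}}=L^{\leq c}_{f^{**}}$ by a convex combination \emph{inside the sublevel sets of $f^{**}$} with a point $y$ satisfying $f^{**}(y)<c$. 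You instead work directly at level $c$: you approximate $(x_0,c)$ by points of $G_f$ and then perform the convex-combination correction \emph{inside the sublevel sets of $f$ itself}, mixing the approximant $x$ (with $f(x)<c+\varepsilon$) with a fixed $z_0$ satisfying $f(z_0)<c$, the weight $t_\varepsilon=\varepsilon/(\varepsilon+\delta)$ being calibrated so that the bound collapses exactly to $c$. This avoids the paper's two-step detour through $L^{<c}_{f^{**}}$ and its closure, at the price of the joint-continuity argument for $(t,x)\mapsto(1-t)x+tz_0$ needed to keep the corrected point inside the prescribed neighborhood $W$ --- a step that is valid since $w(\mathfrak{X},\mathfrak{Y})$ makes $\mathfrak{X}$ a topological vector space, and which the paper does not need because its perturbation $z_n=(1-\tfrac1n)x+\tfrac1n y$ converges to $x$ trivially. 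Both proofs are complete; yours is marginally more self-contained at level $c$, the paper's is marginally lighter on topology.
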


\begin{proof}{Proof:}
    Let $c>-f^{*}(0)$. Since $f^{**}(x)\leq f(x)$ for all $x\in \mathfrak{X}$, $L^{\leq c}_f\subseteq L^{\leq c}_{f^{**}}$. We have $\overline{L^{\leq c}_f}\subseteq L^{\leq c}_{f^{**}}$ since $f^{**}$ is $w(\mathfrak{X},\mathfrak{Y})$-lower semicontinous as it is the supremum of $w(\mathfrak{X},\mathfrak{Y})$-continuous functions (meaning $L^{\leq c}_{f^{**}}$ is $w(\mathfrak{X},\mathfrak{Y})$-closed). 
    
    Let $\epsilon>0$. We claim that $L^{\leq c-\epsilon}_{f^{**}}\subseteq \overline{L^{\leq c}_f}$. Clearly this is true if $ L^{\leq c-\epsilon}_{f^{**}}=\varnothing$. Therefore assume that $ L^{\leq c-\epsilon}_{f^{**}}\neq\varnothing$ and let $x\in L^{\leq c-\epsilon}_{f^{**}}$, which means $(x,c-\epsilon)\in G_{f^{**}}$. There exists $(x_\lambda,c_\lambda)\in G_f$ such that $\lim_{\lambda}(x_\lambda,c_\lambda)=(x,c-\epsilon)$ since $\overline{G_f}=G_{f^{**}}$. Find $\lambda_0$ such that for all $\lambda\geq \lambda_0$, $c_\lambda\leq c$. Therefore for all $\lambda\geq \lambda_0$ we have $x_\lambda\in L_f^{\leq c}$. Thus the tail net $(x_\lambda)_{\lambda\geq \lambda_0}$ is contained in $L_f^{\leq c}$ and converges to $x$, therefore $x\in \overline{L_f^{\leq c}}$. This shows the claim that $L^{\leq c-\epsilon}_{f^{**}}\subseteq \overline{L_f^{\leq c}}$. Since $L^{\leq c-\epsilon}_{f^{**}}\subseteq \overline{L_f^{\leq c}}$ for arbitrary $\epsilon>0$, we have $L^{<c}_{f^{**}}\subseteq \overline{L_f^{\leq c}}$. 
    
    Clearly $L^{<c}_{f^{**}}\subseteq L^{\leq c}_{f^{**}}$, implying $\overline{L^{<c}_{f^{**}}}\subseteq L^{\leq c}_{f^{**}}$. Take $x\in L^{\leq c}_{f^{**}}$ and $y\in L^{<c}_{f^{**}}$,   which is possible because $c>-f^*(0)$. Define
    \begin{align*}
        z_n=\left(1-\frac{1}{n}\right)x+\frac{1}{n} y,
    \end{align*}
    which is clearly in $L^{<c}_{f^{**}}$ by the convexity of $f^{**}$. Since $\lim_nz_n= x$, we have $x\in \overline{L^{<c}_{f^{**}}}$, showing that $\overline{L^{<c}_{f^{**}}}= L^{\leq c}_{f^{**}}$. Therefore 
    \begin{align*}
        L^{\leq c}_{f^{**}}=\overline{L^{<c}_{f^{**}}}\subseteq\overline{L_f^{\leq c}}\subseteq L^{\leq c}_{f^{**}},
    \end{align*}
    which completes the proof.
\Halmos \end{proof}

\begin{lemma}\label{lemma:appendix2}
    Let $\rho:L^\infty\to\mathbb{R}$ be a Fatou convex risk measure. Then $\rho$ is law invariant if and only if $L^{\leq c}_{\rho^*}$ is law invariant for all $c>-\rho(0)$. 
\end{lemma}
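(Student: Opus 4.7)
The plan is to show that $\rho$ is law invariant if and only if $\rho^*$ is law invariant as a function on $\mathcal{M}_1$ (that is, $\rho^*(\mu)=\rho^*(\nu)$ whenever $D_\mu \overset{\mathrm{d}}{=}_\mathbb{P} D_\nu$); both directions of the stated lemma then follow through the usual level-set characterisation of law invariance, up to a short limit argument that handles the boundary at $c=-\rho(0)$. The key technical input I will use is the standard Hardy--Littlewood type fact on the atomless space $(\Omega,\mathcal{F},\mathbb{P})$: for $D\in L^1$ with $D\geq 0$, $\mathbb{E}[D]=1$, and $X\in L^\infty$, the set $\{\mathbb{E}[D'X]:D'\overset{\mathrm{d}}{=}_\mathbb{P} D,\ D'\geq 0\}$ is a closed interval with endpoints given by the comonotone and antimonotone couplings of the quantile functions $F^{-1}_D$ and $F^{-1}_X$, and by symmetry coincides with $\{\mathbb{E}[DY]:Y\overset{\mathrm{d}}{=}_\mathbb{P} X\}$. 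Consequently, whenever $D_\mu \overset{\mathrm{d}}{=}_\mathbb{P} D_\nu$ in $\mathcal{M}_1$ and $X \overset{\mathrm{d}}{=}_\mathbb{P} Y$ in $L^\infty$, expectations can be matched within law-classes: there exist $Y'\overset{\mathrm{d}}{=}_\mathbb{P} X$ with $\mathbb{E}^\nu[Y']=\mathbb{E}^\mu[X]$ and $\nu'\in\mathcal{M}_1$ with $D_{\nu'}\overset{\mathrm{d}}{=}_\mathbb{P} D_\mu$ and $\mathbb{E}^{\nu'}[Y]=\mathbb{E}^\mu[X]$.

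For the forward direction, I will assume $\rho$ is law invariant and fix $D_\mu\overset{\mathrm{d}}{=}_\mathbb{P} D_\nu$. For each $X \in L^\infty$, I pick $Y \overset{\mathrm{d}}{=}_\mathbb{P} X$ with $\mathbb{E}^\nu[Y]=\mathbb{E}^\mu[X]$; law invariance of $\rho$ yields $\rho(Y)=\rho(X)$, so $\mathbb{E}^\mu[X]-\rho(X)=\mathbb{E}^\nu[Y]-\rho(Y)\leq \rho^*(\nu)$. Taking the supremum over $X$ and invoking symmetry in $(\mu,\nu)$ gives $\rho^*(\mu)=\rho^*(\nu)$, so in particular every $L^{\leq c}_{\rho^*}$ is law invariant. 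For the reverse direction, I will assume $L^{\leq c}_{\rho^*}$ is law invariant for every $c>-\rho(0)$. Using the standard Fatou/convex dual representation $\rho(X)=\sup_{\mu\in\dom(\rho^*)}\bigl(\mathbb{E}^\mu[X]-\rho^*(\mu)\bigr)$, I fix $X\overset{\mathrm{d}}{=}_\mathbb{P} Y$ and $\mu\in\dom(\rho^*)$ and pick $\nu\in \mathcal{M}_1$ with $D_\nu\overset{\mathrm{d}}{=}_\mathbb{P} D_\mu$ and $\mathbb{E}^\nu[Y]=\mathbb{E}^\mu[X]$. For every $c>\rho^*(\mu)\vee(-\rho(0))$ the hypothesis forces $\nu\in L^{\leq c}_{\rho^*}$; sending $c\downarrow \rho^*(\mu)\vee(-\rho(0))$ and using the trivial bound $\rho^*(\nu)\geq -\rho(0)$ yields $\rho^*(\nu)\leq \rho^*(\mu)$. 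Hence $\mathbb{E}^\mu[X]-\rho^*(\mu)\leq \mathbb{E}^\nu[Y]-\rho^*(\nu)\leq \rho(Y)$, and supping over $\mu$ followed by swapping the roles of $X$ and $Y$ gives $\rho(X)=\rho(Y)$.

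The main obstacle will be the Hardy--Littlewood fact itself, which is the only step where the atomlessness of $(\Omega,\mathcal{F},\mathbb{P})$ is used in an essential way; it requires an explicit rearrangement of densities against $X$, typically realised via quantile couplings on a uniform generator built from Lemma \ref{lem:fel}. Once this rearrangement tool is in place, both directions reduce to a short manipulation of the dual representation of Fatou convex risk measures. The only other delicate point is the boundary behaviour at $c=-\rho(0)$ in the reverse direction, where the hypothesis is not directly applicable and must be accessed through a monotone limit $c\downarrow -\rho(0)$, combined with the elementary bound $\rho^*\geq -\rho(0)$ on all of $\mathcal{M}_1$.
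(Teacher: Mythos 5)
Your proposal is correct, but it takes a more self-contained route than the paper. The paper's proof is short: it invokes \citet[Theorem~4.59]{FSII16} for the equivalence ``$\rho$ is law invariant $\iff$ $\rho^*$ is law invariant,'' notes that the forward implication of the lemma is then immediate, and for the converse recovers law invariance of $\rho^*$ from law invariance of the sublevel sets by a purely set-theoretic identity, writing each level curve $L^{=c}_{\rho^*}$ as a countable intersection of differences (and unions) of sublevel sets with $c>-\rho(0)$. You instead prove the underlying equivalence from scratch: the forward direction via a rearrangement argument (for $D_\mu\overset{\mathrm{d}}{=}_\mathbb{P}D_\nu$, match $\mathbb{E}^\mu[X]$ by some $\mathbb{E}^\nu[Y]$ with $Y\overset{\mathrm{d}}{=}_\mathbb{P}X$ and compare the suprema defining $\rho^*$), and the converse directly through the dual representation, passing from sublevel sets to the function by letting $c\downarrow\rho^*(\mu)$ rather than by the paper's set algebra. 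The Hardy--Littlewood matching fact you rely on is exactly the content of the paper's Lemma~\ref{lemNeed} specialized to $\mathcal{G}=\mathcal{F}$ (the $V_p$-family plus intermediate value theorem), so it is available and your argument closes. What your approach buys is independence from the cited Theorem~4.59 at the cost of re-deriving it; what the paper's buys is brevity. One cosmetic remark: in your limit step the bound $\rho^*(\nu)\geq-\rho(0)$ is not what is needed---rather it is $\rho^*(\mu)\geq-\rho(0)$ that collapses $\rho^*(\mu)\vee(-\rho(0))$ to $\rho^*(\mu)$---but this does not affect the validity of the argument.
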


\begin{proof}{Proof:}
    By Theorem 4.59 in \cite{FSII16}, $\rho$ is law invariant if and only if $\rho^*$ is law invariant. Clearly, if $\rho^*$ is law invariant, $L^{\leq c}_{\rho^*}$ is law invariant for all $c>-\rho(0)$. Now assume that $L^{\leq c}_{\rho^*}$ is law invariant for all $c>-\rho(0)$. Fix some $c>-\rho(0)=\inf_{\mu\in \mathcal{M}_1(\mathcal{G})}\rho^*(\mu)$, find $N\in \mathbb{N}$ such that $c-\frac{1}{N}>-\rho(0)$. We have
    \begin{align*}
        L^{=c}_{\rho^*}=\bigcap_{n\geq N}\left(L^{\leq c}_{\rho^*}/L^{\leq c-\frac{1}{n}}_{\rho^*}\right)
    \end{align*}
    is law invariant. Now let $c=-\rho(0)$, we have 
    \begin{align*}
        L^{=c}_{\rho^*}=\bigcap_{n\geq 1}\left(L^{\leq c+\frac{1}{n}}_{\rho^*}\right)
    \end{align*}
    is law invariant (since for $c<-\rho(0)$, $L^{\leq c}_{\rho^*}=\varnothing$). This shows that $\rho^*$ is law invariant.
\Halmos \end{proof}

\begin{proof}{Proof of Theorem \ref{prop:convex}:}
   Let $\rho$ be a Fatou convex risk measure. By \citet[Theorem 4.18, Theorem 4.33]{FSII16}, we have
   \begin{align*}
        \rho(X)=\sup_{\nu\in \mathcal{M}_1}\left(\mathbb{E}^\nu[ X]-\rho^*(\nu)\right)&=\sup_{\mu\in \mathcal{M}_1(\mathcal{G})}\sup_{\nu\in \mathcal{E}(\mu)}\left(\mathbb{E}^{\nu}[X]-\rho^*(\nu)\right)\\&=\sup_{\mu\in \mathcal{M}_1(\mathcal{G})}\left(\mathbb{E}^{\mu}[X]+\sup_{\nu\in \mathcal{E}(\mu)}\left(\mathbb{E}^\nu[X]-\mathbb{E}^{\mu}[X]-\rho^*(\nu)\right)\right)\\&=\sup_{\mu\in \mathcal{M}_1(\mathcal{G})}\left(\mathbb{E}^\mu[ X]+\tau_\mu(X)\right),~~~X\in L^\infty,
   \end{align*}
   where $\tau_\mu(X)=\sup_{\nu\in \mathcal{E}(\mu)}\left(\mathbb{E}^{\nu}[X]-\mathbb{E}^{\mu}[X]-\rho^*(\nu)\right)$, $X\in L^\infty$. Note that for all $\mu\in \mathcal{M}_1(\mathcal{G})$, $\tau_\mu= -\infty$ on $L^\infty$ or $\dom(\tau_\mu)=L^\infty$. By Proposition \ref{prop:FatouContCon}, $\tau_\mu$ is a Fatou $\mu$-convex adjustment. Define $N:\mathcal{M}_1(\mathcal{G})\to\mathbb{R}\cup\{\infty\}:\mu\mapsto -\tau_\mu(0)$. We claim that $N$ is convex. Let $\mu_1,\mu_2\in \mathcal{M}_1(\mathcal{G})$ and $\lambda\in [0,1]$. Define $\upsilon=\lambda\mu_1+(1-\lambda)\mu_2$, we have 
\begin{align*}
    N\left(\upsilon\right)=\inf_{\nu\in \mathcal{E}(\upsilon)}\rho^*\left(\nu\right)&\leq \inf_{(\nu_1,\nu_2)\in \mathcal{E}(\mu_1)\times \mathcal{E}(\mu_2)}\rho^*\left(\lambda\nu_1+(1-\lambda)\nu_2\right)\\&\leq \lambda\inf_{\nu_1\in \mathcal{E}(\mu_1)}\rho^*\left(\nu_1\right)+(1-\lambda)\inf_{\nu_2\in \mathcal{E}(\mu_2)}\rho^*\left(\nu_2\right)\\&=\lambda N(\mu_1)+(1-\lambda)N(\mu_2),
\end{align*}
where we are using the convexity of $\rho^*$. This completes the forward direction. The reverse direction is a simple application of Proposition \ref{prop:FatouContCon}, where one expands the convex adjustments by \eqref{eq:convexassign}. To see the last claim, define $\tilde{\rho}=\rho|_{L^\infty(\mathcal{G})}$, note 
\begin{align*}
    \tilde{\rho}(X)=\sup_{\mu\in \mathcal{M}_1(\mathcal{G})}\left(\mathbb{E}^{\mu}[X]-N(\mu)\right)=N^*(X), ~~~X\in L^\infty(\mathcal{G}).
\end{align*}
Thus $N^{**}=\tilde{\rho}^*$ and by Lemma \ref{lemma:appendix1}, for all $c>-\rho(0)$, $\overline{L^{\leq c}_N}=L^{\leq c}_{\tilde{\rho}^*}$. Using Lemma \ref{lemma:appendix2},
\begin{align*}
    \rho\text{ is }\mathcal{G}\text{-law invariant}\Leftrightarrow \tilde{\rho}\text{ is law invariant}&\Leftrightarrow L^{\leq c}_{\tilde{\rho}^*}\text{ is } \mathcal{G}\text{-law invariant for all }c>-\rho(0)\\&\Leftrightarrow\overline{L^{\leq c}_N}\text{ is }\mathcal{G}\text{-law invariant for all }c>-\rho(0)\\&\Leftrightarrow L^{\leq c}_N\text{ is almost }\mathcal{G}\text{-law invariant for all }c>-\rho(0).
\end{align*}
This completes the proof. 
\Halmos \end{proof}

\section{Additional results and proofs accompanying Section \ref{sec:instant}}

\subsection{Additional results and proofs accompanying Section \ref{sec:PartES}}

We call $\mathcal{X}$ \emph{upward-directed} if for all $X_1,X_2\in \mathcal{X}$, $\max\{X_1,X_2\}\in \mathcal{X}$.

\begin{proposition}[{\citet[Theorem A.11]{FSII16}}]\label{prop:upward1}
    Let $\mathcal{X}\subseteq L^\infty$ be such that  $\sup_{X\in \mathcal{X}}\|X\|_\infty $ is finite. Then $\esssup\mathcal{X}$ exists. Moreover, if $\mathcal{X}$ is upward-directed, then it holds that  $\mathbb{E}\left[\esssup\mathcal{X}\right]=\sup_{X\in \mathcal{X}}\mathbb{E}[X]$.
\end{proposition}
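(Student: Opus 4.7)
The plan is to prove existence of $\esssup \mathcal{X}$ by a standard exhaustion argument and then deduce the equality under upward-directedness from monotone convergence. Set $M := \sup_{X \in \mathcal{X}} \|X\|_\infty < \infty$ and let $\mathcal{C}$ denote the collection of countable subsets of $\mathcal{X}$. For each $C \in \mathcal{C}$, the pointwise supremum $f_C := \sup_{X \in C} X$ is a measurable random variable bounded by $M$, since countable suprema of measurable functions are measurable.

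The key step for existence is to set $\alpha := \sup_{C \in \mathcal{C}} \mathbb{E}[f_C] \in [-M, M]$, choose a maximizing sequence $(C_n)_{n \in \mathbb N}$ with $\mathbb{E}[f_{C_n}] \to \alpha$, and define $C^* := \bigcup_{n} C_n \in \mathcal{C}$. Then $f_{C^*} \ge f_{C_n}$ a.s., so $\mathbb{E}[f_{C^*}] = \alpha$. For any $X \in \mathcal{X}$, $C^* \cup \{X\} \in \mathcal{C}$ gives $\mathbb{E}[\max(f_{C^*}, X)] \le \alpha = \mathbb{E}[f_{C^*}]$, and combined with $\max(f_{C^*}, X) \ge f_{C^*}$ this forces $X \le f_{C^*}$ a.s. Conversely, any $Z \in L^\infty$ dominating every $X \in \mathcal{X}$ a.s.\ dominates each element of the countable set $C^*$, hence $Z \ge f_{C^*}$ a.s. So $Y := f_{C^*}$ satisfies both defining properties of $\esssup \mathcal{X}$.

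For the equality under upward-directedness, I would enumerate $C^* = \{X_k : k \in \mathbb{N}\}$ and recursively define $Y_n \in \mathcal{X}$ via $Y_1 = X_1$ and $Y_n$ chosen as an upper-directed majorant of $\{Y_{n-1}, X_n\}$ in $\mathcal{X}$. Then $(Y_n)$ is an increasing sequence in $\mathcal{X}$ with $Y_n \ge X_k$ for all $k \le n$, so $Y_n \uparrow Y$ a.s. Monotone convergence applied to $Y_n + M \ge 0$ yields $\mathbb{E}[Y_n] \uparrow \mathbb{E}[Y]$. Since $\mathbb{E}[Y_n] \le \sup_{X \in \mathcal{X}} \mathbb{E}[X]$ for each $n$ and $\mathbb{E}[X] \le \mathbb{E}[Y]$ for every $X \in \mathcal{X}$, we conclude $\mathbb{E}[Y] = \sup_{X \in \mathcal{X}} \mathbb{E}[X]$.

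The only genuine obstacle is the reduction from the possibly uncountable family $\mathcal{X}$ to the countable subfamily $C^*$, which is essential because uncountable pointwise suprema need not be measurable. The uniform bound $M$ removes all integrability concerns and is precisely what guarantees $\alpha$ is finite and that dominated/monotone convergence applies cleanly. Beyond that, the argument is classical (as the citation to Föllmer--Schied indicates) and the remaining verifications are routine.
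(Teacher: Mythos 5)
The paper does not prove this proposition; it is quoted directly from \citet[Theorem A.11]{FSII16}, and your argument is precisely the standard exhaustion proof given there: reduce to a countable subfamily maximizing the expectation of the pointwise supremum, verify the two defining properties of the essential supremum, and then use upward-directedness to build an increasing sequence in $\mathcal{X}$ converging a.s.\ to $\esssup\mathcal{X}$ so that monotone convergence yields the equality of expectations. Your proof is correct (with the harmless implicit assumption that $\mathcal{X}\neq\varnothing$, and noting that with the paper's definition of upward-directed you may simply take $Y_n=\max\{Y_{n-1},X_n\}\in\mathcal{X}$) and coincides with the cited one.
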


\begin{lemma}\label{lemma:Gconvex}
    Let $\mathcal{Q} \subseteq \mathcal{E}(\mathbb{P})$ be $\mathcal{G}$-convex.
    For all $X\in L^\infty$, $$\sup_{\mu \in \mathcal{Q}}\mathbb{E}^\mu[X]=\mathbb{E}\left[\esssup_{\mu\in \mathcal{Q}}\mathbb{E}^\mu[X|\mathcal{G}]\right].$$
\end{lemma}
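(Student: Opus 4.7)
The plan is to reduce the claim to Proposition \ref{prop:upward1} by showing that the family $\mathcal{Y} := \{\mathbb{E}^\mu[X|\mathcal{G}] : \mu \in \mathcal{Q}\}$ is uniformly bounded in $L^\infty$ and upward-directed. The starting observation is that, since each $\mu \in \mathcal{Q}$ satisfies $\mathbb{E}[D_\mu \mid \mathcal{G}] = 1$, the conditional expectation under $\mu$ takes the simple form $\mathbb{E}^\mu[X|\mathcal{G}] = \mathbb{E}[D_\mu X \mid \mathcal{G}]$. Consequently
\[
\mathbb{E}^\mu[X] \;=\; \mathbb{E}\bigl[\mathbb{E}[D_\mu X \mid \mathcal{G}]\bigr] \;=\; \mathbb{E}\bigl[\mathbb{E}^\mu[X|\mathcal{G}]\bigr],
\]
which immediately gives the easy inequality $\sup_{\mu\in\mathcal{Q}}\mathbb{E}^\mu[X] \le \mathbb{E}\bigl[\esssup_{\mu\in\mathcal{Q}}\mathbb{E}^\mu[X|\mathcal{G}]\bigr]$, and also shows that $|\mathbb{E}^\mu[X|\mathcal{G}]| \le \|X\|_\infty$ uniformly in $\mu$, so $\sup_{Y\in \mathcal{Y}}\|Y\|_\infty < \infty$.

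The main step is upward-directedness. Given $\mu_1,\mu_2 \in \mathcal{Q}$, set $Y_i := \mathbb{E}^{\mu_i}[X|\mathcal{G}]$ and
\[
\lambda := \mathbf{1}_{\{Y_1 \ge Y_2\}} \in L^\infty(\mathcal{G}),~~~~0 \le \lambda \le 1.
\]
By $\mathcal{G}$-convexity of $\mathcal{Q}$, the density $D_{\mu_3} := \lambda D_{\mu_1} + (1-\lambda) D_{\mu_2}$ corresponds to some $\mu_3 \in \mathcal{Q}$. Pulling the $\mathcal{G}$-measurable factor $\lambda$ out of the conditional expectation gives
\[
\mathbb{E}^{\mu_3}[X|\mathcal{G}] \;=\; \lambda \, \mathbb{E}[D_{\mu_1} X|\mathcal{G}] + (1-\lambda)\, \mathbb{E}[D_{\mu_2} X|\mathcal{G}] \;=\; \lambda Y_1 + (1-\lambda) Y_2 \;=\; \max\{Y_1, Y_2\},
\]
so $\max\{Y_1,Y_2\} \in \mathcal{Y}$ and $\mathcal{Y}$ is upward-directed.

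Applying Proposition \ref{prop:upward1} to $\mathcal{Y}$ then yields
\[
\mathbb{E}\Bigl[\esssup_{\mu\in \mathcal{Q}}\mathbb{E}^\mu[X|\mathcal{G}]\Bigr]
= \sup_{\mu\in\mathcal{Q}} \mathbb{E}\bigl[\mathbb{E}^\mu[X|\mathcal{G}]\bigr]
= \sup_{\mu\in\mathcal{Q}} \mathbb{E}^\mu[X],
\]
which closes the argument. There is no real obstacle beyond recognizing that the $\mathcal{G}$-convexity hypothesis is tailored precisely to permit the pointwise-optimal gluing $\lambda = \mathbf{1}_{\{Y_1 \ge Y_2\}}$; without allowing $\mathcal{G}$-measurable mixing weights one could not guarantee the maximum stays in $\mathcal{Y}$, so that is the delicate point to highlight.
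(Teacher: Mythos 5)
Your proof is correct and follows essentially the same route as the paper's: both establish that $\{\mathbb{E}^\mu[X|\mathcal{G}] : \mu\in\mathcal{Q}\}$ is upward-directed by gluing two densities with the $\mathcal{G}$-measurable indicator of $\{Y_1\ge Y_2\}$ (exactly the paper's set $A$) and then invoke Proposition \ref{prop:upward1}. Your explicit remarks that $\mathbb{E}[D_\mu|\mathcal{G}]=1$ justifies $\mathbb{E}^\mu[X|\mathcal{G}]=\mathbb{E}[D_\mu X|\mathcal{G}]$ and that the family is uniformly bounded are useful details the paper leaves implicit.
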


\begin{proof}{Proof:}
    Let $X\in L^\infty$, define $I(X)=\{\mathbb{E}^\mu[X|\mathcal{G}]:\mu\in \mathcal{Q}\}$.
    We claim that $I(X)$ is upward-directed. Given $\mu_1,\mu_2\in \mathcal{Q}$, let 
    $$A=\left\{\omega:\mathbb{E}^{\mu_1}[X|\mathcal{G}](\omega)\geq\mathbb{E}^{\mu_2}[X|\mathcal{G}](\omega)\right\}.$$ Since $\{D_\mu:\mu\in \mathcal{Q}\}$ is $\mathcal{G}$-convex, 
    $\id_A D_{\mu_1}+\id_{A^c}D_{\mu_2}\in \{D_\mu:\mu\in \mathcal{Q}\}$. Let $\nu$ be the probability measure associated with $\id_A D_{\mu_1}+\id_{A^c}D_{\mu_2}$. We have 
    \begin{align*}
        \mathbb{E}^\nu[X|\mathcal{G}]=I_A\mathbb{E}^{\mu_1}[X|\mathcal{G}]+I_{A^c}\mathbb{E}^{\mu_2}[X|\mathcal{G}]=\max\left\{\mathbb{E}^{\mu_1}[X|\mathcal{G}],\mathbb{E}^{\mu_2}[X|\mathcal{G}]\right\}.
    \end{align*}
    Since $\mathbb{E}^\nu[X|\mathcal{G}]\in I(X)$, $I(X)$ is upward directed. By Proposition \ref{prop:upward1},
    \begin{align*}
        \mathbb{E}[\esssup I(X)]=\sup_{\mu\in \mathcal{Q}}\mathbb{E}[\mathbb{E}^\mu[X|\mathcal{G}]]=\sup_{\mu\in \mathcal{Q}}\mathbb{E}^\mu[\mathbb{E}^\mu[X|\mathcal{G}]]=\sup_{\mu\in \mathcal{Q}}\mathbb{E}^\mu[X].
    \end{align*}
    Since $X\in L^\infty$ was arbitrary, the proof is complete.
\Halmos \end{proof}

\begin{proof}{Proof of Proposition \ref{propsition:computational}:}
    Fix some $X\in L^\infty$, we have
    \begin{align}\label{eq:minmax}
        \overline{\mathrm{ES}}_\alpha^\mathcal{Q}[X]=\sup_{\mu\in \mathcal{Q}}\mathrm{ES}^\mu_\alpha[X]=\sup_{\mu\in \mathcal{Q}}\min_{x\in \mathbb{R}}\left(x+\frac{1}{1-\alpha}\mathbb{E}^\mu\left[(X-x)^+\right]\right).
    \end{align}
    Define $f:\mathcal{Q}\times \mathbb{R}\to\mathbb{R}:(\mu,x)\mapsto x+\frac{1}{1-\alpha}\mathbb{E}^\mu[(X-x)^+]$. Since $\mathcal{Q}$ is $\mathcal{G}$-convex, it is convex. 
    Clearly, $f(\cdot, x)$ is $w(\mathcal M, L^\infty)|_{\mathcal Q}$-continuous and affine for all $x\in \mathbb{R}$. Likewise, $f(\mu, \cdot)$ is continuous and convex for all $\mu\in \mathcal{Q}$. By the minimax Theorem from \citet[Theorem I.1.1]{MSZ15}, we are able to switch the minimum and supremum in \eqref{eq:minmax}, thus
    \begin{align*}
        \overline{\mathrm{ES}}_\alpha^\mathcal{Q}[X]=\min_{x\in\mathbb{R}}\sup_{\mu\in \mathcal{Q}}\left(x+\frac{1}{1-\alpha}\mathbb{E}^\mu\left[(X-x)^+\right]\right)&=\min_{x\in\mathbb{R}}\left(x+\frac{1}{1-\alpha}\sup_{\mu\in \mathcal{Q}}\mathbb{E}^\mu\left[(X-x)^+\right]\right)\\&=\min_{x\in\mathbb{R}}\left(x+\frac{1}{1-\alpha}\mathbb{E}\left[\esssup_{\mu\in \mathcal{Q}}\mathbb{E}^\mu\left[(X-x)^+|\mathcal{G}\right]\right]\right)\\&=\min_{x\in\mathbb{R}}\left(x+\frac{1}{1-\alpha}\mathbb{E}\left[\rho_\mathcal{G}^\mathcal{Q}\left((X-x)^+\right)\right]\right),
    \end{align*}
    where the third equality follows from Lemma \ref{lemma:Gconvex}.
    
    Let $\mu_1,\mu_2\in \mathcal{Q}_\beta$ and $\lambda\in L^\infty(\mathcal{G})$ with $0\leq \lambda\leq 1$, clearly 
    $$0\leq \lambda D_{\mu_1}+(1-\lambda)D_{\mu_2}\leq \frac{1}{1-\beta}.$$
    Let $A\in \mathcal{G}$,
    \begin{align*}
        \mathbb{E}[(\lambda D_{\mu_1}+(1-\lambda)D_{\mu_2})\id_A]&=\mathbb{E}[\lambda\id_A\mathbb{E}[D_{\mu_1}|\mathcal{G}]+(1-\lambda)\id_A\mathbb{E}[D_{\mu_1}|\mathcal{G}]]\\&=\mathbb{E}[\lambda\id_A+(1-\lambda)\id_A]=\mathbb{P}(A).
    \end{align*}
    Therefore, $\lambda D_{\mu_1}+(1-\lambda)D_{\mu_2}\in \widehat{\mathcal{Q}}_\beta$.
\Halmos \end{proof}

\subsection{Additional results and proofs accompanying Section \ref{sec:twocon}}

\label{app:32}

\begin{proposition}\label{FatouComp}
    Let $\tilde{\rho}$ in \eqref{eq:conditional1} be monotonic (resp.~coherent) and Fatou. Then, the resulting risk measure $\rho$ in \eqref{eq:conditional1} is also monotonic (resp.~coherent) and Fatou.
\end{proposition}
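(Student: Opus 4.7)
The plan is to verify the monetary/coherent axioms for $\rho$ by pulling them back through $\mathbb{E}[\cdot|\mathcal G]$ from $\tilde\rho$, and then to establish Fatou continuity via the conditional dominated convergence theorem.

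First I would handle the algebraic properties. The map $X \mapsto \mathbb{E}[X|\mathcal G]$ is linear, order-preserving, and sends $L^\infty$ into $L^\infty(\mathcal G)$ with $\|\mathbb{E}[X|\mathcal G]\|_\infty \le \|X\|_\infty$; moreover $\mathbb{E}[c|\mathcal G]=c$ for all $c\in\R$. Hence: monotonicity of $\tilde\rho$ gives $\rho(X)\le\rho(Y)$ whenever $X\le Y$; cash invariance follows from $\mathbb{E}[X+c|\mathcal G]=\mathbb{E}[X|\mathcal G]+c$; convexity and positive homogeneity of $\rho$ follow from the same properties of $\tilde\rho$ combined with the linearity of $\mathbb{E}[\cdot|\mathcal G]$. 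Therefore $\rho$ is monotonic (resp.\ coherent) whenever $\tilde\rho$ is.

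Next I would establish Fatou continuity. Let $(X_n)_{n\in\N}$ be a uniformly bounded sequence in $L^\infty$ converging $\p$-a.s.\ to $X\in L^\infty$, with $|X_n|\le M$ for some $M>0$. By the conditional dominated convergence theorem,
\begin{equation*}
\mathbb{E}[X_n|\mathcal G]\to \mathbb{E}[X|\mathcal G]\quad \p\text{-a.s.,}
\end{equation*}
and the sequence $(\mathbb{E}[X_n|\mathcal G])_{n\in\N}$ is uniformly bounded by $M$ (since conditional expectation is an $L^\infty$-contraction) and lies in $L^\infty(\mathcal G)$, with limit $\mathbb{E}[X|\mathcal G]\in L^\infty(\mathcal G)$. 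Applying the Fatou continuity of $\tilde\rho$ to this sequence yields
\begin{equation*}
\rho(X)=\tilde\rho(\mathbb{E}[X|\mathcal G])\le \liminf_{n\to\infty}\tilde\rho(\mathbb{E}[X_n|\mathcal G])=\liminf_{n\to\infty}\rho(X_n),
\end{equation*}
so $\rho$ is Fatou.

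There is no real obstacle here: the argument is entirely a transfer along the conditional expectation operator. The only point that needs a brief justification is that the conditional expectations of a uniformly bounded a.s.-convergent sequence remain uniformly bounded and converge a.s., which is exactly the content of conditional dominated convergence and the $L^\infty$-contractivity of $\mathbb{E}[\cdot|\mathcal G]$.
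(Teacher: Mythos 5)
Your proof is correct, but it takes a different route to Fatou continuity than the paper does. The paper restricts attention to \emph{increasing} uniformly bounded sequences $X_n\uparrow X$: it deduces $L^1$-convergence from uniform integrability, passes this through the conditional expectation via Jensen's inequality, upgrades convergence in probability to $\p$-a.s.\ convergence using monotonicity of the sequence $(\E[X_n|\mathcal G])_n$, concludes that $\rho$ is continuous from below, and then invokes the well-known fact that continuity from below implies Fatou continuity for monotone risk measures. You instead treat an arbitrary uniformly bounded $\p$-a.s.-convergent sequence directly and invoke the conditional dominated convergence theorem to get $\E[X_n|\mathcal G]\to\E[X|\mathcal G]$ $\p$-a.s.; this is a valid standard result (provable by applying conditional monotone convergence to $\E[\sup_{m\ge n}|X_m-X|\,|\,\mathcal G]\downarrow 0$), and it lets you apply the Fatou property of $\tilde\rho$ in one step. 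Your argument is shorter and avoids the detour through continuity from below; the paper's argument avoids citing the conditional DCT at the cost of the monotone-sequence reduction. The algebraic part (transfer of monotonicity, cash invariance, convexity, and positive homogeneity along the linear, order-preserving, constant-preserving map $X\mapsto\E[X|\mathcal G]$) matches what the paper dismisses as straightforward. No gaps.
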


\begin{proof}{Proof:}
    It is straightforward that $\rho$ is monotonic.
    Let $(X_n)_{n\in \N}$ be an increasing sequence in $L^\infty$ that converges $\p$-a.s.~to some $X\in L^\infty$. Since $(X_n)_{n\in \N}$ is uniformly bounded, it is uniformly integrable. Thus $X_n$ converges to $X$ in $L^1$. Therefore, $(\mathbb{E}[X_n|\mathcal{G}])_{n\in \N}$ converges in $L^1$ to $\mathbb{E}[X|\mathcal{G}]$ by Jensen's inequality. Thus, $(\mathbb{E}[X_n|\mathcal{G}])_{n\in \N}$ converges in probability to $\mathbb{E}[X|\mathcal{G}]$. Since $(\mathbb{E}[X_n|\mathcal{G}])_{n\in \N}$ is also increasing, it converges $\p$-a.s.~to $\mathbb{E}[X|\mathcal{G}]$. Therefore, by Fatou continuity of $\tilde{\rho}$,
    $$\lim_{n\to\infty}\rho(X_n)\leq \rho(X)=\tilde{\rho}(\mathbb{E}[X|\mathcal{G}])\leq \liminf_{n\to\infty}\tilde{\rho}(\mathbb{E}[X_n|\mathcal{G}])=\lim_{n\to\infty}\rho(X_n).$$
    Hence, $\rho(X)=\lim_{n\to\infty}\rho(X_n)$. This means that $\rho$ is continuous from below. It is well-known that continuity from below implies Fatou-continuity under the monotonicity assumption. The fact that $\rho$ is coherent is straightforward.
\Halmos \end{proof}

\begin{proof}{Proof of Proposition \ref{prop:example1}:}
    $(\mathrm{i})\Rightarrow (\mathrm{ii})$ and $(\mathrm{ii})\Rightarrow (\mathrm{iii})$: These are clear. $(\mathrm{iii})\Rightarrow (\mathrm{iv})$: We have
    \begin{align*}
        \rho(X)=\rho(\mathbb{E}[X|\mathcal{G}])=\sup_{\mu\in \mathcal{S}_\rho}\mathbb{E}[D_\mu \mathbb{E}[X|\mathcal{G}]]=\sup_{\mu\in \mathcal{S}_\rho}\mathbb{E}[ \mathbb{E}[D_\mu|\mathcal{G}]X].
    \end{align*}
    Therefore the set $\{\mathbb{E}[D_\mu|\mathcal{G}]: \mu\in \mathcal{S}_\rho\}\subseteq \widehat{\mathcal{S}}_\rho$ is $w(L^1,L^\infty)$-dense by Proposition \ref{prop:coherent}. Since $L^1(\mathcal{G})$ is $w(L^1,L^\infty)$-closed by Corollary \ref{coro:lem-1}, we have   $\widehat{\mathcal{S}}_\rho\subseteq L^1(\mathcal{G})$.
    $(\mathrm{iv})\Rightarrow (\mathrm{i})$: Define the map 
    \begin{align*}
        \tilde{\rho}:L^\infty(\mathcal{G})\to\mathbb{R}:X\mapsto \sup_{\mu\in \mathcal{S}_\rho}\mathbb{E}[D_\mu X].
    \end{align*}
    The mapping $\tilde{\rho}$ is clearly coherent and Fatou. By the tower property of conditional expectation, we have 
    \begin{align*}
        \rho(X)=\sup_{\mu\in \mathcal{S}_\rho}\mathbb{E}[D_\mu X]=\sup_{\mu\in \mathcal{S}_\rho}\mathbb{E}[D_\mu\mathbb{E}[X|\mathcal{G}]]=\tilde{\rho}(\mathbb{E}[X|\mathcal{G}]).
    \end{align*}
    The final statement of the proposition follows by definition.
\Halmos \end{proof}

\begin{proof}{Proof of Proposition \ref{prop:statFunc}:}
Define $\tilde{\rho}:L^\infty\to\mathbb{R}$ by $\tilde{\rho}(X)=\gamma(X_{\#}\mathbb{P})$, $X\in L^\infty$. By definition, $\tilde{\rho}$ is coherent and Fatou. For $\mu\in \mathcal{S}$, define $\rho^\mu:L^\infty\to\mathbb{R}$ by $\rho^\mu(X)=\gamma(X_{\#}\mu)$, $X\in L^\infty$. It is straightforward to see that for $X,Y\in L^\infty$, $\rho^\mu(X)=\tilde{\rho}(Y)$ if $F_{\mu,X}=F_{\mathbb{P},Y}$.
We will first prove the coherence of $\rho^{\mu}$. Translation invariance and positive homogeneity of $\rho^\mu$ is apparent.
Let $X,Y\in L^\infty$ with $X\leq Y$, we have that $F_{\mu,Y}(x)\leq F_{\mu,X}(x)$ for all $x\in \mathbb{R}$ or $F_{\mu,X}^{-1}(u)\leq F_{\mu,Y}^{-1}(u)$ for all $u\in [0,1]$. Since $(\Omega,\mathcal{F},\mathbb{P})$ is atomless, there exists a uniform $(0,1)$ random variable $U$. Define $X_\mu =F_{\mu,X}^{-1}(U)$ and $Y_\mu=F_{\mu,Y}^{-1}(U).$ We have $F_{\mu,X}=F_{\mathbb{P},X_\mu}$, $F_{\mu,Y}=F_{\mathbb{P},Y_\mu}$ and $X_\mu\leq Y_\mu.$ Thus,
$\rho^\mu(X)=\tilde{\rho}(X_\mu)\leq \tilde{\rho}(Y_\mu)=\rho^\mu(Y).$
Therefore $\rho^{\mu}$ is monontonic.

Let $X,Y\in L^\infty$. By \citet[Theorem 11.7.5]{DU02}, there exists a Borel-measurable $f:(0,1)\to \mathbb{R}^{2}$ such that $f_{\#}\lambda=(X,Y)_\#\mu$, where $\lambda$ is the Lebesgue measure. Since $(\Omega,\mathcal{F},\mathbb{P})$ is atomless, there exists a uniform $(0,1)$ random variable $U\in L^\infty$. Define $(X_\mu,Y_\mu)=f\circ U$, we have $(X,Y)_\#\mu=(X_\mu,Y_\mu)_\#\mathbb{P}.$ Therefore $F_{\mu,X}=F_{\mathbb{P},X_\mu}$, $F_{\mu,Y}=F_{\mathbb{P},Y_\mu}$ and  $F_{\mu,X+Y}=F_{\mathbb{P},X_\mu+Y_\mu}$. Thus, 
$$\rho^\mu(X+Y)=\tilde{\rho}(X_\mu+Y_\mu)\leq \tilde{\rho}(X_\mu)+\tilde{\rho}(Y_\mu)=\rho^\mu(X)+\rho^\mu(Y).$$
Therefore, $\rho^{\mu}$ is subadditive.

Let $(X_n)_{n\in \N}$ in $L^\infty$ be uniformly bounded and converge $\p$-a.s.~to $X\in L^\infty$. Therefore, $(X_n)_{n\in \N}$ converges $\mu$-a.s.~to $X$. Thus, $$\lim_{n\to\infty}F^{-1}_{\mu,X_n}(q)=F^{-1}_{\mu,X}(q)$$
for all the continuity points of $F^{-1}_{X,\mu}$. Since $(\Omega,\mathcal{F},\mathbb{P})$ is atomless, there exists a uniform $(0,1)$ random variable $U$. Define $X_{\mu}^n=F_{\mu,X_n}^{-1}(U)$, $X_\mu=F_{\mu,X}^{-1}(U)$. We have $F_{\mu,X_n}=F_{\mathbb{P},X_{\mu}^n}$, $F_{\mu,X}=F_{\mathbb{P},X_\mu}$. Since the set of discontinuity points of $F^{-1}_{X,\mu}$ is countable, $X_\mu^n$ converges $\mathbb{P}$-a.s.~to $X_\mu$. Thus,
$$\rho^\mu(X)=\tilde{\rho}(X_\mu)\leq\liminf_{n\to\infty}\tilde{\rho}(X_\mu^n)=\liminf_{n\to\infty}\rho^\mu(X_n),$$
 proving  Fatou continuity.

Finally, since
$$\rho(X)=\sup_{\mu\in \mathcal{S}}\gamma(X_{\#}\mu)=\sup_{\mu\in \mathcal{S}}\rho^\mu(X),$$
it follows that $\rho$ is the supremum of coherent and Fatou risk measures. Thus, $\rho$ is coherent and Fatou. The fact that $\rho$ is $\mathcal{G}$-law invariant is a straightforward consequence of the fact that $\mathcal{S}\subseteq \mathcal{E}(\mathbb{P}_0).$ 
\Halmos \end{proof}

\begin{proof}{Proof of the claims in Example \ref{example:motRep}:} \textbf{Claim 1:} The first equality is straightforward.
To show the second equality, let $X\in L^\infty(\mathcal{H})$. From the dual representation of ES in Example \ref{ex:ES}, we get 
$$\rho(X)= \sup_{\mu\in \mathcal{S}}\E^\mu [X]\leq \sup_{\mu\in \mathcal{S}_{\beta,\mathbb{P}}}\E^\mu [X] = \ES_\beta (X).$$
For the other direction,
let $A\in \mathcal{F}$ be an event independent of $\mathcal G$ satisfying $\mathbb{E}[X|A]=\mathrm{ES}_\beta(X)$ and $\mathbb{P}(A)=1-\beta$.
The existence of such an event (called a $\beta$-tail event) is justified by
\citet[Lemmas A.4 and  A.7]{WZ21}.
 Let $\nu\in \mathcal{M}_1$ be given by $D_\nu=\id_A/(1-\beta)$. 
Since $\id_A$ is independent of $\mathcal G$,
we have $\nu\in \mathcal{E}(\mathbb{P})$, and therefore $\nu\in \mathcal S$, and $\rho(X) \ge \E^{\nu}[X] = \E[X|A]=\ES_\beta(X)$. \textbf{Claim 2:} Take $X_1\in L^\infty(\mathcal{G})$, $X_2\in L^\infty(\mathcal{H})$, $X_3\in L^\infty$ with $X_1\overset{\mathrm{d}}{=}_\mathbb{P}X_2\overset{\mathrm{d}}{=}_\mathbb{P}X_3$. We have  
    \begin{align*}
        \rho(X_3)&= \sup_{\mu\in \mathcal{S}}\mathbb{E}^\mu[X_3]\geq\mathbb{E}[X_3]=\mathbb{E}[X_1]=\rho(X_1); \\
        \rho(X_3)&=\sup_{\mu\in \mathcal{S}}\mathbb{E}^\mu[X_3]\leq \sup_{\mu\in \mathcal{S}_{\beta,\mathbb{P}}}\mathbb{E}^\mu[X_3]=\mathrm{ES}_\beta[X_3]=\mathrm{ES}_\beta[X_2]=\rho(X_2),
    \end{align*}
    as desired.
\Halmos \end{proof}

\begin{proof}{Proof of the claim in Example \ref{example:tail}:}
   This proof follows a very similar line of reasoning as \citet[Proposition 2]{LMWW22}; we reproduce the proof here as there are some nuances in this setting that need to be considered. 
   
   The $\beta$-tail measure $\widehat{\rho}^{\beta}:L^\infty\to\mathbb{R}$ generated by $\widehat{\rho}$ is defined by $$\widehat{\rho}^{\beta}(X)=\widehat{\rho}(F_{\mathbb{P},X}^{-1}\left(\beta+(1-\beta)U)\right),$$ where $U$ is a uniform $(0,1)$ random variable. Since $\widehat{\rho}$ is law-invariant, this mapping is clearly well-defined. For $\mu\in \mathcal{M}_1$, define $\rho^\mu:L^\infty\to\mathbb{R}$ by $\rho^\mu(X)=\gamma(X_{\#}\mu)$, $X\in L^\infty$. It is straightforward to see that for $X,Y\in L^\infty$, $\rho^\mu(X)=\widehat{\rho}(Y)$ if $F_{\mu,X}=F_{\mathbb{P},Y}$. Recall that $\mathcal{S}_{\beta,\mathbb{P}}= \{\nu\in \mathcal{M}_1\mid ({1-\alpha} ) D_\nu\leq   1 \}$.
    By \citet[Proposition 2]{LMWW22}, we have 
    $$\rho(X)\leq\sup_{\mu\in \mathcal{S}_{\beta,\mathbb{P}}}\rho^{\mu}(X)=\widehat{\rho}^\beta(X),$$
    for all $X\in L^\infty$.
    When $X\in L^\infty(\mathcal{H})$, by the fact that $(\Omega,\mathcal{H},\mathbb{P})$ is atomless and Lemma \ref{lem:fel}, we can find a uniform $(0,1)$ random variable $U_X\in L^\infty(\mathcal{H})$ such that $F_{\mathbb{P},X}^{-1}(U_X)=X$ $\mathbb{P}\text{-a.s.}$
    Let $\mu\in \mathcal{M}_1$ have the Radon–Nikodym derivative $$D_\mu=\frac{1}{1-\beta}\id_{\{U_X> \beta\}}.$$
    Since $D_\mu$ is independent of $\mathcal{G}$, $\mathbb{E}[D_\mu|\mathcal{G}]=1$ and $\mu\in \mathcal{S}$. Letting $Z=F_{\mathbb{P},X}^{-1}(\beta+(1-\beta)U_X)$ and $x\in \mathbb{R}$, we have
    \begin{align*}
        \mu(X> x)&=\mu(X> x,U_X> \beta)=\frac{1}{1-\beta}\mathbb{P}(X>x,U_X>\beta)=\min\left(\frac{1}{1-\beta}\mathbb{P}({X>x}),1\right),\\
        F_{\mu,X}(x)&=\left(\frac{1}{1-\beta}(F_{\mathbb{P},X}(x)-\beta)\right)^+=F_{\mathbb{P},Z}(x).
    \end{align*}
    As a consequence,     $$\widehat{\rho}^\beta(X)=\widehat{\rho}(Z)=\rho^{\mu}(X)\leq \rho(X),$$
    and $\rho(X)=\widehat{\rho}^\beta(X).$ Let $X\in L^\infty(\mathcal{G})$ and $\mu\in \mathcal{S}$, since $\mu\in \mathcal{E}(\mathbb{P})$, $F_{\mu,X}=F_{\mathbb{P},X}$. Therefore, $\rho^\mu(X)=\widehat{\rho}(X)$ and $\rho(X)=\widehat{\rho}(X)$.
\Halmos \end{proof}

\subsection{Additional results and proofs accompanying Section \ref{sec:computation}}
\label{app:computation}

\begin{proof}{A computable formula for \eqref{eq:RW-1}:}
Let $\Phi$ (resp.~$\phi$) denote the standard normal cdf (resp.~pdf). For $X\sim \mathrm{N}( r,\sigma^2)$,
    we have
    \begin{align*}
        \mathrm{VaR}_\beta\left((X-x)^+\right)&=\left( r-x+\sigma\Phi^{-1}(\beta)\right)^+,\\
        \mathrm{ES}_\beta\left((X-x)^+\right)&=\begin{cases}
         r-x+\sigma\frac{\phi(\Phi^{-1}(\beta))}{1-\beta}&x\leq r+\sigma \Phi^{-1}(\beta)\\\frac{( r-x)(1-\Phi\left(\frac{x- r}{\sigma}\right))}{1-\beta}+\sigma\frac{\phi(\frac{x- r}{\sigma})}{1-\beta}&x> r+\sigma\Phi^{-1}(\beta).
        \end{cases}
    \end{align*}
For $l\in \mathbb{R}$, the conditional distribution
    $$\pi_1 X_1+\pi_2 X_2|X_1=l\sim N\left(\pi_1l+\pi_2 m_2+c\pi_2\frac{\sigma_2}{\sigma_1}(l- m_1),\pi_2^2(1-c^2)\sigma_2^2\right).$$ 
Define the functions $ M(\pi_1,\pi_2,l)=\pi_1l+\pi_2 m_2+c\pi_2\frac{\sigma_2}{\sigma_1}(l- m_1)$, $\sigma(\pi_2)=\pi_2\sqrt{(1-c^2)}\sigma_2$, and 
\begin{align*}
    g_\beta( r,\sigma,x)=\begin{cases}
         r-x+\sigma\frac{\phi(\Phi^{-1}(\beta))}{1-\beta}&x\leq r+\sigma \Phi^{-1}(\beta)\\\frac{( r-x)(1-\Phi\left(\frac{x- r}{\sigma}\right))}{1-\beta}{}+\sigma\frac{\phi(\frac{x- r}{\sigma})}{1-\beta}&x> r+\sigma\Phi^{-1}(\beta).\end{cases}
\end{align*} 
    We can now verify the following identity  \begin{equation} \mathrm{ES}_\beta((\pi_1 X_1+\pi_2 X_2-x)^+|X_1)=g_\beta( M(\pi_1,\pi_2,X_1),\sigma(\pi_2),x).
    \end{equation}  
Therefore, $$f_\beta(\pi_1,\pi_2,x)=\int_\mathbb{R} g_\beta( M(\pi_1,\pi_2,z),\sigma(\pi_2),x)~ \frac 1 {\sigma_1}\phi\left(\frac{z-m_1}{\sigma_1}\right)~\d z,$$
which is an integral formula for \eqref{eq:RW-1} that is easy to numerically compute.
\Halmos \end{proof}

\section{Analysis in a finite probability space}
We 
will first show that law invariance is equivalent to permutation invariance in a finite uniform probability space. Afterward, we will study partially law-invariant coherent risk measures when our probability space has a natural structure.

\subsection{Law-invariant coherent risk measures
}
In the literature, when proving results regarding law-invariant risk measures, one often assumes that the underlying probability space is atomless. Since finite probability spaces always have atoms, we instead will assume that the underlying probability measure is uniform. 

Let $N\in\mathbb{N}$, denote by $[N]=\{1,\dots,N\}$. Assume that $\Omega=[N]$ and $\mathcal{F}=2^\Omega$ and put the uniform probability measure $\mathbb{P}$ on $(\Omega,\mathcal{F})$, that is $\mathbb{P}(\{k\})=1/N$ for all $k\in [N]$. We can identify elements $X\in L^\infty$ with $\mathbb{R}^N$. For the rest of the section, we will refer only to $\mathbb{R}^N$, but this is to be regarded as $L^\infty$. 

Let $\Delta^N=\left\{\mu\in \mathbb{R}_+^N\hspace{2pt}:\hspace{2pt} \mathbb{E}[\mu]=1\right\}$, it is easy to see  that we can identify $\mathcal{M}_1$ with $\Delta^N$. Since we are working in $\mathbb{R}^N$, the $w\left(\mathbb{R}^N,\mathbb{R}^N\right)$ topology ($w\left(L^\infty,\mathcal{M}\right)$ and $w\left(\mathcal{M},L^\infty\right)$) is the standard euclidean topology. Therefore, there will be no mention in this section as to which topology we are dealing with. Note that any closed set $\mathcal{R}\subseteq \Delta^N$ is compact. Finally, Fatou continuity is implied by any coherent risk measure $\rho:\mathbb{R}^N\to\mathbb{R}$. Thus, there will be no mention of continuity in this section.

To describe law invariance in this space define $S_N=\{\sigma:[N]\to[N]\mid\sigma \text{ is a bijection}\}$. For $\sigma\in S_N$ define the map $T_\sigma:\mathbb{R}^N\to\mathbb{R}^N$ where $(T_\sigma(X))_n=X_{\sigma(n)}$.

\begin{definition}
    We say that a mapping $\rho:\mathbb{R}^N\to\mathbb{R}$ is \emph{permutation invariant} if $\rho(T_\sigma(X))=\rho(X)$ for all $\sigma\in S_N$ and $X\in\mathbb{R}^N$. We say a set $\mathcal{R}\subseteq \mathbb{R}^N$ is \emph{permutation invariant} if $T_\sigma(\mathcal{R})\subseteq \mathcal{R}$ for all $\sigma\in S_N$.
\end{definition}

    It is easy to see that a mapping is law invariant if and only if it is permutation invariant. If $\mathcal{R}\subseteq \mathbb{R}^N$ is permutation invariant then $T_\sigma(\mathcal{R})\supseteq T_\sigma(T_{\sigma^{-1}}(\mathcal{R}))=\mathcal{R}$, so we actually have equality.

\begin{proposition}
    Let $\rho:\mathbb{R}^N\to\mathbb{R}$ be a coherent risk measure. Then $\rho$ is law invariant if and only if $\mathcal{S}_\rho$ is permutation invariant.
\end{proposition}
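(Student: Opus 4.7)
The plan is to use the dual representation of coherent risk measures (Proposition EC.1) together with the identification, noted just above the statement, that law invariance is the same as permutation invariance in this uniform finite setting. The key algebraic identity I will first establish is that for any $\sigma\in S_N$, $\mu\in\Delta^N$, and $X\in\mathbb{R}^N$,
$$\mathbb{E}^{\mu}[T_\sigma X]=\frac{1}{N}\sum_{k\in[N]}(D_\mu)_k X_{\sigma(k)}=\frac{1}{N}\sum_{j\in[N]}(D_\mu)_{\sigma^{-1}(j)} X_j=\mathbb{E}^{T_{\sigma^{-1}}\mu}[X],$$
after the substitution $j=\sigma(k)$. Thus permuting the argument of the expectation is the same as inversely permuting the density.

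For the direction $(\Leftarrow)$, assume $\mathcal{S}_\rho$ is permutation invariant. For any $\sigma\in S_N$ and $X\in\mathbb{R}^N$, the identity gives
$$\rho(T_\sigma X)=\sup_{\mu\in\mathcal{S}_\rho}\mathbb{E}^{T_{\sigma^{-1}}\mu}[X]=\sup_{\nu\in T_{\sigma^{-1}}(\mathcal{S}_\rho)}\mathbb{E}^{\nu}[X]=\sup_{\nu\in \mathcal{S}_\rho}\mathbb{E}^{\nu}[X]=\rho(X),$$
where the third equality uses the permutation invariance of $\mathcal{S}_\rho$ (and the remark after the definition, which says $T_\tau(\mathcal{S}_\rho)=\mathcal{S}_\rho$ for every $\tau$). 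Hence $\rho$ is permutation invariant, i.e., law invariant. For the direction $(\Rightarrow)$, assume $\rho$ is law invariant. Fix $\mu\in \mathcal{S}_\rho$ and $\sigma\in S_N$; for every $X\in\mathbb{R}^N$ the identity together with the defining property of $\mathcal{S}_\rho$ yields
$$\mathbb{E}^{T_\sigma \mu}[X]=\mathbb{E}^{\mu}[T_{\sigma^{-1}}X]\le \rho(T_{\sigma^{-1}}X)=\rho(X),$$
which shows $T_\sigma\mu\in \mathcal{S}_\rho$, so $\mathcal{S}_\rho$ is permutation invariant.

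There is no real obstacle here; the statement is essentially a finite-space shadow of the well-known fact (cited in Section~\ref{sec:32}) that a Fatou coherent risk measure is law invariant iff its supporting set is. The only care needed is to keep track of whether one permutes by $\sigma$ or $\sigma^{-1}$ when moving between acting on $X$ and acting on $\mu$, which is handled once and for all by the identity above.
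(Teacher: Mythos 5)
Your proof is correct. The backward direction ($\mathcal{S}_\rho$ permutation invariant $\Rightarrow$ $\rho$ law invariant) is essentially identical to the paper's: both transfer the permutation from the argument $X$ to the density $\mu$ via the change-of-variables identity and then use invariance of the set under $T_{\sigma^{-1}}$. The forward direction, however, is argued differently. The paper proceeds by contraposition: it assumes $T_\sigma(R)\notin\mathcal{S}_\rho$ for some $R\in\mathcal{S}_\rho$, invokes the Hahn--Banach separation theorem on the compact convex set $\mathcal{S}_\rho$ to produce an explicit $X$ with $\rho(X)<\frac{1}{N}X\cdot T_\sigma(R)$, and then derives $\rho(X)<\rho(T_{\sigma^{-1}}(X))$, contradicting permutation invariance. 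You instead argue directly: for $\mu\in\mathcal{S}_\rho$ and any $X$, the identity plus the defining inequality of the supporting set and law invariance give $\mathbb{E}^{T_\sigma\mu}[X]=\mathbb{E}^{\mu}[T_{\sigma^{-1}}X]\le\rho(T_{\sigma^{-1}}X)=\rho(X)$, so $T_\sigma\mu\in\mathcal{S}_\rho$ by definition (noting, as is immediate, that $T_\sigma\mu$ remains in $\Delta^N$). Your route is shorter and avoids the separation theorem entirely; it also mirrors the standard proof of the infinite-dimensional analogue cited in Section \ref{sec:32}. What the paper's separation argument buys is an explicit witness $X$ at which permutation invariance of $\rho$ visibly fails, but for establishing the equivalence itself your direct verification is cleaner. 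Your bookkeeping of $\sigma$ versus $\sigma^{-1}$ in the identity $\mathbb{E}^{\mu}[T_\sigma X]=\mathbb{E}^{T_{\sigma^{-1}}\mu}[X]$ is consistent with the paper's convention $(T_\sigma(X))_n=X_{\sigma(n)}$, so there is no gap.
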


\begin{proof}{Proof:}
    Let $\mathcal{S}_\rho$ be permutation invariant then we have 
    \begin{align*}
        \rho(T_\sigma(X))=\sup_{\mu\in\mathcal{S}_\rho}\mathbb{E}[\mu T_\sigma(X)]&=\frac{1}{N}\sup_{\mu\in\mathcal{S}_\rho}(\mu\cdot T_\sigma(X))\\&=\frac{1}{N}\sup_{\mu\in\mathcal{S}_\rho}(T_{\sigma^{-1}}(\mu)\cdot X)=\frac{1}{N}\sup_{\mu\in\mathcal{S}_\rho}(\mu\cdot X)=\rho(X),
    \end{align*}
    for all $\sigma\in S_n$. Conversely, assume that $\mathcal{S}_\rho$ is not permutation invariant. Find $R\in\mathcal{S}_\rho$ and $\sigma\in S_n$ such that $T_\sigma(R)\notin \mathcal{S}_\rho$. Since $\mathcal{S}_\rho$ is compact and convex, by the Hahn Banach separation theorem, we can find $X\in \mathbb{R}^N$ such that 
    \begin{align*}
        \rho(X)=\frac{1}{N}\sup_{\mu\in\mathcal{S}_\rho}(\mu\cdot X)<\frac{1}{N}X\cdot T_\sigma(R).
    \end{align*}
    Therefore we have 
    \begin{align*}
        \rho(X)<\frac{1}{N}X\cdot T_\sigma(R)=\frac{1}{N} T_{\sigma^{-1}}(X)\cdot R\leq \rho(T_{\sigma^{-1}}(X)),
    \end{align*}
    implying that $\rho$ is not permutation invariant and thus not law invariant.
\Halmos \end{proof}

\subsection{Partially law-invariant coherent risk measures}

We write $A\times B=\{(a,b): a\in A, ~b\in B\}$ for two sets $A$ and $B$.

Let $\Omega=[M]\times [N]$ for some $M,N\in \mathbb{N}$, $\mathcal{F}=2^\Omega$, $\mathcal{G} $ is generated by the projection map onto the second coordinate, and $\mathbb{P}$ is positive (that is, $\mathbb{P}>0$ on $\mathcal F\setminus \{\varnothing\}$) and uniform in the second coordinate (that is, $\mathbb{P}([M]\times \{ n\})=1/N$ for all $n\in [N]$).

We can identify elements $X\in L^\infty$ with $\mathbb{R}^{M\times N}$. For the rest of the section, we will refer only to $\mathbb{R}^{M\times N}$ but this is to be regarded as $L^\infty$. Let $\Delta^{M\times N}=\{\mu\in \mathbb{R}^{M\times N}_+\hspace{2pt}:\hspace{2pt} \mathbb{E}[\mu]=1\}$, it is easy to see that $\mathcal{M}_1$ can be identified with $\Delta^{M\times N}$. The same comments about topology and Fatou continuity from the previous section also apply in this section.

For $k\in [N]$ define $E^k\in \mathbb{R}^{M\times N}$ by $E^k_{mn}=\delta_{kn}$, where $\delta_{kn}$ is the Kronecker delta. Let $\mathcal{E}=\left\{E^k\hspace{2pt}:\hspace{2pt} k\in [N]\right\}$, clearly $\mathcal{E}$ is a basis for $L^\infty(\mathcal{G})$. Define the map $ \mathbb{L}:\mathbb{R}^{M\times N}\to \mathbb{R}^N:X\mapsto [\mathbb{E}[X|\mathcal{G}]]_\mathcal{E}$. Given some $X\in \mathbb{R}^N$ we will define $\tilde{X}\in \mathbb{R}^{M\times N}$ by $\tilde{X}=\sum_{k=1}^NX_kE^k$.
A set $\mathcal{R}\subseteq \mathbb{R}^{M\times N}$ is $\mathcal{G}$-law invariant if $\mathbb{L}(\mathcal{R})$ is permutation invariant.

The following proposition gives necessary and sufficient conditions for the supporting set of an $\mathcal{G}$-law invariant coherent risk measure. 

\begin{proposition}
    \label{pro:1}
    Let $\rho:\mathbb{R}^{M\times N}\to\mathbb{R}$ be a coherent risk measure. Then $\rho$ is $\mathcal{G}$-law invariant if and only if $\mathcal{S}_\rho$ is $\mathcal{G}$-law invariant.
\end{proposition}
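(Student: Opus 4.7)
The plan is to reduce to the law-invariant analog proved in the preceding subsection, but carried out in the image space $\mathbb{R}^N$ under the linear projection $\mathbb L$. The key observation is that for a $\mathcal G$-measurable loss $\tilde Y$ (with $Y\in\mathbb{R}^N$), the pairing $\mathbb{E}^\mu[\tilde Y]$ depends on $\mu$ only through $\mathbb{E}[\mu\mid\mathcal G]$, so
\[
\rho(\tilde Y)=\sup_{\mu\in\mathcal S_\rho}\mathbb{E}^\mu[\tilde Y]=\frac{1}{N}\sup_{w\in\mathbb L(\mathcal S_\rho)}\langle w,Y\rangle.
\]
Moreover, since $\mathbb P|_\mathcal G$ is uniform on $[N]$, two $\mathcal G$-measurable losses identified with $X,Y\in\mathbb R^N$ satisfy $X\laweq_{\mathbb P} Y$ if and only if $Y=T_\sigma(X)$ for some $\sigma\in S_N$. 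Thus $\mathcal G$-law invariance of $\rho$ is exactly permutation invariance of $Y\mapsto \sup_{w\in\mathbb L(\mathcal S_\rho)}\langle w,Y\rangle$ on $\mathbb R^N$, and $\mathcal G$-law invariance of $\mathcal S_\rho$ is exactly permutation invariance of $\mathbb L(\mathcal S_\rho)$.

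For the ``if'' direction, suppose $\mathbb L(\mathcal S_\rho)$ is permutation invariant. Given $\sigma\in S_N$ and $Y\in\mathbb R^N$, change variables $w\mapsto T_{\sigma^{-1}}(w)$ in the supremum and use invariance of $\mathbb L(\mathcal S_\rho)$ under $T_{\sigma^{-1}}$ to conclude $\rho(\widetilde{T_\sigma Y})=\rho(\tilde Y)$. This handles the $\mathcal G$-law invariance on $L^\infty(\mathcal G)$, which is exactly what is required.

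For the ``only if'' direction, I would argue by contrapositive. The set $\mathcal S_\rho\subseteq\Delta^{M\times N}$ is closed, bounded, and convex in $\mathbb R^{M\times N}$, hence compact and convex; the linear map $\mathbb L$ then sends it to a compact convex set $\mathbb L(\mathcal S_\rho)\subseteq\mathbb R^N$. If this set fails to be permutation invariant, pick $v\in\mathbb L(\mathcal S_\rho)$ and $\sigma\in S_N$ with $T_\sigma(v)\notin\mathbb L(\mathcal S_\rho)$, and by the Hahn--Banach separation theorem (equivalently, a finite-dimensional separating hyperplane) find $Y\in\mathbb R^N$ with
\[
\sup_{w\in\mathbb L(\mathcal S_\rho)}\langle w,Y\rangle<\langle T_\sigma(v),Y\rangle=\langle v,T_{\sigma^{-1}}(Y)\rangle.
\]
Dividing by $N$ and using the formula above, the left-hand side equals $\rho(\tilde Y)$ while the right-hand side is at most $\rho(\widetilde{T_{\sigma^{-1}}(Y)})$ (since $v\in\mathbb L(\mathcal S_\rho)$). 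Thus $\rho(\tilde Y)<\rho(\widetilde{T_{\sigma^{-1}}(Y)})$, contradicting $\mathcal G$-law invariance of $\rho$, because $\tilde Y$ and $\widetilde{T_{\sigma^{-1}}(Y)}$ lie in $L^\infty(\mathcal G)$ and are $\p$-equidistributed.

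The routine work is confirming that in the finite setting $\mathcal S_\rho$ is compact and convex (so that $\mathbb L(\mathcal S_\rho)$ is a well-behaved set to separate from an exterior point) and that the identification $X\laweq_\p T_\sigma X$ on $L^\infty(\mathcal G)$ holds because $\p|_\mathcal G$ is uniform. The only mild obstacle is being careful with the factor $1/N$ arising from the uniform measure on $\mathcal G$ and with tracking how $T_\sigma$ acts on $\mathbb L(\mu)$ versus on $Y$; these are symmetric, so no sign errors arise, and the separation argument then mirrors the classical law-invariance proof in the previous subsection exactly.
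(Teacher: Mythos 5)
Your proposal is correct and follows essentially the same route as the paper: both reduce the question to the image $\mathbb{L}(\mathcal{S}_\rho)$ in $\mathbb{R}^N$ via the identity $\rho(\tilde Y)=\frac{1}{N}\sup_{w\in\mathbb{L}(\mathcal{S}_\rho)}\langle w,Y\rangle$, identify $\mathcal{G}$-law invariance with permutation invariance under the uniform marginal, and settle the ``only if'' direction by separating a permuted point from the compact convex set $\mathbb{L}(\mathcal{S}_\rho)$. The only cosmetic difference is that the paper outsources the separation step to its preceding proposition on permutation-invariant supporting sets, whereas you inline it.
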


\begin{proof}{Proof:}
    Define the mapping 
    \begin{align*}
        \tilde{\rho}:\mathbb{R}^N\to\mathbb{R}:X\mapsto \rho(\tilde{X}).
    \end{align*}
    It is easy to verify that this mapping is a coherent risk measure. Let $\mathcal{S}_{\tilde{\rho}}\subseteq \Delta^N$ be the supporting set for $\tilde{\rho}$. It follows that $\rho$ is $\mathcal{G}$-law invariant if and only if $\tilde{\rho}$ is law invariant. For $X\in\mathbb{R}^N$ we have that
    \begin{align*}
        \tilde{\rho}(X)=\rho(\tilde{X})=\sup_{\mu\in\mathcal{S}_\rho}\mathbb{E}\left[\mu\tilde{X}\right]&=\sup_{\mu\in\mathcal{S}_\rho}\mathbb{E}\left[\mathbb{E}[\mu|\mathcal{G}]\tilde{X}\right]\\&=\frac{1}{N}\sup_{\mu\in\mathcal{S}_\rho}\mathbb{L}(\mu)\cdot X=\frac{1}{N}\sup_{\mu\in \mathbb{L}(\mathcal{S}_\rho)}\mu\cdot X.
    \end{align*}
    Thus $\mathbb{L}(\mathcal{S}_\rho)\subseteq \mathcal{S}_{\tilde{\rho}}$ by Proposition \ref{prop:coherent}. Since $\mathbb{L}(\mathcal{S}_\rho)$ is compact and convex (image of a compact and convex set under a continuous linear map), the same theorem states that $\mathbb{L}(\mathcal{S}_\rho)=\mathcal{S}_{\tilde{\rho}}$. We then have
    \begin{align*}
        \rho\text{ is }\mathcal{G}\text{-law invariant }\Leftrightarrow \tilde{\rho}\text{ is law invariant}&\Leftrightarrow \mathcal{S}_{\tilde{\rho}}\text{ is permutation invariant}\\&\Leftrightarrow \mathbb{L}(\mathcal{S}_\rho) \text{ is permutation invariant}\\&\Leftrightarrow \mathcal{S}_\rho\text{ is }\mathcal{G}\text{-law  invariant}.
    \end{align*}
This completes the proof. 
\Halmos \end{proof}

\subsection{Representation results}

This section provides representation results similar to Theorem \ref{th:friday} and Theorem \ref{theorem:1}. A natural corollary to Proposition \ref{pro:1} is the following Theorem.

\begin{theorem}
    The mapping $\rho:\mathbb{R}^{M\times N}\to\mathbb{R}$ is a $\mathcal{G}$-law-invariant coherent risk measure if and only if
    \begin{align*}
        \rho(X)=\sup_{\mu\in\mathcal{S}}\mathbb{E}[\mu X],~~X\in L^\infty
    \end{align*}
    holds for some closed and convex $\mathcal{S}\subseteq \mathcal{M}_1$ such that $\mathbb{L}(\mathcal{S})$ is $\mathcal{G}$-law invariant.
\end{theorem}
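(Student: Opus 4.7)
The plan is to derive this theorem as a direct consequence of Proposition \ref{prop:coherent} (the dual representation of coherent risk measures) and Proposition \ref{pro:1} (the characterization of $\mathcal{G}$-law invariance via the supporting set), which together carry essentially all the work. Since we are in a finite probability space, Fatou continuity is automatic and every closed subset of $\Delta^{M\times N}$ is compact, so the hypotheses of these two propositions are always satisfied.

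For the forward direction, suppose $\rho$ is a $\mathcal{G}$-law-invariant coherent risk measure. First I would invoke Proposition \ref{prop:coherent} to obtain a unique closed and convex supporting set $\mathcal{S}_\rho\subseteq \Delta^{M\times N}$ such that
\[
\rho(X)=\sup_{\mu\in\mathcal{S}_\rho}\mathbb{E}[\mu X],\qquad X\in\mathbb{R}^{M\times N}.
\]
Then I would apply Proposition \ref{pro:1} to conclude that $\mathcal{S}_\rho$ is $\mathcal{G}$-law invariant, which by the definition preceding Proposition \ref{pro:1} is exactly the statement that $\mathbb{L}(\mathcal{S}_\rho)$ is permutation invariant. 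Setting $\mathcal{S}=\mathcal{S}_\rho$ yields the desired representation.

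For the converse, suppose $\rho(X)=\sup_{\mu\in\mathcal{S}}\mathbb{E}[\mu X]$ for some closed convex $\mathcal{S}\subseteq \mathcal{M}_1$ with $\mathbb{L}(\mathcal{S})$ permutation invariant. It is routine that $\rho$ is a coherent risk measure (as the supremum of an equicontinuous family of linear functionals that is bounded below by at least one constant on bounded sets — explicitly, each of the four coherence axioms is immediate from the supremum formula). By the uniqueness part of Proposition \ref{prop:coherent}, we have $\mathcal{S}_\rho=\mathcal{S}$, so $\mathbb{L}(\mathcal{S}_\rho)=\mathbb{L}(\mathcal{S})$ is permutation invariant, i.e.\ $\mathcal{S}_\rho$ is $\mathcal{G}$-law invariant. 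Proposition \ref{pro:1} then delivers the $\mathcal{G}$-law invariance of $\rho$.

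There is no real obstacle in this argument: all the substantive work has already been done in Proposition \ref{pro:1} (which passes $\mathcal{G}$-law invariance between $\rho$ and $\mathcal{S}_\rho$ via the lift $\mathbb{L}$) and in the finite-dimensional version of the dual representation of coherent risk measures. The only minor point worth noting is to make explicit the uniqueness of the closed convex supporting set, so that the $\mathcal{S}$ chosen in the representation can be identified with $\mathcal{S}_\rho$ when invoking Proposition \ref{pro:1} in the converse direction.
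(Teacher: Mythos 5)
Your proof is correct and follows exactly the route the paper intends: the paper states this theorem as ``a natural corollary to Proposition~\ref{pro:1}'' without writing out a proof, and your argument --- combining the finite-dimensional dual representation (with uniqueness of the closed convex supporting set) with Proposition~\ref{pro:1} in both directions --- is precisely that corollary derivation.
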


Define the set $\Delta_\mathcal{G}^{M\times N}=\Delta^{M\times N}\cap L^\infty(\mathcal{G})$, which can be identified with $\mathcal{M}_1(\mathcal{G})$.

\begin{theorem}
    The mapping $\rho:\mathbb{R}^{M\times N}\to\mathbb{R}$ is a coherent risk measure if and only if there exists a unique compact and convex set $\mathcal{R}\subseteq \Delta_\mathcal{G}^{M\times N}$ and a CA-assignment $(\tau_\mu)_{\mu\in \mathcal{R}}$ such that
        \begin{align}
            \rho(X)=\sup_{\mu\in \mathcal{R}}\left(\mathbb{E}\left[\mu X\right]+\tau_\mu(X)\right),\hspace{3pt}X\in L^\infty.\label{eq:discrete}
        \end{align}
    Moreover, in \eqref{eq:discrete}, $\rho$ is $\mathcal{G}$-law invariant if and only if $\mathcal{R}$ is $\mathcal{G}$-law invariant.
\end{theorem}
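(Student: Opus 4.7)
The plan is to transplant the argument of Theorem \ref{theorem:1} into this finite-dimensional setting, where Fatou continuity is automatic and closed subsets of $\Delta^{M\times N}$ are automatically compact. The only genuinely new ingredient, compared with Theorem \ref{theorem:1}, is the \emph{uniqueness} of $\mathcal{R}$, which I would deduce from the fact that every $\mu$-coherent adjustment vanishes on $L^\infty(\mathcal{G})$, combined with the uniqueness of supporting sets for coherent risk measures on the finite-dimensional space $L^\infty(\mathcal{G}) \cong \mathbb{R}^N$.

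For the ``only if'' direction, let $\mathcal{S}_\rho \subseteq \Delta^{M\times N}$ be the unique compact convex supporting set of $\rho$ given by Proposition \ref{prop:coherent}. Set
$$\mathcal{R} \;=\; \{\mathbb{E}[\mu|\mathcal{G}] : \mu \in \mathcal{S}_\rho\} \;\subseteq\; \Delta_\mathcal{G}^{M\times N},$$
which is compact and convex as the continuous linear image of a compact convex set. For each $\nu \in \mathcal{R}$, take the fiber $\mathcal{S}(\nu) = \mathcal{S}_\rho \cap \{\mu : \mathbb{E}[\mu|\mathcal{G}] = \nu\}$ and define $\tau_\nu(X) = \sup_{\mu \in \mathcal{S}(\nu)}(\mathbb{E}[\mu X] - \mathbb{E}[\nu X])$. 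Proposition \ref{prop:FatouCont} gives $\tau_\nu \in \mathrm{CA}(\nu)$, and splitting the defining supremum of $\rho$ over the fibers of $\mu \mapsto \mathbb{E}[\mu|\mathcal{G}]$ yields representation \eqref{eq:discrete}. The ``if'' direction is then immediate: each summand $\mathbb{E}^\mu + \tau_\mu$ is coherent by Proposition \ref{prop:motivation}, and a supremum of coherent risk measures is coherent.

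For uniqueness of $\mathcal{R}$, suppose any representation \eqref{eq:discrete} holds. Positive homogeneity of $\tau_\mu$ forces $\tau_\mu(0)=0$, and $\mathcal{G}$-invariance then gives $\tau_\mu(Y)=\tau_\mu(0+Y)=0$ for every $Y \in L^\infty(\mathcal{G})$. Hence the restriction $\tilde\rho = \rho|_{L^\infty(\mathcal{G})}$ satisfies $\tilde\rho(Y) = \sup_{\mu \in \mathcal{R}}\mathbb{E}[\mu Y]$, and applying Proposition \ref{prop:coherent} on $L^\infty(\mathcal{G}) \cong \mathbb{R}^N$ forces any compact convex $\mathcal{R}$ representing $\rho$ to coincide with the unique supporting set of $\tilde\rho$. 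The law-invariance equivalence then mirrors Proposition \ref{pro:1}: $\rho$ is $\mathcal{G}$-law invariant iff $\tilde\rho$ is law invariant iff its supporting set $\mathcal{R}$ is permutation invariant, i.e., $\mathcal{G}$-law invariant (since $\mathcal{R}\subseteq L^\infty(\mathcal{G})$, applying $\mathbb{L}$ is only a basis identification). The main obstacle, if any, is purely notational bookkeeping---making sure $\mathcal{R}$ is treated consistently as a subset of signed measures and as a subset of $\mathbb{R}^N$ via the basis $\mathcal{E}$; once one isolates the vanishing of $\tau_\mu$ on $L^\infty(\mathcal{G})$, the rest reduces to classical finite-dimensional duality.
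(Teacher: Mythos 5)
Your proposal is correct and follows essentially the same route as the paper: decompose $\mathcal{S}_\rho$ into fibers of the conditional-expectation map, define $\tau_\nu$ as the supremum over each fiber, get the converse from Proposition \ref{prop:motivation}, and derive uniqueness of $\mathcal{R}$ plus the law-invariance equivalence by passing to the restriction $\tilde\rho$ on $L^\infty(\mathcal{G})\cong\mathbb{R}^N$ and invoking Proposition \ref{prop:coherent}. Your explicit observation that $\tau_\mu$ vanishes on $L^\infty(\mathcal{G})$ (from positive homogeneity and $\mathcal{G}$-invariance) is a slightly more detailed justification of the step the paper states as ``$\mathcal{R}$ supports $\rho|_{L^\infty(\mathcal{G})}$,'' but it is the same argument.
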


\begin{proof}{Proof:}
    Let $\rho$ be a coherent risk measure. Define the set $ \mathcal{R}=\left\{\mathbb{E}[\mu|\mathcal{G}]:\mu\in \mathcal{S}_\rho\right\}\subseteq \mathcal{R}\subseteq \Delta_\mathcal{G}^{M\times N}$.  The set $\mathcal{R}$ is compact and convex since it is the image of a compact convex set under a (continuous) linear map. For $\mu\in\mathcal{R}$, define $\mathcal{S}(\mu)=\left\{\nu\in \mathcal{S}_\rho: \mathbb{E}[\nu|\mathcal{G}]=\mu\right\}$. It is clear that $\mathcal{S}_\rho=\bigcup_{\mu\in \mathcal{R}}\mathcal{S}(\mu)$.
    We have 
    \begin{align*}
        \rho(X)=\sup_{\nu\in \mathcal{S}_\rho}\mathbb{E}[\mu X]=\sup_{\mu\in \mathcal{R}}\sup_{\nu\in \mathcal{S}(\mu)}\mathbb{E}[\nu X]&=\sup_{\mu\in \mathcal{R}}\sup_{\nu\in \mathcal{S}(\mu)}(\mathbb{E}[\mu X]+\mathbb{E}[(\nu-\mu)X])\\&=\sup_{\mu\in \mathcal{R}}\left(\mathbb{E}[\mu X]+\tau_\mu(X)\right)
    \end{align*}
    for all $X\in L^\infty$,
    where $\tau_\mu(X)=\sup_{\nu\in \mathcal{S}(\mu)}\mathbb{E}[(\nu-\mu)X]$. The fact that $\tau_\mu\in \mathrm{CA}(\mu)$ follows a similar argument as the proof of Proposition \ref{prop:FatouCont}. Since $\mathcal{R}$ supports $\rho|_{L^\infty(\mathcal{G})}$ and is closed and convex, it is unique from Proposition \ref{prop:coherent}. For the converse, note that ($\ref{eq:32}$) is the supremum of coherent risk measures (by Proposition \ref{prop:motivation}; see \eqref{eq:31}) and thus a coherent risk measure. 
    
    To prove the last claim, let $\tilde{\rho}:\mathbb{R}^N\to\mathbb{R}:X\mapsto \rho(\tilde{X})$ and $\mathcal{S}_{\tilde{\rho}}\subseteq \Delta^{N}$. We have
    \begin{align*}
        \tilde{\rho}(X)=\rho(\tilde{X})=\sup_{\mu\in \mathcal{R}}\mathbb{E}[\mu\tilde{X}]=\frac{1}{N}\sup_{\mu\in \mathbb{L}(\mathcal{R})}\mu\cdot X
    \end{align*}
    for all $X\in \mathbb{R}^N$. Therefore $\mathbb{L}(\mathcal{R})=\mathcal{S}_{\tilde{\rho}}$, and 
    \begin{align*}
        \rho\text{ is }\mathcal{G}\text{-law invariant}\Leftrightarrow \tilde{\rho}\text{ is law invariant}\Leftrightarrow \mathcal{S}_{\tilde{\rho}}\text{ is permutation invariant}\Leftrightarrow \mathcal{R}\text{ is }\mathcal{G}\text{-law invariant}.
    \end{align*}
This completes the proof.
\Halmos \end{proof}

\end{document}